\documentclass{CCCG2026-tex-template/cccg26}
\usepackage{tikz}
\usetikzlibrary{calc, intersections} 
\usepackage{graphicx} 
\usepackage[toc,page]{appendix}
 \usepackage{lineno}
\usepackage{subcaption}
\usepackage{graphicx}
\usepackage{comment}
\usepackage{enumitem}

\usepackage{amssymb} 

\usepackage{xcolor}
\usepackage{subcaption}

\newtheorem{definition}{Definition}
\usepackage{float}
\title{An Exact Generalized k-Cell Decomposition }

\author{
    Yeganeh Bahoo\footnotemark[2] \and 
    Sajad Saeedi\thanks{University College London, London, UK, \texttt{s.saeedi@ucl.ac.uk}} \and 
    Roni Sherman\thanks{Toronto Metropolitan University, Toronto, Canada, \texttt{\{bahoo, roni.sherman\}@torontomu.ca}}
}
\index{Bahoo, Yeganeh}
\index{Saeedi, Sajad}
\index{Sherman, Roni}

\begin{document}

\maketitle

{\def\thefootnote{} \footnotetext{We acknowledge the support of the Natural Sciences and Engineering Research Council of Canada (NSERC).}}
\setcounter{footnote}{0} 

\begin{abstract}
This paper introduces an exact $k$-cell decomposition for visibility planning in polygonal environments for agents equipped with $k$-modems, devices that can see  through up to $k$ walls. Unlike prior decompositions that may include redundant partition lines, our proposed method ensures that visibility events (appear, disappear, merge, and split) are guaranteed to occur on every line of the decomposition. By eliminating these redundancies, we achieve an $O(n^4)$ complexity , representing a potentially quadratic improvement over the previous best $O(k^2n^4)$ result. This decomposition explicitly identifies the locations of all critical visibility events and extends to polygons with holes. It has practical applications in 
tasks such as optimal pursuit-evasion under $k$-visibility and agent counting in invisible regions.
\end{abstract}

\section{Introduction}

The ability to detect and capture intruders in constrained environments is a fundamental challenge in areas such as surveillance, search and rescue, autonomous robotics, and wireless sensor networks~\cite{O’Rourke_1998,deBerg2008}. In these applications, it is critical to develop strategies that ensure complete exploration of a space, particularly when visibility is limited by structural barriers such as walls. Traditional pursuit-evasion scenarios assume direct line-of-sight between a searcher and an intruder, but in many practical settings, devices or agents are equipped with sensing capabilities that allow them to see through a limited number of obstacles. Such devices are referred to as $k$-modems~\cite{Fabia2009}. This generalization introduces the concept of $k$-visibility, that is, the ability to see through $k$ walls, enabling agents to detect intruders even when direct visibility is not feasible. Designing reliable motion strategies under $k$-visibility is essential for improving situational awareness and guaranteeing successful detection in partially obstructed environments. 

Despite its importance, pursuit-evasion under $k$-visibility introduces several geometric and algorithmic challenges~\cite{sfw,kim2026inverse}. As a pursuer equipped with a $k$-modem moves through a polygonal environment, the visibility region may undergo sudden combinatorial changes, known as geometric events, including appear, disappear, split, and merge ~\cite{yu2011shadow}. These events complicate the tracking process, as an evader can exploit transient blind spots to avoid detection. Predicting when and where these events occur is computationally non-trivial and requires a deep understanding of visibility dynamics. Moreover, constructing a representation of the environment that guarantees the stability of visibility regions during movement is difficult, particularly as $k$ increases. 

To address this, we build upon the concept of visibility-based cell decomposition introduced by Guibas et al.~\cite{Guiba1997}, who partitioned polygons to facilitate searching for intruders under $0$-visibility. This approach was later extended to $2$-visibility by Bahoo et al.~\cite{Bahoo2013} and subsequently generalized to $k$-visibility in ~\cite{bahoo2025generalizedkcelldecompositionvisibility}. While the decomposition in ~\cite{bahoo2025generalizedkcelldecompositionvisibility} ensures that the combinatorial representation of the shadow remains invariant within each cell, it is not "exact" - many of its partition lines do not correspond to actual visibility events, leading to unnecessary overhead. The main contribution of this work is the development of an exact $k$-visibility cell decomposition for 2D polygonal environments. By identifying and utilizing only the lines where visibility events are guaranteed to occur, we achieve a more sparse and efficient decomposition with $O(n^4)$ complexity, which is an improvement over the existing method~\cite{bahoo2025generalizedkcelldecompositionvisibility} . Applications include intruder detection~\cite{Guiba1997} and counting of agents in invisible regions~\cite{yu2011shadow} (i.e it is possible to use the same algorithms as described in those papers on the decomposition presented here).

The paradigm of $k$-visibility builds on a rich foundation of geometric illumination and visibility computation. Early work by Aichholzer et al.~\cite{Aichholzer2014} and Fabila-Monroy et al.~\cite{Fabia2009} on modem illumination established foundational bounds for guarding environments with $k$-transmitters. Bahoo et al.~\cite{Bahoo2020, bahoo2019time} later introduced efficient methods to compute $k$-visibility regions from stationary points, while other studies~\cite{k-starshaped} focused on identifying a polygon's $k$-kernel. Recent advancements have further extended the framework to address edge visibility~\cite{bahoo2023segment}, watchman routes~\cite{brotzner2024k}, and $M$-guarding problems~\cite{bahoo2025m}.

—
In this paper we focus on tracing exactly where 
topological events occur to the shadow structure confined strictly inside the polygon as an 
agent moves through the space. Rather than dynamically computing the entire $k$-visibility polygon 
or the full geometric area of the shadow region itself, we map out the discrete critical boundaries 
where the combinatorial layout of these interior occlusions experiences a structural transition.

First, we introduce the necessary preliminary definitions, followed by our proposed cell decomposition. The complexity of this decomposition is analyzed in Section 4. In Section 5, we extend our approach to polygons with holes. Finally, we present our conclusions and discuss directions for future work.

%
\section{Preliminaries}
\label{section:events}

We model an agent as a point robot in a simple polygon $P$ (possibly with holes) equipped with a $k$-modem, a device that can see behind $k$ walls. This means a point $p \in P$ is $k$-visible to the agent at position $q$ if the segment $pq$ intersects at most $k$ edges of $P$. A maximally connected invisible region for a point robot $p$ is a shadow component; For example, $S_1$ in Figure~\ref{fig:Appear}. 

As the agent moves continuously around $P$, the combinatorial structure of its shadow changes only when it crosses specific partition lines. These lines induce a cell decomposition of $P$, where the number and connectivity of shadow components remain invariant within each cell. We consider four primary visibility events (Figure~\ref{fig:fourEvents}):

\begin{itemize}
    \item Appear: a new shadow component emerges. For instance, an agent with $0$-visibility going from $a$ in Figure~\ref{fig:Disappear} to $b$ in Figure~\ref{fig:Appear} by crossing the purple partition line causes an appear event (a new shadow component $S_{1}$ appears).
    \item Disappear: a shadow component vanishes. Consider an agent moving from $b$ in Figure~\ref{fig:Appear} to $a$ in Figure~\ref{fig:Disappear}, and $S_1$ disappears. 
    
    \item Merge: two shadow components unite into one. For example, the agent at $b$ in Figure~\ref{fig:Split} moves to $a$ in Figure~\ref{fig:Merge}, and the shadow components $S_{2}$ and $S_{3}$ merge into one shadow component $S_{4}$.
    \item Split: a shadow component splits into two. See Figure~\ref{fig:Merge} to Figure ~\ref{fig:Split}.
\end{itemize}.

 \begin{figure}
\centering
\begin{subfigure}[b]{.49\linewidth}
\includegraphics[width=\linewidth]{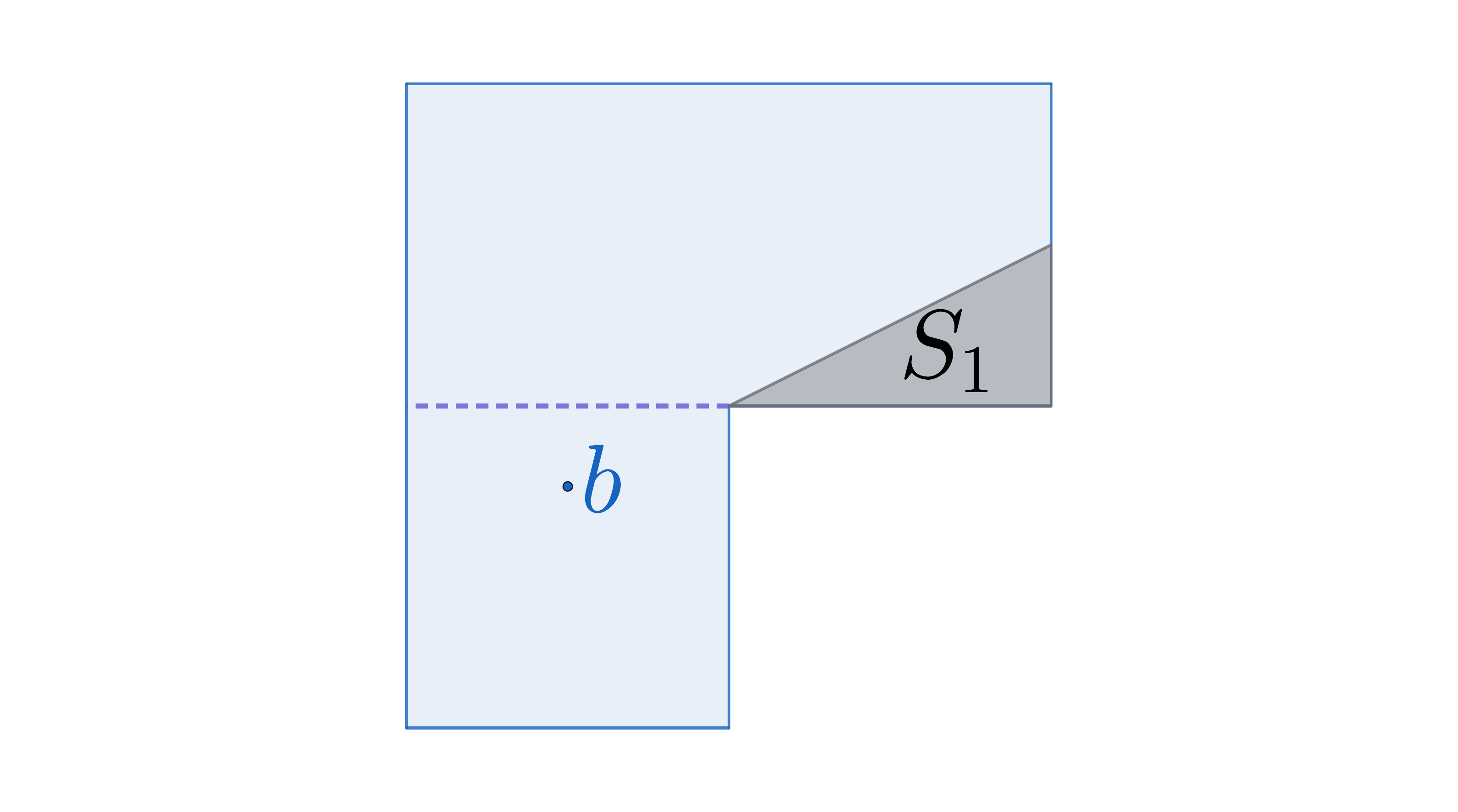}
\caption{}\label{fig:Appear}
\end{subfigure}
\begin{subfigure}[b]{.49\linewidth}
\includegraphics[width=\linewidth]{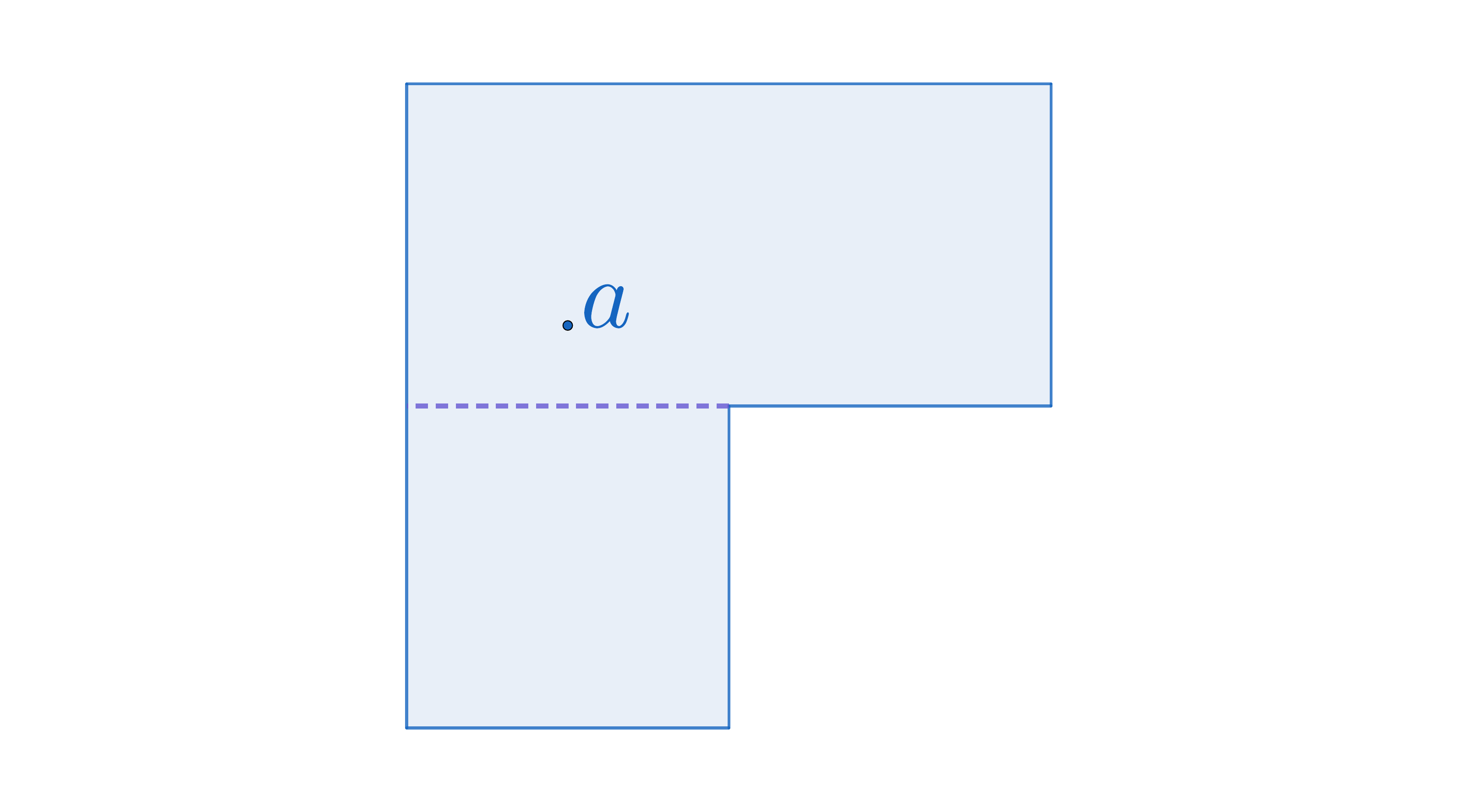}
\caption{}\label{fig:Disappear}
\end{subfigure}
\begin{subfigure}[b]{.49\linewidth}
\includegraphics[width=\linewidth]{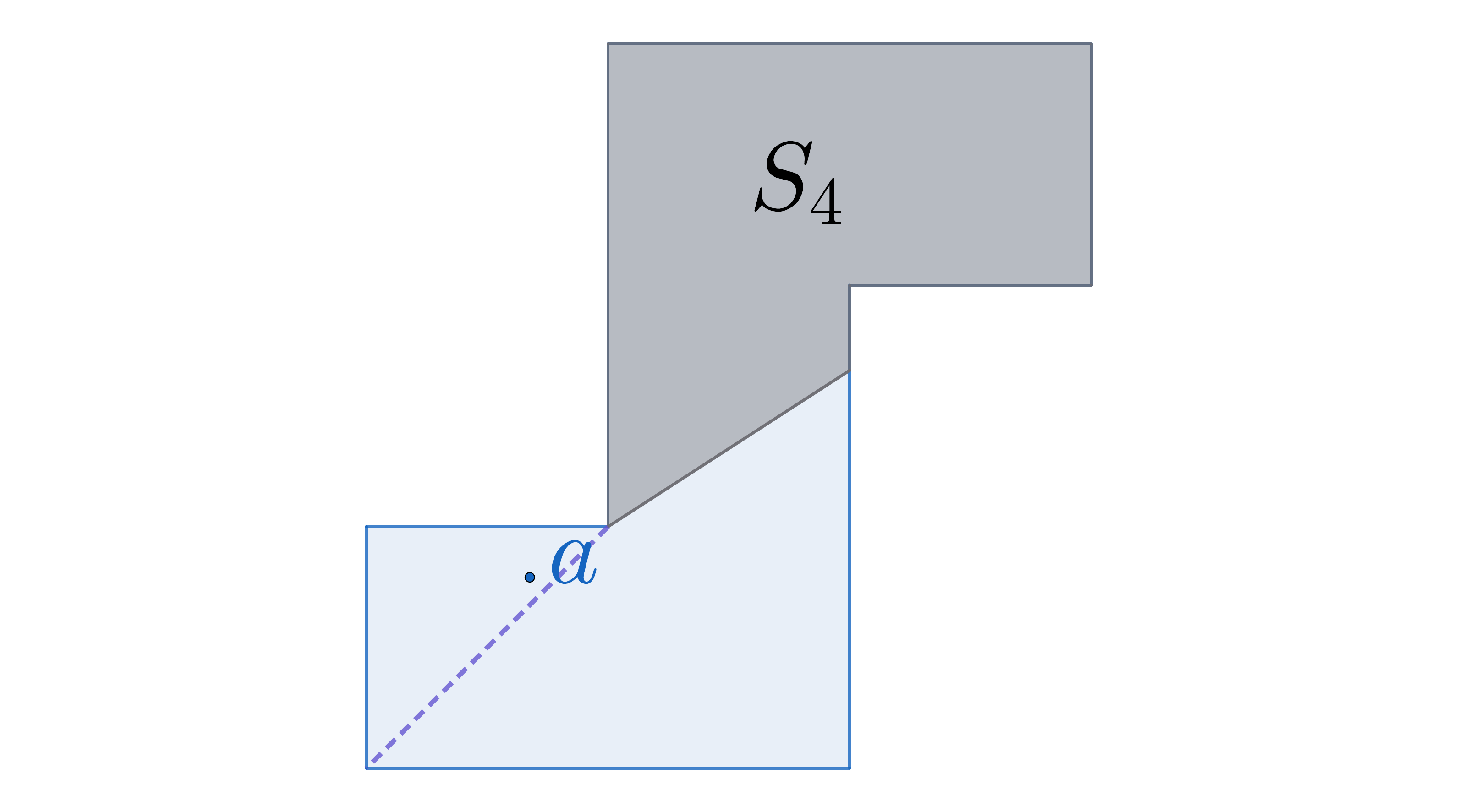}
\caption{}\label{fig:Merge}
\end{subfigure}
\begin{subfigure}[b]{.49\linewidth}
\includegraphics[width=\linewidth]{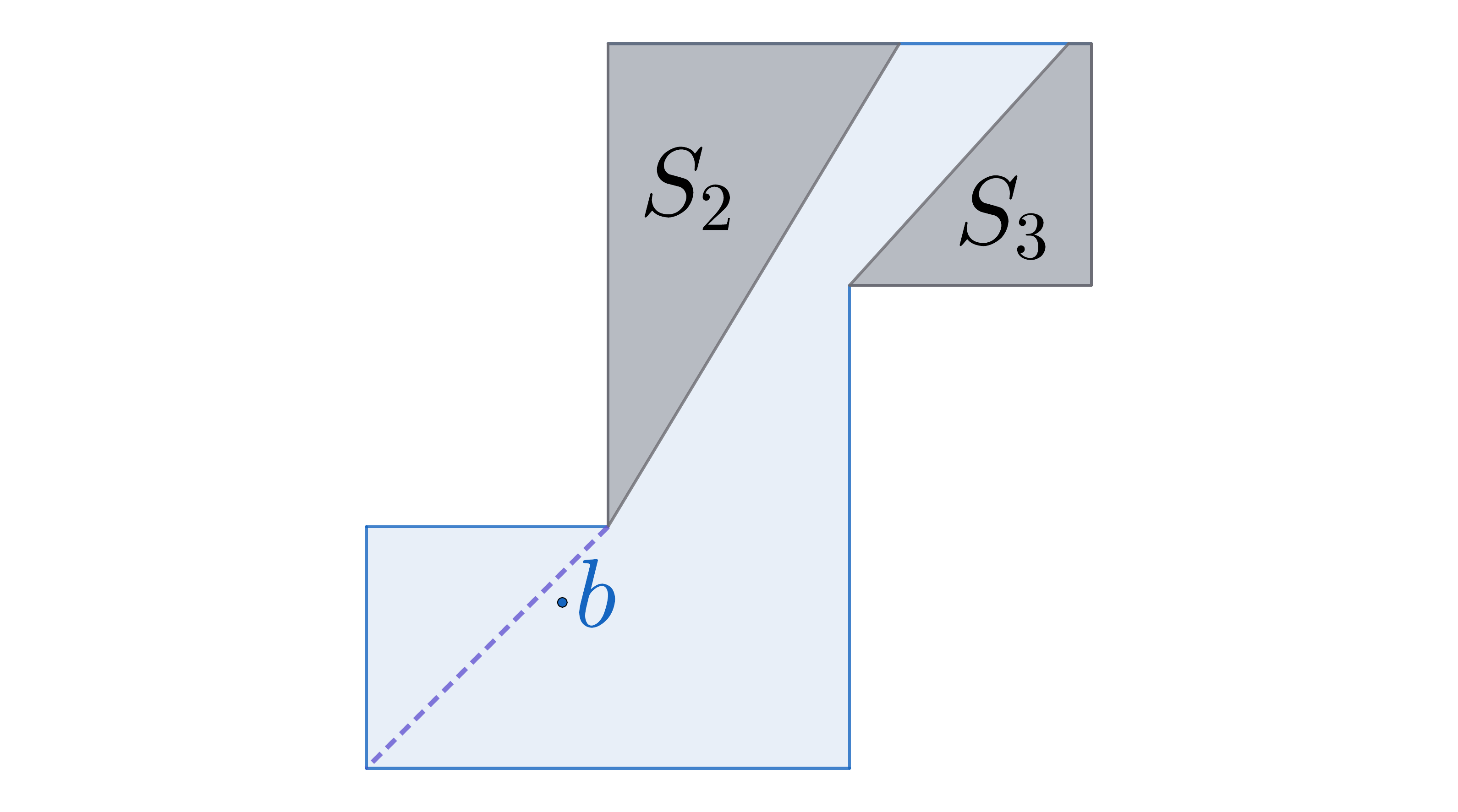}
\caption{}\label{fig:Split}
\end{subfigure}

\caption{The four events for $k=0$. From $(a)$ to $(b)$ is disappear, from $(b)$ to $(a)$ is appear. From $(d)$ to $(c)$ is merge. From $(c)$ to $(d)$ is split.}
\label{fig:fourEvents}
\end{figure}

Visibility events occur when the agent's $k$-visibility boundary undergoes a combinatorial change, which only happens when the agent crosses the line $\ell_g$ defined by two vertices, $v_{1}$ and $v_{2}$. To define different cases, we first need the following definition:

\begin{definition} [Critical Vertex]
   Let $\ell{g}$ be a line through a vertex $a$ and a point $b$. vertex $a$ is critical to $b$ when both of its edges lie to one side of $\ell{g}$, as in Figure~\ref{fig:critical-vertex}. Two vertices are said to be mutually critical when each is critical to the other.
\end{definition}

\begin{figure}
\centering
\includegraphics[width=.3\textwidth]{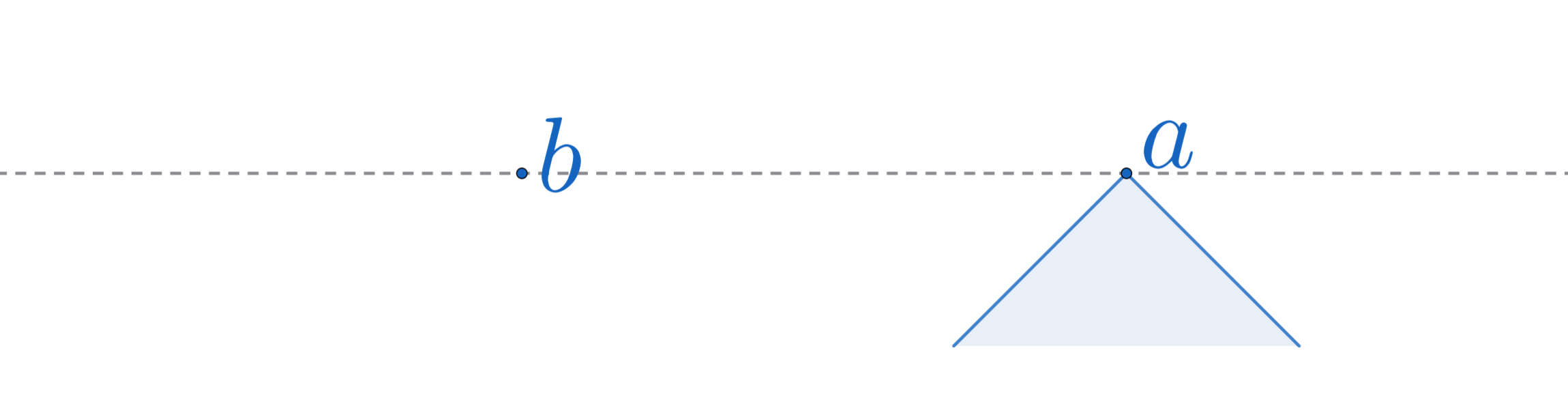}
\caption{A vertex $a$ critical to a point $b$.}
\label{fig:critical-vertex}
\end{figure}

We assume that no more than two vertices can be on the same line and be critical to each other.

 \begin{figure}
\centering
\begin{subfigure}[b]{.49\linewidth}
\includegraphics[width=\linewidth]{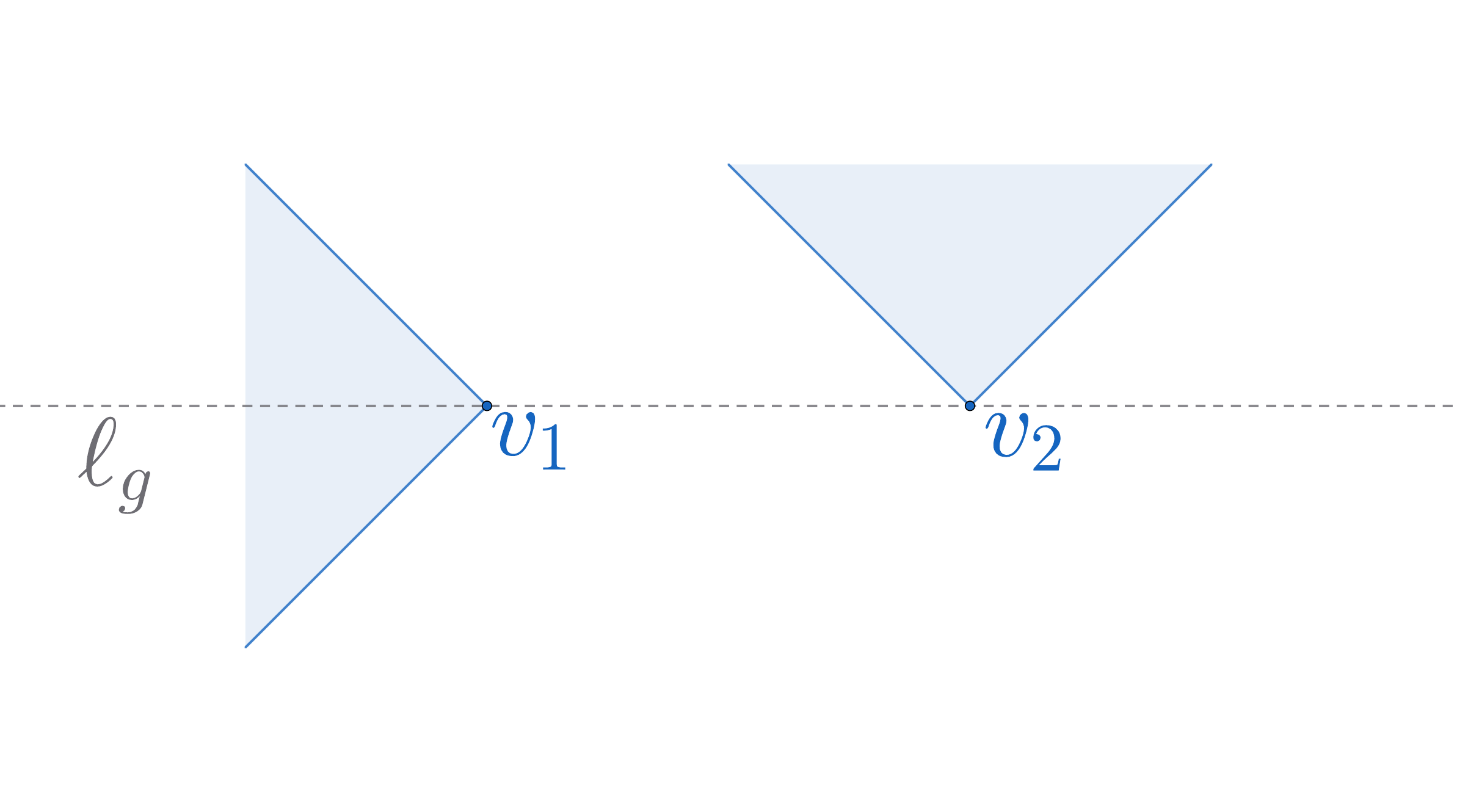}
\caption{No Event}\label{fig:NoEventCase1}
\end{subfigure}
\begin{subfigure}[b]{.49\linewidth}
\includegraphics[width=\linewidth]{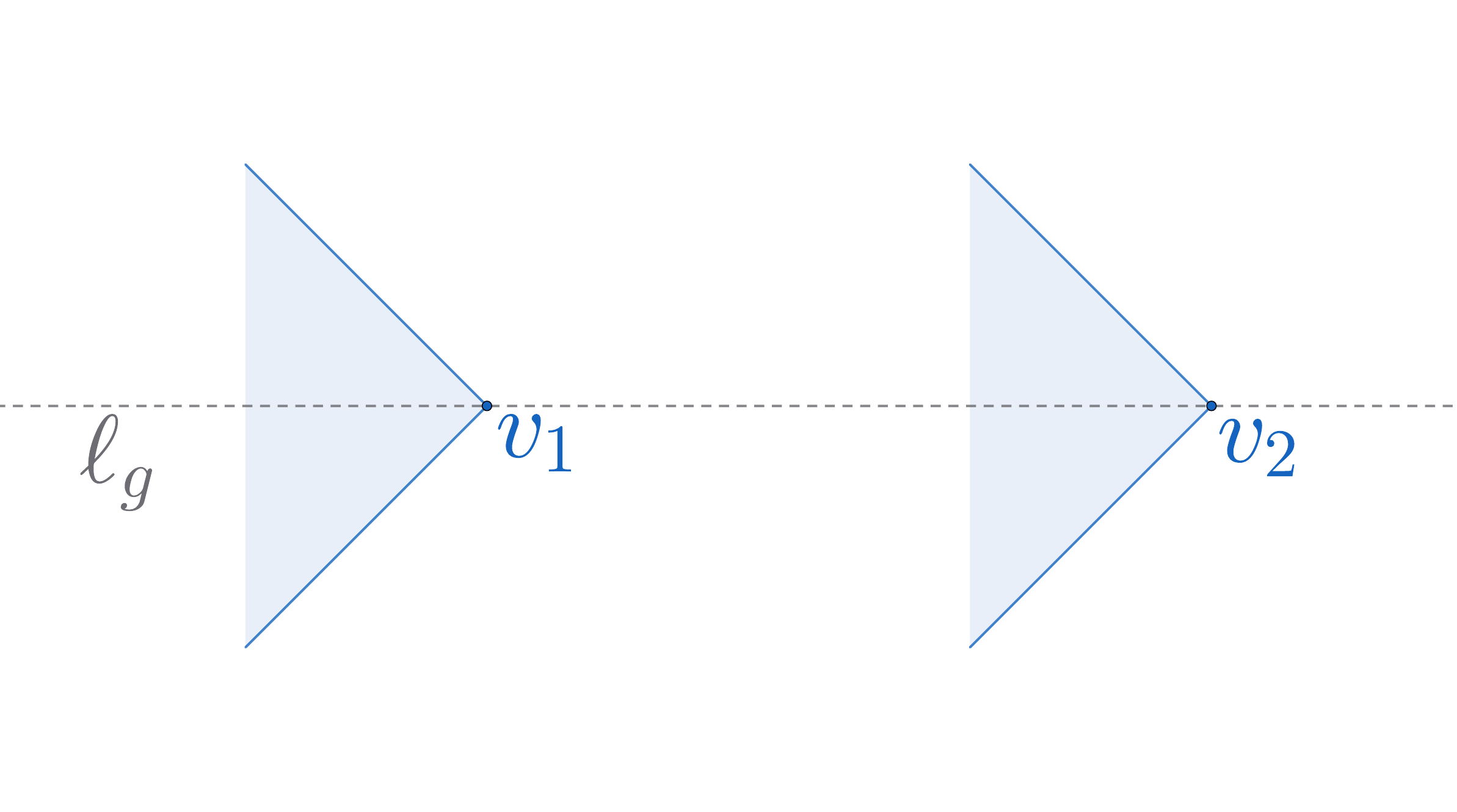}
\caption{No Event}\label{fig:NoEventCase2}
\end{subfigure}

\caption{Orientation of two vertices where there is no event.} \label{fig:noEvent}
\end{figure}

\section{Proposed Cell Decomposition}

To define the decomposition we have the following theorem:

\begin{theorem}
    A change to the combinatorial structure of the shadow is only possible when two vertices $v_{1}$ and $v_{2}$ are mutually critical. 
\end{theorem}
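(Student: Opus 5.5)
We argue by contraposition: if the agent crosses a line $\ell_g$ through two vertices $v_1,v_2$ that are not mutually critical, then the number and connectivity of the shadow components do not change. Combined with the preliminary observation that the $k$-visibility boundary can only change combinatorially when the agent crosses such a line $\ell_g$, this gives the theorem.

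\textbf{Step 1: local description of the shadow.} Fix the agent at $q$. The boundary of the $k$-shadow inside $P$ is made of two kinds of pieces. The first kind are polygon edges. The second kind are \emph{visibility rays}: segments of rays from $q$ through vertices $v$, along which the count of crossed edges changes. Such a count change is a jump of $\pm 2$ exactly when $v$ is critical to $q$ (both edges on one side of the ray), and a jump of $0$ when $v$ is not critical to $q$. The combinatorial structure of the shadow is determined by the cyclic order of these rays together with the crossing counts on the wedges between them. It can therefore change only when two rays from $q$ coincide, which happens precisely when $q\in\ell_g$ for some pair $v_1,v_2$.

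\textbf{Step 2: case analysis on a pair that is not mutually critical.} Suppose $q$ crosses $\ell_g$ and, without loss of generality, $v_2$ is not critical to $v_1$. Then $v_2$'s two edges straddle $\ell_g$.

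If neither vertex is critical to $q$, the ray through $v_2$ carries no jump of the counting function. As $q$ crosses $\ell_g$, the counts in the wedges adjacent to $v_1$'s ray are each shifted by the same amount on both sides. Hence no region switches between $\le k$ and $>k$, except in a way that varies continuously.

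If $v_1$ is critical (to $q$ on $\ell_g$) but $v_2$ is not, the ray from $v_1$ passes through $v_2$. Because $v_2$'s edges straddle the line, the ray simply crosses one edge at $v_2$. Before and after the crossing, the segment of the $v_1$-ray beyond $v_2$ continues to cross exactly one of $v_2$'s edges. So the count along the ray beyond $v_2$ is the same on both sides, and the shadow region bounded by this ray deforms continuously without pinching.

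These configurations correspond to Figure~\ref{fig:noEvent}. In each case we check that the sets $\{x: \text{count}(q,x)>k\}$ for $q$ just before and just after $\ell_g$ are related by a continuous deformation. The point is that the crossing counts on the two sides of $\ell_g$ agree except in wedges of vanishing width, and such wedges cannot create, destroy, merge, or split a component.

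\textbf{Step 3 and the main obstacle.} Conversely, when both vertices are mutually critical, the two $\pm2$ jumps collide, and the threshold $k$ can be crossed in a thin region, so an event may occur. This direction is not required for the statement; it only motivates the exactness of the decomposition.

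The hard part is Step 2's claim that a vanishing-width wedge cannot change connectivity. A merge or split could in principle arise if such a wedge separates two shadow regions far from $q$. I would handle this by parameterizing the agent's position by $t$ near the crossing and viewing the shadow as the sublevel set $\{\text{count}>k\}$. The count changes only on the set swept by the coinciding rays, and that set lies on one line. I would then show that, on each point of this line beyond $v_2$, the count is the same before and after the crossing whenever $v_2$ is non-critical. This rules out topology changes there, using the general-position assumption that no third critical vertex lies on $\ell_g$.
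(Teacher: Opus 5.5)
Your contrapositive framing matches the paper, whose whole proof is the one-line claim that crossing $\ell_g$ for a pair that is not mutually critical does not change the number of walls obstructing the view at $v_2$ or beyond. The gap is in the step that carries all the weight. You flag it yourself, and both justifications you offer for it fail.

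The principle in Step~2 is false. It says the counts before and after differ only in a wedge of vanishing width, and such a wedge cannot create, destroy, merge, or split a component. But in the mutually critical cases the counts also change only in such a thin wedge, and events do occur there (e.g.\ CCO with $Z=k-1$ or $W=k$). So this cannot be what separates the two situations.

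The Step~3 repair, that the count at every point of $\ell_g$ beyond $v_2$ is unchanged, fails in exactly your second subcase. Suppose $v_1$ is critical and $v_2$ is not. The segment from $q$ to such a point switches between crossing both edges of $v_1$ and crossing neither, while it crosses exactly one edge of $v_2$ either way. So its count changes by $2$, because the critical ray through $v_1$ sweeps across it. What is preserved is the jump pattern across the moving ray, not the pointwise counts. (The paper's one-liner is vulnerable in the same subcase; it really only covers the situation where the vertex nearer the agent is the non-critical one.)

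The missing idea is a local argument at $v_2$. If $v_2$'s edges straddle $\ell_g$, the interior of $P$ near $v_2$ is a single wedge bounded by one edge on each side of $\ell_g$. The critical ray through $v_1$ cuts this wedge into exactly two pieces, with counts differing by $2$, both just before and just after the crossing. Only the corner at $v_2$ passes from one piece to the other. Rays from $q$ never cross one another, and away from $v_2$ the ray meets edges transversally. Hence the faces of the arrangement of edges and critical rays inside $P$ correspond bijectively before and after, with the same adjacencies and counts, so no event occurs.

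This is exactly what breaks when $v_2$ is also critical, in two ways:
\begin{itemize}
\item The number of pieces into which the ray cuts the interior near $v_2$ then depends on which side of $v_2$ it passes; these are the events at $v_2$.
\item Beyond $v_2$ the two critical rays swap order, so the thin strip between them changes its count by $|J_1-J_2|$, where $J_i=\pm 2$ is the jump across the ray through $v_i$. This is $0$ in the ``same'' cases and $4$ in the ``opposite'' cases, consistent with the paper's table, where only the O cases have events at $x_3x_4$.
\end{itemize}

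Finally, your WLOG that $v_2$ is non-critical uses up the freedom of labelling, yet your subcases tacitly place $q$ on $v_1$'s side. So the configuration in which the vertex nearer to $q$ is non-critical and the farther one is critical is never treated. It is the easy case: segments to points near the far vertex cross exactly one edge of the near vertex both before and after, so the far vertex's critical ray merely rotates continuously. It is also the case the paper's argument actually handles.
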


\begin{proof}
    Assume $v_{1}$ and $v_{2}$ are not mutually critical (See Figure~\ref{fig:noEvent}). Crossing $\ell_g$ does not alter the number of walls obstructing the agent's view at $v_{2}$ or beyond, and so the combinatorial structure of the shadow remains invariant. 

\end{proof}

By Theorem 1, it is sufficient to only consider pairs of vertices that are critical for each other when creating the decomposition. Considering only mutually critical vertices, we reduce the cases that must be considered to eight. These are categorized by vertex types, convex or reflex, and whether the two edges of each vertex lie on the same or opposite sides of $\ell_g$. These cases can be shown as follows:

\begin{itemize}[noitemsep]

    \item CCS (Convex - Convex - Same)
    \item CCO (Convex - Convex - Opposite)
    \item CRS (Convex - Reflex - Same)
    \item CRO (Convex - Reflex - Opposite)
    \item RCS (Reflect - Convex - Same)
    \item RCO (Reflex - Convex - Opposite)
    \item RRS (Reflex - Reflex - Same)
    \item RRO (Reflex - Reflex - Opposite)
\end{itemize}

Additionally, we consider four special cases (CC, CR, RC, RR) which occur when $v_{1}$ and $v_{2}$ are adjacent. The shadow behaves differently in these cases as opposed to the basic eight. 

For brevity, the full analysis of the CCO case is provided here, while all other cases  are detailed in the Appendix (\ref{appendix:CCS}-\ref{appendix:CC-SC}). A table (Table~\ref{table:all-cases}) summarizing all the cases can also be found in the appendix.

To determine when a line is necessary for the decomposition,
we analyze how the shadow behaves when an agent passes the line through pairs of mutually critical vertices, $v_1$ and $v_2$. Assume that $v_1$ is situated to the left of $v_2$. Let $\ell_g$ be the line that passes through $v_1$ and $v_2$; shown as the gray dashed line in the figures (for example, see Figure~\ref{fig:CCO-genericA0}). For an agent at a given position, we define $\ell_b$ as the ray emanating from the agent through $v_1$, and $\ell_r$ as the ray through $v_2$, the dashed blue and red lines in the figures (Figure~\ref{fig:CCO-genericA0} -~\ref{fig:CCO-genericB6} and those in the appendix) respectively. The line $\ell_g$ intersects the polygon at points $x_{1}$ (to the left of $v_1$) and $x_{4}$ (to the right of $v_2$). let $x_{2}$ be an intersection of $\ell_g$ with the polygon immediately following $x_{1}$, and $x_{3}$ the intersection immediately preceding $x_{4}$. We call the segment $x_{1}x_{2}$ a partition line if crossing it triggers a visibility event at $v_2$ or in the immediate neighbourhood of $x_{3}x_{4}$ (along the line $x_3x_4$). 

\begin{definition}
    A partition line is a line where a change (appear, disappear, merge, or split event) to the combinatorial structure of the shadow occurs. 
\end{definition}
$Z$ denotes the number of walls intersected by $\ell_g$ between $x_1$ and $v_2$, and $W$ denotes the number of walls intersected by $\ell_g$ between $x_1$ and $x_4$. Note that both $Z$ and $W$ exclude the edges incident to $v_1$ and $v_2$ themselves.

Figures~\ref{fig:CCO-genericA0} to~\ref{fig:CCO-genericB6} illustrate specific basic cases where $k$ ranges from $0$ to $6$. However, these examples generalize to any arbitrary value of $k$ based on the relative difference between the obstruction counts ($Z, W$) and the visibility threshold. For example, the condition $Z = k - 1$ for CCO triggers the same event whether $k=2$ and $Z=1$, or $k=99$ and $Z=98$. In contrast, if $Z \geq k+3$, the region at $v_2$ remains completely in shadow, while for $Z \leq k-9$, the region is entirely visible; therefore, in both extremes, no visibility event occurs at $v_{2}$. A similar argument holds for $W$ and $x_3x_4$. 

Each sub-case is presented as a pair of diagrams (for instance, Figure~\ref{fig:CCO-genericA2} and Figure~\ref{fig:CCO-genericB2}) showing the agent in position $a$ (above $\ell_g$) and position $b$ (below $\ell_g$). At $a$, obstruction counts for $v_2$ and $x_3x_4$ are shown relative to the regions defined by rays $\ell_r$ and $\ell_b$. At $b$, the rays $\ell_b$ and $\ell_r$ swap relative orientations, and the resulting change in obstruction counts determines which of the four events (appear, disappear, merge, or split) occurs. Note that unlike $Z$ and $W$, these obstruction counts include the edges incident to $v_1$ and $v_2$.

Events occurring beyond $x_4$ are omitted because they are  equivalent to previously defined sub-cases. For example, for CCO, $W = k - 2$, (Figure~\ref{fig:CCO-genericA4} and Figure~\ref{fig:CCO-genericB4}) there is a merge/split event after $x_{4}$, but this event corresponds to $W = k$ (Figure~\ref{fig:CCO-genericA2} and Figure~\ref{fig:CCO-genericB2}). Together, all diagrams capture all possible values for $Z$ and $W$. Furthermore, the decomposition is invariant under simultaneous increases in obstruction and the value of $k$: adding two walls and increasing the threshold to $k+2$ results in the same visibility transitions as the original $k$-visibility configuration.

While $Z$ and $W$ are shown together in the figures (Figure~\ref{fig:CCO-genericA0} -~\ref{fig:CCO-genericB6} and those in the appendix), they are independent parameters. A single configuration might, for example, satisfy $Z = k-1$ and $W = k-2$ simultaneously. 


\label{CCOMainBody}



\subsection{Convex-Convex-Opposite}
\label{section:cco}

\begin{lemma}
\label{lemma:CCO-main}
Consider two critical vertices that are of the form CCO ($v_{1}$ and $v_{2}$), When $Z = k-1$ (Figure~\ref{fig:CCO-genericA2} and Figure~\ref{fig:CCO-genericB2}), an appear/disappear event occurs at $v_{2}$; When $W=k$, a merge/split event occurs at $x_{3}x_{4}$ (Figure~\ref{fig:CCO-genericA2} and Figure~\ref{fig:CCO-genericB2}). When $W=k-2$, an appear/disappear event occurs at $x_{3}x_{4}$ (Figure~\ref{fig:CCO-genericA4} and Figure~\ref{fig:CCO-genericB4}). No event occurs for any other $Z$ or $W$.

\end{lemma}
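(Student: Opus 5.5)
The plan is to reduce everything to a local wall-count comparison along rays from the agent. I take $k$-visibility as given: a point is visible exactly when the segment from the agent meets at most $k$ edges. I also use Theorem 1, so that only the crossing of $\ell_g$ (the segment $x_1x_2$) can change the shadow.

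First I fix the CCO geometry. Both $v_1$ and $v_2$ are convex, and their incident edges lie on opposite sides of $\ell_g$. When the agent is at $a$, just above $\ell_g$, the rays $\ell_b$ and $\ell_r$ are nearly parallel to $\ell_g$. They cut a thin wedge around the part of $\ell_g$ beyond $v_1$. For each relevant sliver I write its obstruction count as a function of $Z$ or $W$, including the edges at $v_1$ and $v_2$:
\begin{itemize}[noitemsep]
\item the sliver just past $v_2$ between $\ell_r$ and the edge of $v_2$;
\item the thin region between $\ell_b$ and $\ell_r$ near $v_2$;
\item the three analogous slivers near $x_3x_4$.
\end{itemize}
Because the two vertices are convex with edges on opposite sides, a ray passing just above or just below $v_1$ meets $0$ or $2$ of its incident edges in the relevant wedge, and similarly at $v_2$. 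Hence each sliver count has the form $Z+c$ or $W+c$ with $c\in\{0,1,2\}$. I then repeat this for position $b$. There the rays swap order, so the slivers that were counted through one incident edge are now counted through the other.

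Next comes the comparison. A sliver changes status exactly when its count crosses the threshold $k$, meaning it is $\le k$ on one side and $>k$ on the other. Near $v_2$, I expect the swap to take the sliver just past $v_2$ from count $Z+1$ to $Z+3$, or the reverse. With $Z=k-1$, the sliver goes from visible ($k$) to invisible ($k+2$). It is separated from the surrounding shadow by visible slivers, so a new shadow component appears, and in the reverse direction it disappears. For every other value of $Z$, all the affected slivers lie strictly on the same side of the threshold in both positions. They therefore merely deform existing components, and no event occurs. This matches the remark that for $Z\ge k+3$ or for small $Z$ the region is uniformly shadowed or visible.

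Near $x_3x_4$ I run the same computation with $W$. The intermediate sliver, with count in $\{W, W+1, W+2\}$, sits between the two outer regions. I will show two things. When $W=k$, the sliver between two shadow regions toggles between visible and invisible, which is a merge/split. When $W=k-2$, an isolated sliver toggles, which is an appear/disappear. I also invoke the paper's convention that events beyond $x_4$ for $W=k-2$ are counted as the $W=k$ case, so they are not double-counted.

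The main obstacle is the connectivity claim rather than the counting. A threshold crossing only tells us that some sliver changes status. Classifying the change as appear versus merge requires knowing that the neighbouring regions on both sides are (or are not) in the same shadow component. For this I would argue that all changes are confined to an arbitrarily thin neighbourhood of $\ell_g$ beyond $v_1$, where the only boundaries are $\ell_b$, $\ell_r$ and the edges met by $\ell_g$. I would then read adjacency directly off the ordering of the slivers, using the general-position assumption that no third critical vertex lies on $\ell_g$.
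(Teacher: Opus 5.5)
Your strategy of comparing, at $a$ and at $b$, the obstruction counts of the regions that $\ell_b$ and $\ell_r$ cut out near $v_2$ and near $x_3x_4$ is the one the paper carries out in its figures. However, the counts you write down are wrong, and with them the argument cannot produce the lemma. A near-tangency at a convex vertex contributes $0$ or $2$, while a point inside the wedge of $v_2$ always picks up exactly one edge of $v_2$. So the offsets are not in $\{0,1,2\}$; your own ``$Z+1$ to $Z+3$'' already contradicts that.

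Say the edges of $v_1$ lie above $\ell_g$ and those of $v_2$ below. Near $v_2$, every interior point has count $Z+1$ at $b$. At $a$, the tip of the wedge of $v_2$ cut off by $\ell_b$ has count $Z+3$. This tip is bounded by the two edges of $v_2$ and by $\ell_b$; the region between $\ell_r$ and an edge of $v_2$ just past $v_2$ lies outside $P$. Near $x_3x_4$, the two flanking slivers have count $W+2$ in both positions, and the middle sliver between $\ell_b$ and $\ell_r$ jumps from $W$ (at $b$) to $W+4$ (at $a$). This jump of $4$ around flanks fixed at $+2$ is exactly what places the merge/split at $W=k$ and the appear/disappear two steps lower at $W=k-2$.

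No choice of offsets in $\{W,W+1,W+2\}$ gives both events, since an appear/disappear at $W=k-2$ needs the middle sliver to reach at least $W+3$. For instance, flanks at $W+1$ with the middle toggling $W\leftrightarrow W+2$ would put the appear/disappear at $W=k-1$. Once the regions are identified correctly, the connectivity issue you single out as the main obstacle is easy. The tip at $v_2$ is bounded by walls except along $\ell_b$, hence isolated. The middle sliver at $x_3x_4$ is flanked on both sides by the $W+2$ regions.

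Second, your exclusivity step (``for every other value of $Z$, all affected slivers lie on the same side of the threshold'') fails as stated. The condition $Z+1\le k<Z+3$ holds for both $Z=k-2$ and $Z=k-1$. Likewise, merge/split needs $W\in\{k-1,k\}$ and appear/disappear needs $W\in\{k-3,k-2\}$. To isolate the single values in the lemma you need a parity argument you never state:
\begin{itemize}[noitemsep]
\item A segment between two interior points crosses $\partial P$ an even number of times, so all interior counts are even. Hence odd $k$ behaves exactly like $k-1$, and you may take $k$ even.
\item $Z$ is odd, because $\ell_g$ is exterior just before the convex vertex $v_2$.
\item $W$ is even.
\end{itemize}
The paper uses this only implicitly, by drawing a single parity class ($Z\equiv k+1$ and $W\equiv k$ modulo $2$) and asserting that its figures cover all values. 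Your proof must make it explicit.
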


\begin{proof}
    See Figure~\ref{fig:CCO-genericA0} to Figure~\ref{fig:CCO-genericB6}. 
    When $Z= k-1$, a shadow appears at $v_{2}$ when the agent crosses $x_{1}x_{2}$ from $b$ to $a$ as crossing this line $v_{1}$ becomes an obstacle for visibility to see $v_{2}$. Analogously, the disappear event occurs when going from $a$ to $b$ for the shadow that includes $v_{2}$. See Figure~\ref{fig:CCO-genericA2} and Figure~\ref{fig:CCO-genericB2}. \\
    When $W=k$, a merge event occurs at $x_{3}x_{4}$ when the agent moves from $b$ to $a$ because $v_{2}$ and $v_{1}$ block line of sight. A split event occurs when going from $a$ to $b$. 
    
    When $W=k-2$, an appear event occurs at $x_{3}x_{4}$ when moving from $b$ to $a$, because $v_{1}$ and $v_{2}$ both obstruct line of sight, and a disappear event occurs when moving from $a$ to $b$.

For $Z \geq k + 3$, $v_{2}$ and its immediate surrounding (i.e locally incident region in $P$) is entirely in shadow. For $W \geq k + 4$, $x_{3}x_{4}$ and its immediate surroundings are entirely in shadow.

For $Z \leq k - 7$, $v_{2}$ is and its surrounding is entirely visible. For $W \leq k - 6$, $x_{3}x_{4}$ and its surrounding is entirely visible. 
    
\end{proof}


 \begin{figure}[H]
\centering
\includegraphics[width=\linewidth]{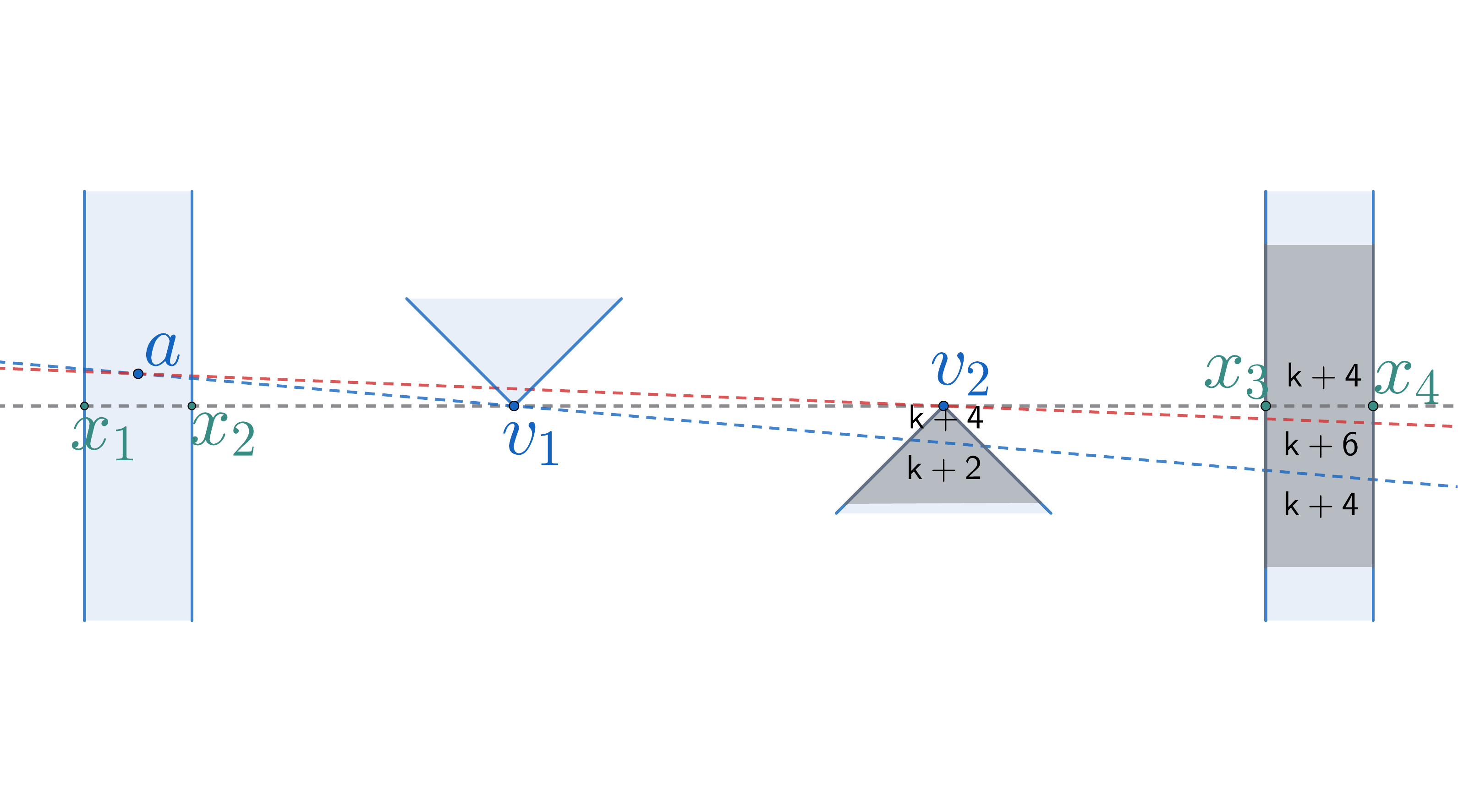}
\caption{CCO; Above $\ell_{g}$; $Z = k + 1$, $W = k + 2$. The dashed gray, red, and blue lines are $\ell_{g}$, $\ell_{r}$, and $\ell_{b}$ respectively.}\label{fig:CCO-genericA0}
\end{figure}
\begin{figure}[H]
\includegraphics[width=\linewidth]{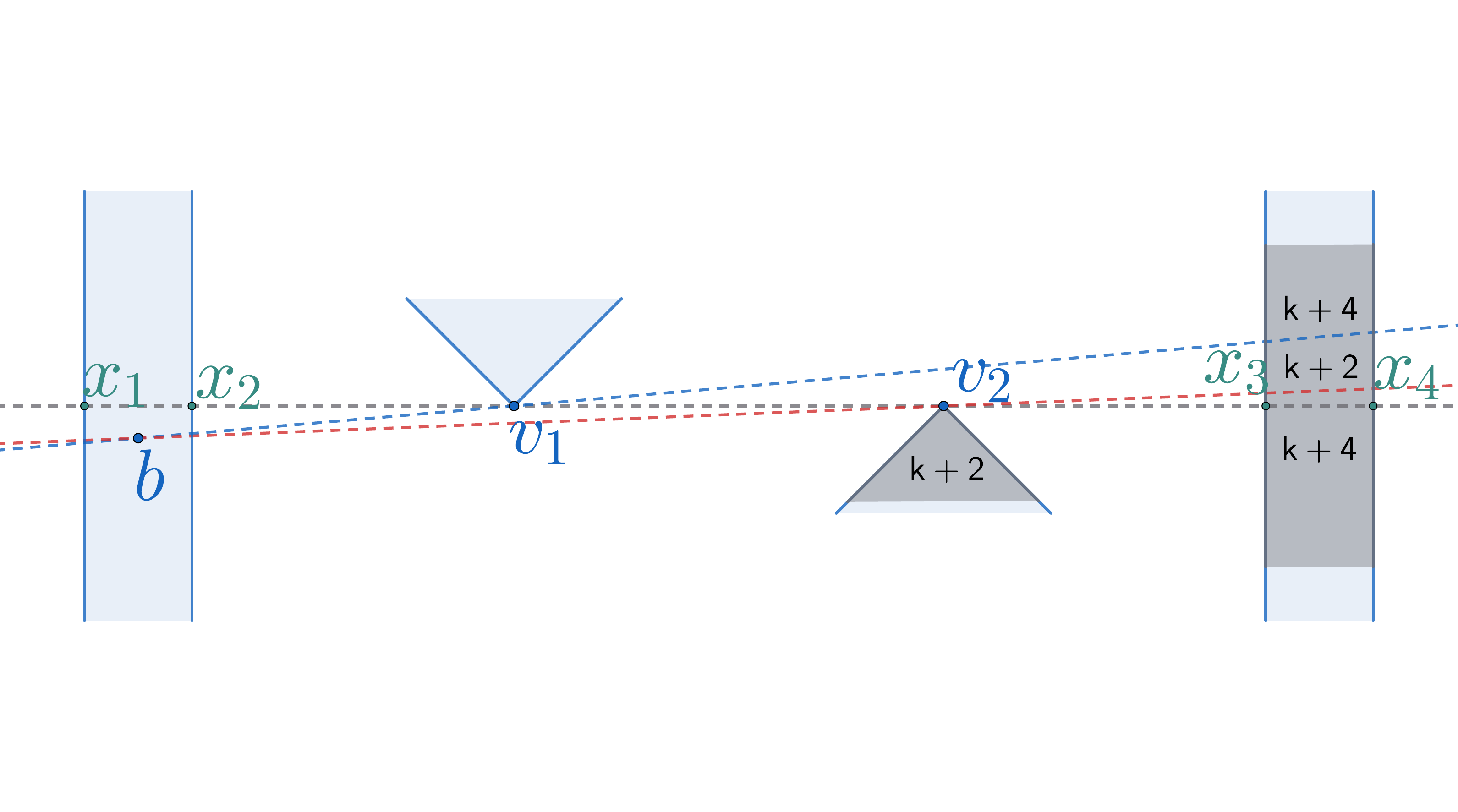}
\caption{ CCO; Below $\ell_{g}$. $Z = k + 1$, $W = k + 2$.} \label{fig:CCO-genericB0} 
\end{figure}


 \begin{figure}[H]
\centering
\includegraphics[width=\linewidth]{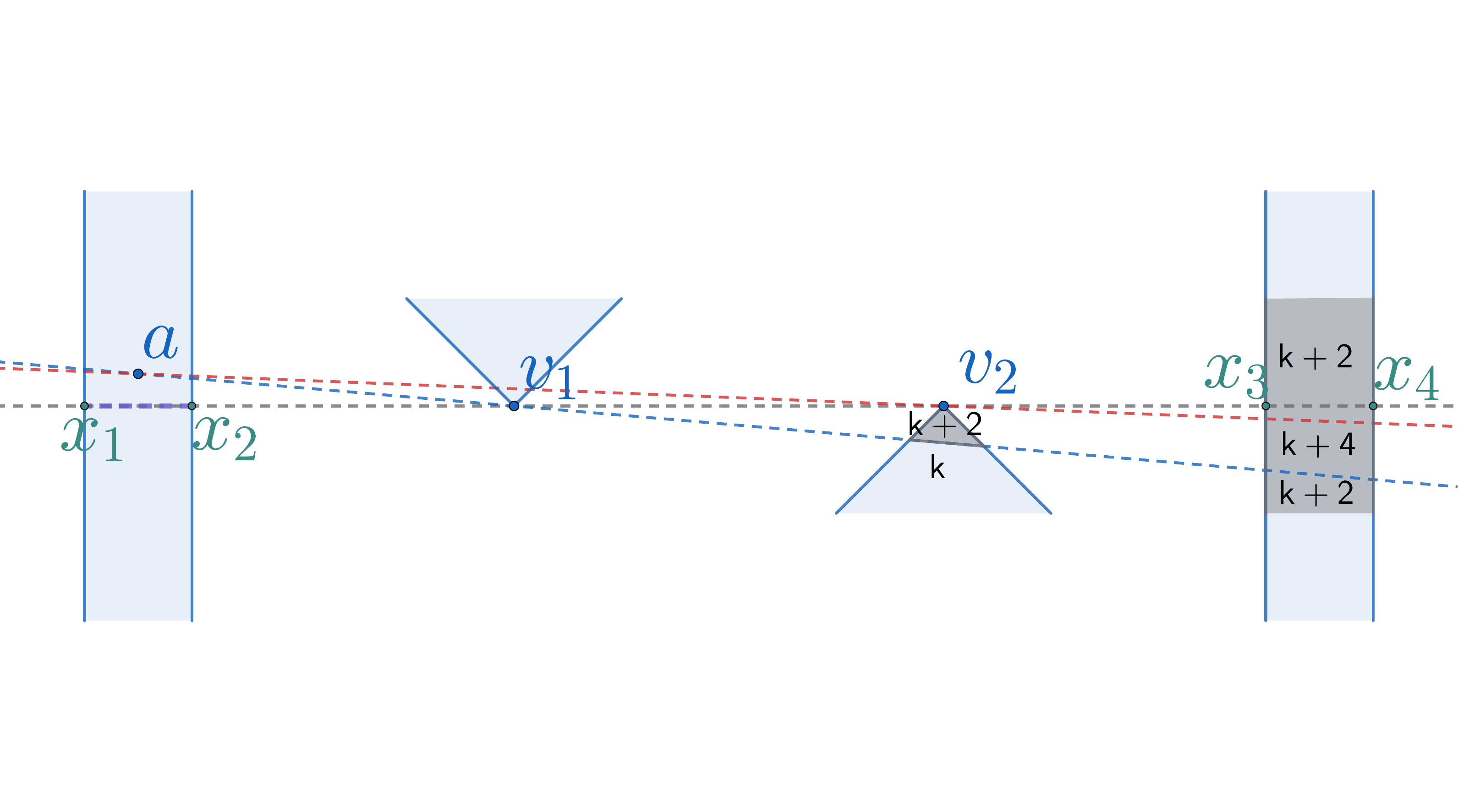}
\caption{CCO; Above $\ell_{g}$. $Z = k - 1$ (appear/disappear), $W = k$ (merge/split).}\label{fig:CCO-genericA2}
\end{figure}
\begin{figure}[H]
\includegraphics[width=\linewidth]{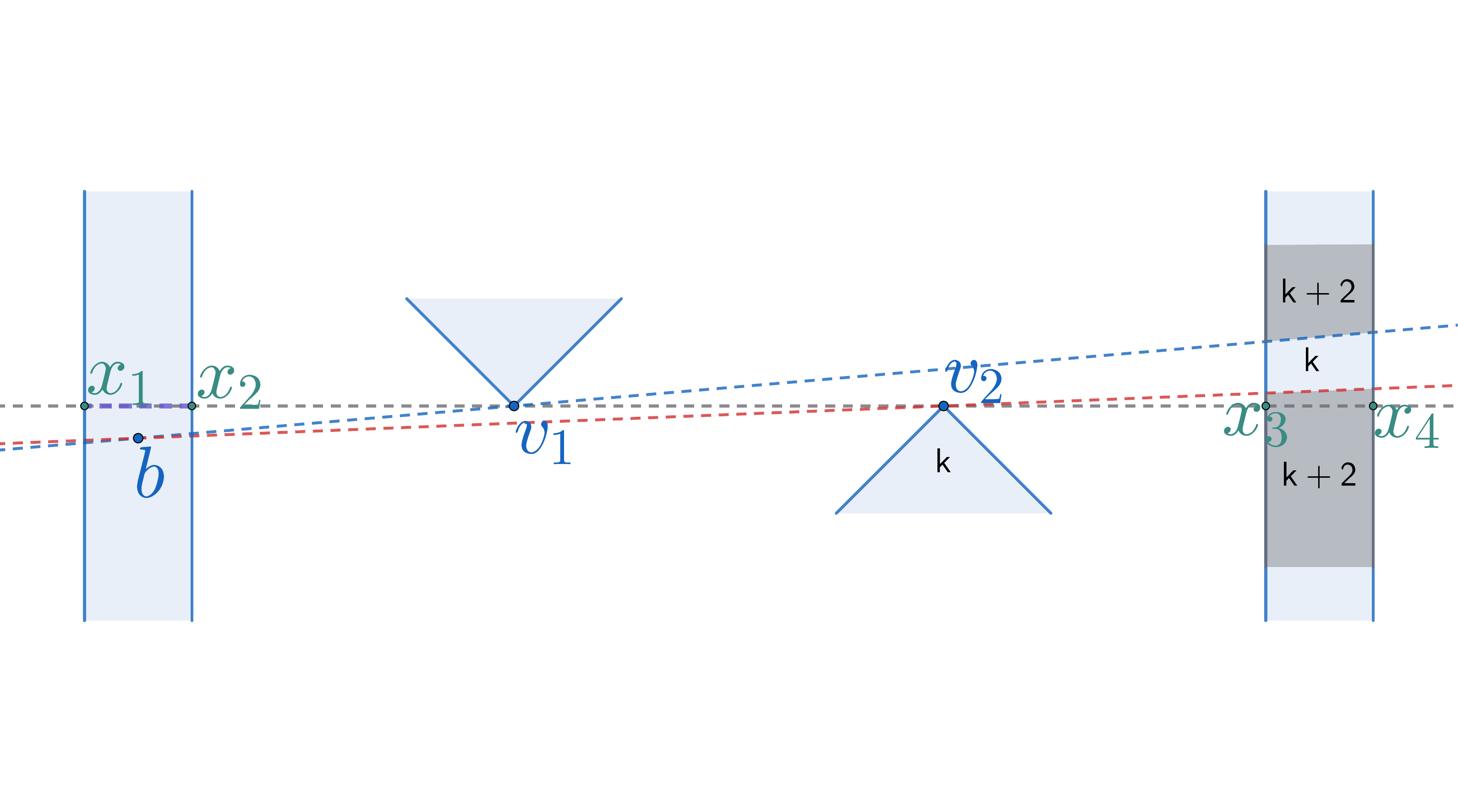}
\caption{CCO; Below $\ell_{g}$; $Z = k - 1$ (appear/disappear), $W = k$ (merge/split).}\label{fig:CCO-genericB2}
\end{figure}

    


 \begin{figure}[H]
\centering
\includegraphics[width=\linewidth]{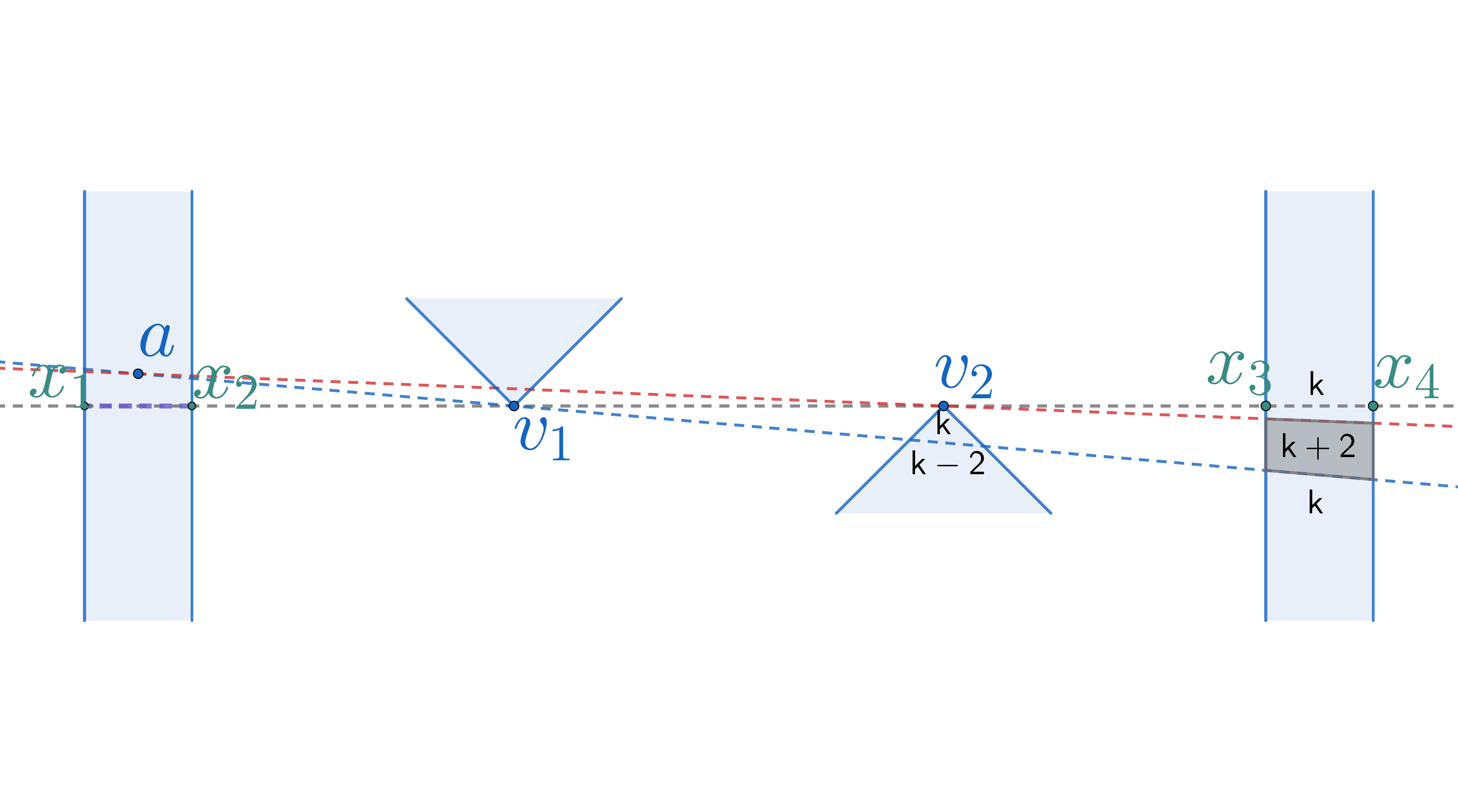}
\caption{CCO; Above $\ell_{g}$ $Z = k - 3$, $W = k - 2$ (appear/disappear) .}\label{fig:CCO-genericA4}
\end{figure}
 \begin{figure}[H]
\includegraphics[width=\linewidth]{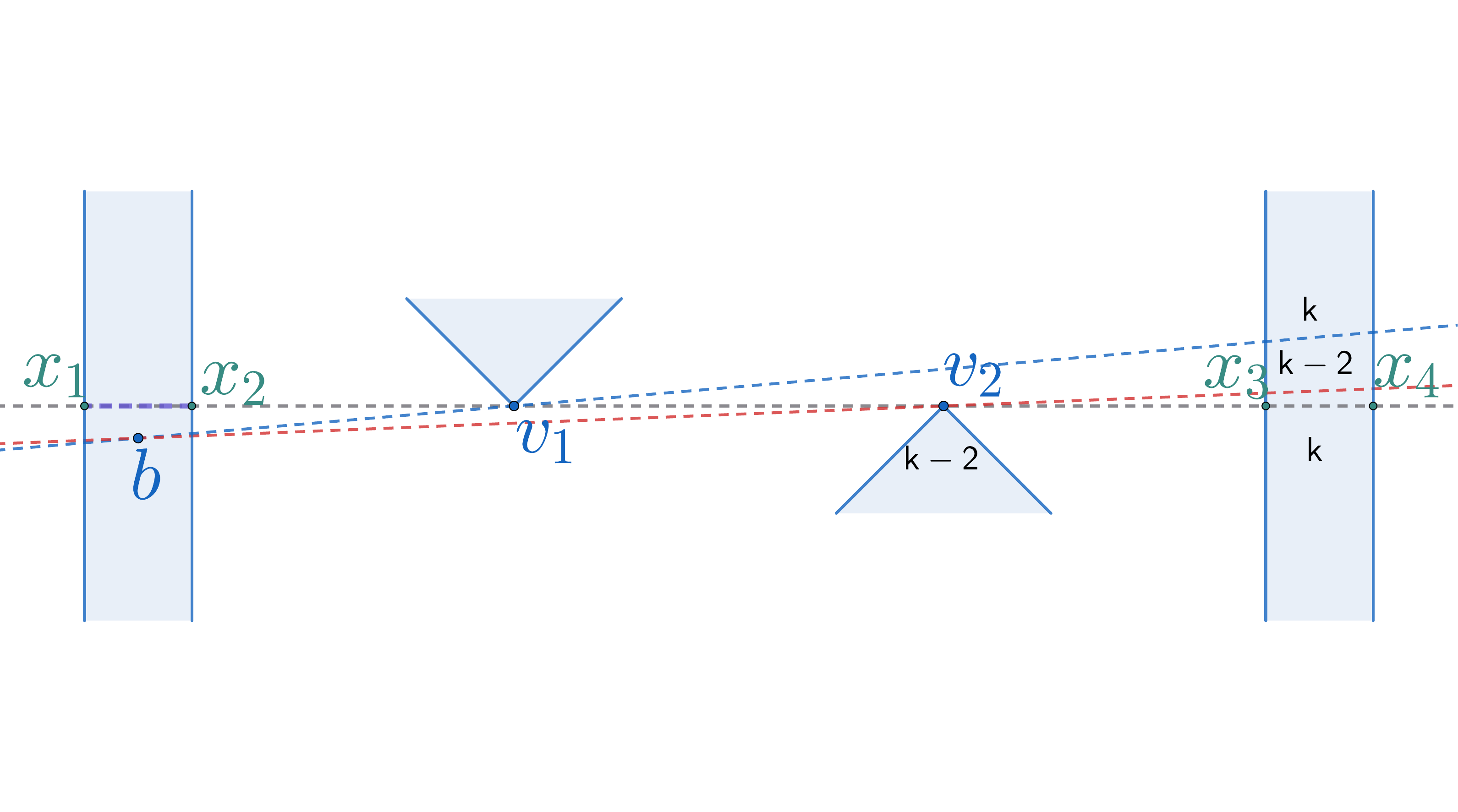}
\caption{CCO; Below $\ell_{g}$; $Z = k - 3$, $W = k - 2$ (appear/disappear).}\label{fig:CCO-genericB4}
\end{figure}


 \begin{figure}[H]
\centering
\includegraphics[width=\linewidth]{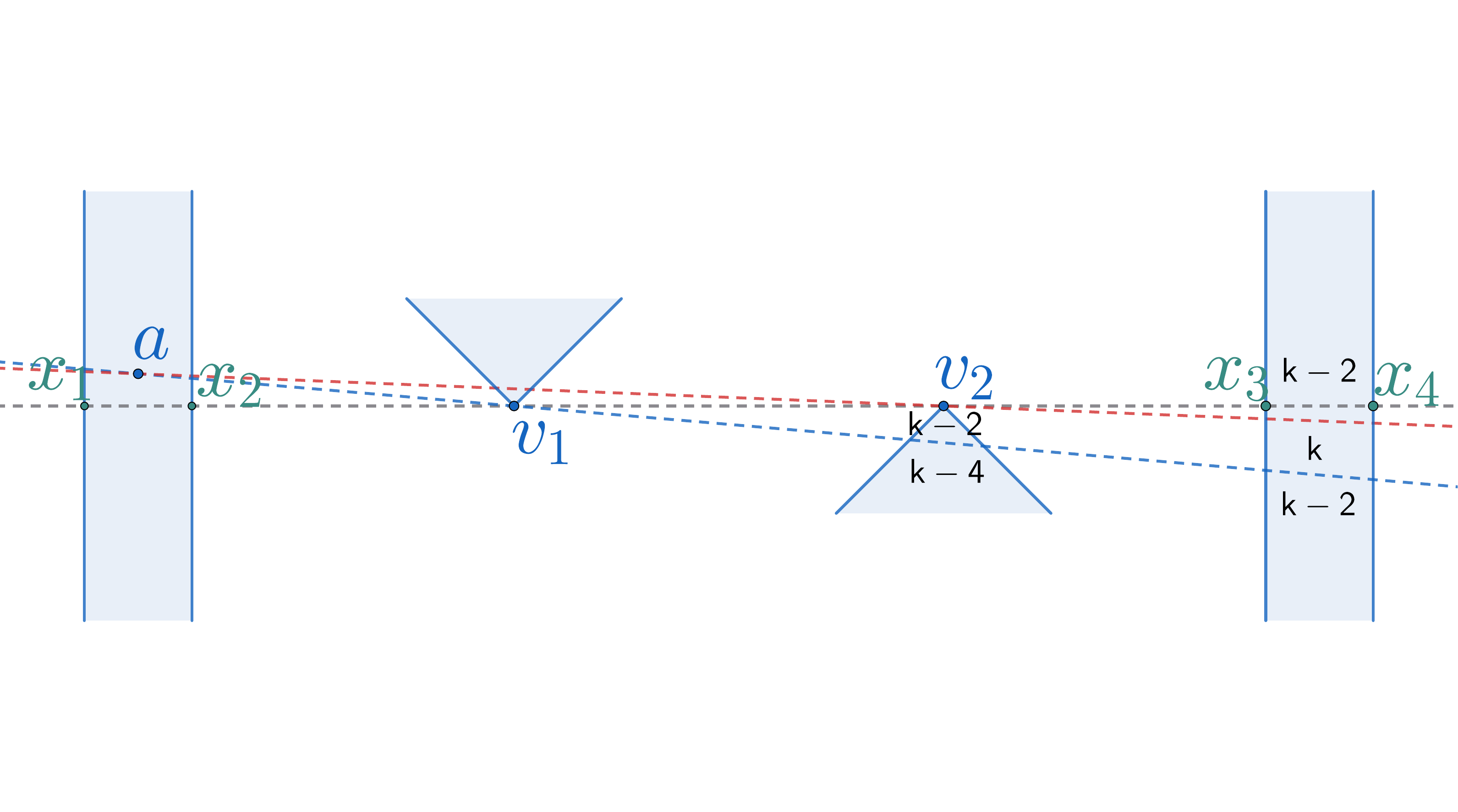}
\caption{CCO; Above $\ell_{g}$; $Z = k - 5$, $W = k - 4$.}\label{fig:CCO-genericA6}
\end{figure}
\begin{figure}[H]
\includegraphics[width=\linewidth]{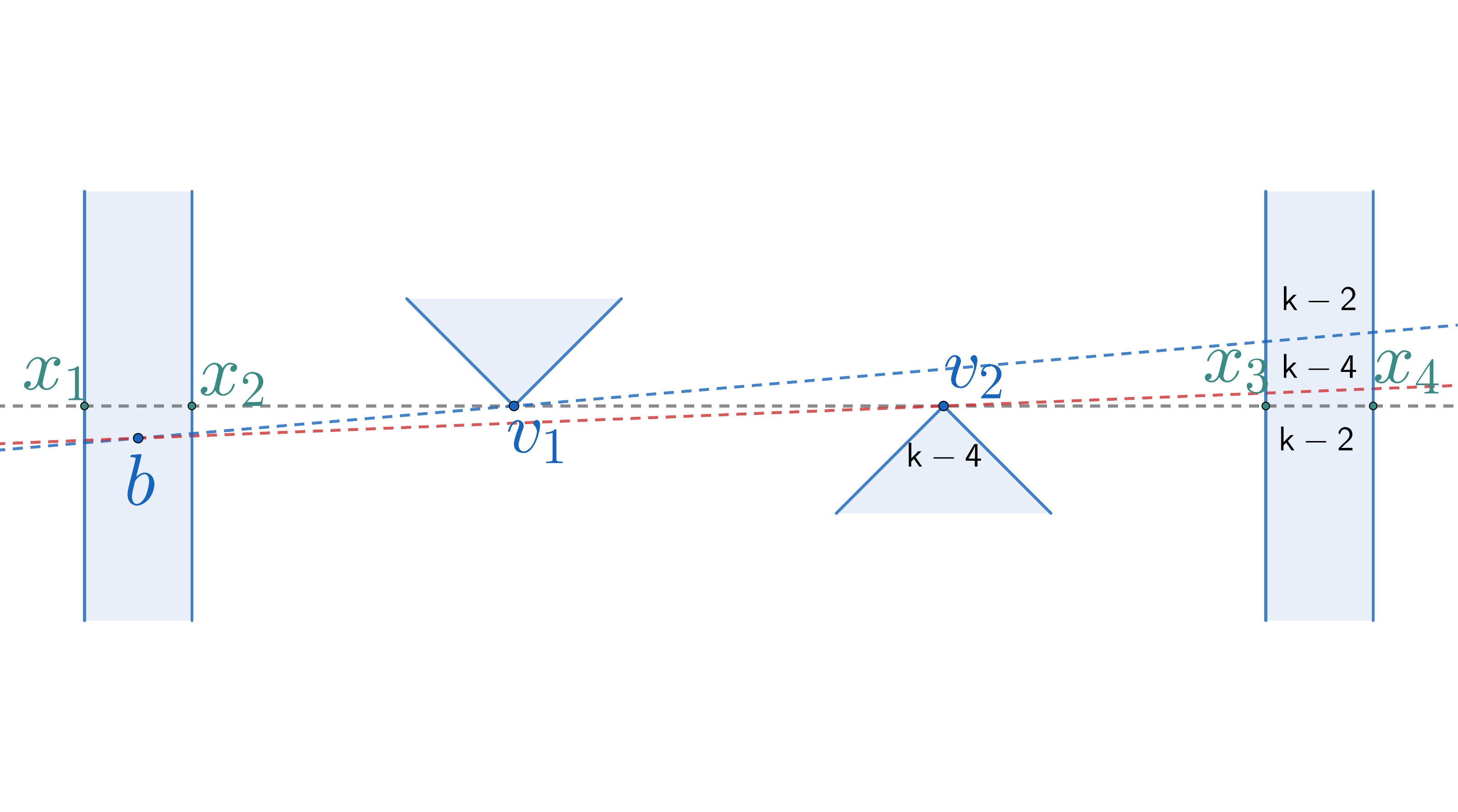}
\caption{CCO; Below $\ell_{g}$; $Z = k - 5$, $W = k - 4$.}\label{fig:CCO-genericB6}
\end{figure}



\subsection{Other Cases}


    The possible combinatorial changes which can occur in each case are covered in the following lemmas (due to lack of space, the proofs are moved to the appendix; a table summarizing all cases when an event occurs is also in the appendix):
    \begin{lemma}
    \label{lemma:CCS-main}
        No event occurs for CCS for any $Z$ or $W$. 
    \end{lemma}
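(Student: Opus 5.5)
The plan is to mirror the CCO analysis: place the agent at $a$ just above $\ell_g$ and at $b$ just below it. For each value of $Z$ and $W$, compare the obstruction counts in the wedges cut out by $\ell_b$ and $\ell_r$, both near $v_2$ and near $x_3x_4$. The key structural fact for CCS is this. Both $v_1$ and $v_2$ are convex, and all four incident edges lie on the \emph{same} side of $\ell_g$, say below it. So when the agent is at $a$ (above), the rays $\ell_b$ and $\ell_r$ graze both vertices from the side with no edges. The segment from $a$ to points just beyond $v_2$ along $\ell_g$ then crosses neither pair of incident edges. When the agent moves to $b$, the rays approach from the side containing all four edges.

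First, I will treat the neighbourhood of $v_2$. Near $v_2$, the region locally swept by the sight lines is split by $\ell_r$ into the part "inside" the convex wedge at $v_2$ and the part outside it. Crossing $\ell_g$ at $v_1$ changes the count contributed by $v_1$'s edges by $0$ or $2$. The reason is that both edges of $v_1$ are on one side, so a sight line passing near $v_1$ crosses both or neither. The change applies uniformly to every point near $v_2$ on the same side of $\ell_r$, and the wedge at $v_2$ itself is unaffected since $v_1$ only shifts which sight lines pass it. I will then check the regions adjacent to $v_2$ one by one. In each region the count either shifts by $2$ in both neighbouring regions simultaneously or not at all. The regions separated by $\ell_r$ already differ by exactly the two edges of $v_2$, in a pattern that is identical for $a$ and $b$. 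From this I conclude that the partition of the local neighbourhood into $\le k$ and $>k$ pieces changes only by moving a boundary, never by creating, removing or reconnecting a component. This must be verified for $Z=k-1,k,k+1,k-2$, the only values where thresholds lie within a window of width $2$. For other $Z$ the neighbourhood is uniformly visible or uniformly shadowed, by the same extreme argument as in Lemma~\ref{lemma:CCO-main}.

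Second, I will apply the same bookkeeping at $x_3x_4$. Beyond $v_2$, the sight lines near $\ell_g$ cross both, one, or none of the vertex pairs depending on which side of $\ell_b$ and $\ell_r$ they lie. In CCS, the strip between $\ell_b$ and $\ell_r$ beyond $v_2$ has count $W+2$ or $W+4$ on one side versus $W$ on the other, in a nested way. Between $a$ and $b$, the strip simply switches sides of the crossing rays. The visible/shadow pattern along $x_3x_4$ is thus reflected but keeps the same number of components and the same adjacency to the shadows already present. Hence no merge/split or appear/disappear occurs for any $W$, and by the invariance under adding two walls and raising $k$ by $2$, finitely many values of $W-k$ suffice.

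The main obstacle is the middle step at $x_3x_4$. I must argue rigorously that the thin strip between $\ell_b$ and $\ell_r$ never forms an isolated component at either position. I would first try to show that this strip is always contiguous, on the same side of $\ell_g$, with a region of equal visibility status at both $a$ and $b$. If that fails for some $W$, the argument would have to track the component containing the strip and show it reconnects through the region behind $v_2$.
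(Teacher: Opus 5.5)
Your overall plan (compare obstruction counts at $a$ and $b$ near $v_2$ and near $x_3x_4$, and handle extreme $Z,W$ by saturation) matches the paper's figure-by-figure check. However, the two structural claims you put in place of that check are wrong exactly where CCS differs from CCO.

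At $v_2$, whether a sight line picks up $v_1$'s two edges is decided by its side of $\ell_b$, not of $\ell_r$. (Beyond $v_2$, $\ell_r$ lies outside $v_2$'s wedge and only separates two exterior regions.) Moreover, $\ell_b$ passes within $O(\epsilon)$ of $v_2$ when the agent is at distance $\epsilon$ from $\ell_g$. At $a$, $\ell_b$ passes just on the edge side of $v_2$, i.e.\ through $v_2$'s interior wedge. It cuts off a small apex triangle whose sight lines pass $v_1$ on its edge-free side. Let $c$ be the count excluding incident edges. At $a$ the apex has count $c+1$ and the rest of the wedge $c+3$. At $b$, where $\ell_b$ passes on the other side of $v_2$, the whole wedge has count $c+3$. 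So the wedge is not unaffected, and ``each region shifts together with its neighbours'' fails there.

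In fact your reasoning in this step applies verbatim to CCO and would show no event at $v_2$, contradicting Lemma~\ref{lemma:CCO-main}. There the mirror-image apex (cut off at $b$, count $c+3$ inside a wedge of count $c+1$) is a shadow that appears and disappears. What is missing is the observation that in CCS, for $k\in\{c+1,c+2\}$, the only local change is a \emph{visible} pocket at the apex of an otherwise shadowed wedge, bounded by $v_2$'s two edges and $\ell_b$. Cutting a boundary corner off a shadow component neither creates, deletes nor disconnects one. Your sentence ``never creating, removing or reconnecting a component'' is false for the visible set here and true only for shadow components. That distinction is the whole content of the lemma at $v_2$.

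At $x_3x_4$ the profile is not ``reflected''. Go from the edge-free side of $\ell_g$ to the edge side, with $c$ now the base count there. At $a$ the counts are $c$ above $\ell_r$, then $c+2$ on the strip (the $+2$ coming from $v_2$), then $c+4$ below $\ell_b$. At $b$ they are $c$ above $\ell_b$, then $c+2$ on the strip (the $+2$ now coming from $v_1$), then $c+4$ below $\ell_r$. This is the same monotone step profile, shifted across $\ell_g$. Each nontrivial superlevel set near $x_3x_4$ lies on the edge side of a single ray. That ray moves continuously across $\ell_g$, sliding along the walls at $x_3$ and $x_4$ without meeting a vertex, so no event occurs.

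This disposes of your ``main obstacle'' at once: the strip's count lies strictly between its neighbours', so it can never be an isolated component. A truly reflected profile would instead move the shadow to the other side of $\ell_g$ and would not support your conclusion. Contrast CCO, where the strip is a valley ($c$ between two $c+2$ regions) at $a$ and a ridge ($c+4$) at $b$. That is precisely what produces its merge/split and appear/disappear events. Without these two computations your argument does not distinguish CCS from CCO, so it does not yet prove the lemma.
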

    \begin{proof}
        See Appendix~\ref{section:ccs}.
    \end{proof}
    \begin{lemma}
    \label{lemma:CRO-main}
    For CRO, when $Z = k$, a merge/split event occurs at $v_{2}$ (Figure~\ref{fig:CRS-generic2}). If $Z = k - 2$, an appear/disappear event occurs at $v_{2}$ (Figure~\ref{fig:CRS-generic4}). Additionally, if $W = k$, a merge/split event occurs at $x_{3}x_{4}$ (Figure~\ref{fig:CRS-generic4}). Also, if $W = k - 2$, an appear/disappear event occurs at $x_{3}x_{4}$. Finally, when $Z = k - 4$, an appear/disappear event occurs at $x_{3}x_{4}$ (Figure~\ref{fig:CRS-generic6}). 

\end{lemma}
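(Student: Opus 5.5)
I would prove the CRO lemma by mirroring the argument for Lemma~\ref{lemma:CCO-main}. The idea is to count, for each region near $v_2$ and near $x_3x_4$, how many walls the sight line from the agent crosses. I would do this once with the agent at a position $a$ just above $\ell_g$ and once at a position $b$ just below it. An event happens exactly when the set of locally invisible regions (those with count greater than $k$) changes its number of components or their connectivity.

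\textbf{Setup near $v_2$.} Here $v_1$ is convex and $v_2$ is reflex, and the edges of each lie on opposite sides of $\ell_g$. The rays $\ell_b$ and $\ell_r$ split the neighbourhood of $v_2$ into sectors. Every sight line to $v_2$ first crosses the $Z$ walls between $x_1$ and $v_2$. On top of that, it may cross one or two edges incident to $v_1$ and one or two edges incident to $v_2$.
\begin{itemize}
\item At $a$, the edge of $v_1$ lying above $\ell_g$ adds one crossing to the sectors behind it.
\item At $b$, the rays $\ell_b$ and $\ell_r$ swap order, so the other edge of $v_1$ shields a different sector.
\item Because $v_2$ is reflex, the local neighbourhood of $v_2$ inside $P$ is a wedge of angle greater than $\pi$. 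Sectors on the far side of $v_2$ therefore pick up two extra crossings, one from each incident edge, rather than one.
\end{itemize}
This gives local counts ranging from $Z$ to $Z+4$. A short case check on the threshold $k$ should then show:
\begin{itemize}
\item When $Z=k$, the two shadow sectors flanking the reflex wedge are joined through the sector shielded by $v_1$ on one side of $\ell_g$ and separated on the other side. This is a merge/split.
\item When $Z=k-2$, only the sector with count $Z+3$ exceeds $k$, and this sector exists on only one side of $\ell_g$. This is an appear/disappear.
\end{itemize}

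\textbf{Setup near $x_3x_4$.} I would repeat the count along the wall $x_3x_4$, adding the $W-Z$ walls between $v_2$ and $x_3$. The local counts there range from $W$ up to $W+4$. The pattern then matches the CCO analysis of Lemma~\ref{lemma:CCO-main}:
\begin{itemize}
\item $W=k$ gives merge/split.
\item $W=k-2$ gives appear/disappear.
\item The extra event at $Z=k-4$ arises because when $W=Z$ (no walls between $v_2$ and $x_3$), the region just beyond $v_2$ on the line coincides with $x_3x_4$. The count-$(Z+4)$ sector, produced by the two crossings at $v_2$ plus two from $v_1$, then appears or disappears along $x_3x_4$.
\end{itemize}
I would derive this sub-case explicitly rather than fold it into the $W$ conditions.

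\textbf{Remaining values.} To finish, I would show that no other values produce events.
\begin{itemize}
\item If $Z\ge k+1$, or respectively $W\ge k+1$, every local count exceeds $k$ on both sides of $\ell_g$, so everything is in shadow.
\item If $Z\le k-5$, or $W\le k-5$, every local count is at most $k$, so everything is visible.
\item The parity-shifted values $Z=k-1, k-3$ and $W=k-1, k-3$ give the same component structure on both sides. The sector whose count changes never crosses the threshold in a way that alters connectivity.
\end{itemize}
Events beyond $x_4$ reduce to earlier sub-cases by the invariance under adding two walls and raising $k$ by two, as noted before Lemma~\ref{lemma:CCO-main}.

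\textbf{Main obstacle.} I expect the hardest step to be the bookkeeping at the reflex vertex $v_2$. I would need to determine which sectors actually lie inside $P$ and how the swap of $\ell_b$ and $\ell_r$ redistributes the $+1$ and $+2$ contributions, particularly for the $Z=k-4$ event that lands at $x_3x_4$. My first approach would be to draw explicit figure pairs for $Z=k,k-2,k-4$, analogous to Figures~\ref{fig:CCO-genericA2} and~\ref{fig:CCO-genericB2}, and read off the sector counts directly.
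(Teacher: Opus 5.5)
Your method, which compares the obstruction counts of the sectors cut out by $\ell_b$ and $\ell_r$ with the agent at $a$ and at $b$, is the paper's method. The paper's proof is just the figure pairs in Figures~\ref{fig:CRS-generic0}--\ref{fig:CRS-generic8}, all drawn with $W=Z+2$. The genuine gap is your treatment of the last clause. Your mechanism for it fails. With $Z=k-4$, the sector you invoke has count $Z+4=k$, which is still visible, since a point is hidden only when more than $k$ walls are crossed. Under your assumption $W=Z$, your own range $W,\dots,W+4$ also leaves $x_3x_4$ entirely visible on both sides of $\ell_g$. More fundamentally, no condition on $Z$ alone can force an event at $x_3x_4$. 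The counts there are $W$ plus the contributions of $v_1$ and $v_2$, and $Z$ and $W$ are independent: with $Z=k-4$ and $W\ge k+6$, the paper itself says $x_3x_4$ is entirely in shadow. In the paper, this clause is read off Figure~\ref{fig:CRS-generic6}, where $Z=k-4$ and $W=k-2$ hold simultaneously. The appear/disappear at $x_3x_4$ there is precisely the $W=k-2$ event. Indeed, the appendix version (Lemma~\ref{lemma:CRO}) and Table~\ref{table:all-cases} list only $Z=k$, $Z=k-2$, $W=k$ and $W=k-2$. You should identify this clause with the $W=k-2$ case under the figure's pairing, rather than derive it as a separate sub-case.

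Your local bookkeeping also needs correcting. Since $v_1$ is critical, both of its edges lie on the same side of $\ell_g$; ``opposite'' refers to $v_1$'s edges versus $v_2$'s edges. So there is no ``edge of $v_1$ above $\ell_g$'' trading places with ``the other edge''. A sight line passing $v_1$ on the side of its edges crosses both of them, and otherwise crosses neither. The same holds at the reflex $v_2$ for target points beyond its far edge. Hence the local counts near $v_2$ are only $Z$, $Z+2$ and $Z+4$. On one side of $\ell_g$, the thin sliver between $\ell_b$ and $\ell_r$ passes both vertices on their edge sides, giving count $Z+4$ beyond $v_2$'s far edge. On the other side, the sliver passes both vertices on their edge-free sides, giving count $Z$ and separating two regions of count $Z+2$. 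This yields merge/split at $Z=k$ and appear/disappear at $Z=k-2$, and the same pattern with base count $W$ gives the two $x_3x_4$ clauses. There is no ``$Z+3$'' sector; your claim that it alone exceeds $k$ also contradicts your own range up to $Z+4$. Your planned claim that $Z=k-1$ and $Z=k-3$ produce no change would also fail. All local counts have the parity of $Z$, so the hidden sectors for $k=Z+1$ are the same as for $k=Z$. That claim is not part of this lemma, so drop it.
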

\begin{proof}
    See Appendix~\ref{appendix:CRO}.
\end{proof}
\begin{lemma}
\label{lemma:CRS-main}
    For CRS, when $Z = k$, there is a merge/split event at $v_{2}$. No event occurs for all other $Z$ and $W$. 
\end{lemma}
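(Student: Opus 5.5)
We follow the template of the proof of Lemma~\ref{lemma:CCO-main}. We fix the local picture of CRS: $v_1$ is convex, $v_2$ is reflex, and both edges of each vertex lie on the same side of $\ell_g$, say below it. We place the agent at $a$ (above $\ell_g$) and at $b$ (below $\ell_g$), draw the rays $\ell_b$ through $v_1$ and $\ell_r$ through $v_2$, and record obstruction counts (including the edges incident to $v_1$ and $v_2$) for two neighbourhoods: the wedge regions around $v_2$, and the region just beyond $x_3x_4$ along $\ell_g$. The event type is then read off from which regions cross the threshold $k$ as the agent passes from $a$ to $b$. By the invariance remarked before the lemma (adding two walls and raising $k$ by two changes nothing), it is enough to treat a bounded window of values of $Z$ and $W$ relative to $k$.

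\textbf{Step 1: the neighbourhood of $v_2$.} Because $v_2$ is reflex with both edges below $\ell_g$, as the agent crosses $\ell_g$ the ray $\ell_b$ sweeps past $v_2$. The two edges of $v_1$ are added to, or removed from, the count only in the thin wedge between $\ell_b$ and $\ell_r$. The wedges on either side of $\ell_r$ (to the left of the first edge of $v_2$, and to the right of its second edge) keep their counts unchanged. The wedge between $\ell_b$ and $\ell_r$ carries count $Z+2$ on one side of the crossing and $Z$ on the other. It joins the two shadow regions flanking $v_2$ exactly when it becomes invisible while both flanks are already invisible; that happens when $Z=k$, which gives a merge from $b$ to $a$ and a split in the reverse direction. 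For all other values of $Z$ we argue as follows. If $Z\ge k+1$, the region stays shadowed throughout. If $Z\le k-1$, the flanks, whose counts differ from $Z$ by the two edges of $v_2$, are either both visible or change nothing connected to the wedge. In either case the component count is preserved, so no appear/disappear event occurs at $v_2$. This contrasts with CCO, where the convex $v_2$ produces an isolated pocket.

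\textbf{Step 2: the neighbourhood of $x_3x_4$.} Here the key observation is that, in the same-side configuration, the two edges of $v_1$ and the two edges of $v_2$ are swept in pairs along $\ell_g$ beyond $v_2$. The count just past $x_3x_4$ is therefore $W+2$ (or $W+4$) on both sides of the crossing, and only the ordering of the sweeping rays differs. We check that the local region near $x_3x_4$ has the same number of components on both sides, so no event occurs for any $W$. This is the step I expect to be the main obstacle. One must verify carefully, case by case for $W\in\{k-4,\dots,k+2\}$, that the reordering of $\ell_b$ and $\ell_r$ never isolates or reconnects a strip. I would do this by tabulating the counts in the three strips delimited by $\ell_b$ and $\ell_r$ at $a$ and at $b$, exactly as in Figures~\ref{fig:CCO-genericA0}--\ref{fig:CCO-genericB6}, and checking that the pattern of strips below and above threshold is the same up to relabeling.

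\textbf{Step 3: conclude.} Finally, for $Z$ or $W$ far from $k$ we invoke the saturation argument used for CCO: the local region is entirely shadowed or entirely visible. Together with the tables from Steps 1 and 2, this shows that $Z=k$ is the only event, a merge/split at $v_2$.
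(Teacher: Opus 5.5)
You follow the same route as the paper: compare the obstruction counts near $v_2$ and near $x_3x_4$ for the agent at $a$ and at $b$, read off the event, and saturate for extreme $Z,W$. Your qualitative picture in Step 1 is also right. Near $v_2$ the region above $\ell_g$ has count $Z$, the left flank has $Z+2$ (from the edges of $v_1$), and the right flank has $Z+4$ (from all four edges); your description of the flanks as differing from $Z$ by the edges of $v_2$ is slightly off. The thin sliver that can link the flanks over the top of $v_2$ has count $Z$ at one position and at least $Z+2$ at the other.

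\textbf{The gap is in the threshold step.} With these counts, the merge/split condition is $Z+2>k\ge Z$: the sliver is shadow on one side and visible on the other, the region above $v_2$ is visible, and both flanks are shadow. That condition holds for $Z\in\{k-1,k\}$, not only for $Z=k$. Your claim that for $Z\le k-1$ the component count is preserved is contradicted by your own counts at $Z=k-1$, where the flanks are at $k+1$ and $k+3$ and the sliver toggles between $k-1$ and $k+1$.

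The missing ingredient is a parity argument, which the paper relies on tacitly; its CRS figures only show $Z=k+2,k,k-2,\dots$.
\begin{itemize}
\item The segment from the agent to a point of $\ell_g$ just before the reflex vertex $v_2$ joins two interior points and only touches $v_1$. It therefore crosses an even number of walls, so $Z$ is even.
\item Every segment between interior points crosses an even number of walls, so $k$-visibility for odd $k$ coincides with $(k-1)$-visibility, and you may take $k$ even.
\item Then $Z\equiv k \pmod 2$, and $Z=k-1$ cannot occur.
\end{itemize}
This is not cosmetic: for odd $k$ the merge/split actually occurs at $Z=k-1$. You therefore need to state this normalization and run your windows ($Z\ge k+1$, $Z\le k-1$, $W\in\{k-4,\dots,k+2\}$) in steps of two.

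\textbf{Two smaller points.}

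First, the direction of the event. With your own convention (edges below $\ell_g$, $a$ above), rays from $b$ into the linking sliver pass under $v_1$ and pick up its two edges. So at $Z=k$ the sliver is shadow at $b$ and visible at $a$. The merge therefore happens from $a$ to $b$ and the split from $b$ to $a$, the reverse of what you wrote. This is harmless for the lemma, which does not specify a direction.

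Second, Step 2. The count on $x_3x_4$ itself is not preserved: if $c$ is the number of walls along $\ell_g$ up to $x_3x_4$, it is $c$ at $a$ and $c+4$ at $b$. The correct invariant is that, in the same-side configuration, the three strips cut out by $\ell_r$ and $\ell_b$ carry $c$, $c+2$, $c+4$, increasing toward the side containing the edges, at both $a$ and $b$. Hence the shadow near $x_3x_4$ is always a single half-band (or empty, or everything), and no event occurs for any $W$. This replaces your case table. In the opposite-side cases the middle strip is instead a local minimum at one position and a local maximum at the other, which is where the $x_3x_4$ events of CCO, CRO and RRO come from.
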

\begin{proof}
    See Appendix~\ref{appendix:CRS}.
\end{proof}
\begin{lemma} 
No event occurs for RCS for any $Z$ or $W$. 

\end{lemma}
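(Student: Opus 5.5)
We argue in the same style as Lemma~\ref{lemma:CCO-main} and Lemma~\ref{lemma:CCS-main}: fix the RCS configuration, with $v_1$ reflex, $v_2$ convex, the edges of both on the same side of $\ell_g$, and $v_1,v_2$ mutually critical. We then compare the obstruction counts in the regions cut out by $\ell_b$ and $\ell_r$ for an agent at $a$ (above $\ell_g$) and at $b$ (below $\ell_g$). The structural point we want to exploit is that in the ``same'' configuration the two edges of $v_1$ and the two edges of $v_2$ lie in the same closed half-plane of $\ell_g$. So when the agent crosses $x_1x_2$, the rays $\ell_b$ and $\ell_r$ swap order without either vertex starting or stopping to add obstructions on the far side of the other.

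First, we compute the wall counts near $v_2$. Because $v_1$ is reflex, locally the agent sees past $v_1$ along $\ell_b$ through the interior of $P$. For a ray slightly above or below $\ell_g$ near $v_1$, the number of edges of $v_1$ crossed is $0$ or $2$. It is $2$ exactly on the side containing $v_1$'s edges, for an agent on either side of $\ell_g$, since the crossed edges depend only on which side of $\ell_b$ the ray lies. Near $v_2$, the convex vertex likewise contributes $0$ or $2$ crossings depending only on the side of $\ell_r$. Hence the counts in the wedges bounded by $\ell_b$ and $\ell_r$ are $Z$, $Z+2$, $Z+2$, $Z+4$ (resp.\ $W$-based counts near $x_3x_4$) both at $a$ and at $b$.

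Second, we check that these counts are arranged with the same adjacency pattern on both sides. When the wedges swap order, the wedge of count $Z+2$ between the rays is replaced by another wedge of count $Z+2$. So for every threshold $k$, each of the sets $\{\text{count}\le k\}$ and $\{\text{count}>k\}$ near $v_2$ and near $x_3x_4$ keeps the same number of connected local pieces and the same connectivity to the rest of the shadow. Thus no appear, disappear, merge or split occurs, for any $Z$ or $W$. This parallels the CCS argument, with the reflex vertex $v_1$ replacing a convex one. Reflexivity only changes which side of $v_1$ is the polygon interior, not the parity or side of the crossings.

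The main obstacle is justifying that the middle wedge has the same count before and after crossing. In CRS an event does occur at $Z=k$, so we must check carefully that the asymmetry there comes from $v_2$ being reflex: the interior region incident to $v_2$ changes. Here $v_2$ is convex, so the local region at $v_2$ lies entirely inside the angle of $v_2$ on one side of $\ell_r$, and this side is unchanged by the crossing. We would verify this by drawing the generic diagrams for $Z=k+1,k-1,k-3,k-5$, as in the CCO figures. By the invariance under adding two walls and raising $k$ by two, these diagrams cover all values.
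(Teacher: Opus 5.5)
Your argument for the neighbourhood of $x_3x_4$ is sound, and it is a different route from the paper's. The paper's proof only inspects the diagrams for $Z=k+1,\dots,k-7$ and $W=k+2,\dots,k-6$, and it handles the extremes by saturation. You reason instead about counts. Beyond $v_2$, each vertex adds $0$ or $2$ according to the side of $\ell_b$ (respectively $\ell_r$). In the Same configuration the three sectors therefore carry counts $c,\,c+2,\,c+4$ in the same order at $a$ and at $b$, so no threshold $k$ produces an event. This covers all $W$ at once. The trouble is at $v_2$.

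\textbf{The count model near $v_2$ is wrong.} The shadow there lies in $P$, which near the convex vertex $v_2$ means its interior angle. A segment from the agent to a point of that angle crosses exactly one edge of $v_2$, not $0$ or $2$. So the local counts are $Z+1$ and $Z+3$, and your list $Z,Z+2,Z+2,Z+4$ does not describe the neighbourhood of $v_2$. As a check: applied to CCO, your model would predict a merge/split at $v_2$, contradicting Lemma~\ref{lemma:CCO-main}.

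\textbf{The local picture at $v_2$ does change when the agent crosses.} Put both pairs of edges below $\ell_g$.
\begin{itemize}
\item At $a$, $\ell_b$ passes just below $v_2$. It cuts from the interior angle an apex triangle bounded by the two edges of $v_2$ and $\ell_b$. That triangle has count $Z+1$, because its rays pass above $v_1$; the rest of the angle has count $Z+3$.
\item At $b$, $\ell_b$ passes above $v_2$, through the exterior, and the whole angle has count $Z+3$.
\end{itemize}
Your justification that the local region stays on one side of $\ell_r$ looks at the wrong ray. It is $\ell_b$ that sweeps across $v_2$, and this sweep is exactly what produces the appear/disappear event at $v_2$ in CCO and RCO.

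\textbf{What is missing} is an argument about which way the pocket points. In the Same cases the pocket has the \emph{lower} count. So for $Z+1\le k<Z+3$, crossing only carves a visible corner, attached to the boundary of $P$, out of a shadow region, and the number and adjacency of shadow components stay the same. In the Opposite cases the pocket has the higher count inside a visible angle, so it is a new shadow component. As written, your conclusion at $v_2$ rests on an incorrect count model. The diagram check you fall back on at the end is simply the paper's proof.
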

\begin{proof}
    See Appendix~\ref{appendix:RCS}.
\end{proof}
\begin{lemma} 
\label{lemma:RCO-main}
For RCO, when $Z = k - 1$ (Figure~\ref{fig:RCO-generic-2}), an appear/disappear event occurs at $v_{2}$. If $W = k$, a merge/split event occurs at $x_{3}x_{4}$ (Figure~\ref{fig:RCO-generic-2}). Additionally, if $W=k-3$, an appear/disappear event occurs at $x_{3}x_{4}$ (Figure~\ref{fig:RCO-generic-4}). No event occurs for any other $Z$ or $W$. 

\end{lemma}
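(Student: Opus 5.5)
I will follow the same template used for the CCO case (Lemma~\ref{lemma:CCO-main}), since RCO differs from CCO only in that $v_1$ is reflex instead of convex. I place $v_1$ to the left of $v_2$ on $\ell_g$ and compare an agent at position $a$ (above $\ell_g$) with one at position $b$ (below $\ell_g$). At each position I will record, for the wedge regions cut out by the rays $\ell_b$ and $\ell_r$, the number of walls a sight line crosses to reach a neighbourhood of $v_2$ and a neighbourhood of the segment $x_3x_4$. As in CCO, these obstruction counts include the edges incident to $v_1$ and $v_2$, and are therefore $Z$ or $W$ plus a local correction of $0$, $1$ or $2$. An event happens exactly when some small region's count moves across the threshold $k$ (from $\le k$ to $>k$ or the reverse) as the agent crosses $x_1x_2$. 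The question is whether that crossing creates or destroys a connected invisible region, which gives appear/disappear, or connects or disconnects two invisible regions, which gives merge/split.

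The first step is the local analysis at $v_2$. Because $v_1$ is reflex and both its edges lie on one side of $\ell_g$, crossing $\ell_g$ changes whether sight lines toward $v_2$ pass through one edge of $v_1$ or through neither. So near $v_2$ the count changes by exactly one, in the wedge between $\ell_b$ and $\ell_r$. Since $v_2$ is convex with its edges on opposite sides, only one thin wedge adjacent to $v_2$ is newly blocked. That wedge turns invisible precisely when its count moves from $k$ to $k+1$, i.e.\ when $Z+1=k$. The wedge is not connected to any other shadow, so this is an appear/disappear event at $v_2$ exactly when $Z=k-1$. For any other $Z$, the count stays on one side of $k$, and the regions adjacent to the wedge are both visible or both invisible, so no change occurs.

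The second step is the analysis at $x_3x_4$. Beyond $v_2$, sight lines close to $\ell_g$ also meet the edges of $v_2$. On one side of $\ell_g$ the ray through the gap between $\ell_b$ and $\ell_r$ crosses both $v_2$ edges. On the other side it crosses edges of $v_1$ and $v_2$ in a different combination. I will tabulate, for the three regions near $x_3x_4$ (left of $\ell_b$, between the rays, right of $\ell_r$), the counts at $a$ and at $b$ in terms of $W$.
\begin{itemize}
\item When $W=k$, the middle region at one position bridges two invisible side regions. This gives merge/split.
\item When $W=k-3$, a region whose count is $W+3=k$ on one side becomes $W+4>k$ on the other, while its neighbours stay visible. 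This gives appear/disappear. The offset of $3$, rather than the $2$ found in CCO, comes from the reflex $v_1$ contributing its two edges differently.
\end{itemize}

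Finally, I handle the extremes as in Lemma~\ref{lemma:CCO-main}. For large $Z$ or $W$ (e.g.\ $Z\ge k+3$, $W\ge k+4$) every count near the locus exceeds $k$, so everything there is in shadow. For sufficiently small values every count is at most $k$, so everything is visible. In both extremes no event occurs. Intermediate values that differ from the critical ones by parity or offset give count changes that stay on one side of the threshold. Events beyond $x_4$ reduce to the $W$ cases already covered, by the invariance under adding two walls and raising $k$ by two.

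The main obstacle is getting the exact correction offsets in the $x_3x_4$ tabulation, in particular justifying $W=k-3$ rather than $k-2$. I would settle this first by drawing the representative configuration with $k=3$ and $W=0$ and counting edge crossings along each wedge explicitly.
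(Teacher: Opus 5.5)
There is a genuine gap, and it sits where you flagged the ``main obstacle''. First, $W=k-3$ is not what the paper's own proof establishes. The paper proves this lemma with the same $a$/$b$ wedge-count comparison you describe. But in its appendix version (Lemma~\ref{lemma:RCO}), in the caption of the very figure cited for this clause (Figure~\ref{fig:RCO-generic-4}: $Z=k-3$, $W=k-2$, appear/disappear), and in Table~\ref{table:all-cases}, the appear/disappear event at $x_3x_4$ is at $W=k-2$, exactly as in CCO. The enumerated RCO cases are $W=k+2,k,k-2,k-4,k-6$, so $W=k-3$ never occurs. The $k-3$ in the main-body statement is a typo. Read literally, the statement conflicts with the paper's analysis, because its closing clause (no event for any other $W$) fails at $W=k-2$. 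Your explanation that the reflex $v_1$ ``contributes its two edges differently'' is fitted to the typo rather than derived. Whether $v_1$ is reflex or convex affects only the parity of $Z$ and $W$, not how a nearby sight line meets $v_1$.

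Second, the local counting you rely on contradicts mutual criticality. Each vertex has both of its edges on one side of $\ell_g$; ``Opposite'' means $v_1$'s side differs from $v_2$'s. For an agent near $x_1x_2$, every relevant sight line is a small perturbation of $\ell_g$, so it crosses both edges of $v_1$ or neither, and likewise at $v_2$. Each vertex therefore contributes $0$ or $2$, never $1$. So ``the count changes by exactly one'' at $v_2$ and ``$W+3=k$ becomes $W+4$'' at $x_3x_4$ cannot happen; your $Z=k-1$ comes out right only by accident.

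The correct tabulation is as follows. Near $v_2$ the count is $Z+1$ everywhere, except that at one of the two positions the thin sliver between $\ell_g$ and $\ell_b$ has $Z+3$. Near $x_3x_4$, the two regions outside the wedge between $\ell_b$ and $\ell_r$ have count $W+2$ at both positions, while the wedge itself has $W+4$ at one position and $W$ at the other. Hence:
\begin{itemize}
\item appear/disappear at $v_2$ iff $Z+1\le k<Z+3$;
\item merge/split at $x_3x_4$ iff $W\le k<W+2$;
\item appear/disappear at $x_3x_4$ iff $W+2\le k<W+4$.
\end{itemize}
These are the same windows as CCO.

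The missing ingredient that picks one value from each window is parity. A segment joining two interior points of $P$ crosses $\partial P$ an even number of times, so $Z+1$ and $W+2$ are even. In the class the paper works in ($W\equiv k$, $Z\not\equiv k \pmod 2$), this gives exactly $Z=k-1$, $W=k$, $W=k-2$. $W=k-3$ is the realized value only for the opposite parity of $k$, and then the other two conditions shift to $Z=k-2$ and $W=k-1$. So no choice of $k$ makes the printed triple correct. The same bookkeeping is what actually proves the ``no event for any other $Z$ or $W$'' clause, which your sentence about values that ``differ by parity or offset'' only asserts.
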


\begin{proof}
    See Appendix~\ref{appendix:RCO}.
\end{proof}

\begin{lemma} 
\label{lemma:RRS-main}
For RRS, When $Z = k$ (Figure~\ref{fig:RRS-generic-0}), a merge/split event occurs at $v_{2}$. No event occurs for any other $Z$ or $W$. 

\end{lemma}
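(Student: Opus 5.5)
The proof proceeds as in Lemma~\ref{lemma:CCO-main}. We fix the RRS configuration: $v_1$ and $v_2$ are reflex and mutually critical, and at each of them both incident edges lie on the same side of $\ell_g$. We then compare the obstruction counts seen by an agent at position $a$ (above $\ell_g$) and at position $b$ (below $\ell_g$), in the regions cut out by the rays $\ell_b$ and $\ell_r$. By Theorem 1 only such mutually critical pairs matter, and by the invariance remark (adding two walls and raising $k$ by two changes nothing) it suffices to check a bounded window of values of $Z$ and $W$ around $k$. Outside this window, $v_2$ or $x_3x_4$ together with its local neighbourhood is either entirely in shadow or entirely visible in both positions, so no event can occur.

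\textbf{Events at $v_2$.} Both edges of $v_2$ are on one side of $\ell_g$. When the agent crosses $x_1x_2$, the ray $\ell_r$ sweeps past the wedge near $v_2$, and the count of walls between the agent and the points near $v_2$ changes by the edges of $v_1$. These edges enter or leave the sightline as $\ell_b$ and $\ell_r$ swap order. Because $v_1$ is reflex with both edges on one side, the sightline to $v_2$ is tangent to $v_1$ on one side of $\ell_g$ and crosses both of $v_1$'s edges on the other. Near $v_2$ this creates a thin sliver whose count differs from the adjacent regions. We tabulate the counts in the three local regions at $v_2$ (inside the wedge of $v_2$ and on either side of $\ell_r$), as functions of $Z$, for both $a$ and $b$. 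The claim to verify is that for $Z=k$ the regions with count at most $k$ separate two shadow pieces on one side and connect them on the other, giving a merge/split. For every other value of $Z$, the pattern of regions with count greater than $k$ changes only by enlarging or shrinking an existing component that stays connected, so the topology of the shadow does not change.

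\textbf{No events at $x_3x_4$.} We repeat the count along $x_3x_4$ as a function of $W$. The key observation is that, with both vertices reflex and both edge pairs on the same side, the number of walls crossed beyond $v_2$ changes in a way that exactly cancels. On each side of $\ell_g$, a sightline passing near the pair crosses the same number of edges incident to $v_1$ and $v_2$ (either $0$ and $2$, or $2$ and $0$, depending on the side). Thus the counts along $x_3x_4$ in the regions bounded by $\ell_b$ and $\ell_r$ differ only by a shift that is continuous across the crossing, and no component of the shadow is born, dies, or changes connectivity there. This mirrors the argument for CCS and RCS, where the same-side configuration gives no events.

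\textbf{Main obstacle.} The delicate step is the case analysis at $v_2$ for $Z=k$ versus $Z=k\pm1,k\pm2$. We must check that the sliver between $\ell_b$ and $\ell_r$ really disconnects or reconnects shadow pieces rather than just bounding a single component. We would settle this by drawing the local picture, with the reflex wedge at $v_2$ and the two rays, for each candidate $Z$, and checking connectivity of the region where the count exceeds $k$ in a small disk around $v_2$.
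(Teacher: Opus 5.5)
Your framework matches the paper's. You compare obstruction counts at $a$ and $b$ in the regions cut out by $\ell_b$ and $\ell_r$, and you dispose of values of $Z$, $W$ outside a bounded window by saturation. Your argument at $x_3x_4$ is correct in substance, and more conceptual than the paper's figure check. Stated cleanly: with both edge pairs on the same side, the three strips near $x_3x_4$ carry counts $W'$, $W'+2$, $W'+4$ (from the non-edge side to the edge side) at both positions. The middle $2$ comes from $v_2$ at one position and from $v_1$ at the other, so the shadow is always the same union of strips, and the middle strip merely collapses onto $\ell_g$ and reopens.

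The gap is at $v_2$, the only place where the lemma asserts an event. You state the claim to be verified and defer it to ``drawing the local picture''. So neither the merge/split at $Z=k$ nor its absence for other $Z$ is actually established. The regions you propose to tabulate are also wrong for a reflex $v_2$: the wedge between its edges is exterior to $P$. Nor does $\ell_r$ ``sweep past the wedge''; it always passes through $v_2$. What changes the local picture is $\ell_b$, which moves from the edge side of $v_2$ to the other side. The type of $v_1$ plays no role, only that of $v_2$ does. This is why CRS behaves like RRS, while CCS and RCS have no event.

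Here is the missing table. Put the edges of both vertices below $\ell_g$. Let $U$ be the part of $P$ near $v_2$ above $\ell_r$, and let $L$, $R$ be the sectors of $P$ between $\ell_r$ and the edge of $v_2$ nearer to, respectively farther from, $v_1$. The edges of $v_2$ contribute $0$ on $U\cup L$ and $2$ on $R$. The edges of $v_1$ contribute $0$ above $\ell_b$ and $2$ below it.

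With the agent above $\ell_g$, $\ell_b$ passes just below $v_2$. The counts are then $Z$ on $U$; $Z$ and $Z+2$ on the parts of $L$ above and below $\ell_b$; and $Z+2$ and $Z+4$ on the corresponding parts of $R$.

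With the agent below $\ell_g$, $\ell_b$ passes just above $v_2$. The counts are then $Z$ on $U$ above $\ell_b$; $Z+2$ on the thin strip of $U$ between $\ell_r$ and $\ell_b$; $Z+2$ on $L$; and $Z+4$ on $R$.

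For $Z=k$, the shadow in the first position has two local pieces: the part of $L$ below $\ell_b$, and all of $R$. They are separated by the exterior wedge on one side, and on the other by the visible set formed by $U$ together with the part of $L$ above $\ell_b$. In the second position the strip over $v_2$ has count $k+2$ and joins $L$ and $R$ into one piece. This is the merge/split.

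For $Z=k-2$, the shadow is part of $R$, respectively all of $R$, which only grows or shrinks. For $Z\ge k+2$ and $Z\le k-4$ the neighbourhood of $v_2$ is saturated (all shadow, respectively all visible).

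Regarding your cases $Z=k\pm1$: a segment between two interior points of $P$ crosses $\partial P$ an even number of times. So the count $Z$ on $U$ is even, and only every other value of $Z$ occurs. The lemma's ``$Z=k$'' tacitly fixes the parity of $k$. For odd $k$, $k$-visibility coincides with $(k-1)$-visibility, and the table above places the merge/split at $Z=k-1$. Without this observation, your planned check of $Z=k-1$ would seem to contradict the ``no other $Z$'' clause.
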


\begin{proof}
    See Appendix~\ref{appendix:RRS}.
\end{proof}
\begin{lemma} 
\label{lemma:RRO-main}
For RRO, when $Z = k$ (Figure~\ref{fig:RRO-generic-0}), a merge/split event occurs at $v_{2}$. If $Z = k - 2$, an appear/disappear event occurs at $v_{2}$ (Figure~\ref{fig:RRO-generic-2}). Additionally, if $W = k$, a merge/split event occurs at $x_{3}x_{4}$ (Figure~\ref{fig:RRO-generic-2}). Also, if $W = k - 2$, an appear/disappear event occurs at $x_{3}x_{4}$ (Figure~\ref{fig:RRO-generic-4}). No event occurs for any other $Z$ or $W$. 

\end{lemma}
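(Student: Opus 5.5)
We follow the same strategy as in the proof of Lemma~\ref{lemma:CCO-main}: fix the RRO configuration, place the agent first at a position $a$ just above $\ell_g$ and then at a position $b$ just below it, and compare obstruction counts in the wedges cut out by $\ell_b$ and $\ell_r$. By Theorem~1 only these mutually critical configurations can produce events, so it suffices to track the counts locally. We look separately near $v_2$ (the count is governed by $Z$) and near $x_3x_4$ (the count is governed by $W$). The counts here include the edges incident to $v_1$ and $v_2$.

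\textbf{Near $v_2$.} Since $v_1$ is reflex with its edges on opposite sides of $\ell_g$, a ray from the agent passing just beside $v_1$ crosses either zero or two of $v_1$'s edges, depending on the side. As the agent passes from $a$ to $b$, the wedge between $\ell_b$ and $\ell_r$ swaps orientation. The region adjacent to $v_2$ therefore sees counts of the form $Z$, $Z+1$, $Z+2$ on the three local sectors around the reflex vertex $v_2$.

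We then check when the set of sectors with count $>k$ changes topology. If the middle sector passes from $\le k$ to $>k$ and thereby connects two shadow sectors, this gives merge/split; we expect this exactly when $Z=k$. If an isolated sector becomes shadowed where nothing was before, this gives appear/disappear; we expect this at $Z=k-2$, since the offset of two comes from the pair of $v_1$'s edges.

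\textbf{Near $x_3x_4$.} Beyond $v_2$, the rays additionally cross $v_2$'s opposite-side edges. We redo the same bookkeeping with base count $W$: $W=k$ gives merge/split, and $W=k-2$ gives appear/disappear. Cases beyond $x_4$ are dismissed as reductions to these cases by the invariance under $(k,\text{walls})\mapsto(k+2,\text{walls}+2)$ noted earlier.

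\textbf{Showing no other values give events.} For $Z$ (resp.\ $W$) at least $k+1$ above the top sector count, the neighbourhood is entirely shadowed both before and after crossing. For values at least two below the smallest critical value, it is entirely visible in both positions. The remaining intermediate parities, such as $Z=k-1$ and $Z=k-3$, are ruled out by checking that the shadowed sectors keep the same connectivity before and after the swap.

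\textbf{Main obstacle.} The hardest step is the precise sector-count bookkeeping at the reflex vertex $v_2$, where the ray through $v_2$ can graze one or both incident edges. Getting the offsets exactly right, rather than off by one, determines whether the critical values are $k$ and $k-2$. We would settle this by drawing the six local sectors explicitly, in the style of the figures used for Lemma~\ref{lemma:CCO-main}.
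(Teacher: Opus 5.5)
\textbf{There is a genuine gap: your local counts are wrong, the critical values are not derived, and your step for the intermediate parities would fail.} Your plan is the same as the paper's figure-based proof: compare obstruction counts in the wedges cut out by $\ell_b$ and $\ell_r$ at $a$ and at $b$, separately near $v_2$ and near $x_3x_4$. The problems are in the execution.

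First, the zero-or-two dichotomy at $v_1$ holds because $v_1$ is \emph{critical}, i.e.\ both of its edges lie on one side of $\ell_g$. ``Opposite'' means that $v_1$'s edges and $v_2$'s edges lie on opposite sides of $\ell_g$. If $v_1$'s edges straddled $\ell_g$, every nearby ray would cross exactly one of them and nothing could change.

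More importantly, at the reflex critical vertex $v_2$, an \emph{interior} point near $v_2$ is separated from the agent by $0$ or $2$ of $v_2$'s edges. The one-crossing sector is the exterior wedge between the two edges, which lies outside $P$. So the local counts are $Z+2\epsilon_1+2\epsilon_2$, where $\epsilon_1\in\{0,1\}$ records whether the ray passes $v_1$ on the side of its edges and $\epsilon_2\in\{0,1\}$ records whether the point lies behind $v_2$. That is, the counts are $Z, Z+2, Z+4$, not $Z, Z+1, Z+2$. With your counts, nothing near $v_2$ would exceed $k$ when $Z=k-2$, so no appear event could occur there. The values $k$ and $k-2$ are therefore read off the statement (``we expect'') rather than proved.

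The actual computation is short. Beyond $v_2$, the part on the $v_1$-edge side of both rays and the part on the $v_2$-edge side of both rays have count $Z+2$ in both positions. The thin wedge between $\ell_r$ and $\ell_b$ has count $Z+4$ when the agent is on the side of $\ell_g$ containing $v_1$'s edges, because then $\ell_b$ passes $v_2$ on the side of $v_2$'s edges. It has count $Z$ when the agent is on the other side. Hence $Z=k$ gives merge/split, and $Z=k-2$ gives appear/disappear. The identical pattern with $W$ in place of $Z$ gives the two events at $x_3x_4$. The extremes are simply $Z>k$ (everything shadowed) and $Z+4\le k$ (everything visible).

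Your step that rules out $Z=k-1$ and $Z=k-3$ by a connectivity check would fail. Every segment in general position joining two interior points of $P$ crosses $\partial P$ an even number of times. So all these counts are even, and so are $Z$ and $W$ themselves in the RRO configuration. If $Z=k-1$, every count $c$ satisfies $c\equiv k+1 \pmod 2$, so $c>k$ iff $c\ge Z+2$, exactly as for $Z=k$. The shadow pictures at $a$ and at $b$ then coincide with those for $Z=k$, and a merge/split \emph{does} occur. Likewise $Z=k-3$ behaves like $Z=k-2$, and similarly for $W$. No connectivity check can exclude these values.

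What makes ``no event for any other $Z$ or $W$'' true is parity. For interior points, a $(2m+1)$-modem sees exactly what a $2m$-modem sees, so one may take $k$ even. Then $Z-k$ and $W-k$ are always even and the odd offsets never arise. The paper's figures, which step $Z$ and $W$ by two, implicitly rely on this. You need to state this parity argument in place of your proposed check. For odd $k$, the lemma has to be read with $k$ replaced by $k-1$.
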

\begin{proof}
    See Appendix~\ref{appendix:RRO}.
\end{proof}
\begin{lemma} 
\label{lemma:RC-main}
 For RC-SC (Reflex-Convex Special Case), when $Z=k$, there is an appear/disappear event at $v_{2}$. (Note: there can be walls between $v_{1}$ and $v_{2}$). If $W = k-1$, an appear/disappear event occurs at $x_{3}x_{4}$. No event occurs for any other $Z$ or $W$. 

\end{lemma}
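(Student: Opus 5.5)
We argue as in Lemma~\ref{lemma:CCO-main}: we count, along rays from the agent, how many walls obstruct the view of points near $v_2$ and near $x_3x_4$, and compare these counts for an agent at $a$ (above $\ell_g$) and at $b$ (below $\ell_g$). By Theorem~1 only mutually critical configurations matter, so it suffices to follow how the counts change when the rays $\ell_b$ and $\ell_r$ swap order. The special feature of RC-SC is that $v_1$ (reflex) and $v_2$ (convex) are adjacent, so they share an edge $e=v_1v_2$ lying on $\ell_g$, or collinear with it when the vertices are mutually critical. We therefore first fix the local geometry. When the agent crosses $\ell_g$, the edge $e$ goes from being seen on one face to being seen edge-on and then on the other face. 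The other edge $e_1$ of $v_1$ and the other edge $e_2$ of $v_2$ lie on the same side of $\ell_g$, as forced by mutual criticality together with reflex/convex type. Unlike the non-adjacent cases, the segment $v_1v_2$ is itself a wall, so the count toward points past $v_2$ changes by one, not by two, when one of the edges $e_1, e_2$ enters or leaves the sightline. This explains the parity shifts relative to RCO ($Z=k$ rather than $k-1$, $W=k-1$ rather than $k$ or $k-3$).

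\textbf{Events at $v_2$.} Take an agent at $a$ and one at $b$ and count the walls obstructing points in a small neighbourhood of $v_2$, split into the wedges cut out by $\ell_r$ and by the supporting lines of $e$ and $e_2$. Walls strictly between $x_1$ and $v_2$ contribute $Z$. On one side of $\ell_g$ the ray to points just beyond $v_2$ additionally crosses $e$ (and $e_1$); on the other side it crosses neither. The wedge next to $v_2$ therefore has count $Z$ on one side and $Z+1$ on the other. When $Z=k$, that wedge is visible (count $k$) from one side and invisible (count $k+1$) from the other. Because $v_2$ is convex, the invisible wedge is bounded by $e_2$ and $\ell_r$ and is not adjacent to any other invisible region, so a component is created or destroyed: an appear/disappear event. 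The parenthetical remark that walls may lie between $v_1$ and $v_2$ needs a check here. Any wall crossing $\ell_g$ strictly between them is crossed equally from both sides, so it is absorbed into $Z$ and does not affect the argument. For $Z\ge k+1$ both counts exceed $k$, so the neighbourhood stays in shadow. For $Z\le k-1$ both counts are at most $k$, so it stays visible. Hence no event occurs at $v_2$ except when $Z=k$.

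\textbf{Events at $x_3x_4$ and the main obstacle.} Extend the rays past $v_2$ up to $x_3x_4$, adding the $W-Z$ intermediate walls. The ray region between $\ell_b$ and $\ell_r$ near $x_3x_4$ is obstructed by $W$ plus the number of the edges $e, e_1, e_2$ that it crosses. This is $W+1$ from one side and $W$ from the other, except that the thin wedge between $\ell_b$ and $\ell_r$ collapses as the agent reaches $\ell_g$. The threshold condition $W+1=k$, i.e.\ $W=k-1$, yields a sliver that is invisible only on one side and is isolated by the neighbouring wedges, which have count $W$ (visible). This gives an appear/disappear event. The main obstacle is to rule out a merge/split at $x_3x_4$, which occurred in RCO at $W=k$. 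We must show that at $W=k$ the two flanking wedges are already both invisible or both visible on the two sides of $\ell_g$, so that connectivity does not change. We will try to show this by listing the three wedge counts (left of $\ell_b$, between, right of $\ell_r$) on each side for general $W$. We expect these to be $(W,W+1,W+1)$ versus $(W,W,W+1)$ or similar. A change in the number of components then happens only when the middle entry crosses $k$ while its neighbours do not, which gives exactly $W=k-1$. Finally, events beyond $x_4$ reduce to earlier sub-cases as in the CCO discussion, and the invariance under $k\mapsto k+2$ with two extra walls covers all remaining values of $Z$ and $W$.
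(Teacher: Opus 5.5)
\textbf{There is a genuine gap: the local wall counts are wrong, and the thresholds you state do not follow from them.}

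\textbf{Geometry and the event at $v_2$.} With $e=v_1v_2$ on $\ell_g$ and the interior on one side of $e$, a convex endpoint has its other edge on the interior side and a reflex endpoint has it on the exterior side. So in RC-SC the edges $e_1$ and $e_2$ lie on \emph{opposite} sides of $\ell_g$, not the same side. Moreover, no count can change by one. A segment between two points of $P$ in general position crosses $\partial P$ an even number of times, so the count to a fixed target has the same parity from $a$ and from $b$. Your own sentence has the ray crossing $e$ \emph{and} $e_1$, which is $+2$, not $+1$.

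Concretely, take the interior above $e$, $e_1$ below and $e_2$ above. A ray to a point of $P$ near $v_2$ crosses no incident edge if it passes above $v_1$, and crosses $e_1$ and $e$ if it passes below. For the agent above $\ell_g$, $\ell_b$ passes below $v_2$, so the whole neighbourhood of $v_2$ has count $Z$. For the agent below, the thin triangle bounded by $\ell_b$ (not $\ell_r$), $e$ and $e_2$ has count $Z+2$ and is isolated. Thus appear/disappear happens iff $Z\le k<Z+2$. What excludes $Z=k-1$ is parity (the paper's figures fix $Z\equiv k \pmod 2$). Your claim that $Z\le k-1$ keeps $v_2$ visible is false at $Z=k-1$, where the triangle has count $k+1$.

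\textbf{The event at $x_3x_4$.} Here the derivation is inconsistent with itself. A middle count $W+1$ against flanks $W$ changes visibility at $W=k$, and at $W+1=k$ the sliver has count $k$ and is visible. The exclusion of merge/split is left as a guessed table. The triples you expect, $(W,W+1,W+1)$ versus $(W,W,W+1)$, would produce no event at $W=k-1$ at all.

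The missing observation is that a ray to a point near $x_3x_4$ crosses exactly one of $e_1,e,e_2$: $e_2$ if it passes above both vertices, $e_1$ if below both, and $e$ if above $v_1$ and below $v_2$. The exception is rays passing below $v_1$ and above $v_2$, which cross all three. These exist only for the agent on the side of $e_1$, where they fill the wedge between $\ell_r$ and $\ell_b$. So the background count is $W+1$ on both sides of $\ell_g$, and the wedge has $W+3$. This gives an isolated appear/disappear exactly for $W+1\le k<W+3$, i.e.\ $W=k-1$ under the same parity convention. It also rules out merge/split, because the flanking counts never change.

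This is the same $a$-versus-$b$ comparison the paper reads off its figures for $Z\in\{k,k-2,k-4\}$ and $W\in\{k+1,k-1,k-3\}$. Your proposal needs these corrected counts before it proves the lemma.
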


\begin{proof}
    See Appendix~\ref{appendix:RC-SC}.
\end{proof}
\begin{lemma} 
\label{lemma:RR-main}
  For RR-SC (Reflex - Reflex Special Case), No event occurs for any $Z$ or $W$. 

\end{lemma}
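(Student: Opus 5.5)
We need to show that for RR-SC, where $v_1$ and $v_2$ are adjacent reflex vertices sharing the edge $e=v_1v_2$ that lies on $\ell_g$, crossing $\ell_g$ produces no appear, disappear, merge or split event. We will follow the template of the CCO analysis and of Lemma~\ref{lemma:RC-main}: fix the agent at a point $a$ just above $\ell_g$ and a point $b$ just below it. At each position we draw the rays $\ell_b$ and $\ell_r$ and record, for every region they cut out near $v_2$ and near $x_3x_4$, its obstruction count (including the edges incident to $v_1$ and $v_2$). We then compare the two labelings.

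\textbf{Step 1: the edge $e$ is collinear with $\ell_g$.} Because $v_1$ and $v_2$ are adjacent, the shared edge $e$ lies on $\ell_g$. Since both vertices are reflex and mutually critical, their other edges $e_1$ (at $v_1$) and $e_2$ (at $v_2$) lie on the same side of $\ell_g$. Reflexivity means the interior angle exceeds $\pi$, so the interior of $P$ lies on both sides of $e$'s supporting line near $v_1$ and $v_2$, apart from the thin wedges cut off by $e_1$ and $e_2$.

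As the agent moves from $a$ to $b$, the ray toward $v_1$ and the ray toward $v_2$ sweep across $e$. The key observation is this: a wall that is seen edge-on, namely $e$ itself while the agent is on $\ell_g$, is crossed by nearby rays either always or never, depending on which side of $\ell_g$ the agent is on. We will check that the crossing of $e$ on one side is exactly traded for a crossing of $e_1$ or $e_2$ on the other side. Because $e_1$ and $e_2$ lie on the same side of $\ell_g$, they act together as one ``fold''. A ray just past $v_2$ crosses $e$ from one side and, from the other side, crosses $e_2$ instead. A similar trade happens at $v_1$ with $e_1$.

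\textbf{Step 2: show the count labelings are the same up to relabeling.} For the region near $v_2$ and beyond, we express the count at $a$ and at $b$ as $Z$ plus a local correction from $\{e,e_1,e_2\}$. We then show that the set of counts occurring in the sectors adjacent to $\ell_r$ is identical at $a$ and at $b$, with the same adjacency pattern. Do the same near $x_3x_4$ with $W$. Since the threshold comparison with $k$ depends only on these counts, the number of shadow components and how they connect is unchanged for every $Z$ and $W$.

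By the translation invariance noted before Lemma~\ref{lemma:CCO-main}, it is enough to check finitely many residues of $Z-k$ and $W-k$; outside a bounded window everything is fully visible or fully shadowed.

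The \textbf{main obstacle} is Step 1. With RC-SC an event does occur at $Z=k$ because the convex vertex's second edge swaps sides, so we must carefully justify why two reflex vertices with same-side edges give a net change of zero in every sector, not just in total. We would first verify this on the generic figures by explicitly listing the sector counts for $Z=k-1,k,k+1$. Only then would we argue that the net change is zero for all $Z$ and $W$.
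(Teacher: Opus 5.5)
Your overall plan matches the paper's: compare the obstruction counts at $a$ and $b$, and use saturation outside a bounded window of $Z-k$, $W-k$. The paper's proof simply reads the outcome off the figures for $Z=k,k-2,k-4$, $W=k+2,k,k-2$. The gap is in the mechanism you propose for why the two pictures agree.

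\textbf{The Step 1 observation is false.} Take the interior above $e$. Reflexivity at both ends forces $e_1$ and $e_2$ below $\ell_g$, so $e_1,e,e_2$ bound an exterior pocket under $e$. Then $e$ is not crossed ``always or never''. In both positions, exactly the rays strictly between $\ell_b$ and $\ell_r$ cross it: at $a$ from above, then leaving the pocket through $e_2$; at $b$ from below, having entered through $e_1$. The rays below both cross $e_1$ and $e_2$, and the rays above both cross nothing. Relative to a base count $c$, the fan therefore reads $c,\,c+2,\,c+2$ from top to bottom in both positions. However, the jump lies on $\ell_r$ at $a$, which passes through $v_2$, and on $\ell_b$ at $b$, which passes above $v_2$.

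\textbf{Consequently, the Step 2 invariant is false.} At $a$ the two sides of $\ell_r$ at $v_2$ carry $c$ and $c+2$. At $b$ both sides of $\ell_r$ carry $c+2$; indeed a whole neighbourhood of $v_2$, together with a thin sliver between $e$ and $\ell_b$, does. Counts genuinely change on a region, so there is no ``zero net change in every sector'' to establish.

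\textbf{Matching labels would not suffice anyway.} Your inference ``the threshold comparison depends only on these counts, so the components and their connections are unchanged'' ignores how the changed region sits inside $P$. In RC-SC the very same sliver between $e$ and $\ell_b$ flips from $c$ to $c+2$, and that \emph{is} the appear/disappear event at $Z=k$. There $e_2$ rises from $v_2$, so the sliver is enclosed by $e$, $e_2$ and $\ell_b$ and becomes a new shadow component.

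\textbf{The missing idea is this connectivity check.} In RR-SC $e_2$ descends, so the sliver continues past $v_2$ into the region just beyond $e_2$. That region already has count $c+2$ at $a$, since it lies below $\ell_r$. At its other end the sliver pinches to the point $v_1$ and borders only count-$c$ regions and the pocket. Hence for $c\le k<c+2$, crossing $\ell_g$ merely reshapes the existing shadow component behind the pocket. For every other $c$, the neighbourhood of $v_2$ is uniformly visible or uniformly shadowed in both positions. Near $x_3x_4$ the same bookkeeping gives, in both positions, a single ray close to $\ell_g$ (the upper of $\ell_b,\ell_r$) separating counts $w$ and $w+2$, so no event occurs there either. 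Whether $e_2$ seals the sliver over $e$ is also the real distinction from RC-SC, rather than the second edge ``swapping sides''.
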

\begin{proof}
    See Appendix~\ref{appendix:CR-SC}.
\end{proof}
\begin{lemma} 
  For CR-SC (Convex - Reflex Special Case), when $Z = k - 1$, an appear/disappear event occurs at $v_{2}$. If $W = k - 1$, an appear/disappear event occurs at $x_{3}X_{4}$. No event occurs for any other $Z$ or $W$.

\end{lemma}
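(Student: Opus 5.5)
We follow the template of the CCO analysis (Lemma~\ref{lemma:CCO-main}): we compare obstruction counts for an agent at $a$ (above $\ell_g$) and at $b$ (below $\ell_g$) and ask which regions change from at most $k$ to more than $k$ walls. The setting is CR-SC: $v_1$ is convex, $v_2$ is reflex, and $v_1v_2$ is a polygon edge. Because the two vertices are adjacent, the shared edge $v_1v_2$ lies on $\ell_g$. Each vertex therefore has exactly one edge off $\ell_g$ in addition to the shared one, and mutual criticality forces those two non-shared edges to lie on the same side of $\ell_g$.

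The proof proceeds in three steps.

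\begin{enumerate}
\item \textbf{Set up the wedges near $v_2$.} When the agent crosses $x_1x_2$, the rays $\ell_b$ and $\ell_r$ swap their angular order. The only wall whose status relative to the sight line toward $v_2$ changes is the edge incident to $v_1$; the shared edge is collinear with $\ell_g$ and is counted on neither side. The region near $v_2$ splits into wedges bounded by $\ell_r$ and by the reflex edge at $v_2$. On one side of $\ell_g$ a wedge near $v_2$ is seen through $Z+1$ walls, and on the other side the corresponding sight lines cross $Z+2$ walls.
\item \textbf{Read off the event at $v_2$.} A wedge that is visible exactly when $Z+1\le k$ and becomes invisible when $Z+2>k$ changes status precisely when $Z=k-1$. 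Because the edge $v_1v_2$ belongs to the polygon, this wedge is not adjacent to another shadow component. Its change is therefore an appear/disappear event, not a merge/split. This also explains why CR-SC behaves differently from CRO, where the extra Merge case at $Z=k$ arises.
\item \textbf{Treat $x_3x_4$ the same way.} Sight lines beyond $v_2$ near $x_3x_4$ cross $W$ walls plus either one or two of the edges at $v_1$ and $v_2$, depending on the side of $\ell_g$. This gives the single appear/disappear transition at $W=k-1$. As in the CCO discussion, events beyond $x_4$ reduce to earlier sub-cases.
\end{enumerate}

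Finally, we show that no other value of $Z$ or $W$ gives an event. The sight lines to the local regions change by at most one wall when the agent crosses $\ell_g$. So if $Z\ge k$ the region near $v_2$ is in shadow on both sides, and if $Z\le k-2$ it is visible on both sides; the same holds for $W$ near $x_3x_4$. For completeness we would draw the representative configurations $Z=k+1,k-1,k-3$, mirroring Figures~\ref{fig:CCO-genericA0}--\ref{fig:CCO-genericB6}.

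The main obstacle is step 2: certifying that the new component is genuinely isolated, so that the event is appear/disappear rather than merge/split. We would argue this from the adjacency: on the side containing the non-shared edges, the wedge is bounded by polygon edges and $\ell_r$ alone, so it cannot touch any other component.
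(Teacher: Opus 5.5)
Your template is the paper's: its proof consists only of the CR-SC figure pairs at $(Z,W)=(k-1,k+1)$ and $(k-3,k-1)$ plus threshold remarks. However, the local geometry and counts you feed into the template are wrong. The statement's thresholds come out only because the counts were chosen to reproduce them.

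(i) For adjacent vertices, mutual criticality imposes no constraint, and the vertex types fix the sides. A convex endpoint of the edge $v_1v_2$ has its other edge on the interior side of $v_1v_2$; a reflex endpoint has it on the exterior side. So in CR-SC the free edge $e_1$ at $v_1$ and the free edge $e_2$ at $v_2$ lie on \emph{opposite} sides of $\ell_g$. Your same-side configuration is the CC-SC/RR-SC situation, where the paper finds no event.

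(ii) The shared edge is collinear with $\ell_g$, but not with the sight lines of an agent off $\ell_g$. Every sight line in the wedge between $\ell_b$ and $\ell_r$ passes $v_1$ and $v_2$ on opposite sides of $\ell_g$, and therefore crosses $v_1v_2$.

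(iii) Your wedge picture at $v_2$ leaves out $\ell_b$. For an agent near $\ell_g$, $\ell_b$ passes arbitrarily close to $v_2$, and the wedge between $\ell_b$ and $\ell_r$ is exactly where the event happens.

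(iv) A change from $Z+1$ to $Z+2$ walls is impossible. The agent and the target are both interior, so every sight line crosses an even number of walls, and any change in a region's count is even. Your principle that counts change by at most one wall is therefore false. Steps 2 and 3 and your closing paragraph all rest on it.

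The correct local count is as follows. With the agent on the side of $\ell_g$ containing $e_2$, every region near $v_2$ has count $Z+1$:
\begin{itemize}
\item lines passing both vertices on the $e_1$ side cross only $e_1$;
\item lines between $\ell_r$ and $\ell_b$ (passing $v_1$ on the $e_2$ side and $v_2$ on the $e_1$ side) cross only $v_1v_2$;
\item lines passing both vertices on the $e_2$ side cross only $e_2$.
\end{itemize}
With the agent on the side containing $e_1$, the regions beyond $\ell_r$ and beyond $\ell_b$ still have count $Z+1$. But the thin wedge between them, past $e_2$, is reached by lines passing $v_1$ on the $e_1$ side and $v_2$ on the $e_2$ side. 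These cross $e_1$, $v_1v_2$ and $e_2$, so the count there is $Z+3$. This sliver is bounded by $e_2$ and flanked by regions of count $Z+1$, so it is an isolated component. That is the real reason the event is appear/disappear rather than merge/split. The same pattern, with $W$ in place of $Z$, governs $x_3x_4$.

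Hence the sliver changes status if and only if $Z+1\le k<Z+3$, i.e.\ $Z\in\{k-2,k-1\}$, and likewise $W\in\{k-2,k-1\}$ at $x_3x_4$. In particular, your claim that $Z\le k-2$ is visible on both sides is false. The missing ingredient is a parity argument. $Z$ and $W$ have a parity fixed by the geometry, since the point of $\ell_g$ just before the convex vertex $v_1$ is exterior. In the parity class in which the paper's figures are drawn ($k-Z$ and $k-W$ odd), this leaves exactly $Z=k-1$ and $W=k-1$.
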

\begin{proof}
    See Appendix~\ref{appendix:CR-SC}.
\end{proof}
\begin{lemma} 
\label{lemma:CC-main}
   For CC-SC (Convex - Convex Special Case), No event occurs for any $Z$ or $W$. 

\end{lemma}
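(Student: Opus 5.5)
We prove Lemma~\ref{lemma:CC-main} with the same local obstruction-count comparison used for Lemma~\ref{lemma:CCO-main}, but specialized to adjacent vertices. Let $v_1$ and $v_2$ be adjacent convex vertices that are mutually critical. Then the edge $e=v_1v_2$ lies on $\ell_g$. Let $e_1$ and $e_2$ be the other edges at $v_1$ and $v_2$. Because each vertex is critical to the other, $e_1$ and $e_2$ lie in the same closed half-plane of $\ell_g$. Because both vertices are convex, the polygon interior near $e$ lies on that side as well. So the interior near $v_1v_2$ lies entirely on one side of $\ell_g$, say above.

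\textbf{Step 1 (where the agent can be).} A point of $P$ close to $e$ lies above $\ell_g$. The portion of $\ell_g$ through $v_1$, $v_2$ is the edge $e$ itself. An agent crossing $\ell_g$ at a partition segment $x_1x_2$ is therefore far from $e$ and sees $e$ edge-on from a point of $\ell_g$ beyond $v_1$. We compare the two positions $a$ (above) and $b$ (below), as before.

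\textbf{Step 2 (count the walls near $v_2$).} For a point $p$ just beyond $v_2$ in the wedge between $\ell_b$ and $\ell_r$, we count the walls crossed by the segment from the agent to $p$. At $a$, such a segment crosses $Z$ walls, plus $e$ or $e_1$ but not both. At $b$, the rays swap, and the segment crosses the $Z$ walls plus exactly one of $e$ and $e_1$ again. The reason is that $e_1$ and $e$ meet at $v_1$ on the same side of $\ell_g$, so swapping the rays only exchanges which of the two incident edges is crossed.

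The same parity-preserving exchange happens at $v_2$ with $e$ and $e_2$. Hence every local region around $v_2$ keeps the same obstruction count on both sides of the crossing. The count changes only across the rays themselves, and these just rotate continuously.

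\textbf{Step 3 (count the walls near $x_3x_4$).} We repeat the count for points just beyond $x_3x_4$. The segment crosses $W$ walls plus a contribution from $\{e,e_1,e_2\}$. This contribution is the same at $a$ and at $b$, because a nearly collinear segment passing the chain $e_1,e,e_2$ from either side crosses it the same number of times. Here we use that both ends turn to the same side, whereas in CCO they turn to opposite sides, which is where that case's events come from.

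Since no local region changes between $\le k$ and $>k$ obstructions, no shadow component is created, destroyed, joined, or separated. This holds for every $Z$ and $W$, and no extreme-value argument is needed.

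The main obstacle is Step~3. Its configuration argument is delicate because $e$ itself lies on $\ell_g$, so the transversal crossing counts degenerate. We would handle this with a perturbation argument: tilt the segment slightly off $\ell_g$ and check that the number of crossings of the polygonal chain $e_1ee_2$ is $0$ or $2$ on either side, and hence the same on both sides.
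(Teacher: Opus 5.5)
Your strategy of comparing obstruction counts in the regions cut out by $\ell_b$ and $\ell_r$, near $v_2$ and near $x_3x_4$, with the agent at $a$ and at $b$, is the same as the paper's. The paper reads these counts off Figure~\ref{fig:CC-SpecialCase-2} and adds the extreme-value remarks. The problem is that several of the counts you assert are wrong, and Step~3, which you rightly call the crux, is closed by an argument that cannot work.

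In CC-SC the chain $e_1ee_2$ is a cup: $e$ lies on $\ell_g$, and $e_1$ and $e_2$ both rise to the side containing $P$ near $e$. This follows from convexity of both vertices. Mutual criticality does not relate $e_1$ to $e_2$, because the shared edge lies on $\ell_g$; in CR-SC and RC-SC they lie on opposite sides.

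In Step~2, take the agent at $a$ and a point in the wedge between $\ell_r$ and $\ell_b$ beyond $v_2$. The segment passes above $v_1$ and below $v_2$, so it crosses \emph{both} $e_1$ and $e$, not exactly one. Moreover, that wedge lies below $\ell_g$, outside $P$. The right argument is that $P$ near $v_2$ lies inside the cup. From $a$, the segment crosses only $e_1$. From $b$, it crosses $\ell_g$ exactly once: left of $v_1$ (then it crosses $e_1$), on $e$, or right of $v_2$ (then it crosses $e_2$). So every point of $P$ near $v_2$ has count $c+1$ at both positions, where $c$ counts the remaining walls, and that neighbourhood is uniformly visible or uniformly in shadow.

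In Step~3, the contribution of $\{e_1,e,e_2\}$ is \emph{not} the same at $a$ and $b$ for every point. The inference ``$0$ or $2$ on either side, hence the same'' is a non sequitur. No parity argument can settle this. In CCO the contributions near $x_3x_4$ are also all even: the thin wedge between the rays gets $+4$ on one side of $\ell_g$ and $+0$ on the other, flanked by $+2$. Yet events occur there.

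What must be checked is the wedge against its two neighbours:
\begin{itemize}
\item At $a$ (top to bottom: $\ell_r$, $\ell_b$), the three regions are reached through $e_1,e_2$; through $e_1,e$; and through nothing. This gives $+2,+2,+0$.
\item At $b$ (top to bottom: $\ell_b$, $\ell_r$), they are reached through $e_1,e_2$; through $e,e_2$; and through nothing. This again gives $+2,+2,+0$.
\end{itemize}
So on both sides there is a single jump of $2$, along $\ell_b$ at $a$ and along $\ell_r$ at $b$, which sweeps continuously across $x_3x_4$. The edge $e$ is exactly what makes the wedge agree with the upper region on both sides.

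With these corrected counts, your claim that no extreme-value argument is needed becomes correct, uniformly in $Z$ and $W$. However, your concluding sentence should say that the combinatorial structure is preserved. It is not true that no point switches between $\le k$ and $>k$; points in the swept wedge do.
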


\begin{proof}
    See Appendix~\ref{appendix:CC-SC}.
\end{proof}

\section{Complexity}
\begin{theorem}
The proposed cell decomposition has complexity $O(n^4)$
\end{theorem}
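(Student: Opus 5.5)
We bound the complexity by counting the partition lines and then counting the faces of the arrangement they form inside $P$.

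\textbf{Step 1: number of partition lines.} By Theorem 1, every partition line comes from a pair of mutually critical vertices $v_1,v_2$, so there are at most $\binom{n}{2}=O(n^2)$ candidate lines $\ell_g$. For one such line, Lemma~\ref{lemma:CCO-main} and the lemmas for the remaining cases (CCS through CC-SC) show that an event requires $Z$ or $W$ to take one of a constant number of specific values relative to $k$, such as $k$, $k-1$, $k-2$, $k-3$ or $k-4$. For a fixed pair, $Z$ and $W$ are determined by the geometry, so the case analysis either puts in the segment $x_1x_2$ or leaves it out. Unlike the earlier decomposition, it never adds one copy per level $0,\dots,k$.

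The partition line is the single segment $x_1x_2$; the same argument applied with the roles of $v_1,v_2$ swapped gives the segment on the other side. Hence each pair contributes $O(1)$ segments, independent of $k$, and there are $O(n^2)$ partition segments in total. This is where the $k^2$ factor of the previous bound disappears.

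\textbf{Step 2: complexity of the arrangement.} The decomposition is the arrangement inside $P$ of these $O(n^2)$ segments together with the $n$ polygon edges. Any two segments cross at most once, so there are $O\big((n^2)^2\big)=O(n^4)$ crossing vertices. The arrangement is a connected planar subdivision after including the boundary of $P$. Its vertices, edges and cells are therefore related by Euler's formula, which gives $O(n^4)$ edges and cells. For polygons with holes, $n$ counts all vertices and the same count applies; the extra connected components add only $O(n)$ to the Euler characteristic.

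\textbf{Main obstacle.} The delicate point is justifying that each mutually critical pair yields only $O(1)$ segments regardless of $k$. A single $\ell_g$ can cross $\Theta(n)$ walls, and one might think each intermediate stretch between consecutive wall crossings could host an event. The case lemmas say events are checked only at $v_2$ and at $x_3x_4$, with events beyond $x_4$ reducing to earlier sub-cases. I would make that reduction explicit: any event farther along $\ell_g$ is already accounted for by the segment defined by another mutually critical pair, or by the $W$-conditions. If that argument fails, I would fall back to charging each event segment to a pair $(\text{pair},\text{wall})$ whose wall count equals the relevant threshold. Since the thresholds are constant offsets from $k$, this still gives $O(1)$ segments per side per pair and preserves the $O(n^4)$ bound.
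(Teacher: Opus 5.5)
Your Step 1 claim, that each mutually critical pair contributes $O(1)$ segments independent of $k$, fails, and your Step 2 depends on it. You treat $x_1x_2$ as unique per side. In the paper's setup, however, a $k$-modem sees through walls, so the agent may be on any stretch of $\ell_g$ inside $P$ to the left of $v_1$, and $x_3x_4$ may be any stretch to the right of $v_2$.

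Your charging works for the $Z$-conditions. $Z$ grows by exactly $2$ per stretch as $x_1$ moves outward, so each threshold is met by at most one stretch.

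It breaks for the $W$-conditions, because $W$ depends on both ends. Moving $x_1$ one stretch outward and $x_4$ one stretch inward leaves $W$ unchanged. So if $\ell_g$ passes through many stretches of $P$ on both sides, a condition such as $W=k$ in CCO is met by many different agent stretches, each paired with its own far stretch. A single $\ell_g$ can then carry on the order of $\min(n,k)$ partition segments. Charging to (pair, far wall) gives only $O(n)$ segments per pair, $O(n^3)$ in total. Your Step 2, which bounds crossings by the square of the segment count, would then give only $O(n^6)$.

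Your other suggested reduction, that such events are ``accounted for by another mutually critical pair'', cannot work either. An event triggered by crossing $\ell_g$ can only be recorded by a segment lying on $\ell_g$.

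The missing idea is that no per-pair segment count is needed:
\begin{itemize}
\item Every partition segment lies on one of the $O(n^2)$ lines through two vertices.
\item Segments on a common line are disjoint, and two distinct lines meet at most once. So there is at most one crossing per pair of supporting lines, i.e.\ $O(n^4)$ crossings, however many segments each line carries.
\item Segment endpoints lie on $\partial P$, with $O(n)$ per line, adding $O(n^3)$ vertices.
\item Euler's formula then gives $O(n^4)$ edges and cells.
\end{itemize}
This is how the paper's short proof should be read: it bounds the arrangement of the $O(n^2)$ lines $\ell_g$. The independence from $k$ comes from all partition segments living on these lines, not from a bound on segments per pair.
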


\begin{proof} In the worst case, every pair of vertices in $P$ is mutually critical, generating $O(n^2)$ partition lines. The arrangement of these lines results in at most $O(n^4)$ cells.
\end{proof}

The best previous decomposition by Bahoo et al. achieved $O(k^2n^4)$. Our approach achieves $O(n^4)$ by considering only lines where visibility events are guaranteed to occur. When $k = n$, this is  a quadratic improvement of $O(n^2)$. Furthermore, our method generates significantly fewer partition lines in practice (see Figure~\ref{fig:requiredLines}). It must be noted that the cells in this decomposition are not necessarily convex.

 \begin{figure}
\centering
\begin{subfigure}[b]{.49\linewidth}
\includegraphics[width=\linewidth]{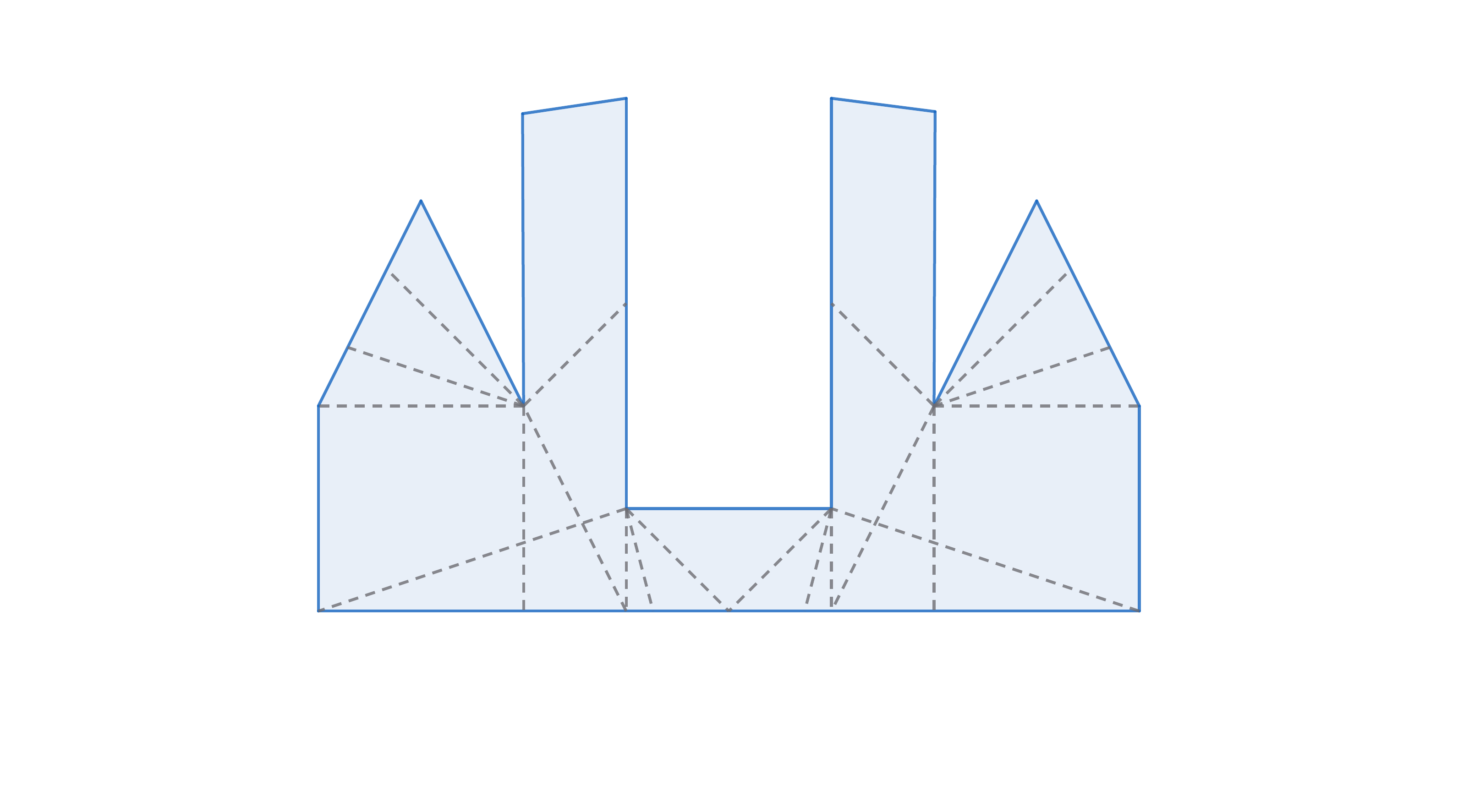}
\caption{Method in~\cite{bahoo2025generalizedkcelldecompositionvisibility}}\label{fig:required2}
\end{subfigure}
\begin{subfigure}[b]{.49\linewidth}
\includegraphics[width=\linewidth]{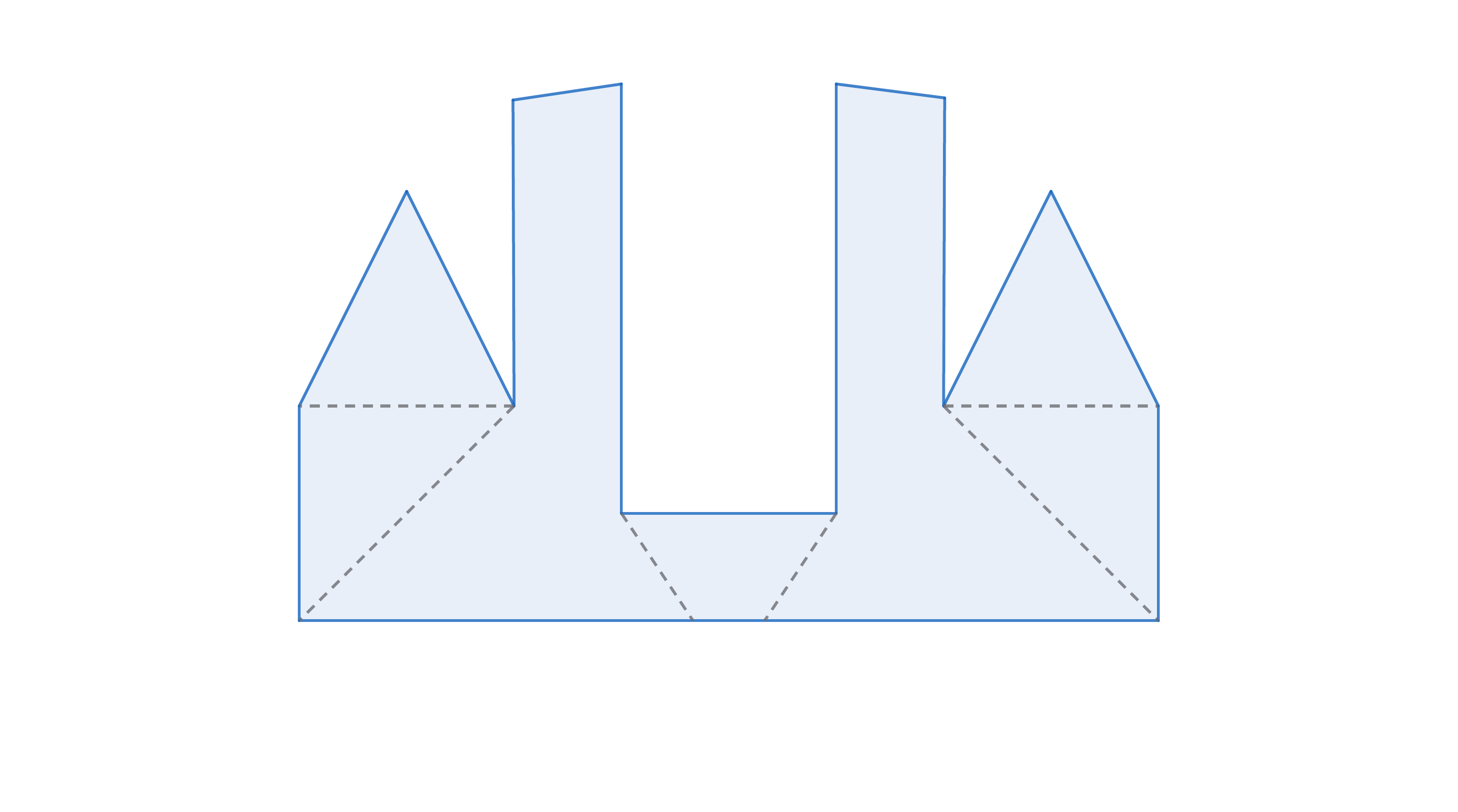}
\caption{Our Method}\label{fig:required1}
\end{subfigure}

\caption{Partition lines of the decomposition for our method and method in~\cite{bahoo2025generalizedkcelldecompositionvisibility} (for $2$-visibility).}
\label{fig:requiredLines}
\end{figure}

\section{Polygons with Holes}

In polygons with holes, a shadow component can ``wrap around" a hole, remaining a single connected component even after a split occurs (see agent at $b$ in Figure~\ref{fig:poly-holes-V2} moving to $a$ in Figure~\ref{fig:poly-holes-V1} and shadow $S_1$ around the purple hole). Consequently, an actual split on a hole only occurs if the shadow is ``broken" by at least two ``split" events that touch the same hole. To maintain an exact decomposition and prune redundant lines corresponding to ``nullified" split events, we use the following procedure: 
\begin{itemize}
    \item For any partition line $\ell_{p1}$ (segment $s_1e_1$; where $s_1$ is the start of the partition line and $e_1$ is the end of the partition line) corresponding to a potential split event that touches a hole, identify its first intersection $i_2$ with another partition line $\ell_{p2}$ whose split event touches the same hole.
    \item When the agent crosses the line between the vertex $s_1$ and the intersection $i_2$, the shadow remains connected around the hole, nullifying the split event. Beyond $i_2$, the second partition line $\ell_{p2}$ (segment $s_2e_2$ breaks this connectivity, allowing an actual split to occur (See agent at $b$ in Figure~\ref{fig:poly-holes-V4} moving to $a$ in Figure~\ref{fig:poly-holes-V3})
    \item Remove the redundant segment $s_1i_2$ from the decomposition. If $\ell_{p1}$ has no such intersection before reaching the hole, the entire segment $s_1e_1$ is removed.
\end{itemize}
This logic applies to both convex and concave holes and, by symmetry, correctly handles merge events.

 \begin{figure}
\centering
\begin{subfigure}[b]{.49\linewidth}
\includegraphics[width=\linewidth]{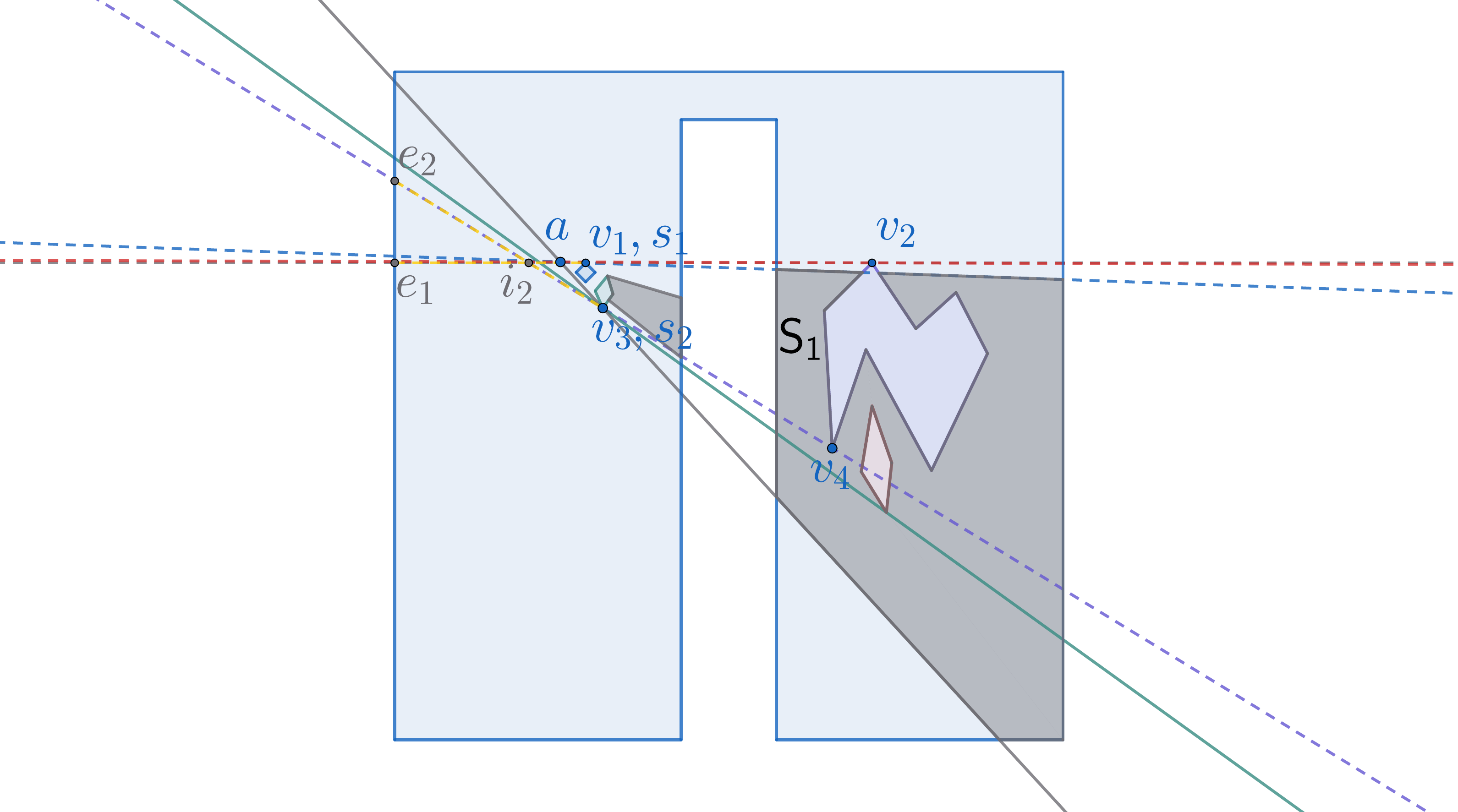}
\caption{Above $l_{p1}$ and above $l_{p2}$}\label{fig:poly-holes-V1}
\end{subfigure}
\begin{subfigure}[b]{.49\linewidth}
\includegraphics[width=\linewidth]{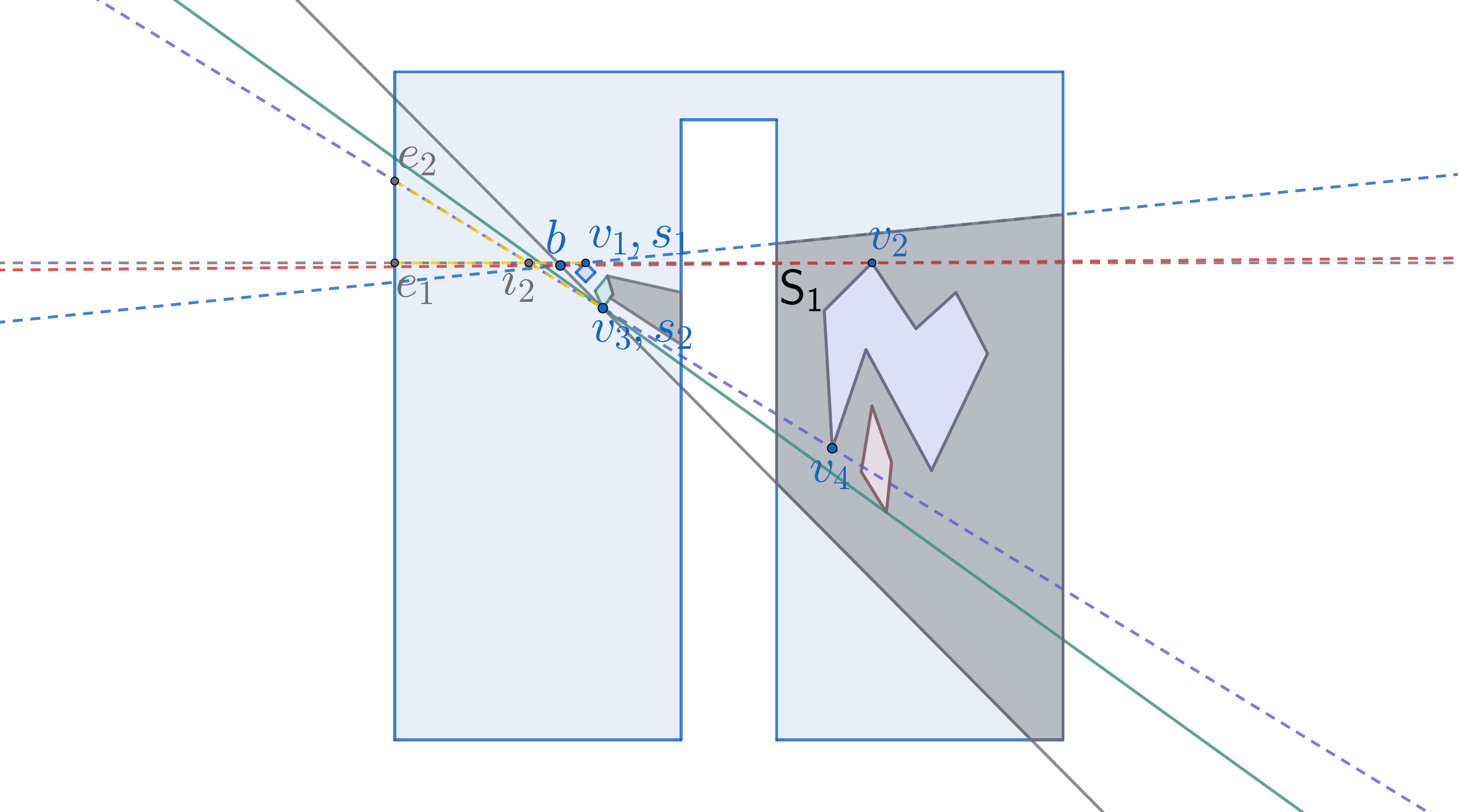}
\caption{Below $l_{p1}$ and above $l_{p2}$}\label{fig:poly-holes-V2}
\end{subfigure}
\begin{subfigure}[b]{.49\linewidth}
\includegraphics[width=\linewidth]{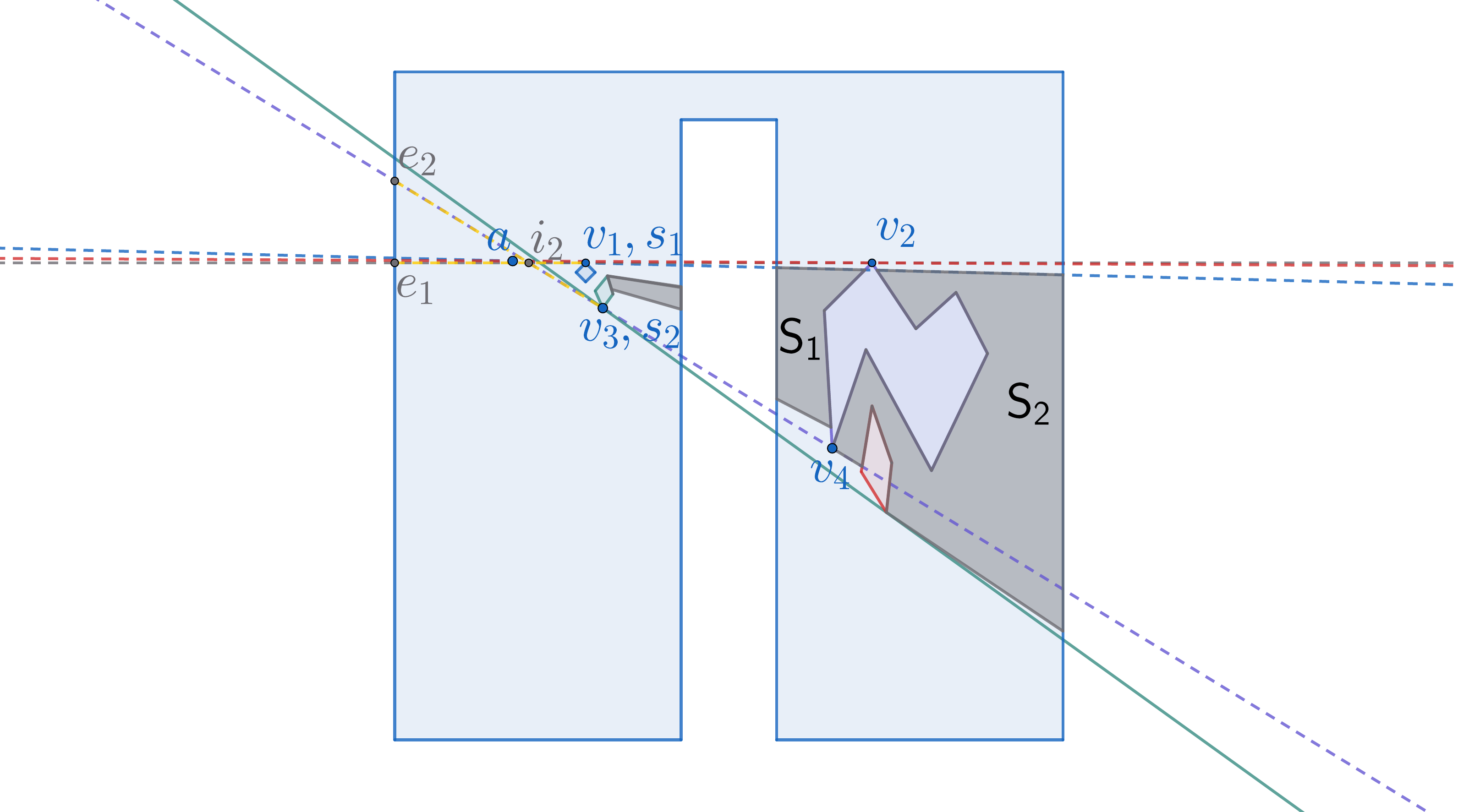}
\caption{Above $l_{p1}$ and below $l_{p2}$}\label{fig:poly-holes-V3}
\end{subfigure}
\begin{subfigure}[b]{.49\linewidth}
\includegraphics[width=\linewidth]{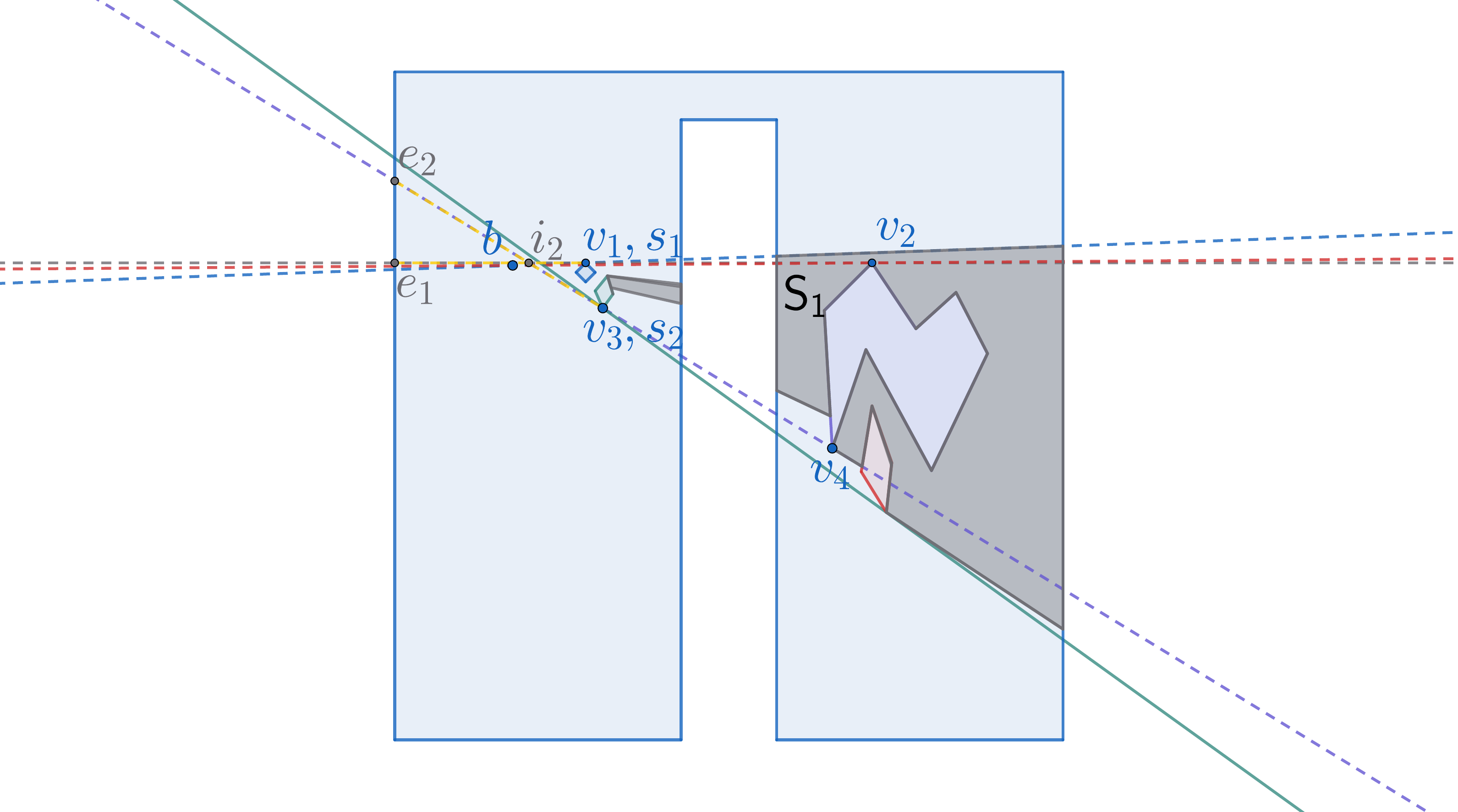}
\caption{Below $l_{p1}$ and below $l_{p2}$}\label{fig:poly-holes-V4}
\end{subfigure}

\caption{A polygon with holes.}
\label{fig:poly-holes-2}
\end{figure}
\section{Future Work and Conclusion}

This paper presents an exact $O(n^4)$ cell decomposition for $k$-visibility in polygonal environments. By analyzing eight fundamental and four special configurations of vertices, we identify precisely where visibility events (i.e. appear, disappear, merge, and split) for shadow components occur. Our approach achieves a potentially quadratic improvement over the previous $O(k^2n^4)$ complexity by eliminating redundant partition lines. Furthermore, the cell decomposition presented in this work can be slightly modified to efficiently query the $k$-visibility polygon of a point.

\bibliographystyle{abbrv}

\bibliography{bibliography}

\newpage
\appendix
\section{CCS}
\label{appendix:CCS}

\subsection{Convex Convex Same (CCS)}
\label{section:ccs}
\begin{lemma} 
\label{lemma:CCS}
No event occurs for CCS for any $Z$ or $W$.

\end{lemma}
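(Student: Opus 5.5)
The plan is to follow the template of the CCO analysis in Lemma~\ref{lemma:CCO-main}. I will place the agent at $a$ just above $\ell_g$ and at $b$ just below, and compare the obstruction counts in the wedges cut out by $\ell_b$ and $\ell_r$ at two places: near $v_2$, and in the immediate neighbourhood of $x_3x_4$. The key geometric fact for CCS is that both edges of $v_1$ and both edges of $v_2$ lie on the same side of $\ell_g$. Since both vertices are convex, the interior of $P$ near each vertex lies in the wedge between its two edges, so that side is the polygon-side for both.

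\textbf{Step 1: the count near $v_2$.} Take the rays from the agent that pass just next to $v_1$. Crossing from $a$ to $b$ swaps the order of $\ell_b$ and $\ell_r$. However, the sliver of directions that passes through the wedge of $v_1$ (and so crosses both of its edges, contributing $+2$ walls) moves from one side of $v_2$ to the corresponding region on the same side of $\ell_g$. Because the two edges of $v_1$ are on the same side of $\ell_g$ as the two edges of $v_2$, the set of points near $v_2$ whose segments to the agent cross exactly $Z$, $Z+1$, or $Z+2$ walls does not change combinatorially. The extra pair of walls from $v_1$ is counted in a region that is exterior to $P$ near $v_2$ both before and after the crossing.

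\textbf{Step 2: the count near $x_3x_4$.} Do the same along $x_3x_4$, now with counts $W$ plus contributions from $v_1$ and $v_2$. I expect the contributions of the two same-side convex vertices to come in even amounts (0 or 2 each) and to stay on the same side of $\ell_g$. The two shadow bands near $x_3x_4$ therefore only slide; they do not merge, split, appear or vanish.

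\textbf{Step 3: ruling out all values of $Z$ and $W$.} Sweep $Z$ and $W$ relative to $k$ in steps of two, as the paper does for CCO. This is justified by the invariance under adding two walls and raising $k$ by two. I will check that for each residue no region near $v_2$ or $x_3x_4$ changes its visibility status across $\ell_g$. The large and small extremes are handled as in the CCO proof: everything is in shadow for large $Z$ or $W$, and everything is visible for small $Z$ or $W$.

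\textbf{Main obstacle.} The hard part is Step 1 when $Z=k-1$ or $Z=k$, the values at which CCO and CRS produce events. For these values I must argue that the region where the count crosses the threshold $k$ lies outside $P$ in both positions. I would argue this from the same-side convexity: the triangle between $\ell_b$, $\ell_r$ and $\ell_g$ that switches between $a$ and $b$ lies outside $P$ near $v_2$. That makes any threshold change there invisible to the shadow structure in $P$.
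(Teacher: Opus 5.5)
\textbf{Gap in Step 1.} Your overall plan is the paper's own check: compare the obstruction counts at $a$ and $b$ near $v_2$ and near $x_3x_4$. The decisive step, however, rests on a false claim. The region where the contribution of $v_1$ switches is \emph{not} exterior to $P$ near $v_2$.

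To see this, put both pairs of edges below $\ell_g$, so $a$ is on the edge-free side. From $a$, $\ell_b$ passes through $v_1$ and then runs just below $v_2$. It therefore cuts off a small triangle at the apex of the interior angle of $v_2$, bounded by $\ell_b$ and the two edges of $v_2$. This triangle lies \emph{inside} $P$.

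\begin{itemize}
\item Segments from $a$ into the triangle miss the edges of $v_1$ and cross one edge of $v_2$, so the triangle has count $Z+1$.
\item The rest of the angle lies below $\ell_b$, picks up both edges of $v_1$, and has count $Z+3$.
\item From $b$, $\ell_b$ runs just above $v_2$, so the whole angle has count $Z+3$.
\end{itemize}

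So at $Z=k-1$ a piece of $P$ at $v_2$ really does switch from visible to shadow. Your assertion that the count regions near $v_2$ ``do not change combinatorially'' also fails, because the low-count region exists only on one side.

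\textbf{Missing idea.} What you need is the reason this switch is harmless. The toggling triangle is the \emph{lower}-count piece, nested at the apex of a region whose count is $Z+3$ in both positions. Whenever the triangle flips, its neighbour is in shadow both times, so the shadow only grows or shrinks by a piece attached to an existing component.

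This ordering is exactly what separates CCS from CCO. In CCO the same apex triangle appears, cut off by $\ell_b$ when the agent is on $v_1$'s edge side. There it carries the extra $+2$, so at $Z=k-1$ it becomes an isolated shadow pocket, which is the appear event of Lemma~\ref{lemma:CCO-main}. An argument of the form ``the switching region is exterior to $P$'' cannot tell the two cases apart.

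\textbf{Step 2.} Here too you need the actual invariant, since $\ell_b$ and $\ell_r$ do cross $\ell_g$ near $x_3x_4$. The contributions do not ``stay on one side of $\ell_g$.'' What holds is that each vertex adds $2$ exactly to the rays passing on its edge side, and in CCS this is the same side for both vertices. Hence, in both positions, the counts across $x_3x_4$ read $W, W+2, W+4$ going toward the edge side, and only the separating lines slide. In CCO the middle band is instead a minimum in one position and a maximum in the other: $W+2,W,W+2$ versus $W+2,W+4,W+2$.

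\textbf{Step 3.} Sweeping $Z$ and $W$ in steps of two is not justified by the ``add two walls, raise $k$ by two'' invariance, which only shows that $Z-k$ and $W-k$ are what matter. The correct justification is parity. A segment between two interior points crosses $\partial P$ an even number of times. So odd $k$ behaves like $k-1$, and the parities of $Z$ and $W$ are forced by the configuration.
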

\begin{proof}
    See Figure~\ref{fig:CCS-generic-0} to Figure~\ref{fig:CCS-generic-8}. Looking at the counts of when the agent is at $a$ and when the agent is at $b$, it can be seen that no event occurs for any $Z$ or $W$.

For $Z \geq k + 3$, $v_{2}$ and its surroundings (i.e locally incident region in $P$) is entirely in shadow. For $W \geq k + 4$, $x_{3}x_{4}$ and its surroundings is entirely in shadow.

For $Z \leq k - 9$, $v_{2}$ and its surroundings is entirely visible. For $W \leq k - 8$, $x_{3}x_{4}$ and its surroundings is entirely visible. 
\end{proof}
 \begin{figure}[H]
\centering
\begin{subfigure}[b]{.49\linewidth}
\includegraphics[width=\linewidth]{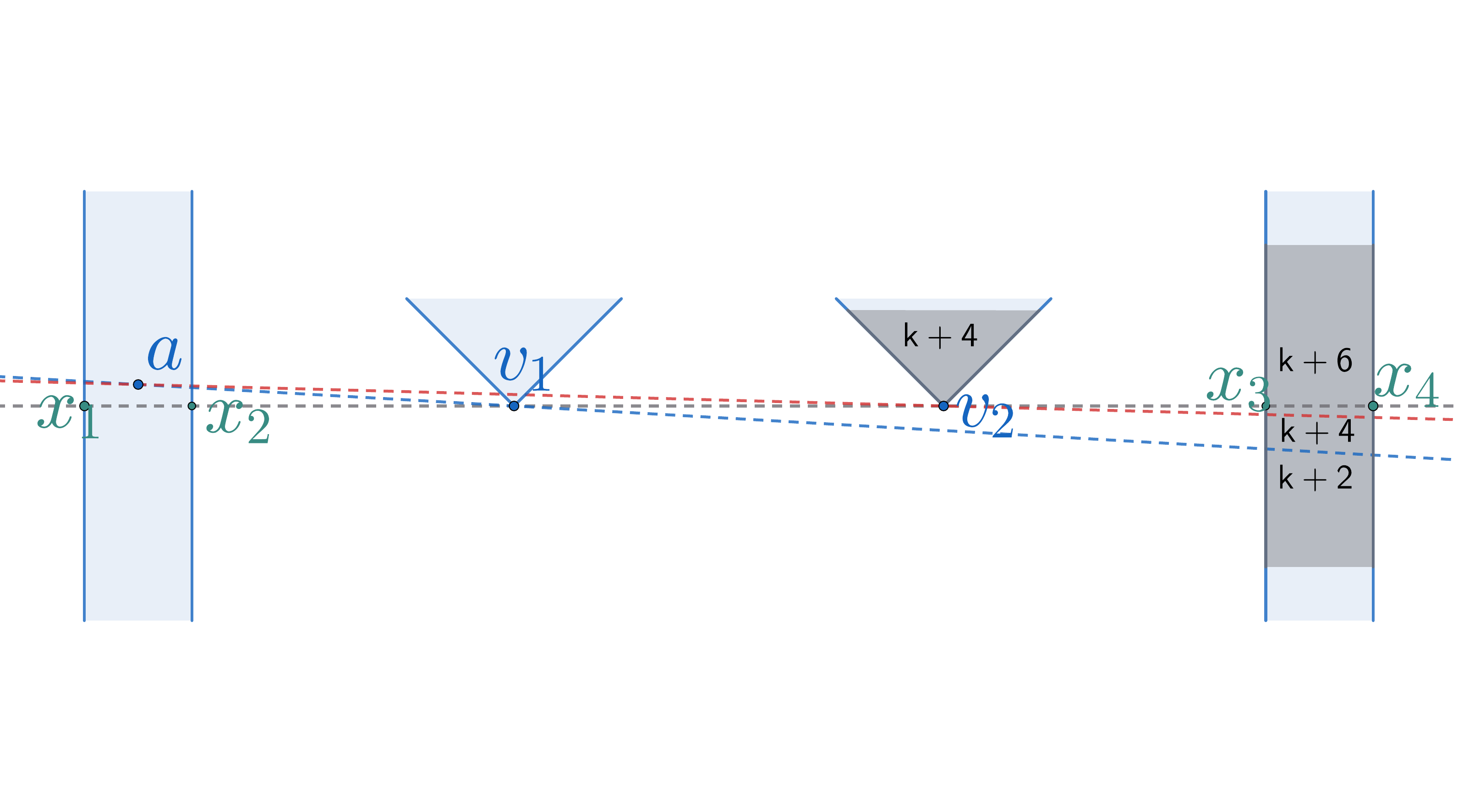}
\caption{Above $\ell_{g}$}\label{fig:CCS-genericA0}
\end{subfigure}
\begin{subfigure}[b]{.49\linewidth}
\includegraphics[width=\linewidth]{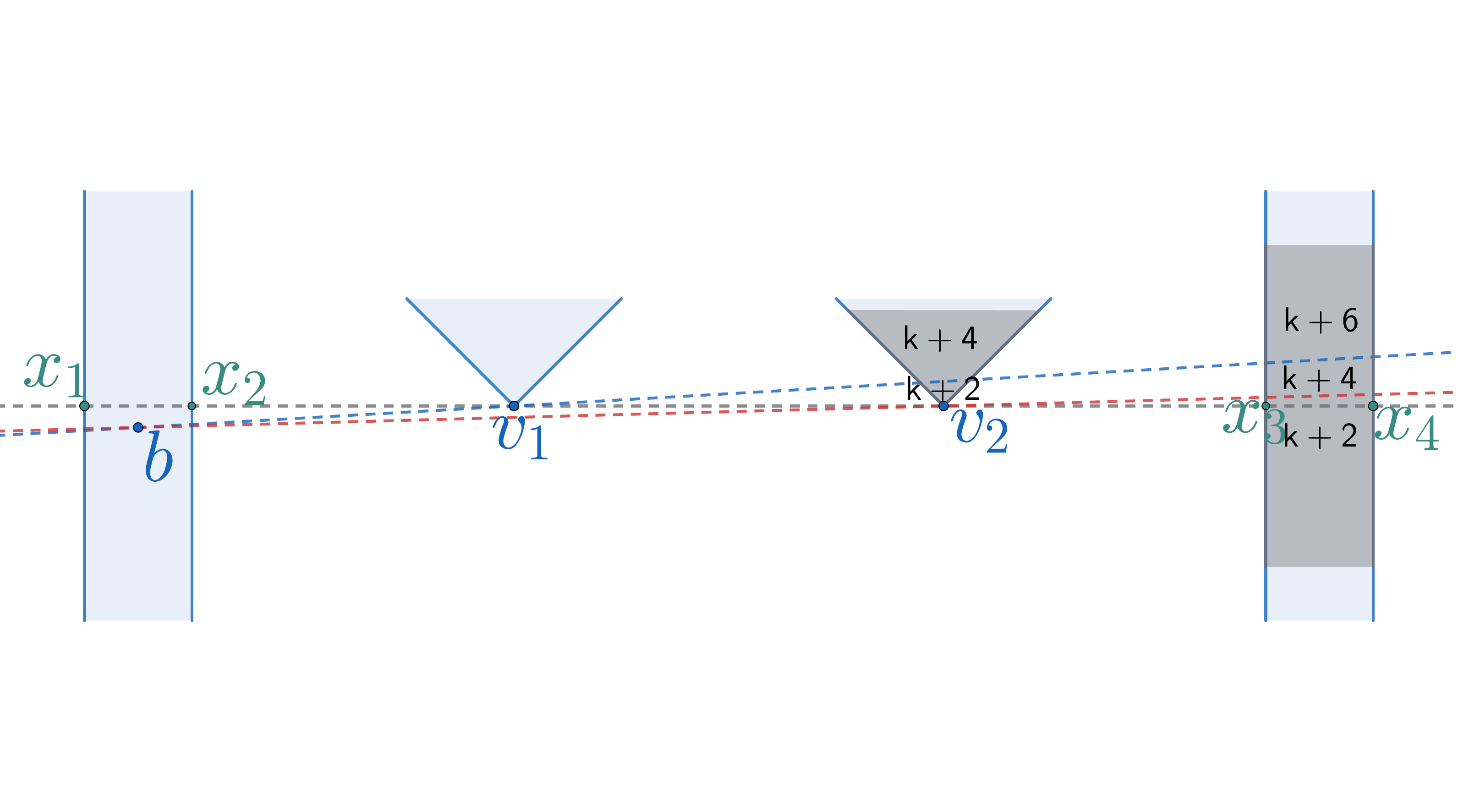}
\caption{Below $\ell_{g}$}\label{fig:CCS-genericB0}
\end{subfigure}

\caption{CCS; $Z = k + 1$, $W = k + 2$.}
\label{fig:CCS-generic-0}
\end{figure}

 \begin{figure}[H]
\centering
\begin{subfigure}[b]{.49\linewidth}
\includegraphics[width=\linewidth]{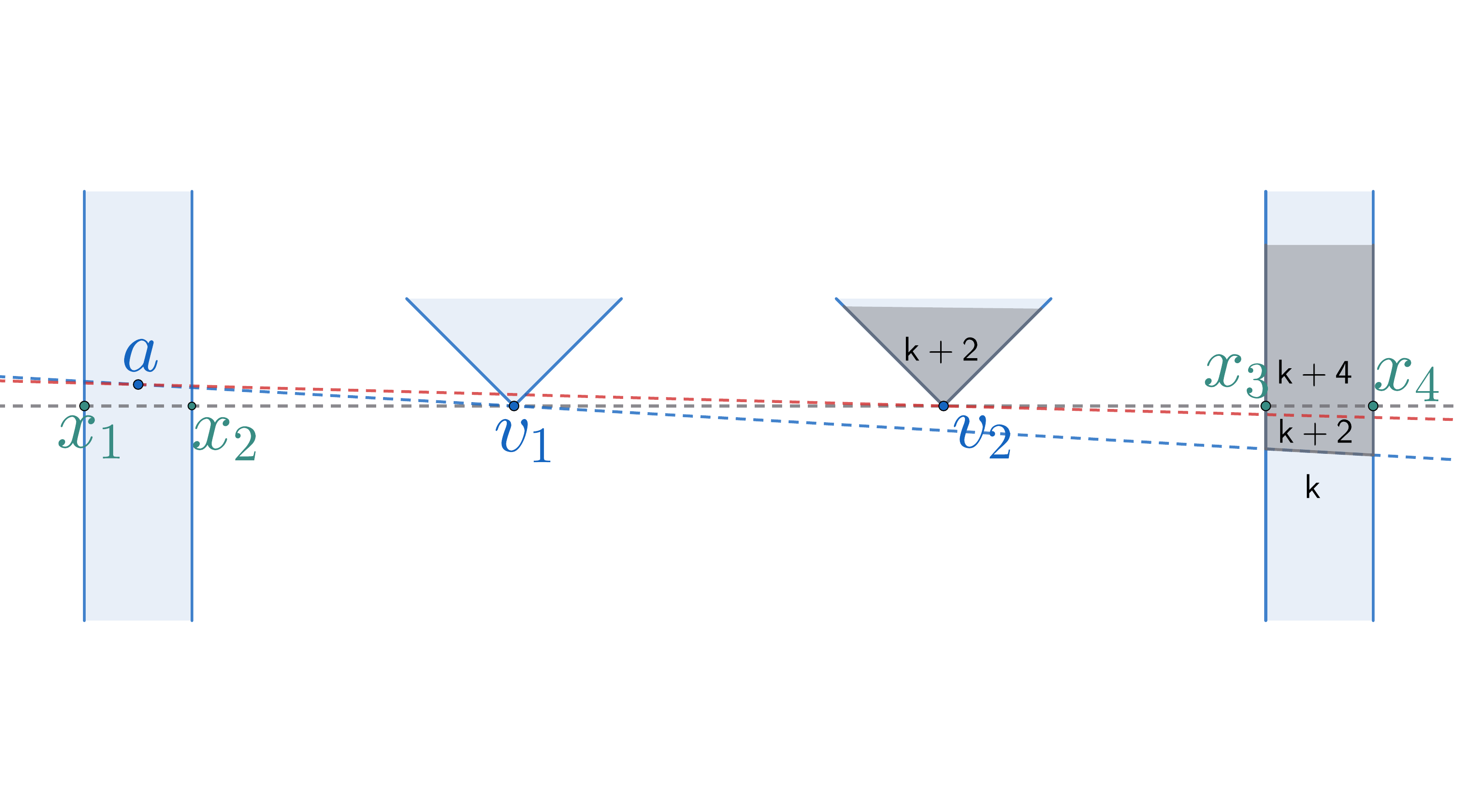}
\caption{Above $\ell_{g}$}\label{fig:CCS-genericA2}
\end{subfigure}
\begin{subfigure}[b]{.49\linewidth}
\includegraphics[width=\linewidth]{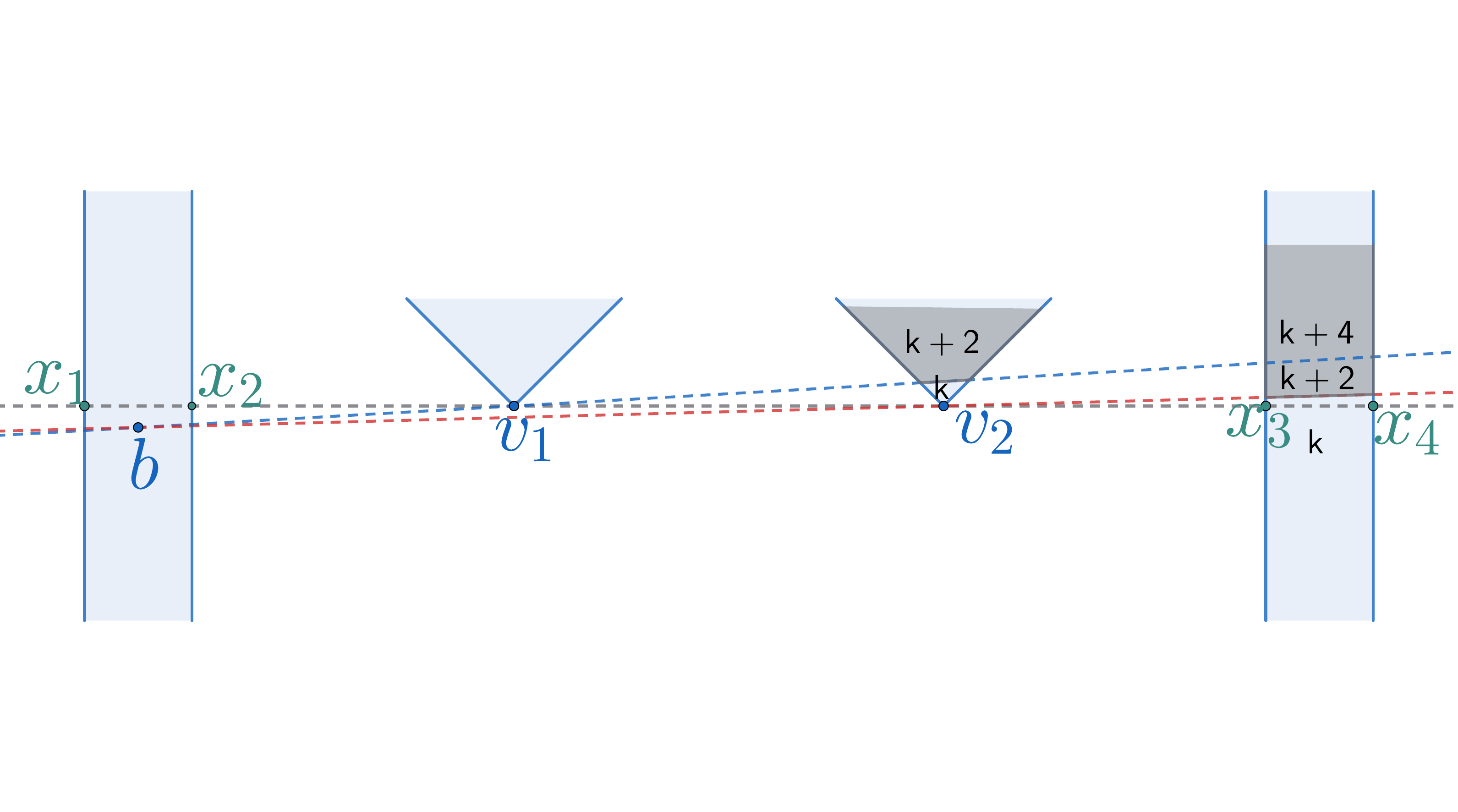}
\caption{Below $\ell_{g}$}\label{fig:CCS-genericB2}
\end{subfigure}

\caption{CCS; $Z = k - 1$, $W = k$.}
\label{fig:CCS-generic-2}
\end{figure}

 \begin{figure}[H]
\centering
\begin{subfigure}[b]{.49\linewidth}
\includegraphics[width=\linewidth]{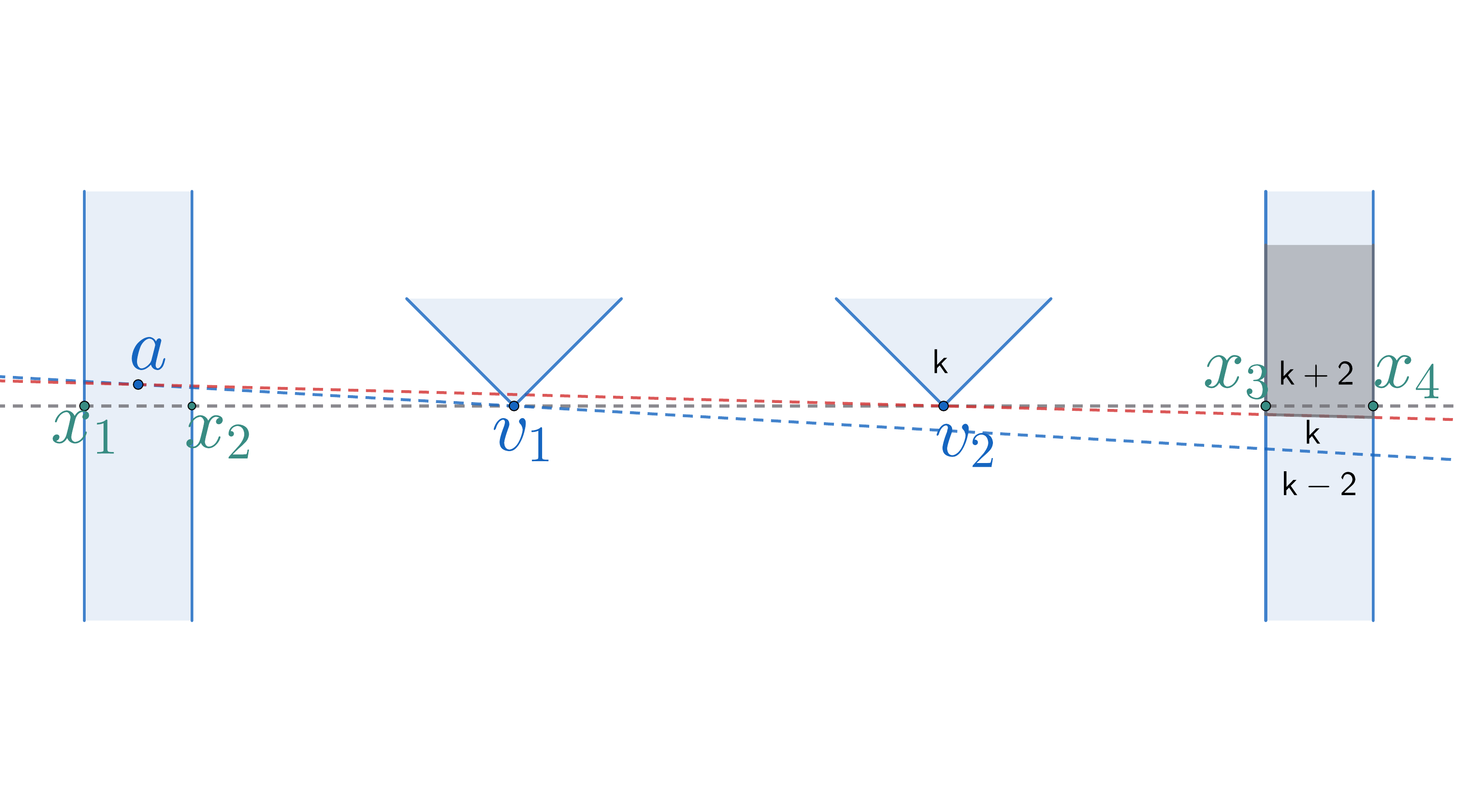}
\caption{Above $\ell_{g}$}\label{fig:CCS-genericA4}
\end{subfigure}
\begin{subfigure}[b]{.49\linewidth}
\includegraphics[width=\linewidth]{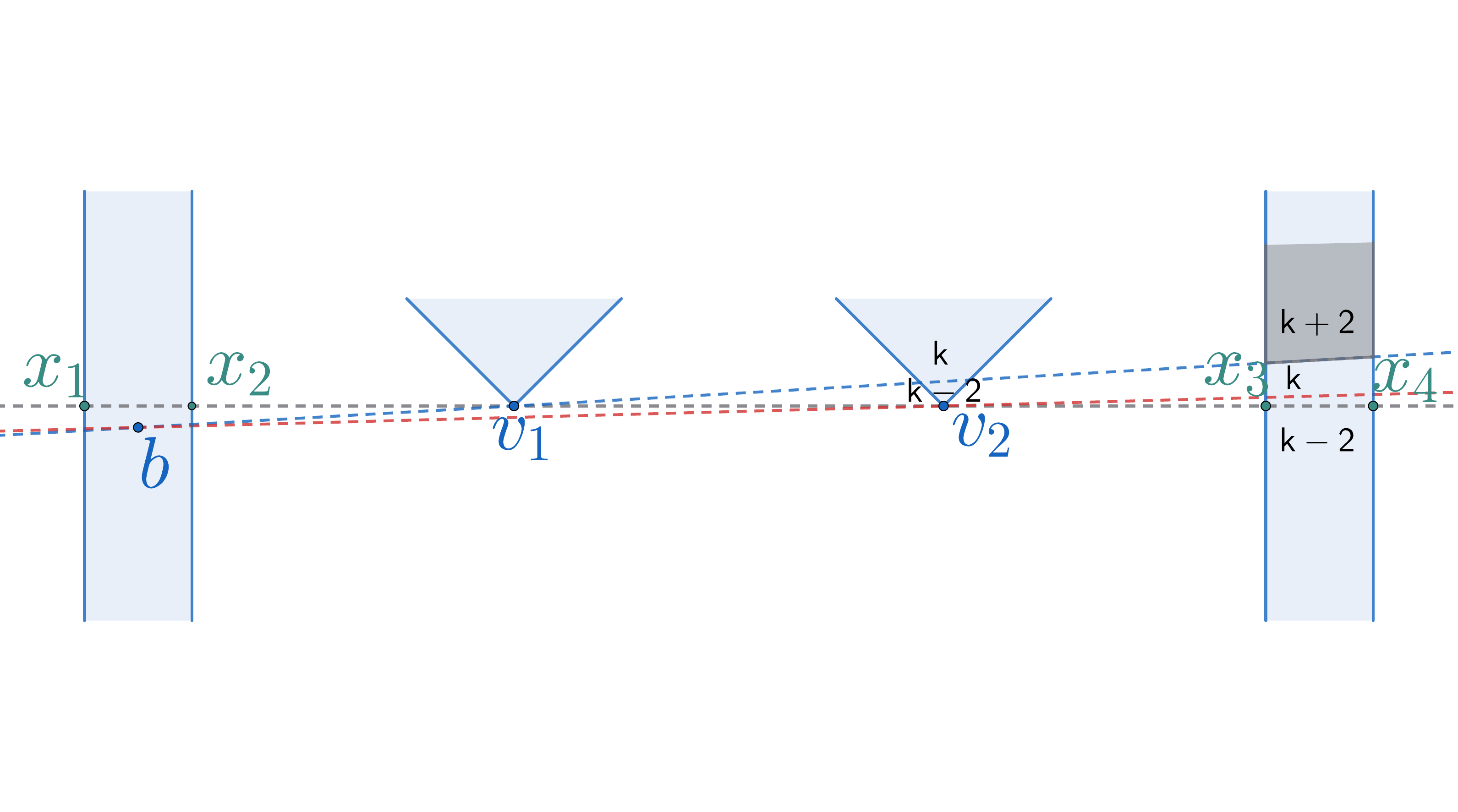}
\caption{Below $\ell_{g}$}\label{fig:CCS-genericB4}
\end{subfigure}

\caption{CCS; $Z = k - 3$, $W = k - 2$.}
\label{fig:CCS-generic-4}
\end{figure}

 \begin{figure}[H]
\centering
\begin{subfigure}[b]{.49\linewidth}
\includegraphics[width=\linewidth]{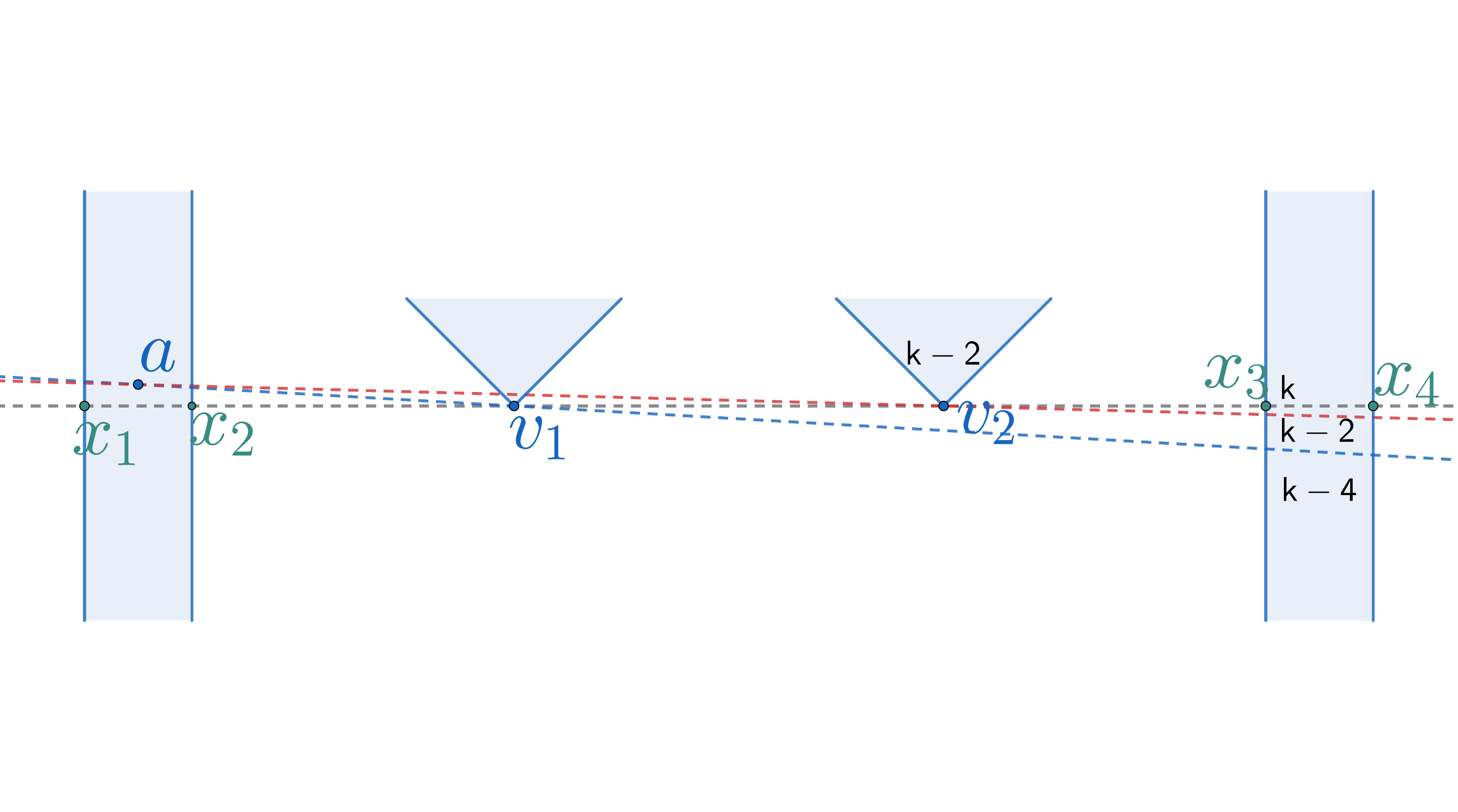}
\caption{Above $\ell_{g}$}\label{fig:CCS-genericA6}
\end{subfigure}
\begin{subfigure}[b]{.49\linewidth}
\includegraphics[width=\linewidth]{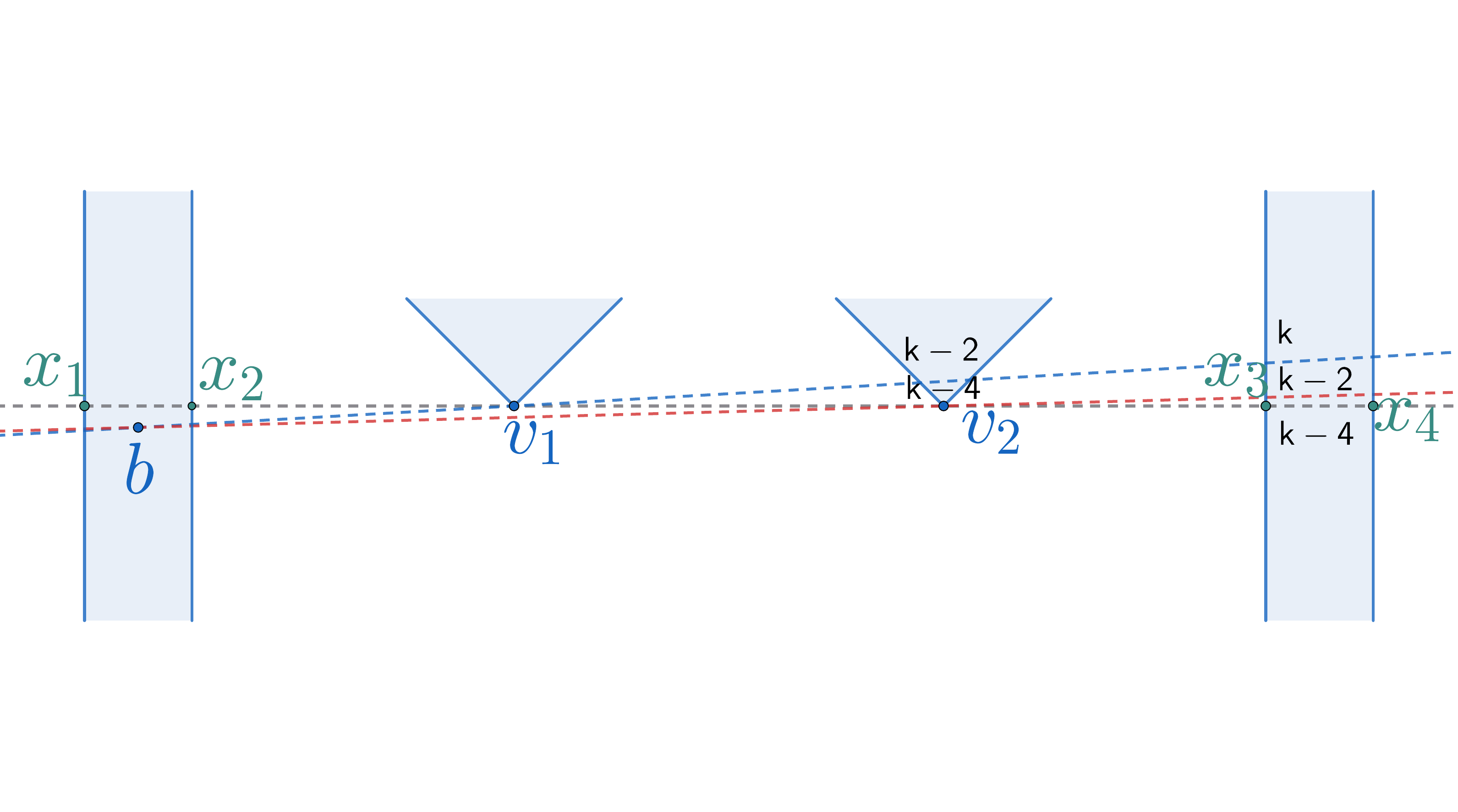}
\caption{Below $\ell_{g}$}\label{fig:CCS-genericB6}
\end{subfigure}

\caption{CCS; $Z = k - 5$, $W = k - 4$.}
\label{fig:CCS-generic-6}
\end{figure}

 \begin{figure}[H]
\centering
\begin{subfigure}[b]{.49\linewidth}
\includegraphics[width=\linewidth]{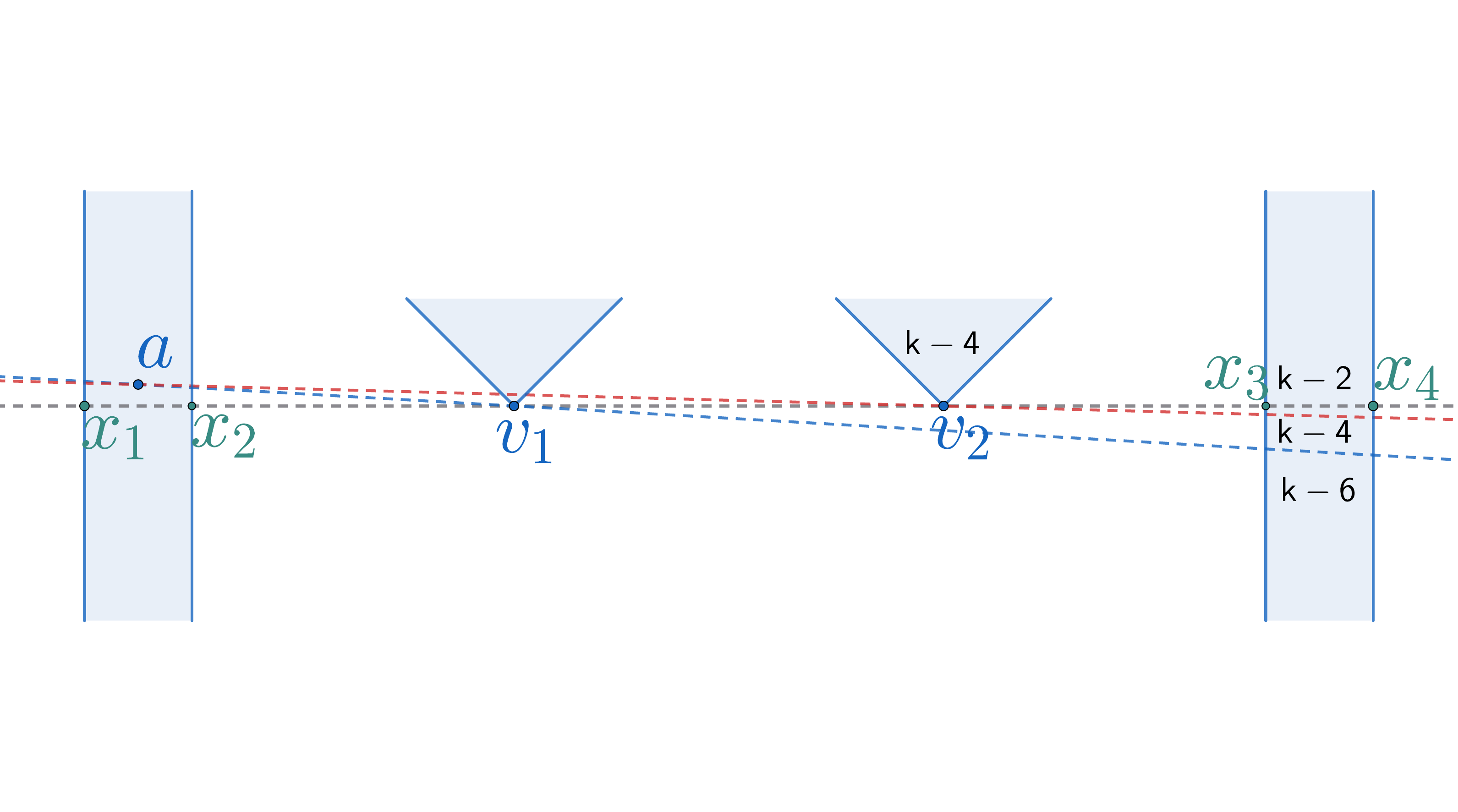}
\caption{Above $\ell_{g}$}\label{fig:CCS-genericA8}
\end{subfigure}
\begin{subfigure}[b]{.49\linewidth}
\includegraphics[width=\linewidth]{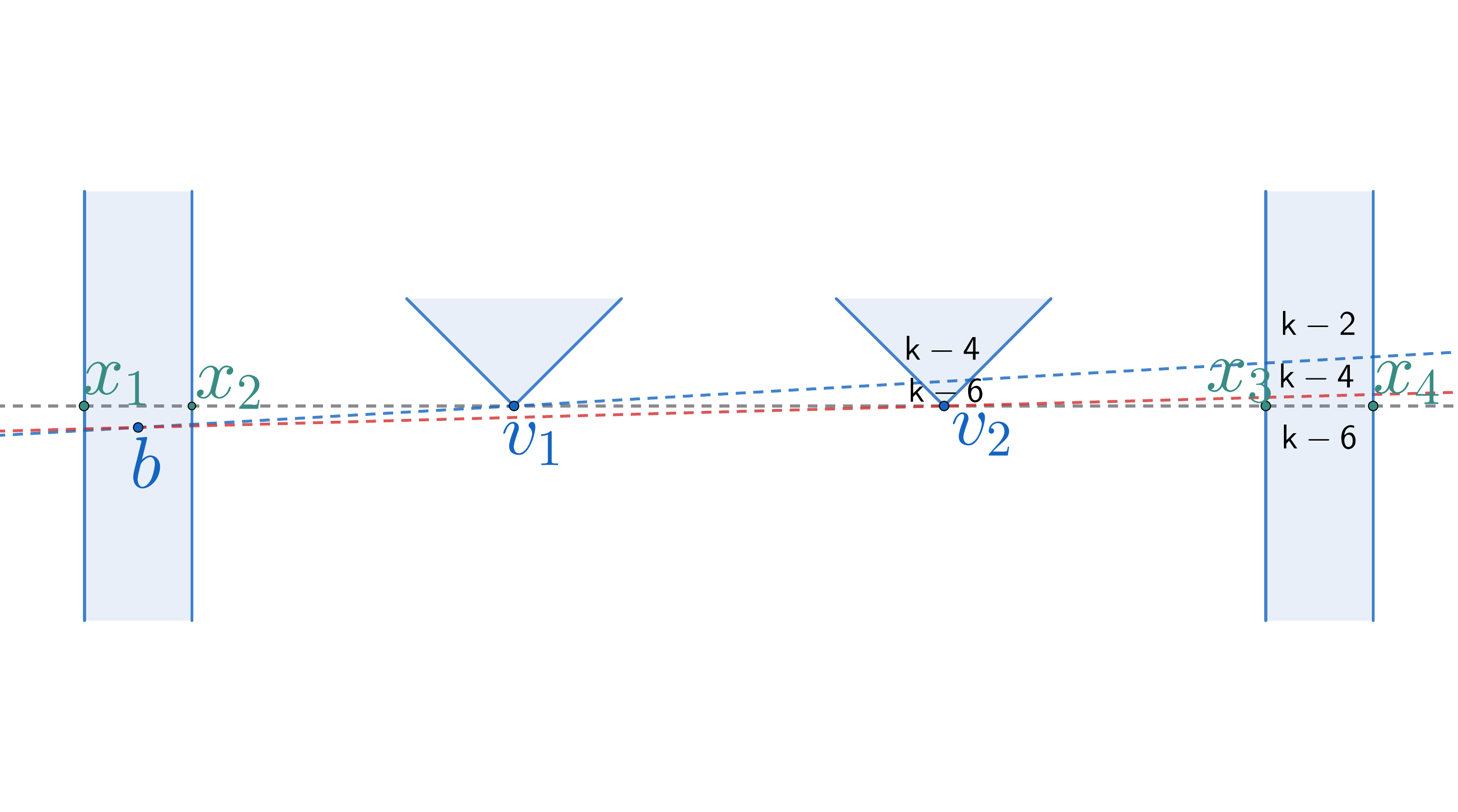}
\caption{Below $\ell_{g}$}\label{fig:CCS-genericB8}
\end{subfigure}

\caption{CCS; $Z = k - 7$, $W = k - 6$.}
\label{fig:CCS-generic-8}
\end{figure}

\section{CCO}
\label{appendix:CCO}
\subsection{Convex Convex Opposite (CCO)}

See section ~\ref{CCOMainBody}.

\section{CRO}
\label{appendix:CRO}
\subsection{Convex Reflex Opposite (CRO)}

\begin{lemma}
\label{lemma:CRO}
    When $Z = k$, a merge/split event occurs at $v_{2}$ (Figure~\ref{fig:CRS-generic2}). If $Z = k - 2$, an appear/disappear event occurs at $v_{2}$ (Figure~\ref{fig:CRS-generic4}). Also, if $W = k$, a merge/split event occurs at $x_{3}x_{4}$(Figure~\ref{fig:CRS-generic4}). Additionally, if $W = k - 2$, an appear/disappear event occurs at $x_{3}x_{4}$(Figure~\ref{fig:CRS-generic6}). 

\end{lemma}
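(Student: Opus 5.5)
We will follow the same scheme as the proof of Lemma~\ref{lemma:CCO-main}: fix the CRO configuration ($v_1$ convex, $v_2$ reflex, the edges of $v_1$ and $v_2$ on opposite sides of $\ell_g$). We then compare the obstruction counts of the regions cut out by the rays $\ell_b$ and $\ell_r$ when the agent is at $a$ (above $\ell_g$) and at $b$ (below $\ell_g$). Only the relative value of $Z$ (respectively $W$) with respect to $k$ matters, so we parametrize by $Z=k+c$ and $W=k+c'$. By the invariance remark (adding two walls and raising $k$ by two preserves the transitions), it suffices to treat one representative per parity class in a bounded window of $c$ and $c'$, as in Figures~\ref{fig:CRS-generic2}--\ref{fig:CRS-generic6}.

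\textbf{Local regions at $v_2$.} Near the reflex vertex $v_2$, the rays $\ell_b$ and $\ell_r$ split the neighbourhood into wedges. The count in each wedge is $Z$ plus the number of edges incident to $v_1$ crossed (0, 1 or 2, since $v_1$ is convex and its edges lie on one side) plus the number of edges of $v_2$ crossed. When the agent moves from $a$ to $b$, $\ell_b$ and $\ell_r$ swap order. The wedge that was behind both edges of $v_1$ and one edge of $v_2$ becomes a wedge behind one edge of $v_2$ only, or behind $v_1$'s edges alone. We tabulate the counts before and after and mark each wedge visible ($\le k$) or shadow ($>k$).
\begin{itemize}
\item For $Z=k$, the wedge adjacent to $v_2$ switches between being a bridge of shadow connecting the two sides of the reflex vertex and being visible. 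This is a merge/split.
\item For $Z=k-2$, an isolated wedge with count $k+1$ appears only on one side of $\ell_g$. This is an appear/disappear.
\end{itemize}

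\textbf{Local regions at $x_3x_4$.} We repeat the tabulation near $x_3x_4$. Here each count additionally includes the walls between $v_2$ and $x_4$ and the edges of $v_2$ crossed (now both edges can be crossed beyond a reflex vertex). This gives a merge/split for $W=k$ and an appear/disappear for $W=k-2$.

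\textbf{Ruling out everything else.} First we handle the extremes. For $Z\ge k+3$ (resp.\ $W\ge k+4$), every wedge count exceeds $k$, so the region is entirely shadow. For $Z$ or $W$ sufficiently below $k$ (about $k-7$), every count is at most $k$, so the region is entirely visible. For each remaining intermediate value, we check from the table that the visible/shadow pattern of the wedges is combinatorially identical at $a$ and at $b$ (the same number of shadow pieces touching the same neighbours), so no event occurs.

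\textbf{Main obstacle.} The main obstacle is the bookkeeping near the reflex vertex $v_2$. Since $v_2$ is reflex, the interior angle exceeds $\pi$, and one must correctly determine which wedges lie inside $P$ and how they connect around $v_2$. This connectivity is what distinguishes merge/split from appear/disappear. I would first draw the wedges explicitly for $Z=k$ and verify the connectivity change, then use parity to propagate to the other cases. A secondary concern is the claimed event at $Z=k-4$ on $x_3x_4$ in Lemma~\ref{lemma:CRO-main}: I would check whether it is really the $W=k-2$ event re-indexed, since $W-Z$ is determined by the walls between $v_2$ and $x_4$.
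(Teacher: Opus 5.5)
Your plan is the paper's own proof, which consists only of comparing the obstruction counts at $a$ and $b$ in Figures~\ref{fig:CRS-generic0}--\ref{fig:CRS-generic8}, plus the extreme thresholds. The gap is in the one step you actually spell out, the bookkeeping at the reflex vertex $v_2$, and as written it does not produce the $Z=k$ event. A sight line from the agent reaches the neighbourhood of $v_2$ almost along $\ell_g$. Since $v_2$ is reflex and critical, both directions of $\ell_g$ at $v_2$ lie in the interior sector, so crossing exactly one edge of $v_2$ would put the endpoint in the exterior wedge, i.e.\ outside $P$. Hence a point of $P$ near $v_2$ is behind $0$ or $2$ edges of $v_2$, never exactly one. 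Likewise the sight line is behind $0$ or $2$ edges of the convex vertex $v_1$, since the target is not near $v_1$. Your counts are $Z+3$ before the crossing and $Z+1$ or $Z+2$ after. At $Z=k$ these keep the wedge in shadow on both sides of $\ell_g$, so your first bullet (a shadow bridge that becomes visible) does not follow from them.

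The correct tabulation is simpler. The edges of $v_1$ and $v_2$ lie on opposite sides of $\ell_g$. Consequently, when the agent is on one side of $\ell_g$, the thin region between $\ell_b$ and $\ell_r$ beyond $v_2$ is behind both edges of $v_1$ and both edges of $v_2$ (count $Z+4$). When the agent is on the other side, that region is behind neither (count $Z$). The two regions flanking it have count $Z+2$ in both positions: one is behind the edges of $v_1$ only, the other behind those of $v_2$ only. For $Z=k$ this turns a $k+4$ bridge joining two $k+2$ shadows into a visible band separating them (merge/split). For $Z=k-2$ it makes an isolated $k+2$ sliver appear among visible $k$-regions (appear/disappear). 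Near $x_3x_4$ the same reasoning gives counts $W+2,\,W+4,\,W+2$ on one side versus $W+2,\,W,\,W+2$ on the other, hence merge/split at $W=k$ and appear/disappear at $W=k-2$. These counts do not involve $Z$, which supports your doubt about the $Z=k-4$ clause of Lemma~\ref{lemma:CRO-main}. Finally, your ``ruling out'' section is not needed, since this lemma asserts only the four events, and it would not go through as stated. By the same counts the split also occurs for $Z=k-1$ and the appearance for $Z=k-3$ (similarly for $W$). These values therefore cannot be excluded without a parity restriction you have not established.
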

\begin{proof}
    See Figure~\ref{fig:CRS-generic0} - Figure~\ref{fig:CRS-generic8}

    For $Z \geq k+4$, $v_{2}$ and its surroundings is entirely in shadow. For $W \geq k+6$, $x_{3}x_{4}$ and its surroundings is entirely in shadow. 

For $Z \leq k - 8$, $v_{2}$ and its surroundings is entirely visible. For $W \leq k - 6$, $x_{3}x_{4}$ and its surroundings is entirely visible. 
    
\end{proof}

 \begin{figure}[H]
\centering
\begin{subfigure}[b]{.49\linewidth}
\includegraphics[width=\linewidth]{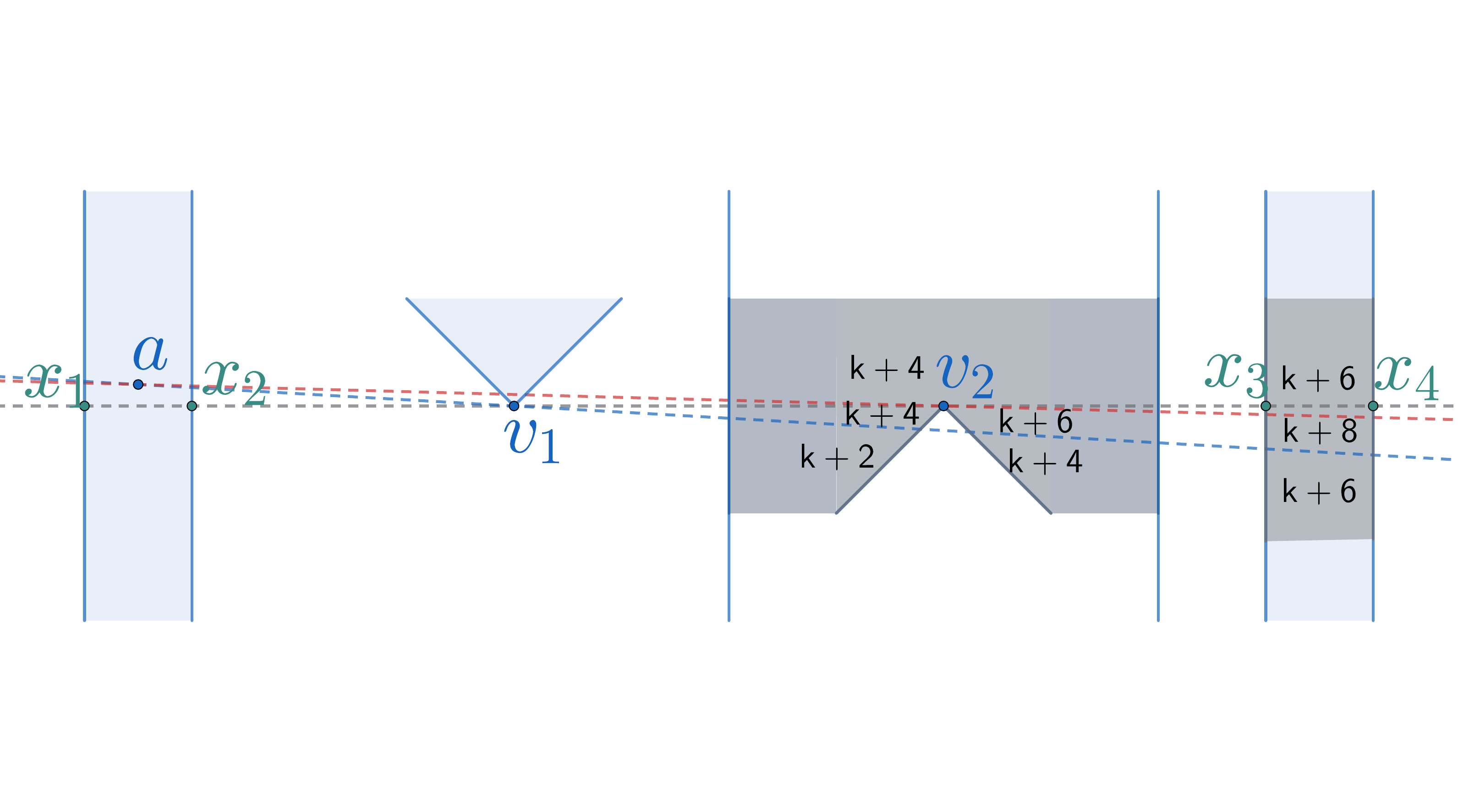}
\caption{Above $\ell_{g}$}\label{fig:CRS-genericA0}
\end{subfigure}
\begin{subfigure}[b]{.49\linewidth}
\includegraphics[width=\linewidth]{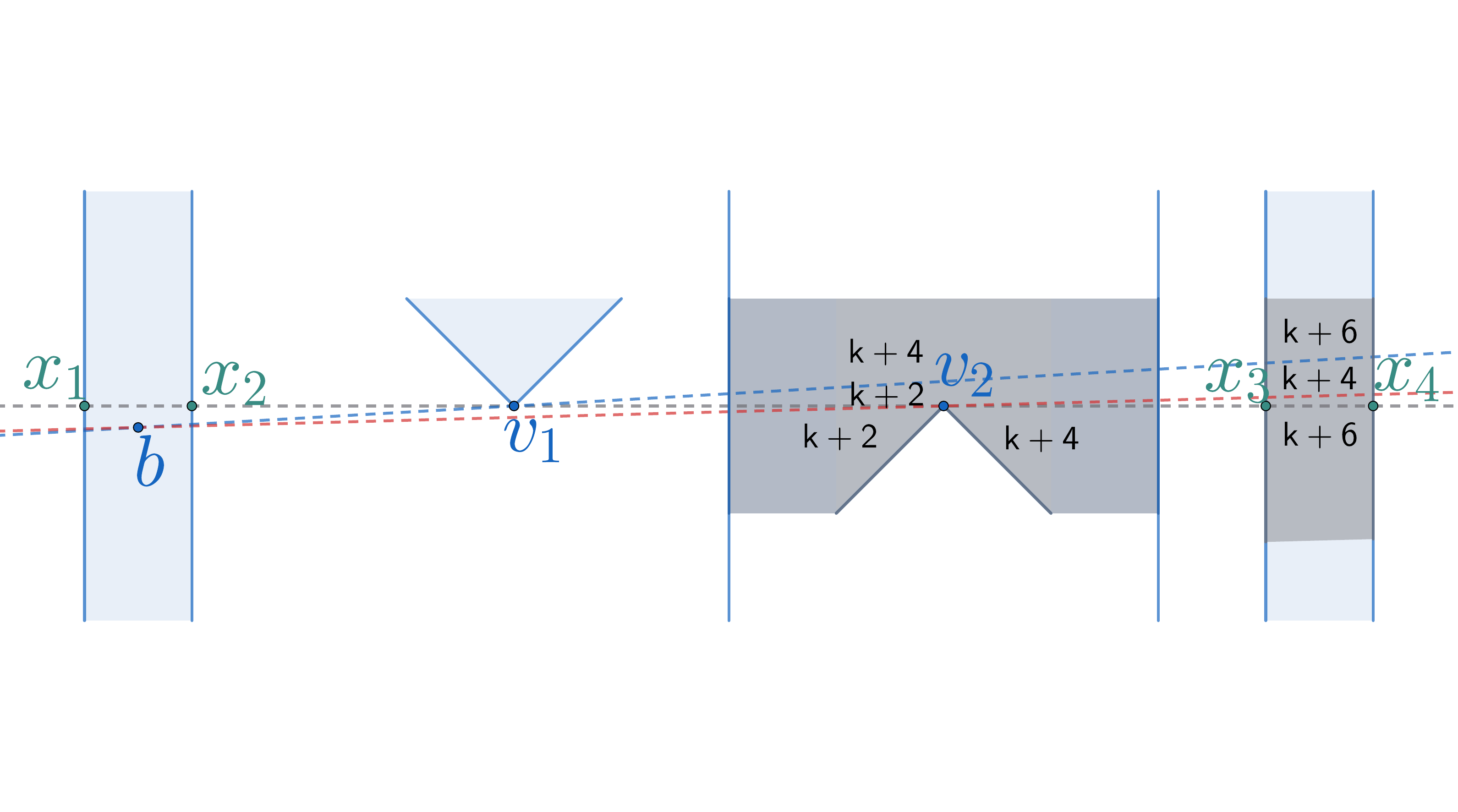}
\caption{Below $\ell_{g}$}\label{fig:CRS-genericB0}
\end{subfigure}

\caption{CRO; $Z = k+2$ and $W = k+4$.}
\label{fig:CRS-generic0}
\end{figure}

 \begin{figure}[H]
\centering
\begin{subfigure}[b]{.49\linewidth}
\includegraphics[width=\linewidth]{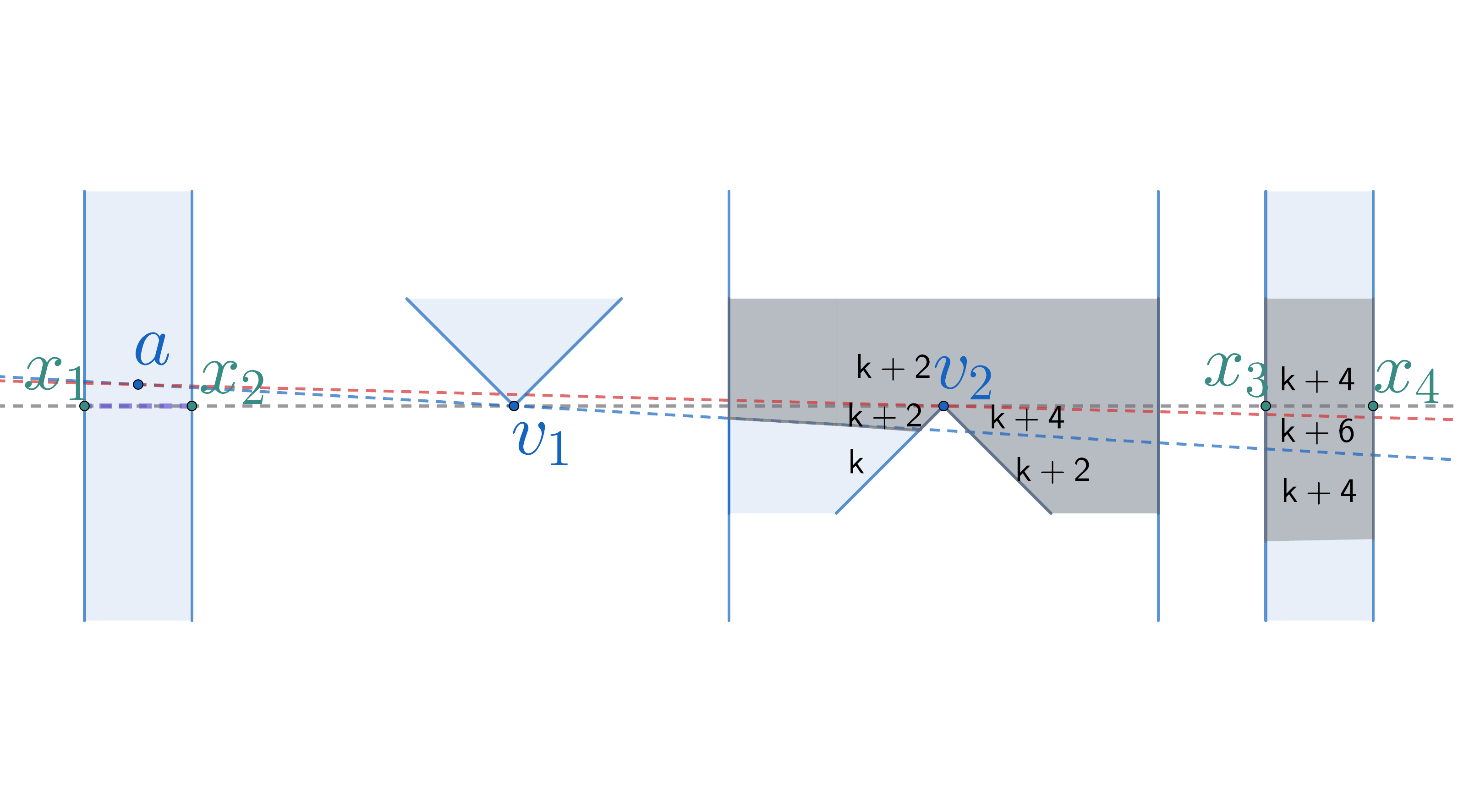}
\caption{Above $\ell_{g}$}\label{fig:CRS-genericA2}
\end{subfigure}
\begin{subfigure}[b]{.49\linewidth}
\includegraphics[width=\linewidth]{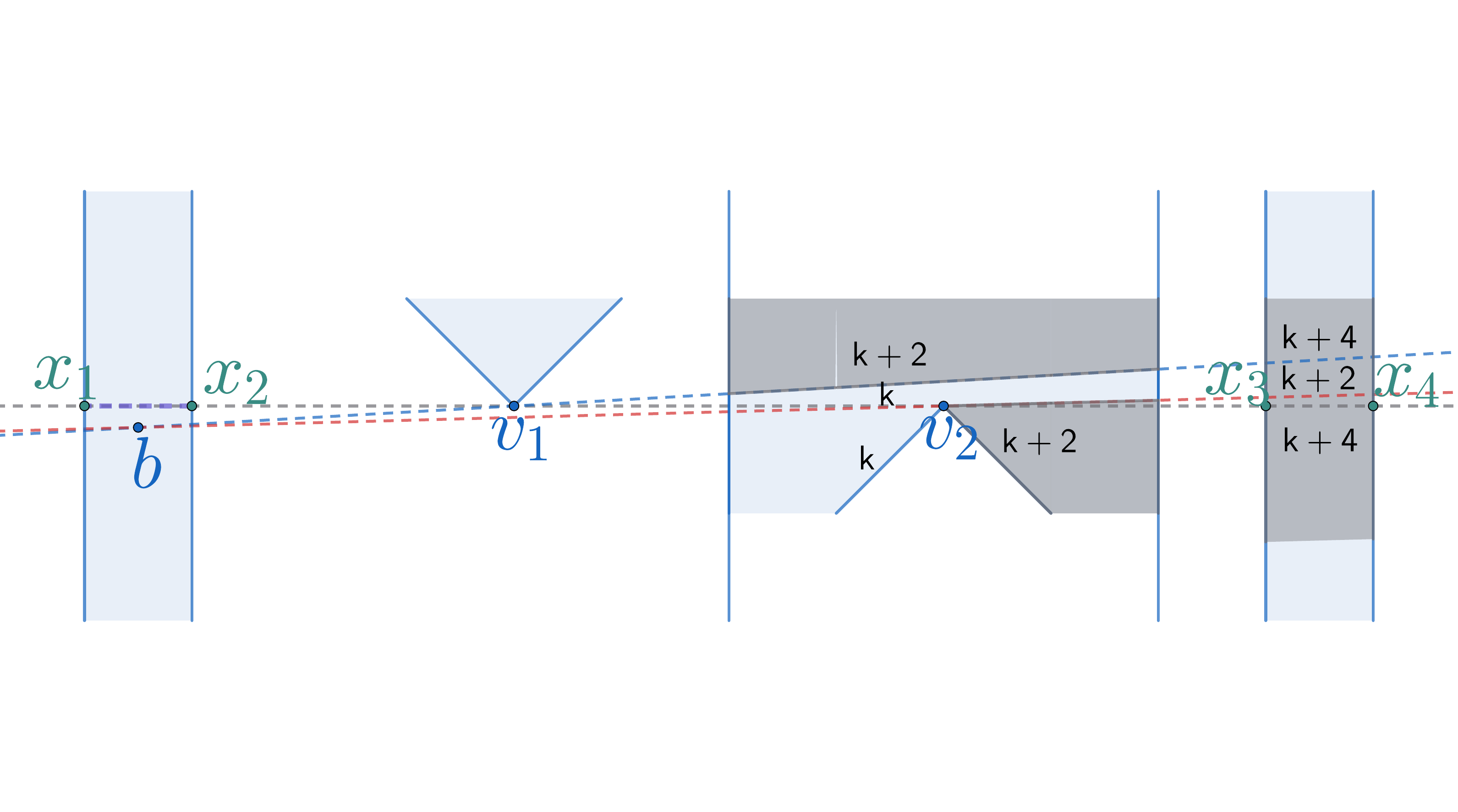}
\caption{Below $\ell_{g}$}\label{fig:CRS-genericB2}
\end{subfigure}

\caption{CRO; $Z=k$ (Merge/Split) and $W=k+2$.}
\label{fig:CRS-generic2}
\end{figure}

 \begin{figure}[H]
\centering
\begin{subfigure}[b]{.49\linewidth}
\includegraphics[width=\linewidth]{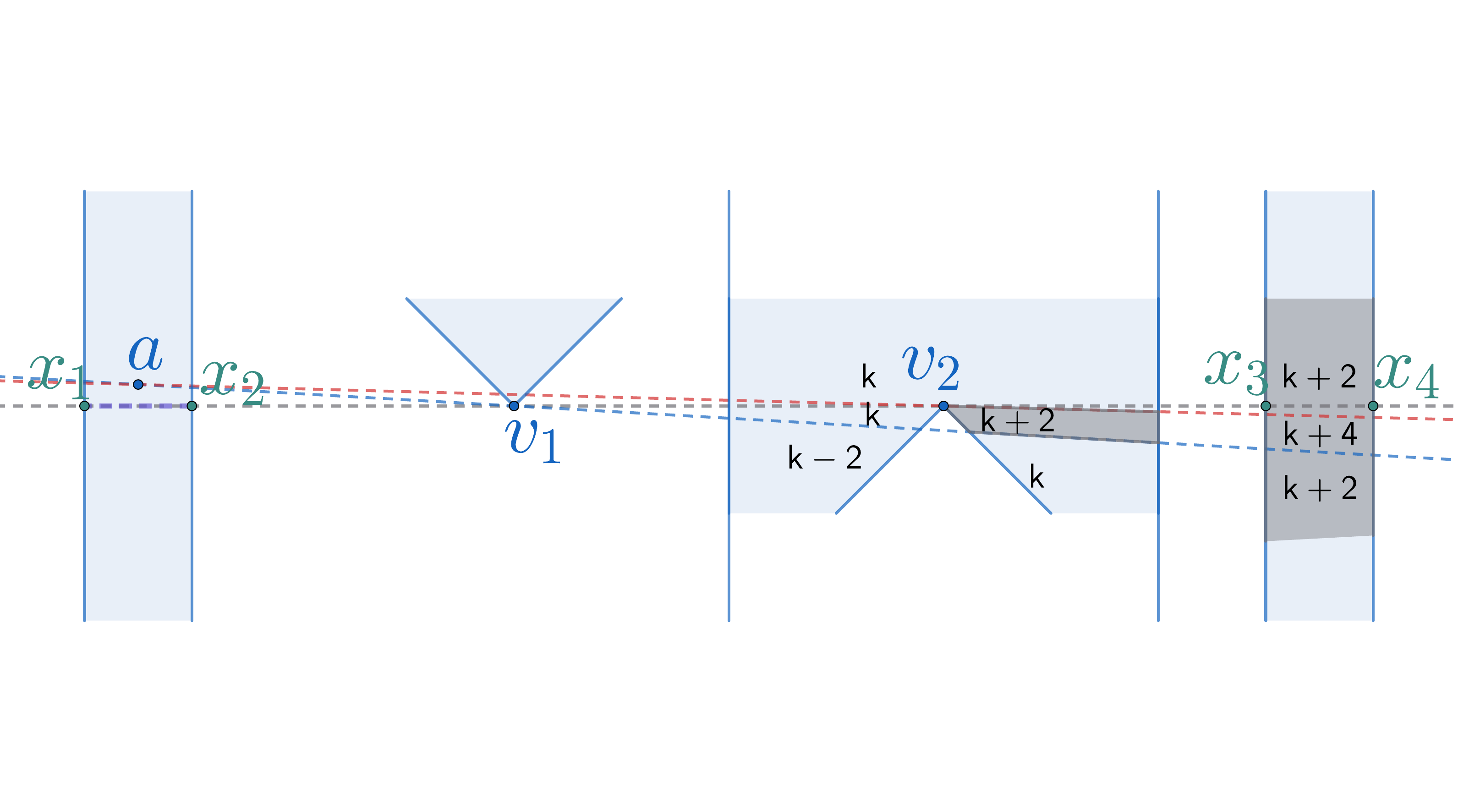}
\caption{Above $\ell_{g}$}\label{fig:CRS-genericA4}
\end{subfigure}
\begin{subfigure}[b]{.49\linewidth}
\includegraphics[width=\linewidth]{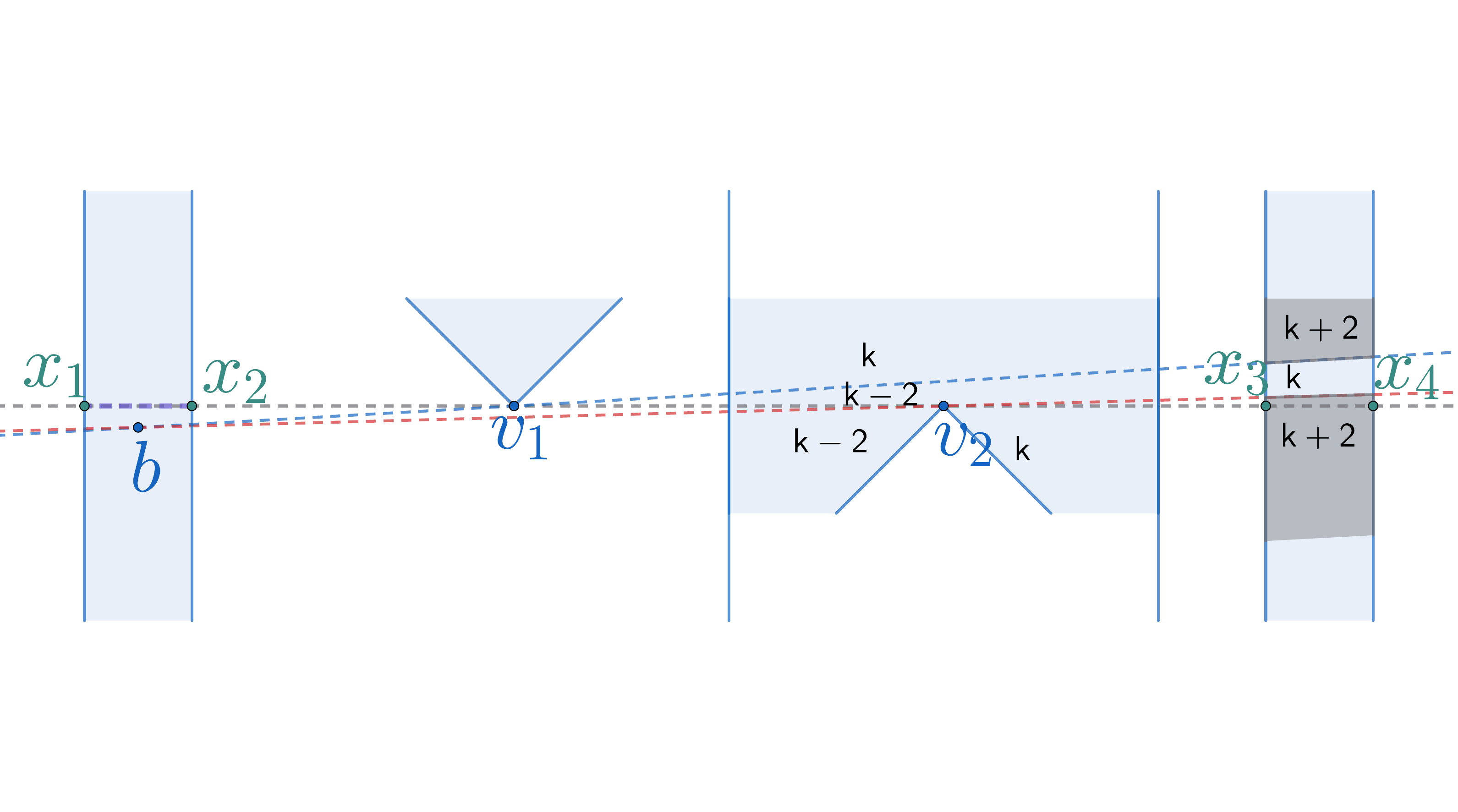}
\caption{Below $\ell_{g}$}\label{fig:CRS-genericB4}
\end{subfigure}

\caption{CRO; $Z = k - 2$ (Appear/Disappear), $W = k$ (Merge/Split).}
\label{fig:CRS-generic4}
\end{figure}

 \begin{figure}[H]
\centering
\begin{subfigure}[b]{.49\linewidth}
\includegraphics[width=\linewidth]{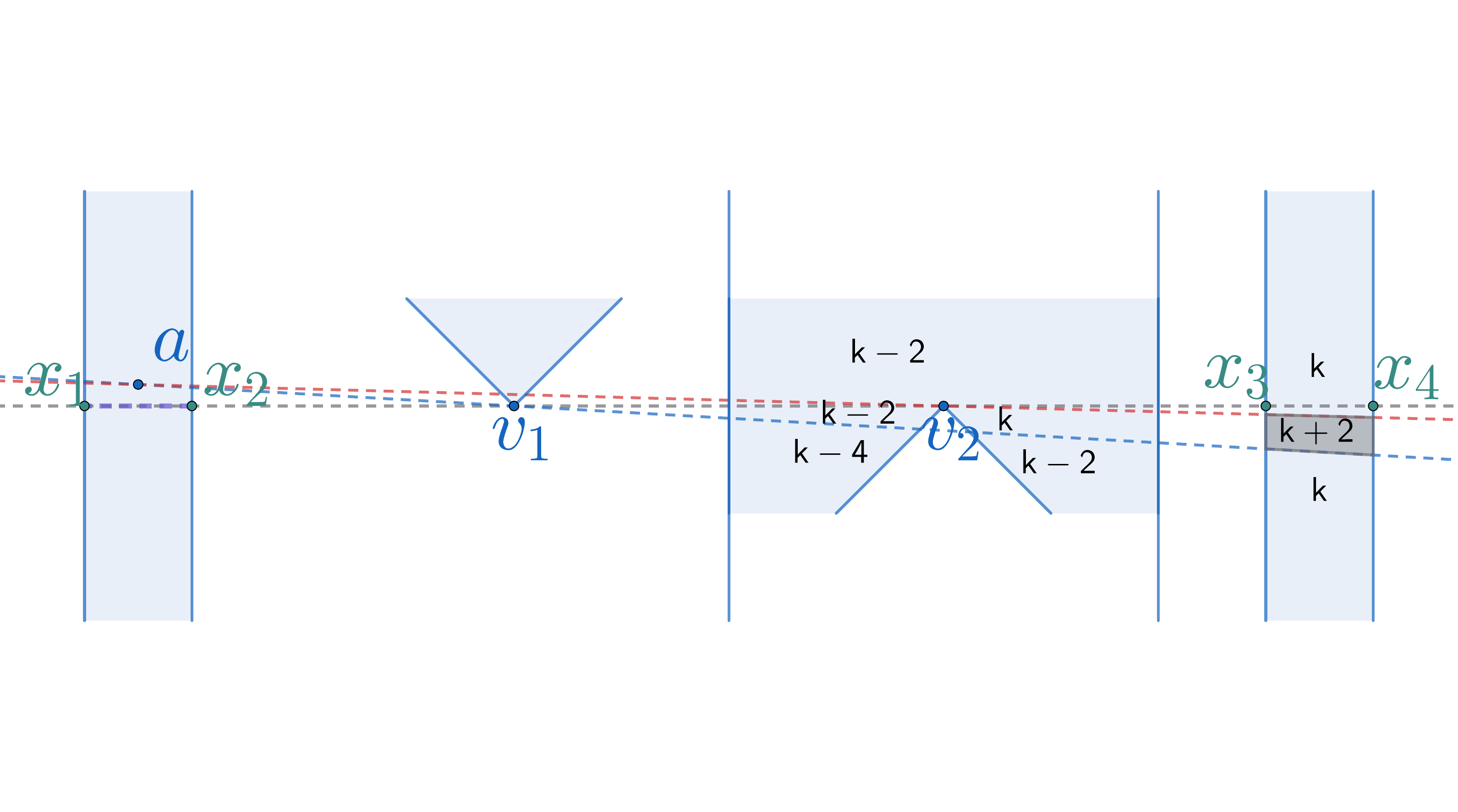}
\caption{Above $\ell_{g}$}\label{fig:CRS-genericA6}
\end{subfigure}
\begin{subfigure}[b]{.49\linewidth}
\includegraphics[width=\linewidth]{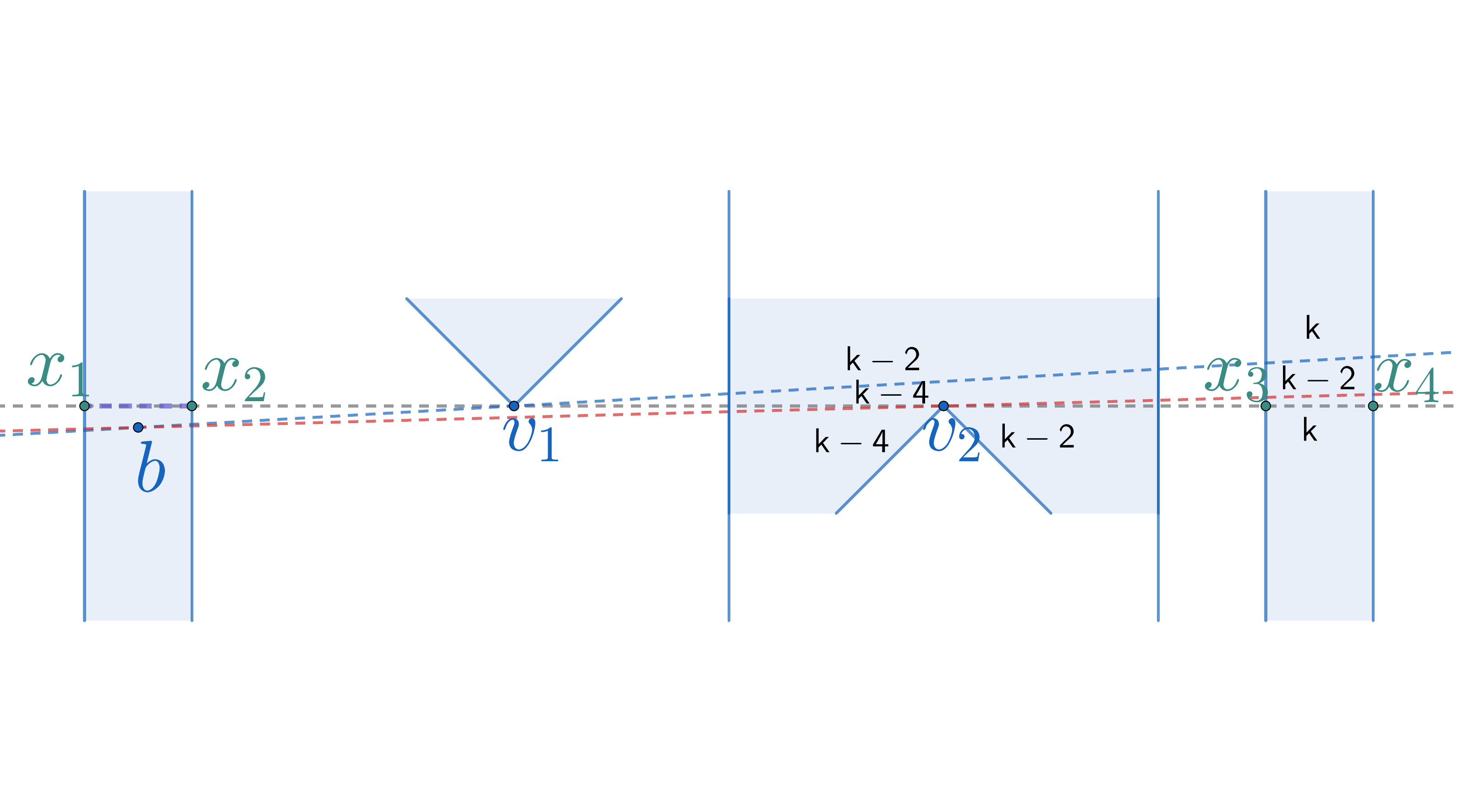}
\caption{Below $\ell_{g}$}\label{fig:CRS-genericB6}
\end{subfigure}

\caption{CRO; $Z = k - 4$, $W = k - 2$ (Appear/Disappear).}
\label{fig:CRS-generic6}
\end{figure}

 \begin{figure}[H]
\centering
\begin{subfigure}[b]{.49\linewidth}
\includegraphics[width=\linewidth]{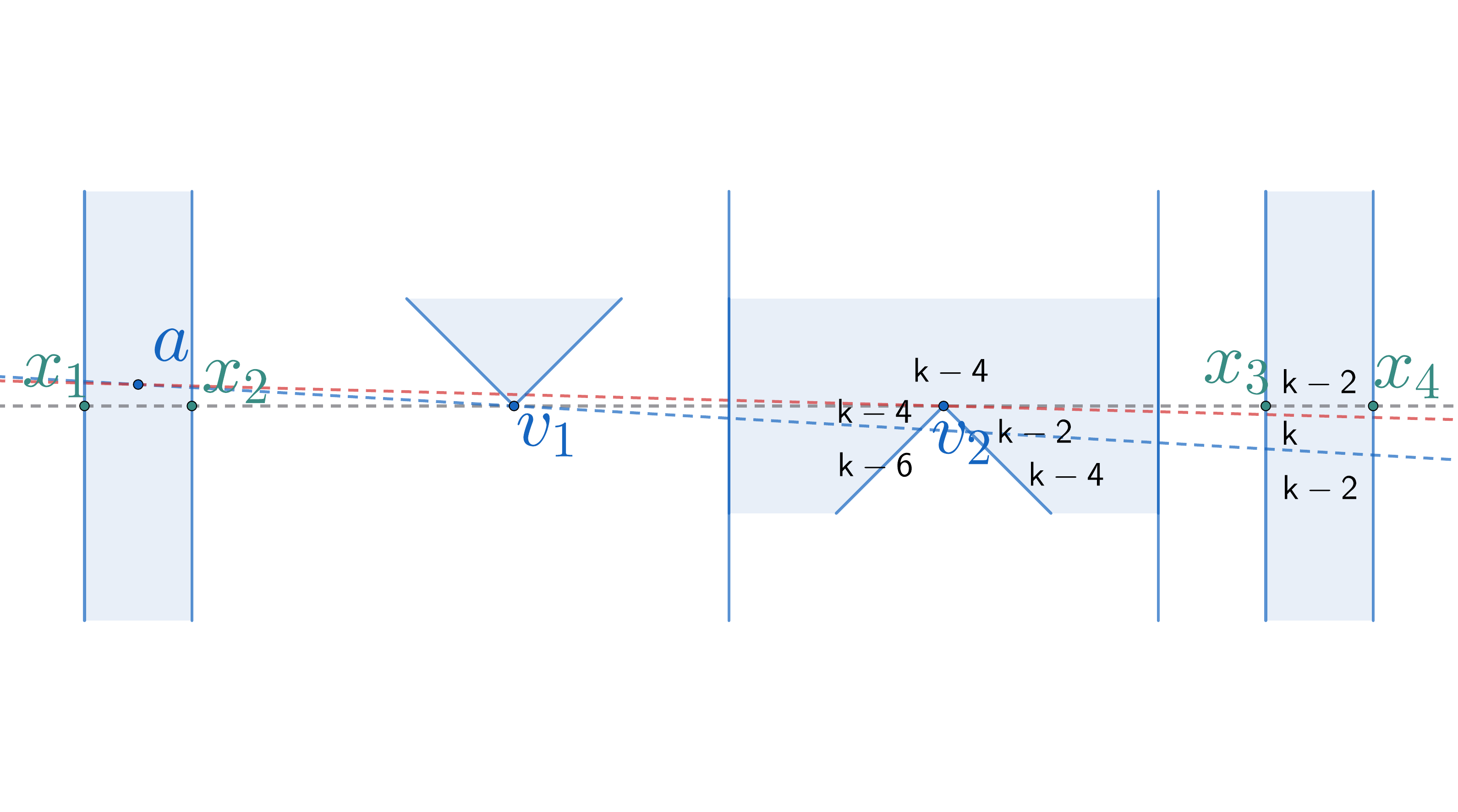}
\caption{Above $\ell_{g}$}\label{fig:CRS-genericA8}
\end{subfigure}
\begin{subfigure}[b]{.49\linewidth}
\includegraphics[width=\linewidth]{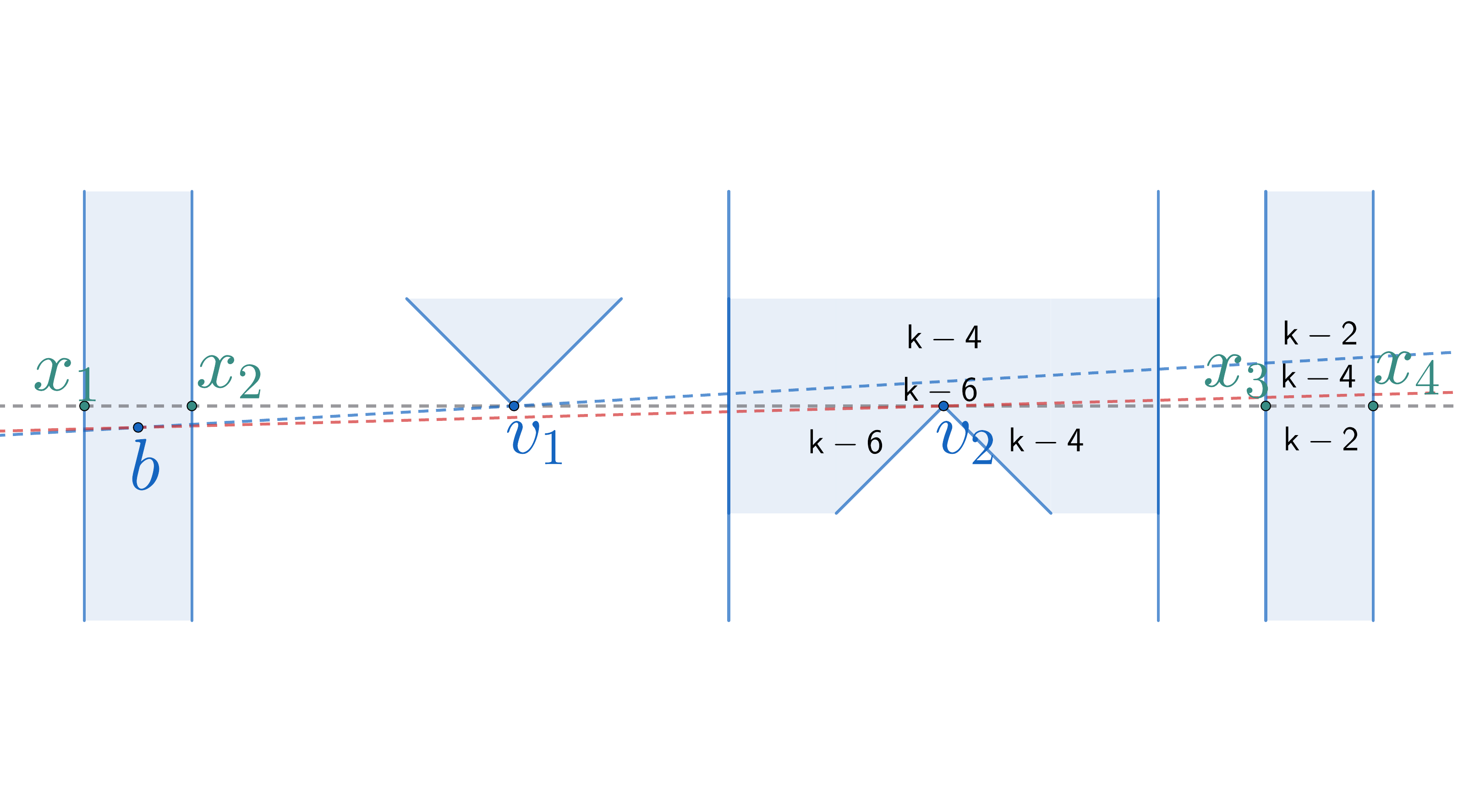}
\caption{Below $\ell_{g}$}\label{fig:CRS-genericB8}
\end{subfigure}
\caption{CRO; $Z = k - 6$, $W = k - 4$.}
\label{fig:CRS-generic8}
\end{figure}

\section{CRS}
\label{appendix:CRS}
\subsection{Convex Reflex Same (CRS)}
\begin{lemma}
\label{lemma:CRS}
    When $Z = k$, there is a merge/split event at $v_{2}$(Figure~\ref{fig:CRO-generic2}). No event occurs for all other $Z$ and $W$. 
\end{lemma}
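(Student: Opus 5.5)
I would prove Lemma~\ref{lemma:CRS} the same way as the CCO and CRO lemmas: compare obstruction counts with the agent at $a$ (above $\ell_g$) and at $b$ (below $\ell_g$) in the wedges bounded by the rays $\ell_b$ and $\ell_r$, near $v_2$ and near $x_3x_4$. Let $v_1$ be convex and $v_2$ reflex, with both edges of each vertex on the same side of $\ell_g$. By the invariance noted before the CCO subsection (adding two walls and raising $k$ by two leaves the transitions unchanged), it suffices to treat a window of consecutive values of $Z$ and $W$, bounded by the extreme regimes. I would take $Z \geq k+3$ or $W \geq k+4$ as the fully-shadowed regime and $Z \leq k-9$ or $W\le k-8$ as the fully-visible regime, as in the CCS proof, adjusting the thresholds if the reflex vertex shifts counts by one.

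\textbf{Step 1: a same-side $v_1$ never changes counts beyond it.} When the agent passes from $a$ to $b$, $\ell_b$ sweeps across the two edges of $v_1$. Since both edges lie on one side of $\ell_g$, for every direction near $\ell_g$ a ray from the agent meets either zero or two edges of $v_1$, on both sides of the crossing. Hence the parity and the count of walls beyond $v_1$ are unchanged, except in a thin wedge where the count differs by exactly $2$. This is the same mechanism that kills all events in Lemma~\ref{lemma:CCS}. I would use it to show that the counts seen at $x_3x_4$ for each $W$ coincide at $a$ and at $b$ up to a relabelling of wedges, which preserves the number and adjacency of shadow components. So no event happens at $x_3x_4$ for any $W$.

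\textbf{Step 2: the reflex $v_2$ with $Z = k$.} Near $v_2$, the two edges of $v_2$ (same side, reflex) split the neighbourhood into a region behind zero edges of $v_2$ and a region behind two. At $a$, the extra $v_1$-wedge (count $+2$) lands on one side of $\ell_r$; at $b$ it lands on the other side. Tabulating the local counts shows that when $Z=k$ the pocket adjacent to $v_2$ sits exactly at threshold. On one side of $\ell_g$ it links two shadow regions separated by the reflex corner; on the other side it separates them. This gives the merge/split event. For $Z=k\pm 1, k\pm2,\dots$ within the window, the same tabulation shows that either both configurations are entirely above threshold locally, entirely below, or change only inside a single component, so no event occurs.

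\textbf{Main obstacle.} The hard part is Step~2's claim that $Z=k-2$ (and odd offsets) produce no appear/disappear at $v_2$, unlike CRO. The opposite-side configuration in CRO creates a new isolated pocket at $Z=k-2$. Here I would argue that the two-edges-same-side geometry of $v_2$ forces any newly over-threshold region to touch the existing shadow behind $v_2$'s edges, so it is absorbed into an existing component rather than appearing. I would verify this by the figure-by-figure count comparison, following the presentation used for CRO.
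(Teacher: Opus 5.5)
\textbf{There is a genuine gap.} Your overall method is exactly the paper's: compare the obstruction counts at $a$ and at $b$ in the regions cut out by $\ell_b$ and $\ell_r$ near $v_2$ and near $x_3x_4$, between two saturated regimes. The paper's proof is nothing more than that comparison for $Z=k+2,k,k-2,\dots$ (with the corresponding $W$), plus the bounds $Z\ge k+4$, $W\ge k+6$, $Z\le k-8$, $W\le k-6$.

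The first problem is that the arguments you put in place of the figures never use the hypothesis that distinguishes CRS from CRO. Every pair considered already has each vertex's two edges on one side of $\ell_g$; that is just mutual criticality. ``Same'' means that $v_1$'s edges and $v_2$'s edges lie on the same side of $\ell_g$ as each other. Step~1 uses only that $v_1$'s edges lie on one side, and your main-obstacle argument appeals to the ``two-edges-same-side geometry of $v_2$''. Both hold verbatim in CRO, where there \emph{are} events at $x_3x_4$ and an appear at $v_2$ for $Z=k-2$, so as written your argument would prove false statements.

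Step~1 only controls $v_1$'s own contribution. Whether the change by $2$ in the thin wedge between $\ell_b$ and $\ell_r$ is harmless depends on which side of $\ell_r$ carries $v_2$'s $+2$. When both $+2$'s lie on the same side, the counts across $x_3x_4$ read $W,W+2,W+4$ at both $a$ and $b$; only the source of the middle $+2$ switches. When they lie on opposite sides, the counts read $W+2,W+4,W+2$ versus $W+2,W,W+2$. Likewise, nothing appears at $v_2$ for $Z=k-2$ not because of absorption, but because the $+4$ region (behind $v_1$'s edges and behind both edges of $v_2$) exists on both sides of $\ell_g$. In CRO the overlap of the two $+2$ regions exists on only one side.

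The second problem is that your claim that odd offsets are event-free fails at $Z=k-1$. Take the edges on $a$'s side of $\ell_g$.
At $a$, $\ell_b$ passes just below $v_2$. The strip between $\ell_b$ and $\ell_r$ under $v_2$ (count $Z+2$) joins two pockets, and lies above a region of count $Z$:
\begin{itemize}
\item the pocket between $\ell_g$ and the agent-facing edge of $v_2$, where sightlines cross no edge of $v_2$ (count $Z+2$);
\item the far pocket, where sightlines cross both edges of $v_2$ (count $Z+4$).
\end{itemize}
At $b$, $\ell_b$ passes just above $v_2$. The count-$(Z+2)$ part of the agent-side pocket is now cut off from the far pocket by the exterior wedge at $v_2$ and by a strip of count $Z$.

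This local shadow picture depends on $k$ only through its position relative to $Z$, $Z+2$, $Z+4$. So the merge/split occurs whenever $Z\le k<Z+2$, that is, for $Z=k-1$ exactly as for $Z=k$. The paper never examines odd offsets (its figures are $Z=k+2,k,k-2,\dots$), so its ``no event for all other $Z$'' is only checked for $Z\equiv k\pmod 2$. The parity of $Z$ is forced by the configuration, since the agent and the points of $\ell_g$ beside the reflex $v_2$ are both interior. You therefore need to either make that parity restriction explicit or state the event as $Z\in\{k,k-1\}$. A tabulation at $Z=k-1$ cannot show ``no event''.
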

\begin{proof}
    See Figure ~\ref{fig:CRO-generic0} - Figure~\ref{fig:CRO-generic6}

For $Z \geq k+4$, $v_{2}$ and its surroundings are in shadow. For $W \geq k+6$, $x_{3}x_{4}$ and its surroundings are in shadow. 

For $Z \leq k - 8$, $v_{2}$ and its surroundings are entirely visible. For $W \leq k - 6$, $x_{3}x_{4}$ and its surroundings are entirely visible. 
 \end{proof}
\begin{figure}[H]
\centering
\begin{subfigure}[b]{.49\linewidth}
\includegraphics[width=\linewidth]{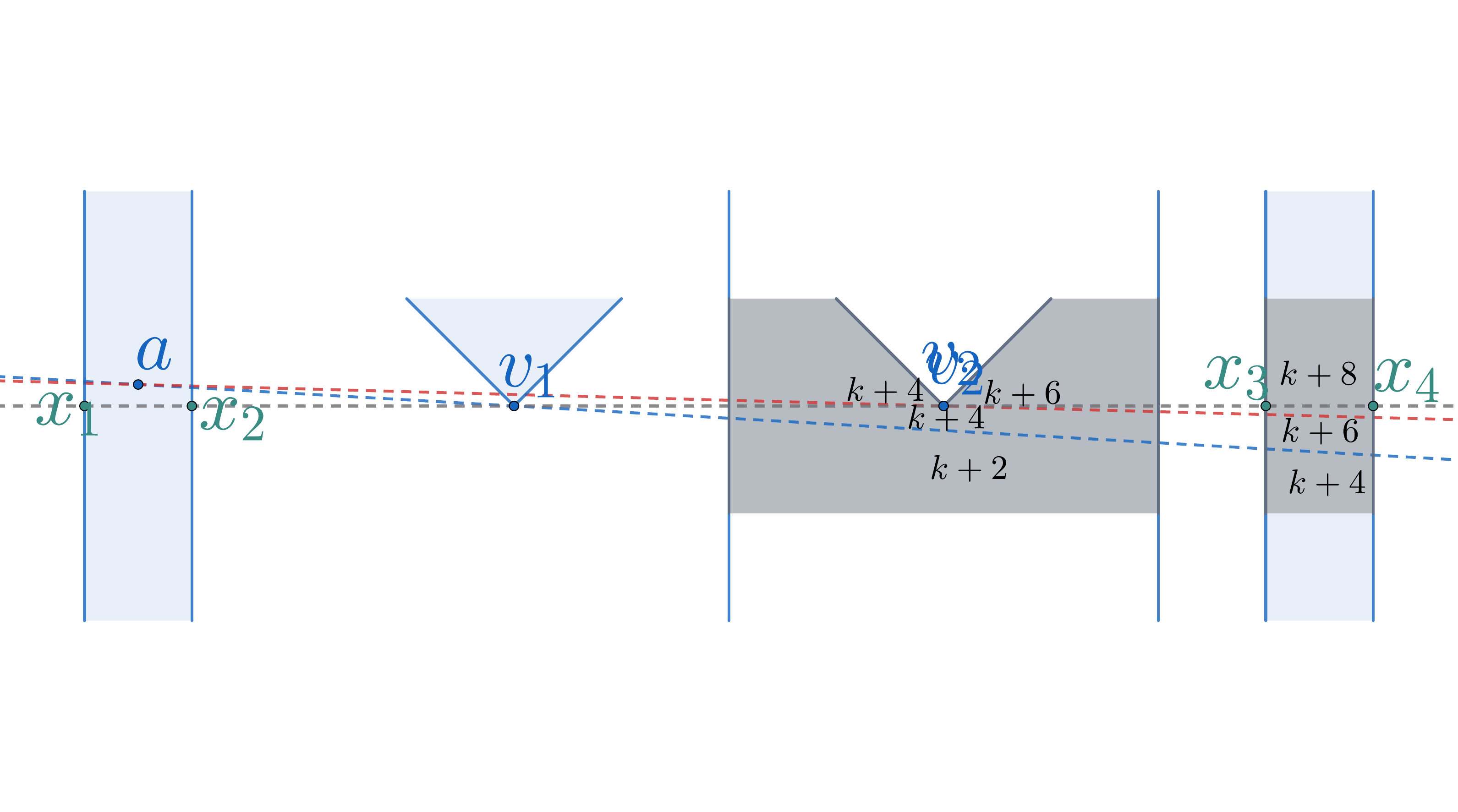}
\caption{Above $\ell_{g}$}\label{fig:CRO-genericA0}
\end{subfigure}
\begin{subfigure}[b]{.49\linewidth}
\includegraphics[width=\linewidth]{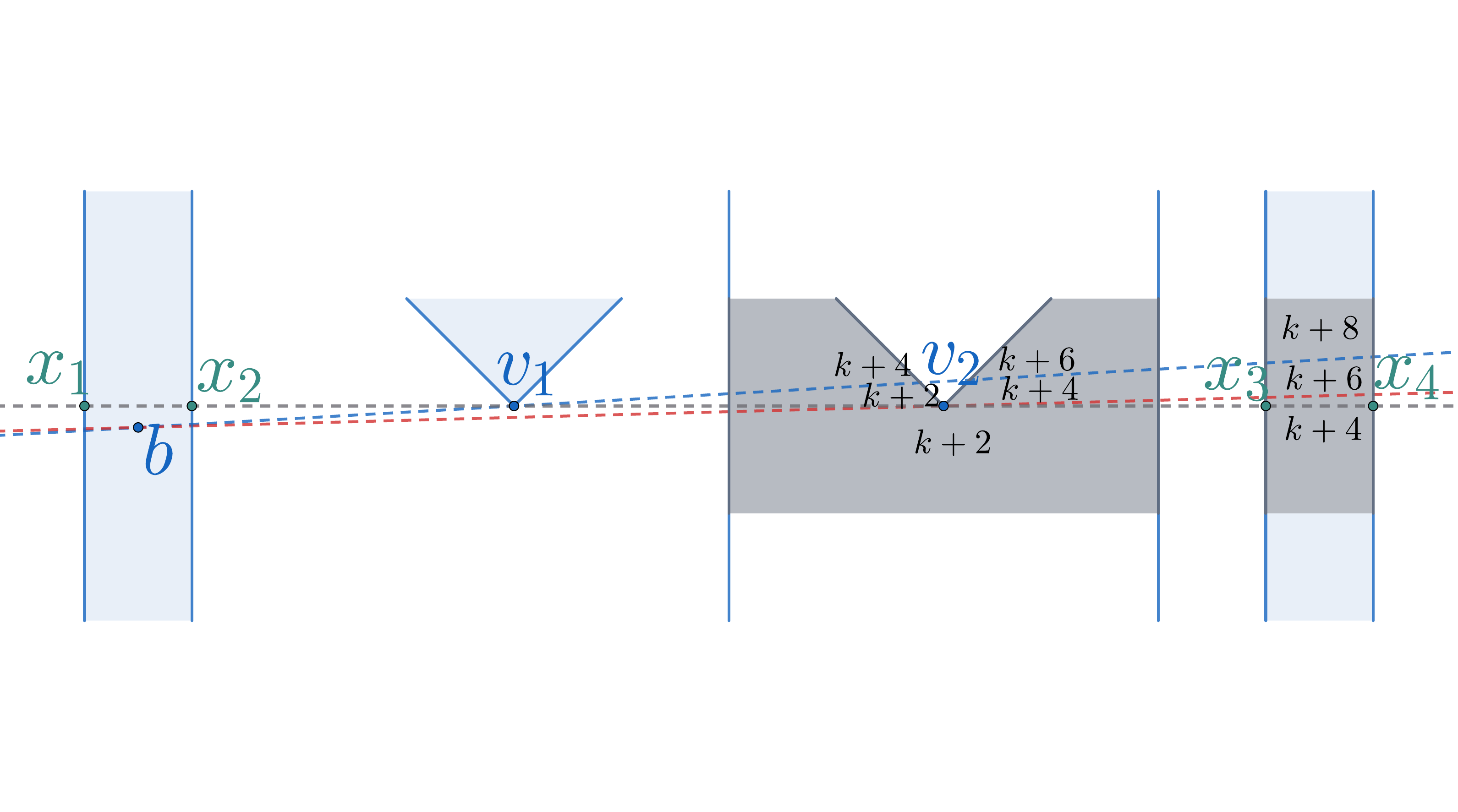}
\caption{Below $\ell_{g}$}\label{fig:CRO-genericB0}
\end{subfigure}

\caption{CRS; $Z = k+2$, $W = k+4$.}
\label{fig:CRO-generic0}
\end{figure}

\begin{figure}[H]
\centering
\begin{subfigure}[b]{.49\linewidth}
\includegraphics[width=\linewidth]{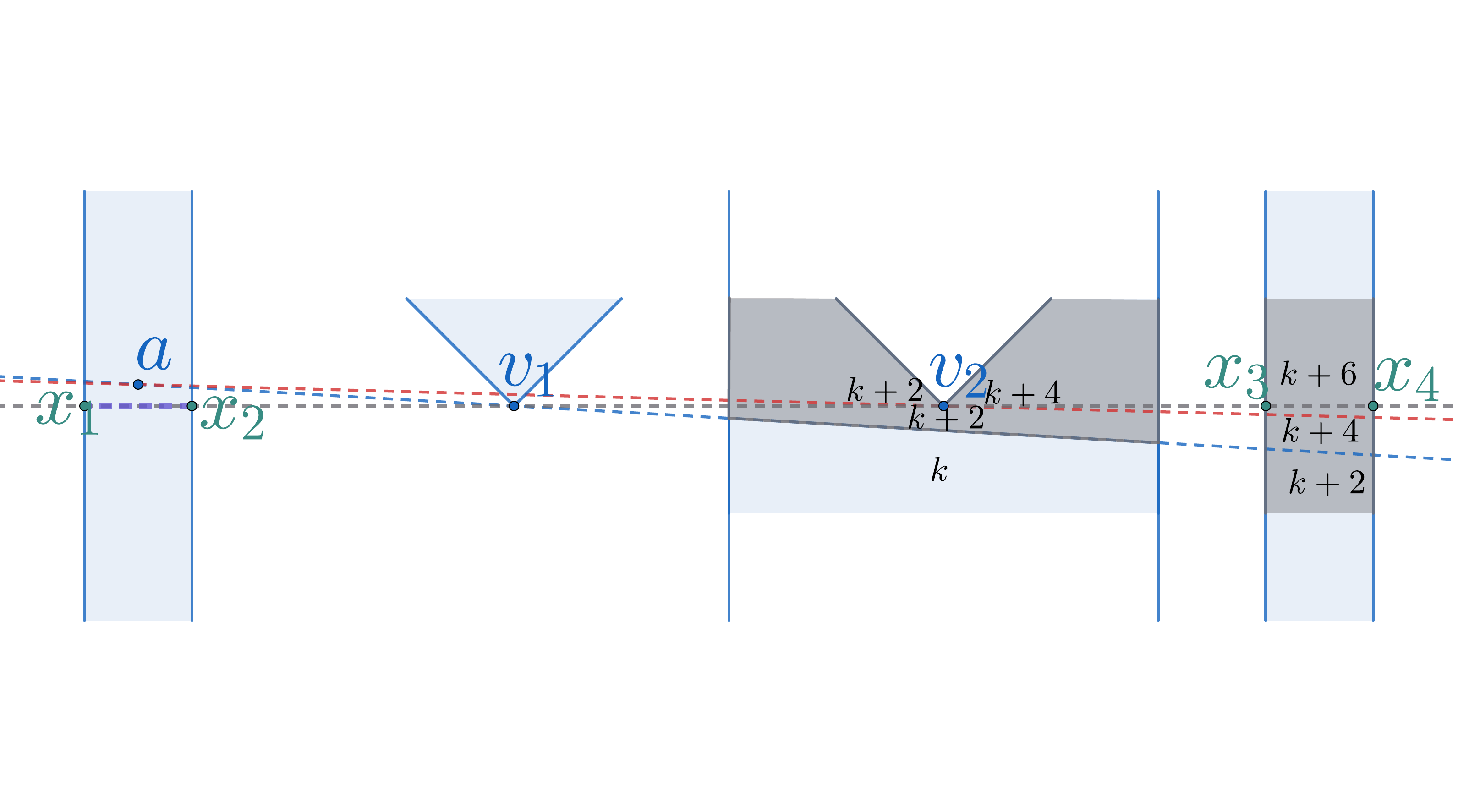}
\caption{Above $\ell_{g}$}\label{fig:CRO-genericA2}
\end{subfigure}
\begin{subfigure}[b]{.49\linewidth}
\includegraphics[width=\linewidth]{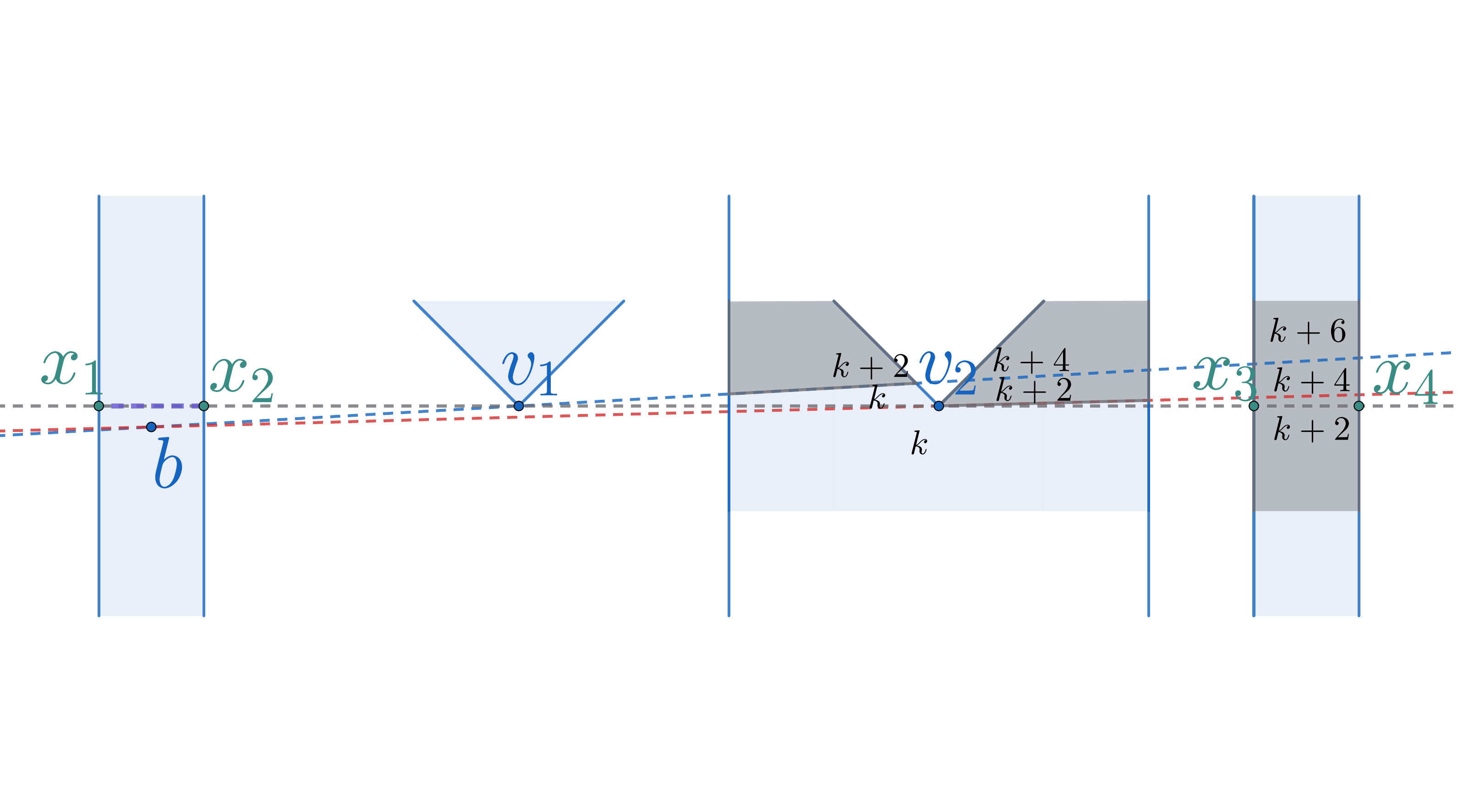}
\caption{Below $\ell_{g}$}\label{fig:CRO-genericB2}
\end{subfigure}

\caption{CRS; $Z=k$ (Merge/Split),  $W=k+2$.}
\label{fig:CRO-generic2}
\end{figure}

 \begin{figure}[H]
\centering
\begin{subfigure}[b]{.49\linewidth}
\includegraphics[width=\linewidth]{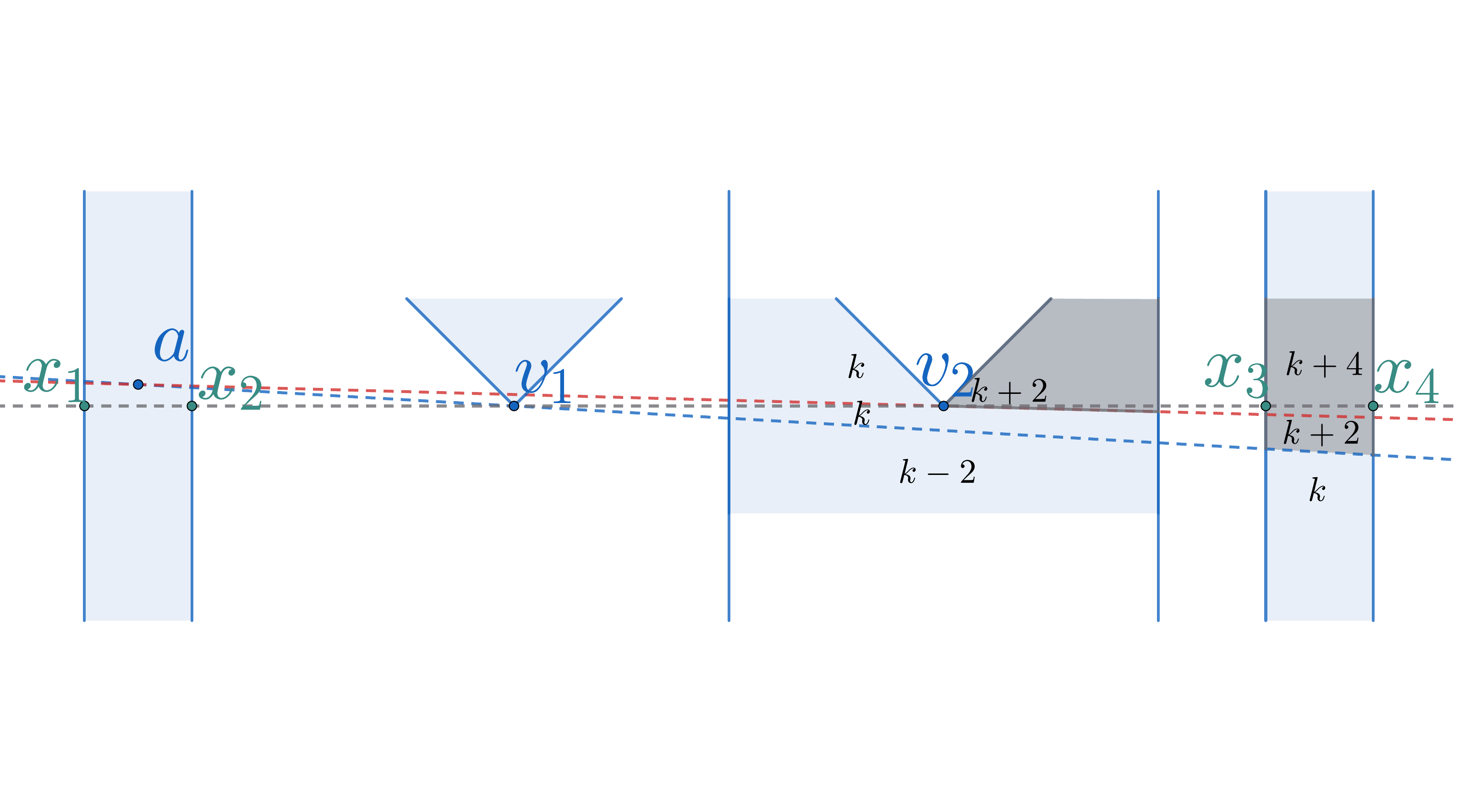}
\caption{Above $\ell_{g}$}\label{fig:CRO-genericA4}
\end{subfigure}
\begin{subfigure}[b]{.49\linewidth}
\includegraphics[width=\linewidth]{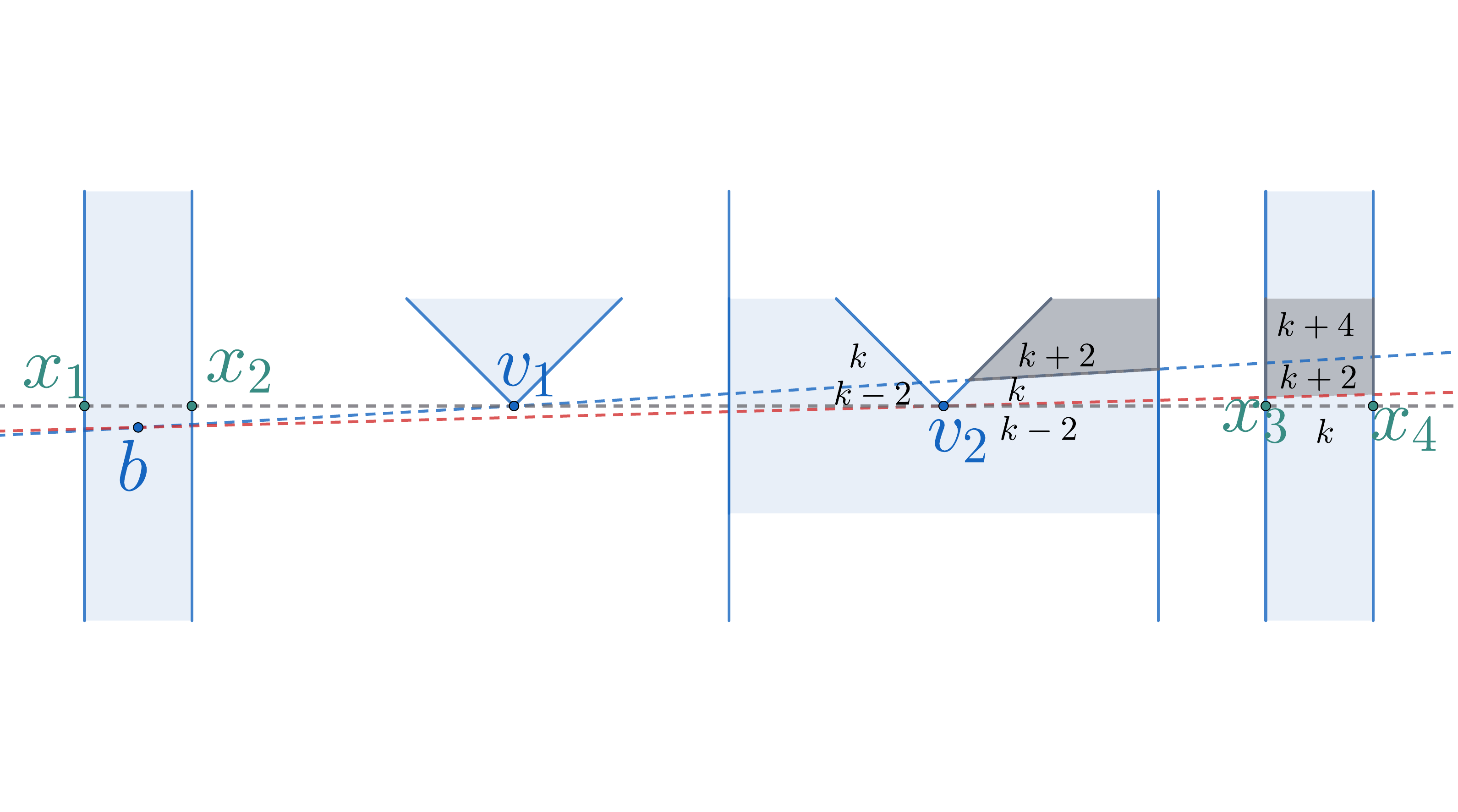}
\caption{Below $\ell_{g}$}\label{fig:CRO-genericB4}
\end{subfigure}

\caption{CRS; $Z = k - 2$; $W = k$.}
\label{fig:CRO-generic4}
\end{figure}

 \begin{figure}[H]
\centering
\begin{subfigure}[b]{.49\linewidth}
\includegraphics[width=\linewidth]{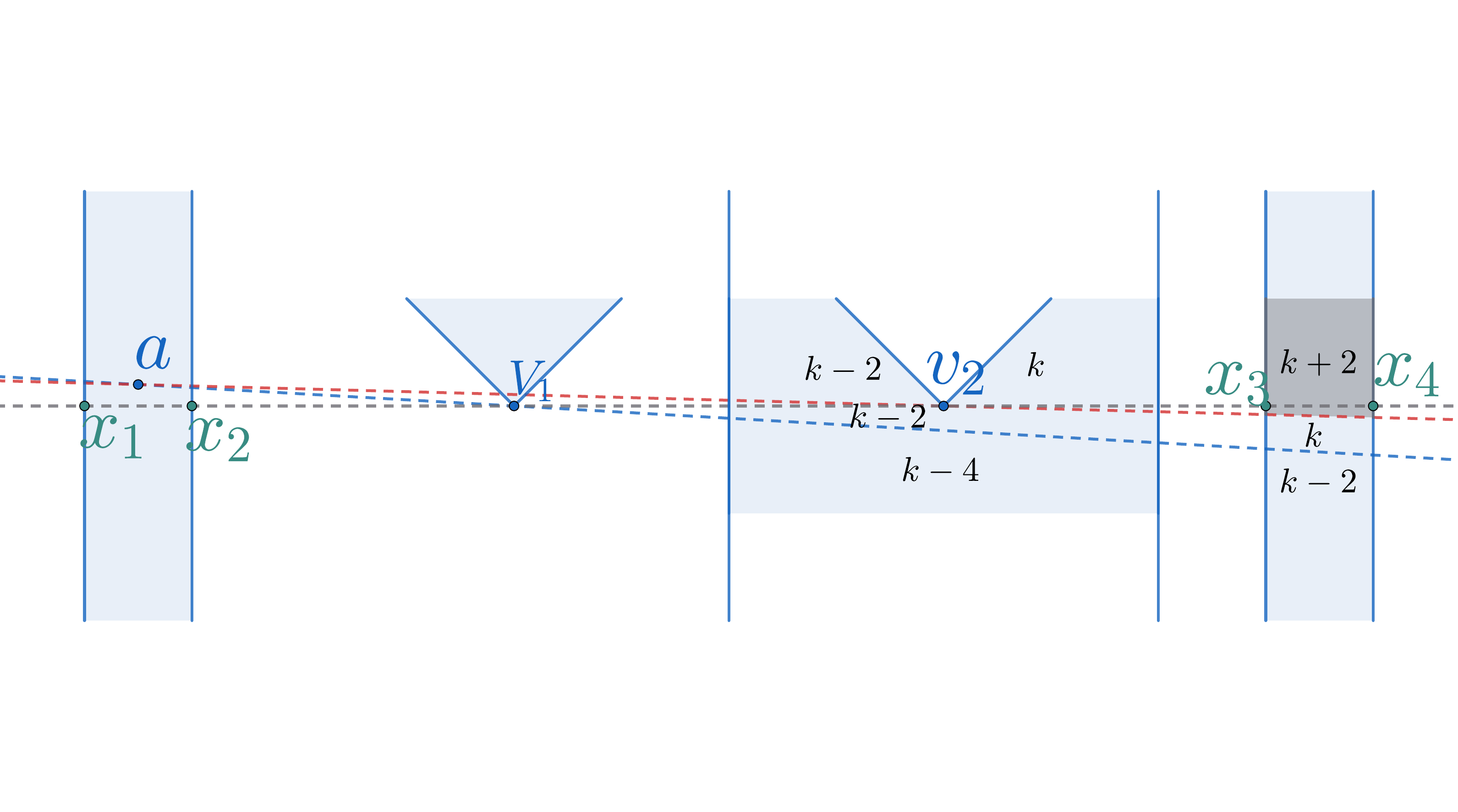}
\caption{Above $\ell_{g}$ }\label{fig:CRO-genericA6}
\end{subfigure}
\begin{subfigure}[b]{.49\linewidth}
\includegraphics[width=\linewidth]{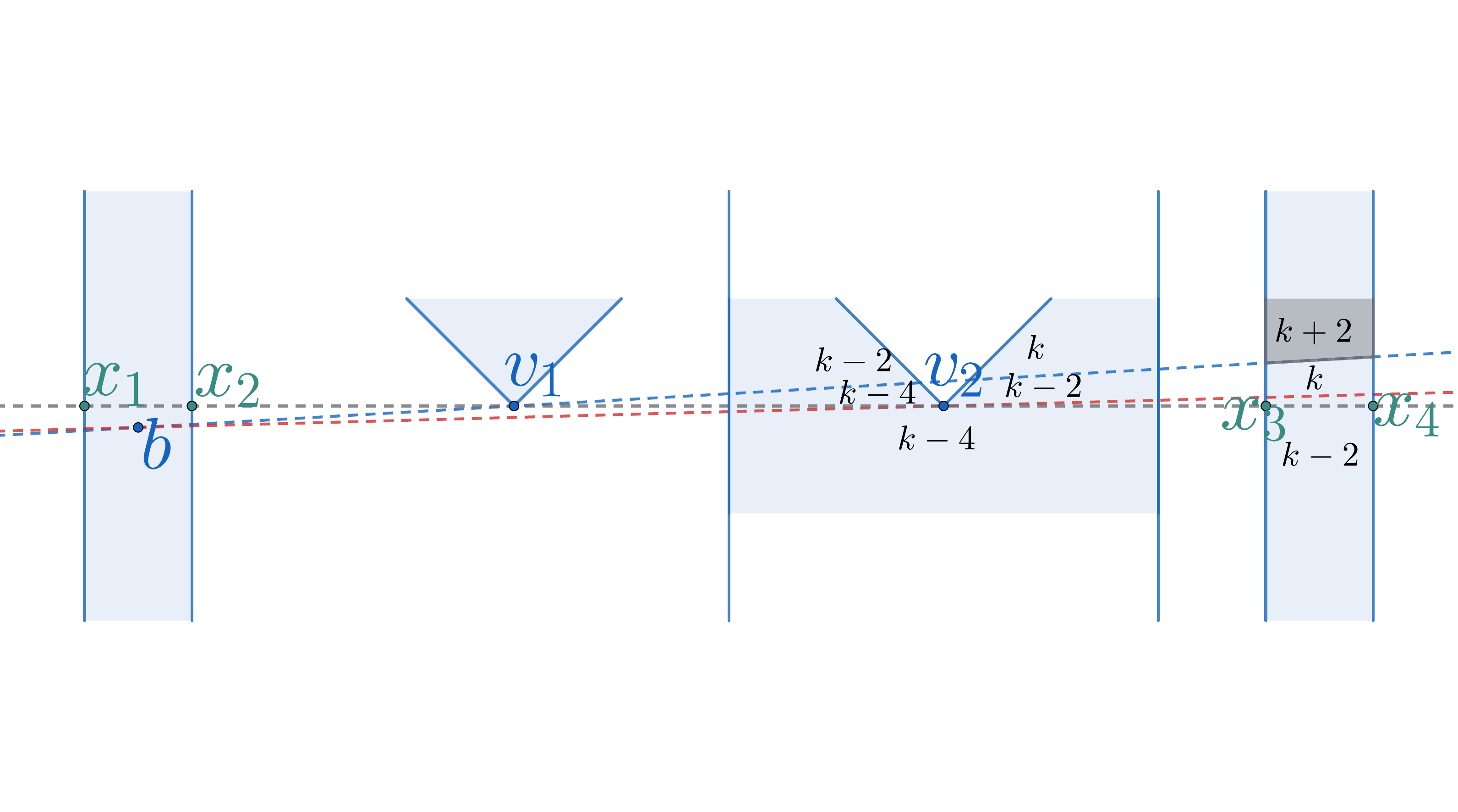}
\caption{Below $\ell_{g}$ }\label{fig:CRO-genericB6}
\end{subfigure}

\caption{CRS; $Z = k - 4$, $W = k - 2$.}
\label{fig:CRO-generic6}
\end{figure}

\begin{figure}[H]
\centering
\begin{subfigure}[b]{.49\linewidth}
\includegraphics[width=\linewidth]{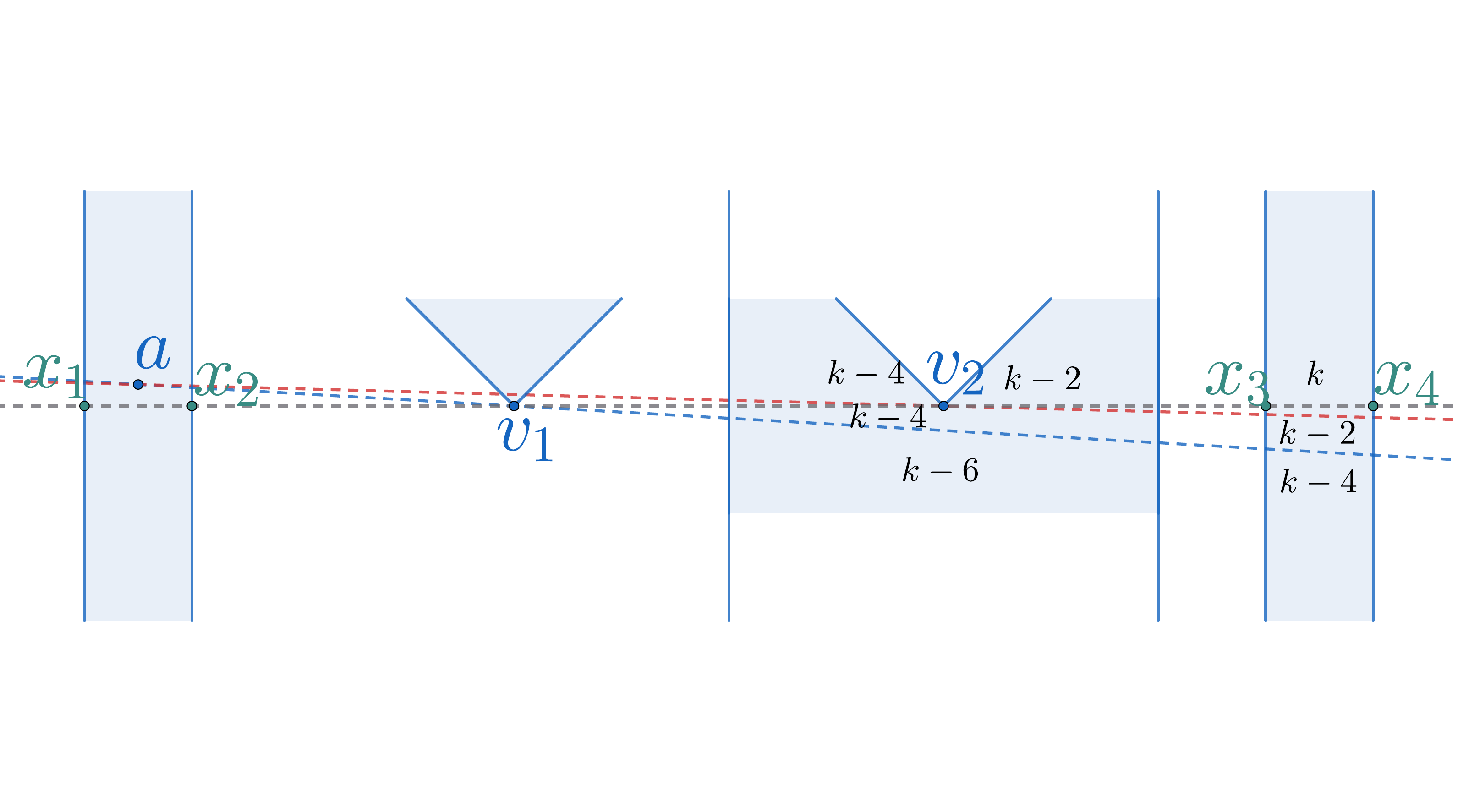}
\caption{Above $\ell_{g}$ }\label{fig:CRO-genericA8}
\end{subfigure}
\begin{subfigure}[b]{.49\linewidth}
\includegraphics[width=\linewidth]{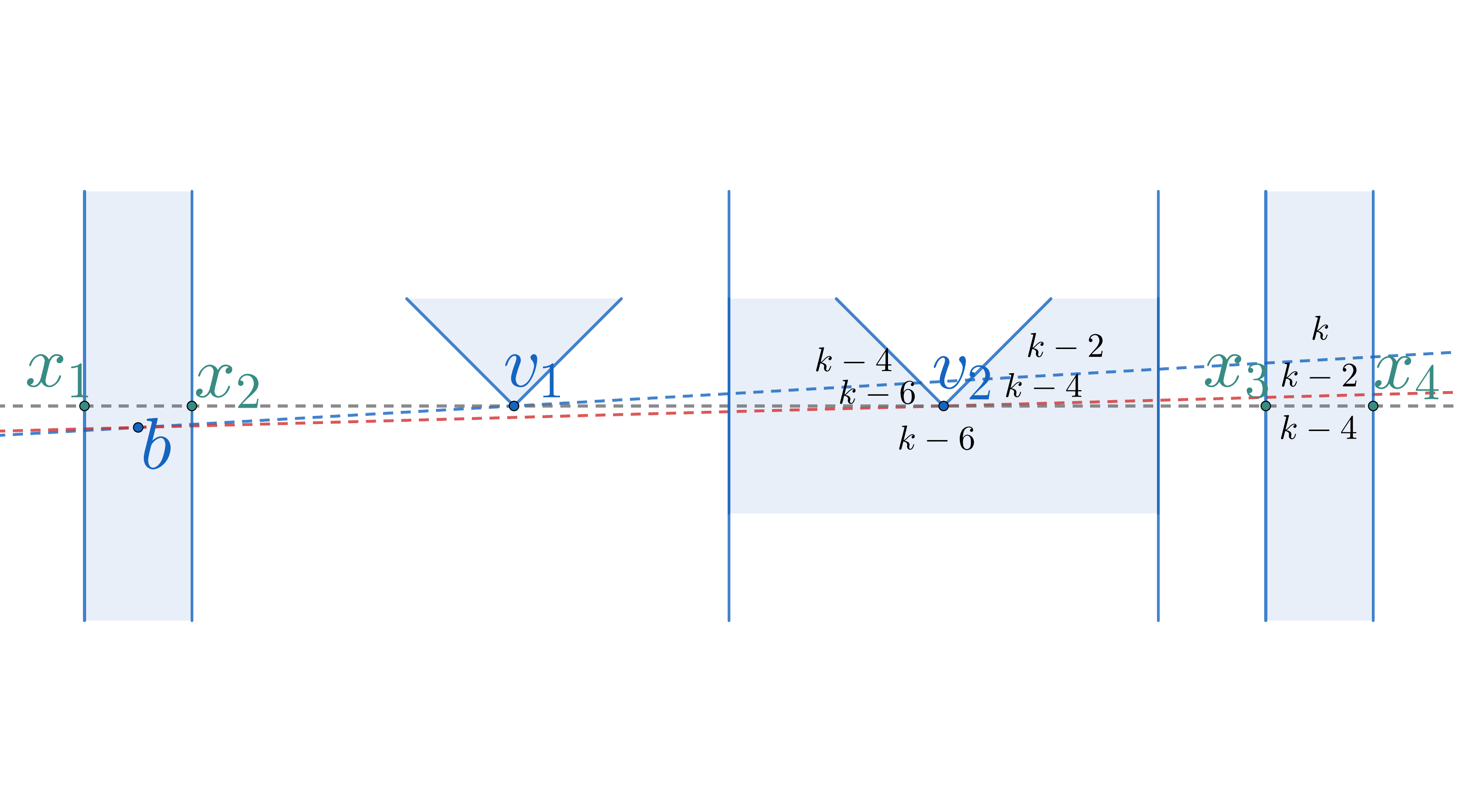}
\caption{Below $\ell_{g}$ }\label{fig:CRO-genericB8}
\end{subfigure}

\caption{CRS; $Z = k - 6$, $W = k - 4$.}
\label{fig:CRO-generic8}
\end{figure}


\section{RCS}
\label{appendix:RCS}
\subsection{Reflex Convex Same (RCS)}
\begin{lemma}
\label{lemma:RCS}
No event occurs for RCS for any $Z$ or $W$. 

\end{lemma}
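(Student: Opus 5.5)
The plan is to follow the same counting scheme used for CCO and CCS. The agent sits at a position $a$ just above $\ell_g$ and at a position $b$ just below it. For each of these positions, I will record the number of walls crossed by rays from the agent into each local sector. The sectors are bounded by $\ell_b$ and $\ell_r$ near $v_2$, and near the segment $x_3x_4$. These counts include the edges incident to $v_1$ and $v_2$. An event occurs at a location exactly when the multiset of sectors whose count is at most $k$ changes combinatorially between $a$ and $b$. Concretely, that means a visible (or invisible) sector appears, vanishes, or joins or separates two sectors of the same type.

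First I will fix the RCS geometry. $v_1$ is reflex and $v_2$ is convex, and both edges of each vertex lie on the same side of $\ell_g$. Because $v_1$ is reflex with both edges on one side, the interior of $P$ near $v_1$ covers the other side of $\ell_g$ and wraps around $v_1$. So when the agent passes from $a$ to $b$, the ray $\ell_b$ sweeps across $v_1$ without changing how many of $v_1$'s edges it crosses beyond $v_1$. I will verify this claim: the number of $v_1$-edges crossed past $v_1$ is the same on both sides. It is $0$ or $2$ depending on the side, and I will show the parity pattern makes the sector counts at $v_2$ change only by a uniform shift, or not at all. I will do the same for $v_2$: it is convex with both edges on the same side, so its edges contribute $0$ or $2$ crossings to rays beyond it, symmetrically for $a$ and $b$.

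Next I will tabulate the counts for the representative values $Z=k+1, k-1, k-3, k-5, k-7$ and the matching $W$ values, paralleling the figures for CCS. In each case I will check that the pattern of sectors with count at most $k$ near $v_2$ and near $x_3x_4$ is identical at $a$ and at $b$. Intermediate values of $Z$ and $W$ of the other parity then follow by the invariance noted earlier: adding two walls and raising $k$ by two gives the same transitions, and a single wall shift between tabulated values also leaves both positions equal. For the extremes, I will argue that if $Z\ge k+3$ (resp.\ $W\ge k+4$), every sector near $v_2$ (resp.\ $x_3x_4$) has count exceeding $k$. Similarly, for sufficiently small $Z$ and $W$, every sector has count at most $k$. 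So no event can occur in those ranges.

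The main obstacle I expect is the local sector bookkeeping at $v_1$. Since $v_1$ is reflex, I must make sure that the swap of $\ell_b$ and $\ell_r$ while crossing $\ell_g$ does not create a thin wedge whose count differs from its neighbours by exactly the amount that crosses the threshold $k$. This is the mechanism that produces events in RCO. I would handle it by comparing directly with the RCO analysis: in the same-side configuration, the two edges at $v_1$ enter the wedge between $\ell_b$ and $\ell_r$ together, so the wedge's count equals that of an adjacent sector. No new component, and no connection or disconnection of components, can arise.
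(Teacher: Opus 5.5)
Your outline (tabulate the counts at $a$ and $b$ for representative $Z,W$, then handle the extremes) matches the paper's figure-based proof. However, the mechanisms you plan to verify are not what actually happens, and the check you describe fails at one of your own tabulated values. Place all four edges above $\ell_g$. Whether $v_1$ is reflex plays no role in the crossing counts. A ray from the agent passing close to $v_1$ (or $v_2$) on the side of its edges crosses both of them; one passing on the other side crosses neither. Near $x_3x_4$ the three bands therefore have counts $W$, $W+2$, $W+4$ (bottom to top, up to a common offset) at \emph{both} positions. The middle band is `above $v_1$, below $v_2$' at $a$ and `below $v_1$, above $v_2$' at $b$. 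Its count is unchanged because two crossings at $v_1$ are exchanged for two at $v_2$, and that is why no event occurs at $x_3x_4$. Your claim that the wedge's count equals that of an adjacent sector is false: it differs from both neighbours by $2$.

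At $v_2$ the counts do not change by a uniform shift. At $b$, $\ell_b$ passes just above $v_2$ and cuts a small triangle off the tip of the interior wedge at $v_2$. Rays into this triangle pass below $v_1$, so its count is $Z+1$ (including the edge of $v_2$ through which they enter). The rest of the neighbourhood of $v_2$, which is the only sector there at $a$, has count $Z+3$. Hence for $Z\in\{k-2,k-1\}$, in particular your tabulated value $Z=k-1$, the sectors with count at most $k$ near $v_2$ differ between $a$ and $b$. Your criterion (``a visible (or invisible) sector appears'' is an event) would then report an event the lemma says does not exist.

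The missing argument must be about shadow components only. The new triangle has a \emph{lower} count than its surroundings, so it can only be a visible pocket. Because it is bounded by the two edges of $v_2$ and by $\ell_b$, it sits at the tip of a single shadow component and leaves the number and connectivity of shadow components unchanged. In CCO and RCO, by contrast, rays into the same triangle do cross $v_1$'s edges, so its count is $2$ higher and it forms a new shadow component; that is exactly the appear/disappear event at $v_2$ there.

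Finally, your treatment of the other parity is unjustified. Adding two walls and raising $k$ by two preserves $Z-k$, and no ``single wall shift'' invariance has been shown. The band computation above holds for every $Z$ and $W$, so you should argue from it directly rather than from representative values.
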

\begin{proof}
    See Figure~\ref{fig:RCS-generic-0} to Figure~\ref{fig:RCS-generic-8}. Looking at the counts of when the agent is at $a$ and when the agent is at $b$, it can be seen that no event occurs for any $Z$ or $W$. 

    For $Z \geq k + 3$, $v_{2}$ and its surroundings are entirely in shadow. For $W \geq k + 4$, $x_{3}x_{4}$ and its surroundings are entirely in shadow.

For $Z \leq k - 9$, $v_{2}$ and its surroundings are entirely visible. For $W \leq k - 8$, $x_{3}x_{4}$ and its surroundings are entirely visible.
\end{proof}

 \begin{figure}[H]
\centering
\begin{subfigure}[b]{.49\linewidth}
\includegraphics[width=\linewidth]{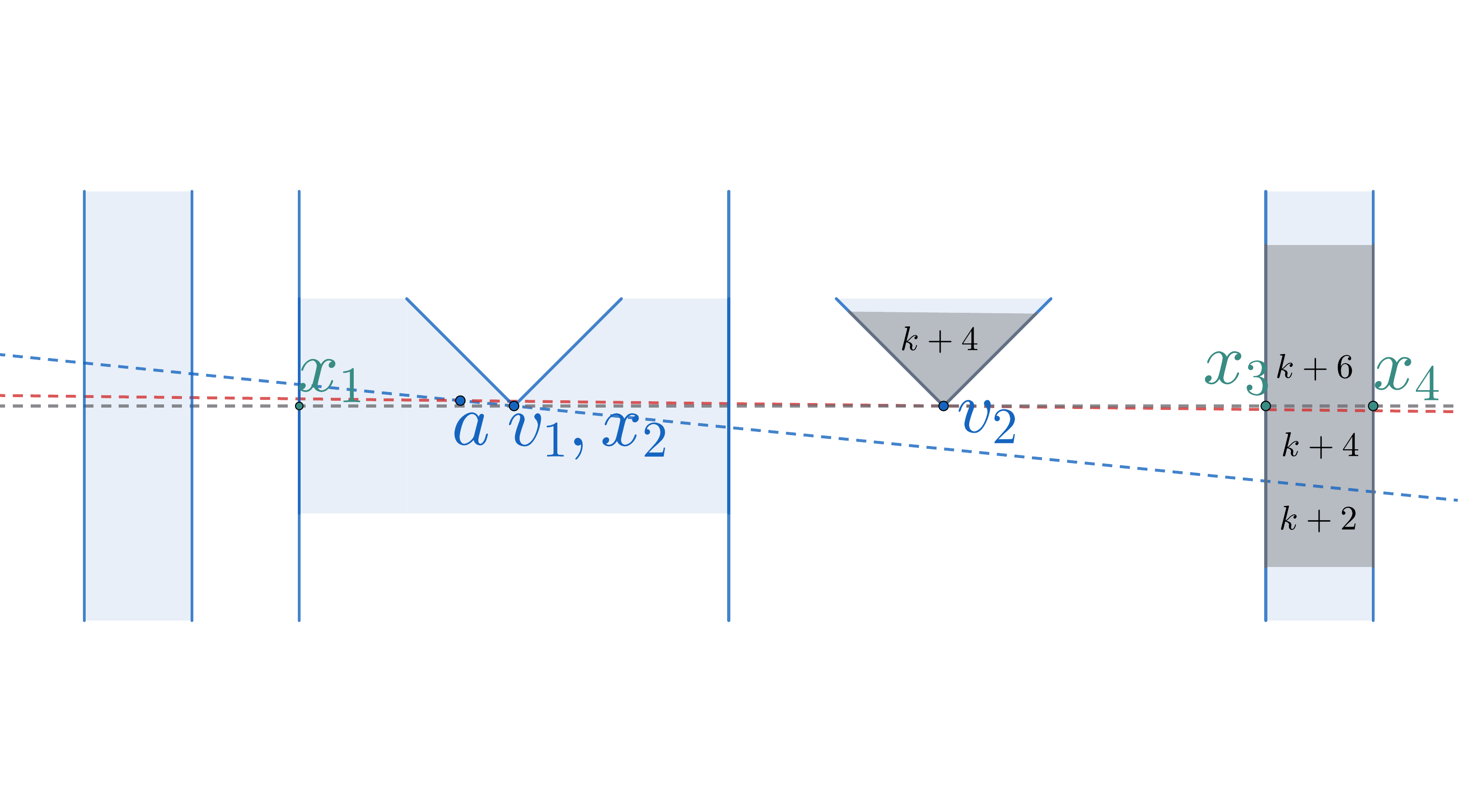}
\caption{Above $\ell_{g}$}\label{fig:RCS-genericA0}
\end{subfigure}
\begin{subfigure}[b]{.49\linewidth}
\includegraphics[width=\linewidth]{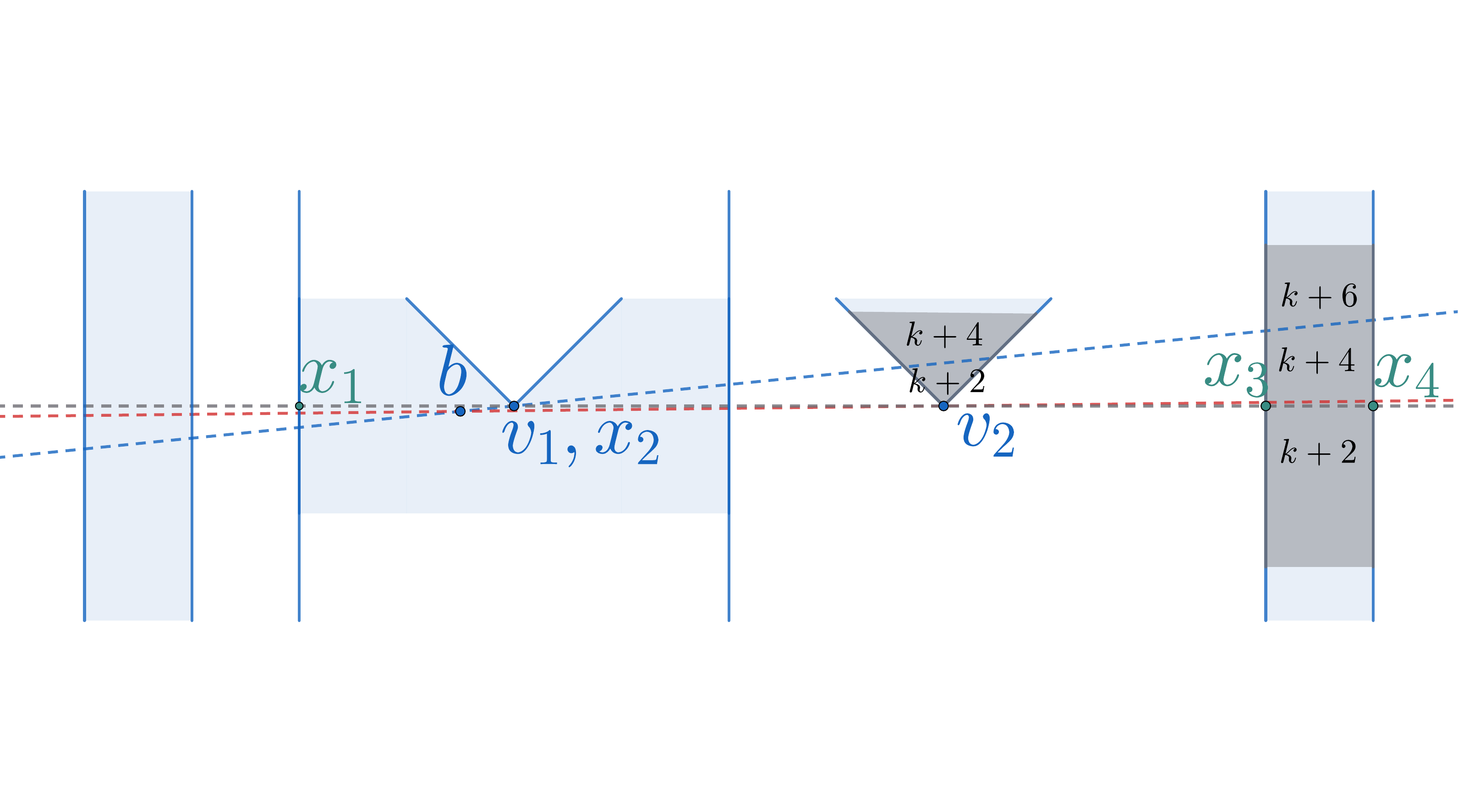}
\caption{Below $\ell_{g}$}\label{fig:RCS-genericB0}
\end{subfigure}

\caption{RCS; $Z = k + 1$, $W = k + 2$.}
\label{fig:RCS-generic-0}
\end{figure}

 \begin{figure}[H]
\centering
\begin{subfigure}[b]{.49\linewidth}
\includegraphics[width=\linewidth]{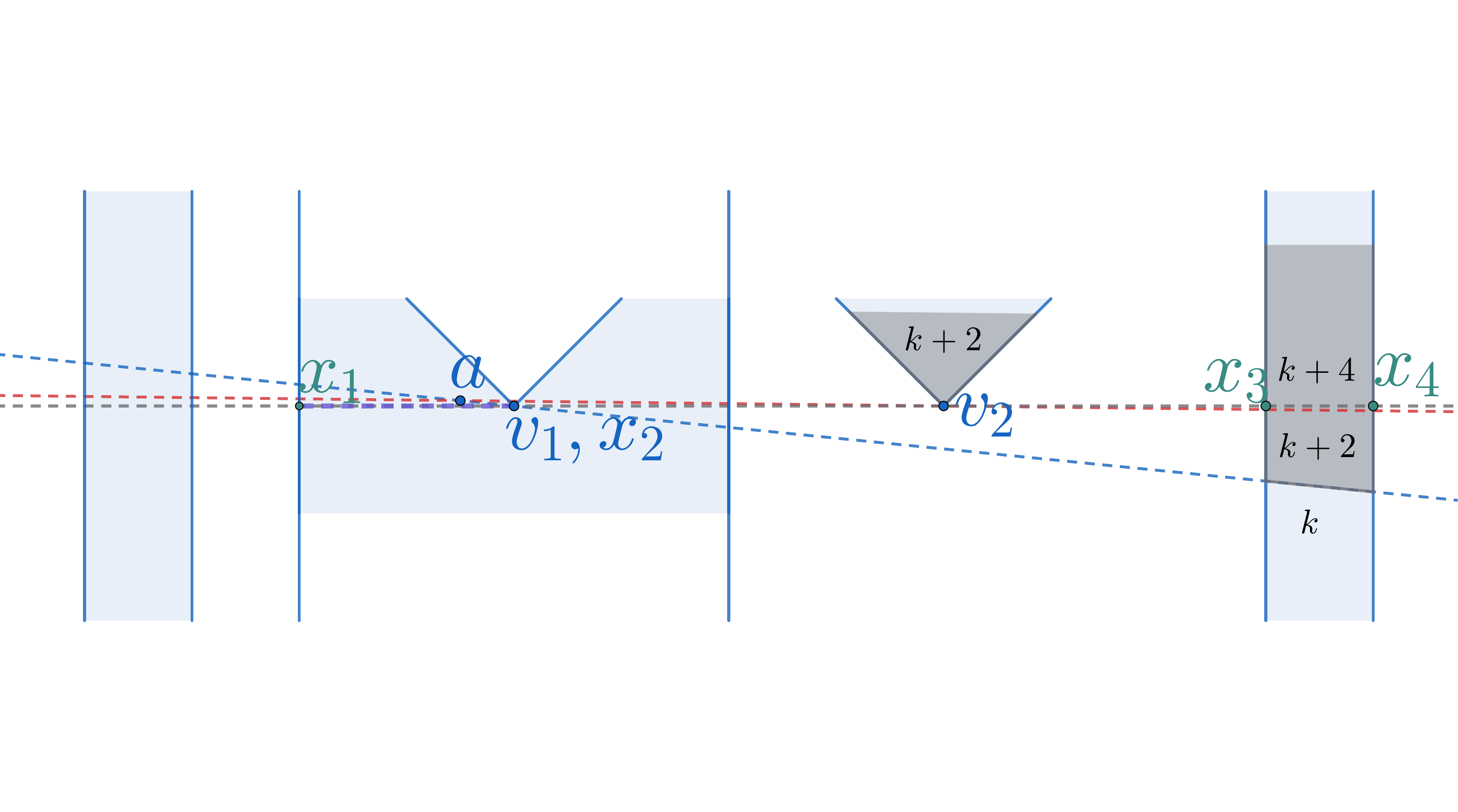}
\caption{Above $\ell_{g}$}\label{fig:RCS-genericA2}
\end{subfigure}
\begin{subfigure}[b]{.49\linewidth}
\includegraphics[width=\linewidth]{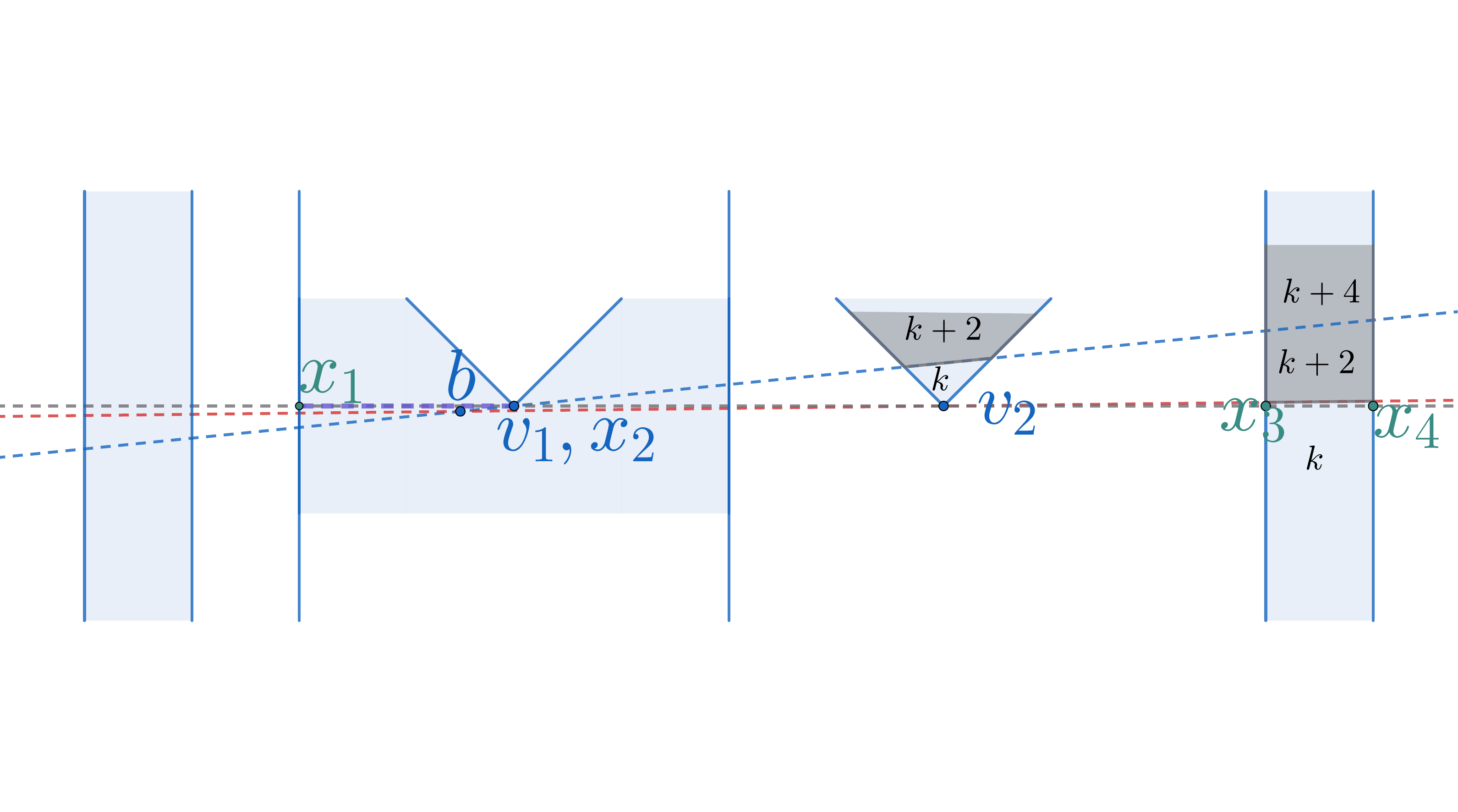}
\caption{Below $\ell_{g}$}\label{fig:RCS-genericB2}
\end{subfigure}

\caption{RCS; $Z = k - 1$, $W = k$.}
\label{fig:RCS-generic-2}
\end{figure}

 \begin{figure}[H]
\centering
\begin{subfigure}[b]{.49\linewidth}
\includegraphics[width=\linewidth]{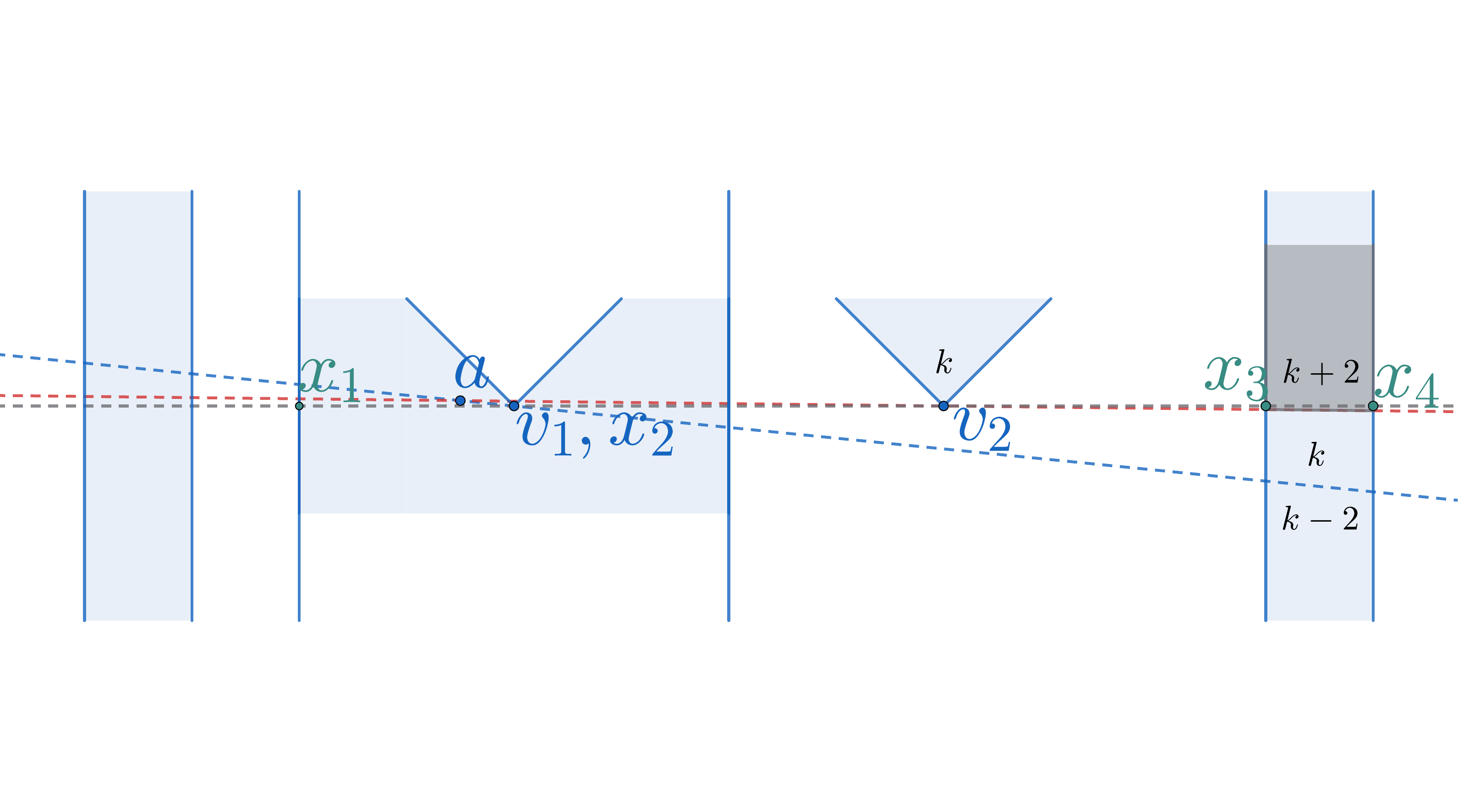}
\caption{Above $\ell_{g}$}\label{fig:RCS-genericA4}
\end{subfigure}
\begin{subfigure}[b]{.49\linewidth}
\includegraphics[width=\linewidth]{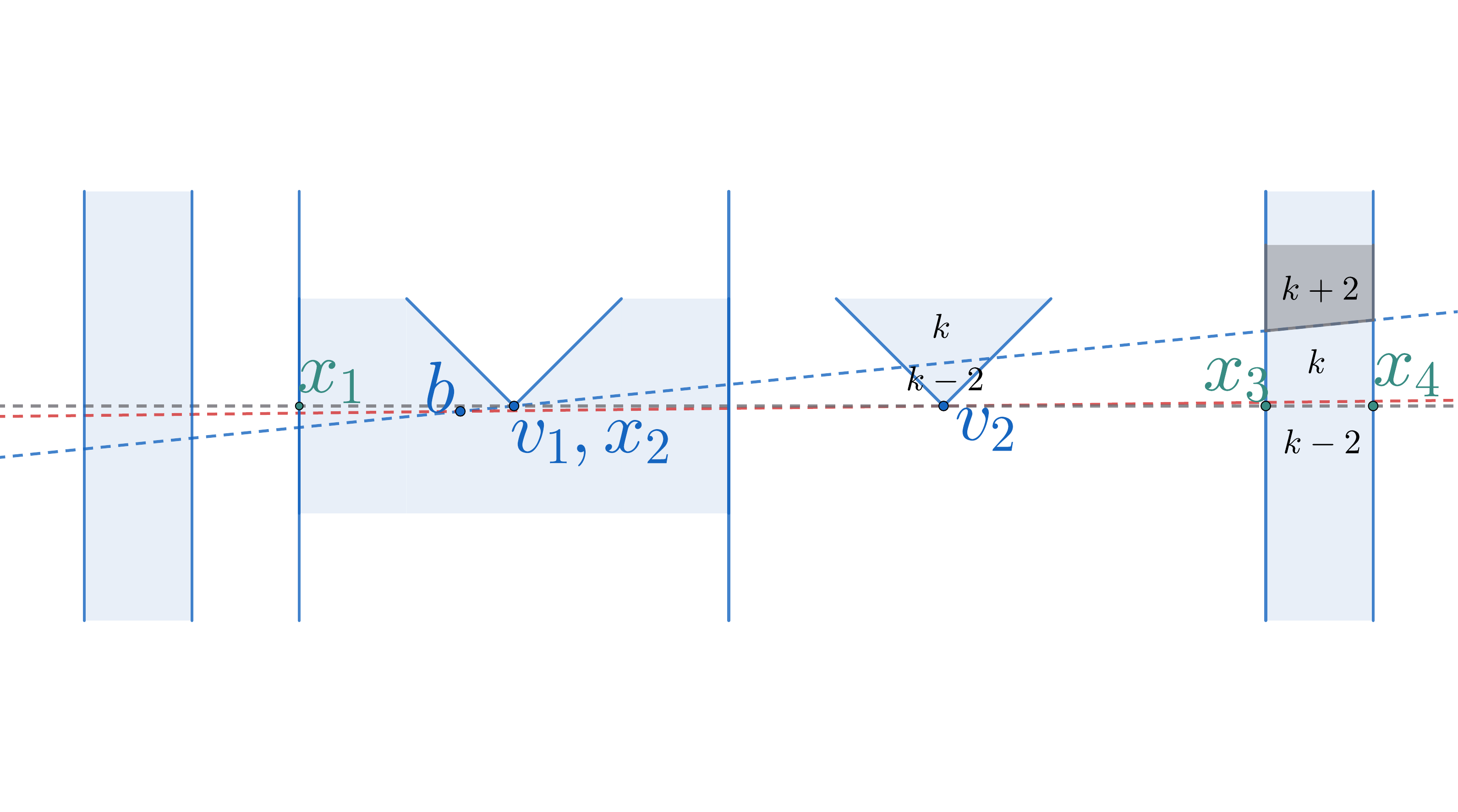}
\caption{Below $\ell_{g}$}\label{fig:RCS-genericB4}
\end{subfigure}

\caption{RCS; $Z = k - 3$, $W = k - 2$.}
\label{fig:RCS-generic-4}
\end{figure}

 \begin{figure}[H]
\centering
\begin{subfigure}[b]{.49\linewidth}
\includegraphics[width=\linewidth]{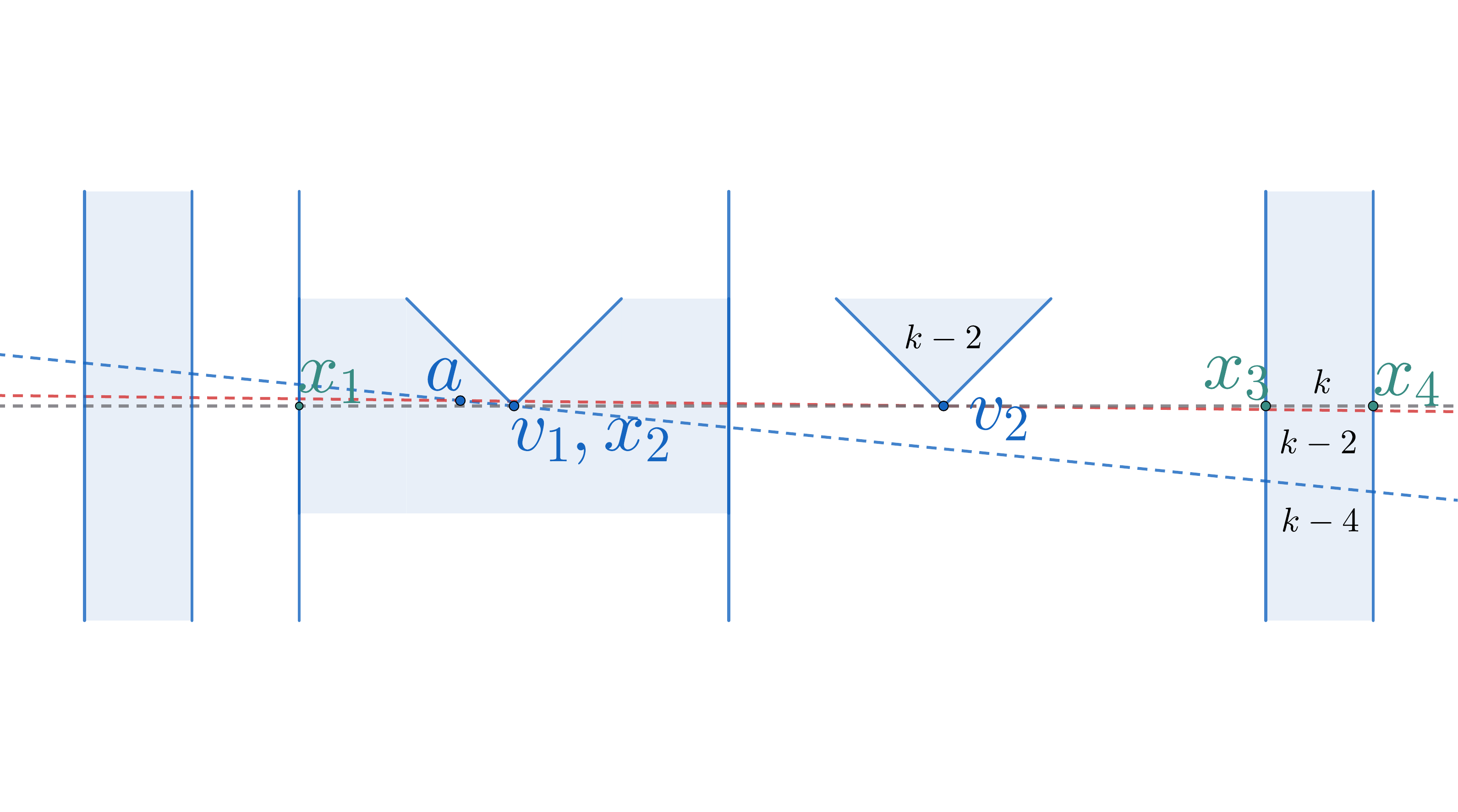}
\caption{Above $\ell_{g}$}\label{fig:RCS-genericA6}
\end{subfigure}
\begin{subfigure}[b]{.49\linewidth}
\includegraphics[width=\linewidth]{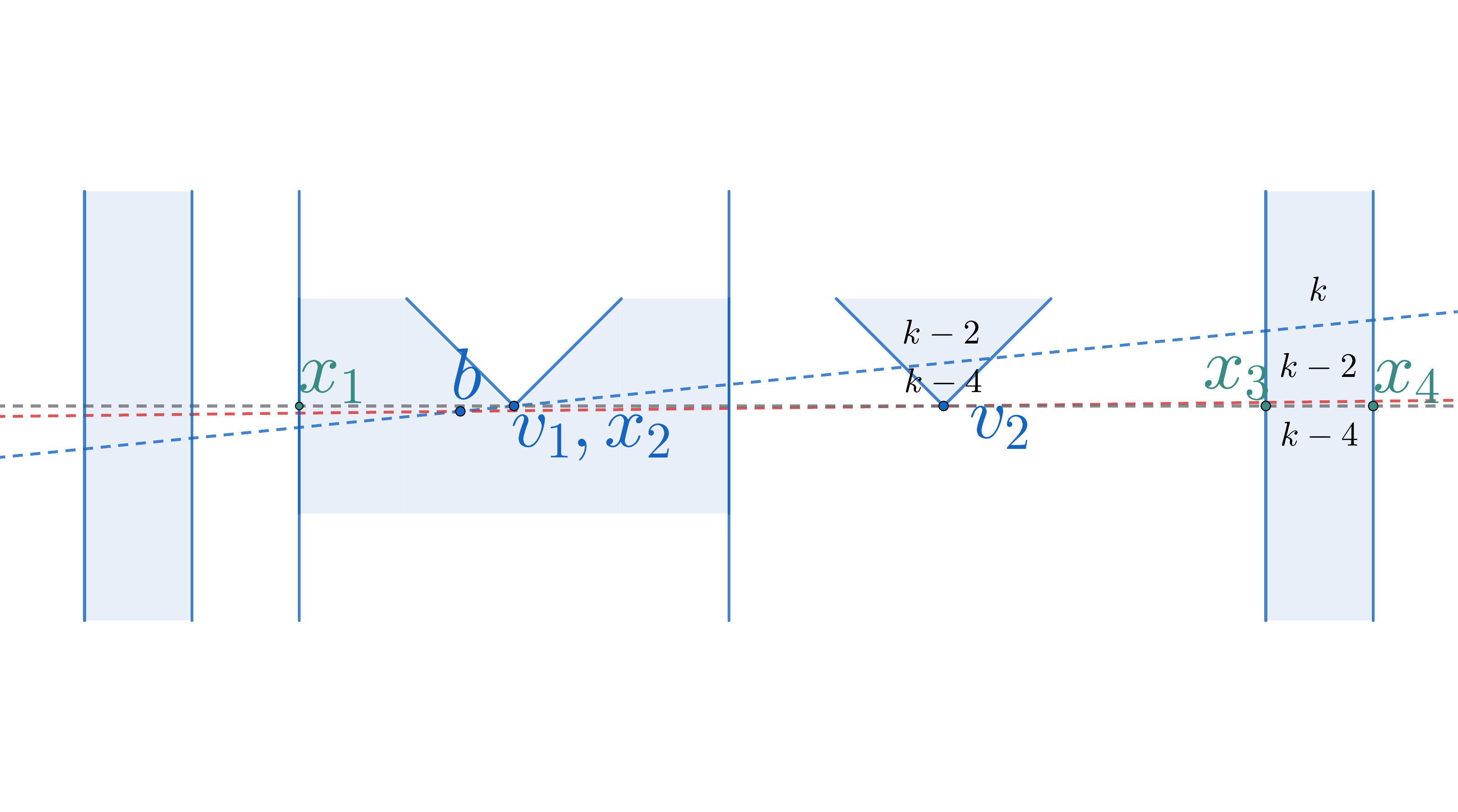}
\caption{Below $\ell_{g}$}\label{fig:RCS-genericB6}
\end{subfigure}

\caption{RCS; $Z = k - 5$, $W = k - 4$.}
\label{fig:RCS-generic-6}
\end{figure}

 \begin{figure}[H]
\centering
\begin{subfigure}[b]{.49\linewidth}
\includegraphics[width=\linewidth]{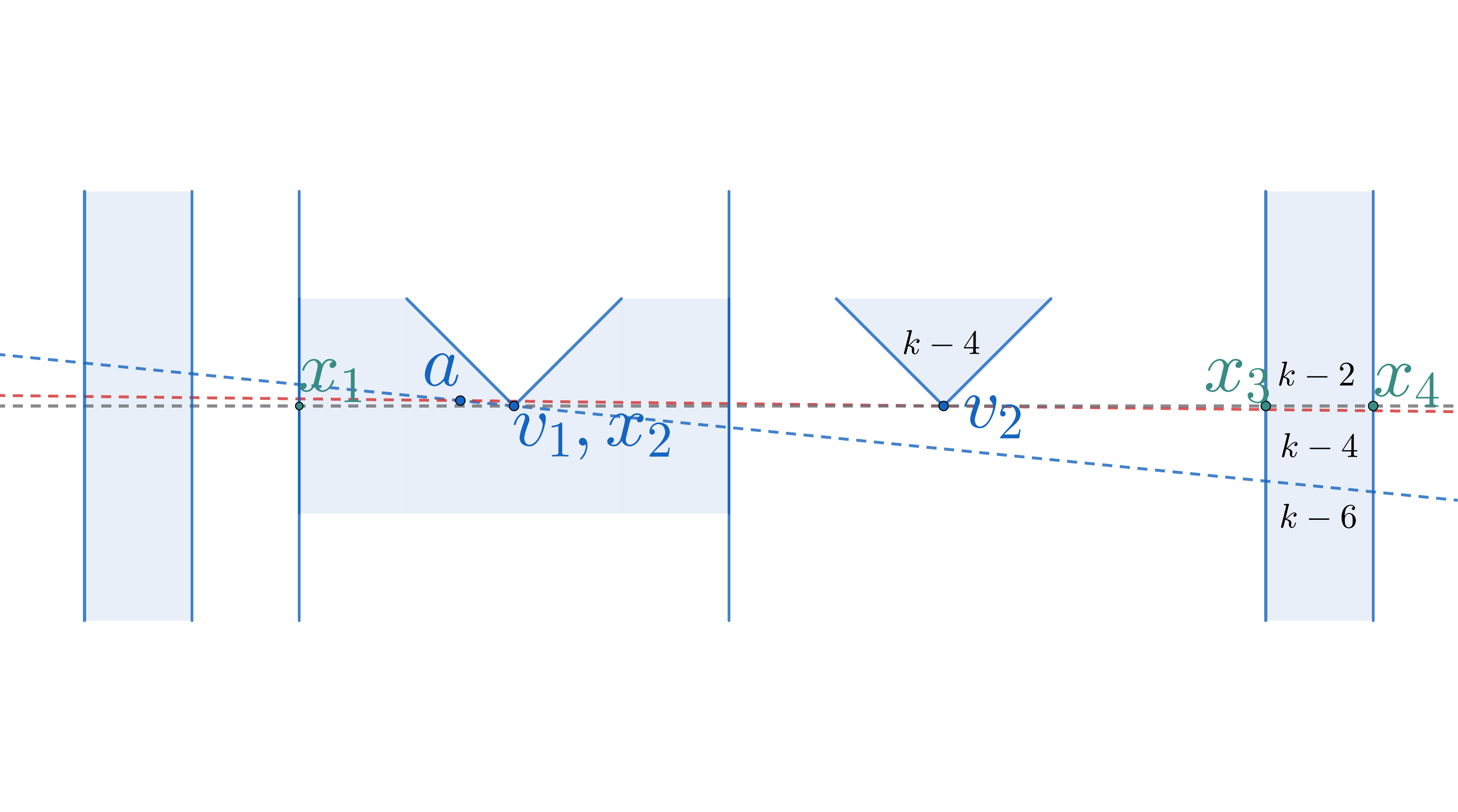}
\caption{Above $\ell_{g}$}\label{fig:RCS-genericA8}
\end{subfigure}
\begin{subfigure}[b]{.49\linewidth}
\includegraphics[width=\linewidth]{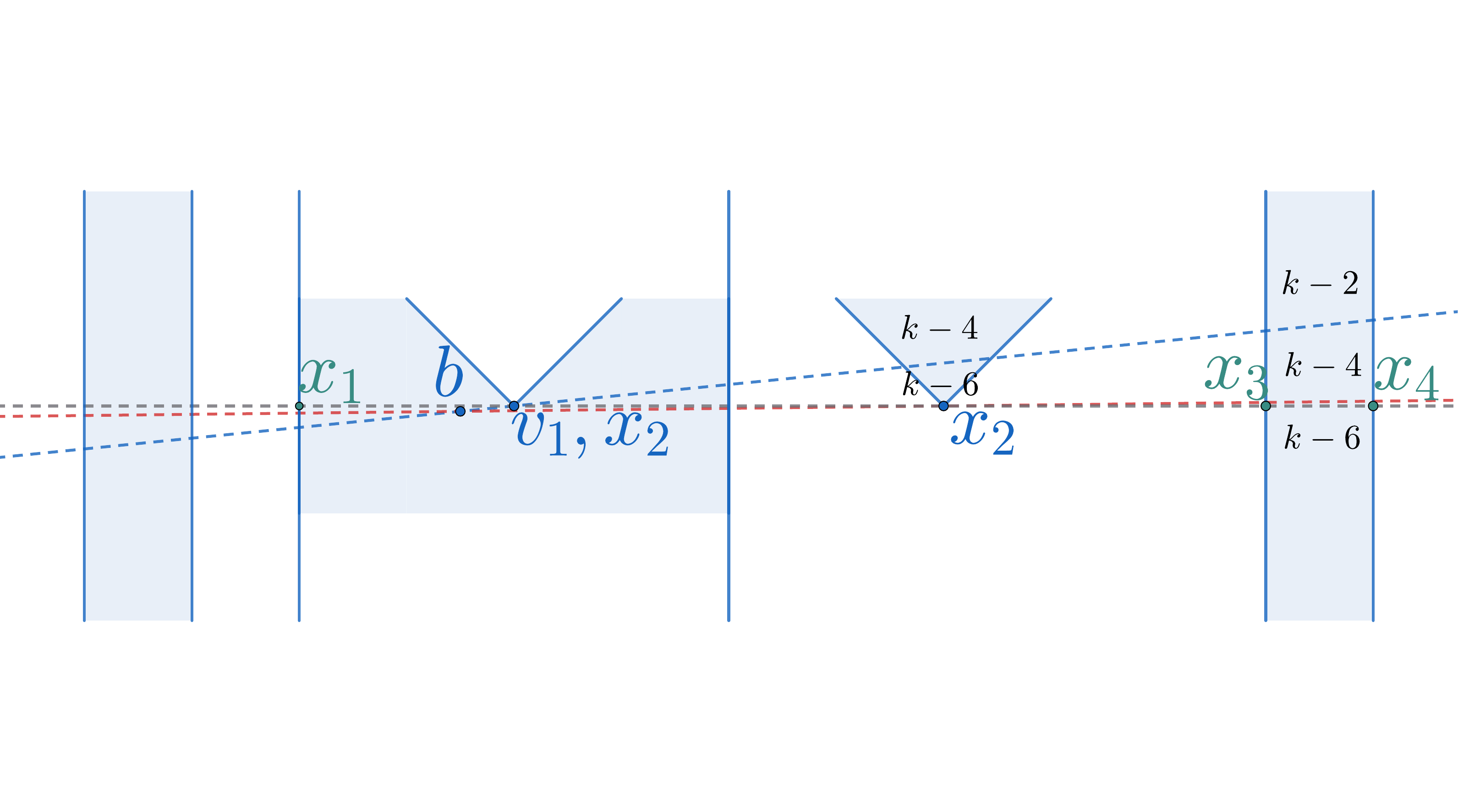}
\caption{Below $\ell_{g}$}\label{fig:RCS-genericB8}
\end{subfigure}

\caption{RCS; $Z = k - 7$, $W = k - 6$.}
\label{fig:RCS-generic-8}
\end{figure}

\section{RCO}
\label{appendix:RCO}
\subsection{Reflex Convex Opposite (RCO)}
\begin{lemma} 
\label{lemma:RCO}
When $Z = k - 1$ (Figure~\ref{fig:RCO-generic-2}), an appear/disappear event occurs at $v_{2}$. When $W = k$, a merge/split event occurs at $x_{3}x_{4}$ (Figure~\ref{fig:RCO-generic-2}). When $W=k-2$, an appear/disappear event occurs at $x_{3}x_{4}$ (Figure~\ref{fig:RCO-generic-4}). No event occurs for any other $Z$ or $W$. 
\\

\end{lemma}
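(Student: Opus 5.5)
We follow the template of Lemma~\ref{lemma:CCO-main}: fix the RCO configuration, with $v_1$ reflex, $v_2$ convex, and the edges of $v_1$ and $v_2$ on opposite sides of $\ell_g$. We then compare the wall counts seen by an agent at a point $a$ just above $\ell_g$ with those seen from a point $b$ just below it. For each of the two sites ($v_2$, and the neighbourhood of $x_3x_4$ along $\ell_g$) we list the regions cut out by the rays $\ell_b$ and $\ell_r$. Each region receives a count of the form (base count $Z$ or $W$) $+$ (number of edges of $v_1$ and $v_2$ crossed). Here we use that, unlike $Z$ and $W$, these region counts include the edges incident to $v_1$ and $v_2$.

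The key structural step is to record how the incident edges contribute on each side of $\ell_g$. Because $v_1$ is reflex, the wedge of $P$ at $v_1$ is non-convex. A ray from $a$ passing just left or right of $v_1$ therefore crosses $0$, $1$ or $2$ of its edges, in a different pattern than in the convex CCO case. Because $v_2$ is convex with edges on the side opposite to those of $v_1$, the ray through $v_2$ sees its edges counted as $0$ or $2$ according to the side. When the agent moves from $a$ to $b$, $\ell_b$ and $\ell_r$ swap order, so the pattern of incident-edge contributions is permuted. The resulting count vector near $v_2$ is (from left to right) a shift of $Z$ by a fixed small offset vector, and near $x_3x_4$ a shift of $W$ by another offset vector.

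Given these offset vectors, the events follow by thresholding. A region is visible iff its count is at most $k$. We read off, for each value of $Z$ (resp.\ $W$), the set of shadow intervals on each side of $\ell_g$ and compare them. A shadow interval present on one side and absent on the other is an appear/disappear event. A visible interval separating two shadow intervals on one side but not the other is a merge/split event. We expect exactly three thresholds to produce a change: $Z=k-1$ gives appear/disappear at $v_2$, since $v_1$ begins to block the view of $v_2$; $W=k$ gives merge/split at $x_3x_4$; and $W=k-2$ gives appear/disappear at $x_3x_4$. For all other values the two threshold patterns coincide. The invariance noted before Lemma~\ref{lemma:CCO-main} (adding two walls and replacing $k$ by $k+2$) lets us treat only finitely many residues, and events beyond $x_4$ reduce to the listed $W$ cases.

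Finally we bound the ranges. For $Z\ge k+3$ (resp.\ $W\ge k+4$) every count exceeds $k$, so the site is entirely in shadow on both sides. For $Z$ (resp.\ $W$) below the smallest offset minus $k$, everything is visible. In both regimes no event can occur. The main obstacle is the second step: correctly determining the incident-edge contributions for the reflex $v_1$. The offset vector there differs from CCO and is where an off-by-one error would shift the event thresholds. This step in particular decides whether the $x_3x_4$ appear/disappear threshold is $W=k-2$, as stated here, or $W=k-3$, as in Lemma~\ref{lemma:RCO-main}. The lemma is consistent only with $k-2$, so I would verify it directly by drawing the configuration with $k=2$, $W=0$.
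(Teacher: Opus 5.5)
Your route is the paper's own. Its proof is just the RCO figures, which record the obstruction counts in the regions cut out by $\ell_b$ and $\ell_r$ with the agent at $a$ and at $b$, plus saturation bounds.

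However, you place the difficulty in the wrong spot. Take a ray that passes close to $v_1$ on its way to $v_2$ or to $x_3x_4$. It crosses both edges of $v_1$ if it passes on their side and neither otherwise, whether $v_1$ is convex or reflex, so the offsets are exactly those of CCO. Concretely, put the edges of $v_1$ below $\ell_g$, those of $v_2$ above, and $a$ above. Then:
\begin{itemize}
\item Near $v_2$, from $a$: every interior point has count $Z+1$, since exactly one edge of $v_2$ is crossed. Your ``$0$ or $2$'' applies only to rays continuing past $v_2$.
\item Near $v_2$, from $b$: the same holds except on the thin triangle cut off between the two edges of $v_2$ by $\ell_b$, where the count is $Z+3$.
\item Near $x_3x_4$, top to bottom: the counts are $(W+2,W,W+2)$ from $a$ and $(W+2,W+4,W+2)$ from $b$.
\end{itemize}
That is why the RCO list coincides with the CCO list; there is no off-by-one hidden in the reflex vertex.

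The genuine gap is your claim that ``for all other values the two threshold patterns coincide.'' Thresholding these vectors over all integers gives more events than the lemma lists:
\begin{itemize}
\item an appear/disappear at $v_2$ for $Z\in\{k-2,k-1\}$;
\item a merge/split at $x_3x_4$ for $W\in\{k-1,k\}$;
\item an appear/disappear at $x_3x_4$ for $W\in\{k-3,k-2\}$.
\end{itemize}
What removes half of these is parity, which your plan never uses. A segment joining two interior points crosses $\partial P$ an even number of times. Because $v_1$ and $v_2$ are critical, $\ell_g$ does not switch between inside and outside when it passes through them. Hence $Z$ is odd, since it counts the crossings from the agent to the exterior point just left of the convex $v_2$. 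And $W$ is even, since it counts the crossings to an interior point of $x_3x_4$.

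With $k$ even, the list becomes exactly $Z=k-1$, $W=k$, $W=k-2$. The paper's figures tacitly use even $k$: they only realize $Z-k$ odd and $W-k$ even. Odd $k$ must be reduced to $k-1$, which is legitimate because interior counts are even. The shift invariance you invoke moves $Z$ and $k$ together by $2$, so it cannot supply this.

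Parity also settles your $k-2$ versus $k-3$ question without drawing anything. Since $W$ is even, only $W=k-2$ can occur for even $k$, and $W=k-3$ is the threshold one would get for odd $k$.

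A minor point: the all-visible regime is governed by the largest offset, i.e.\ $Z\le k-3$ and $W\le k-4$, not by ``the smallest offset minus $k$''.
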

\begin{proof}
    See Figure~\ref{fig:RCO-generic-0} to Figure~\ref{fig:RCO-generic-8}.

    For $Z \geq k + 3$, the region near $v_{2}$ is entirely in shadow for $W \geq k + 4$, the region surrounding $x_{3}x_{4}$ is in shadow.
    
    For $Z \leq k - 9$, the region surrounding $v_{2}$ is entirely visible. For $W \leq k - 8$, the region surrounding $x_{3}x_{4}$ is entirely visible.
\end{proof}

 \begin{figure}[H]
\centering
\begin{subfigure}[b]{.49\linewidth}
\includegraphics[width=\linewidth]{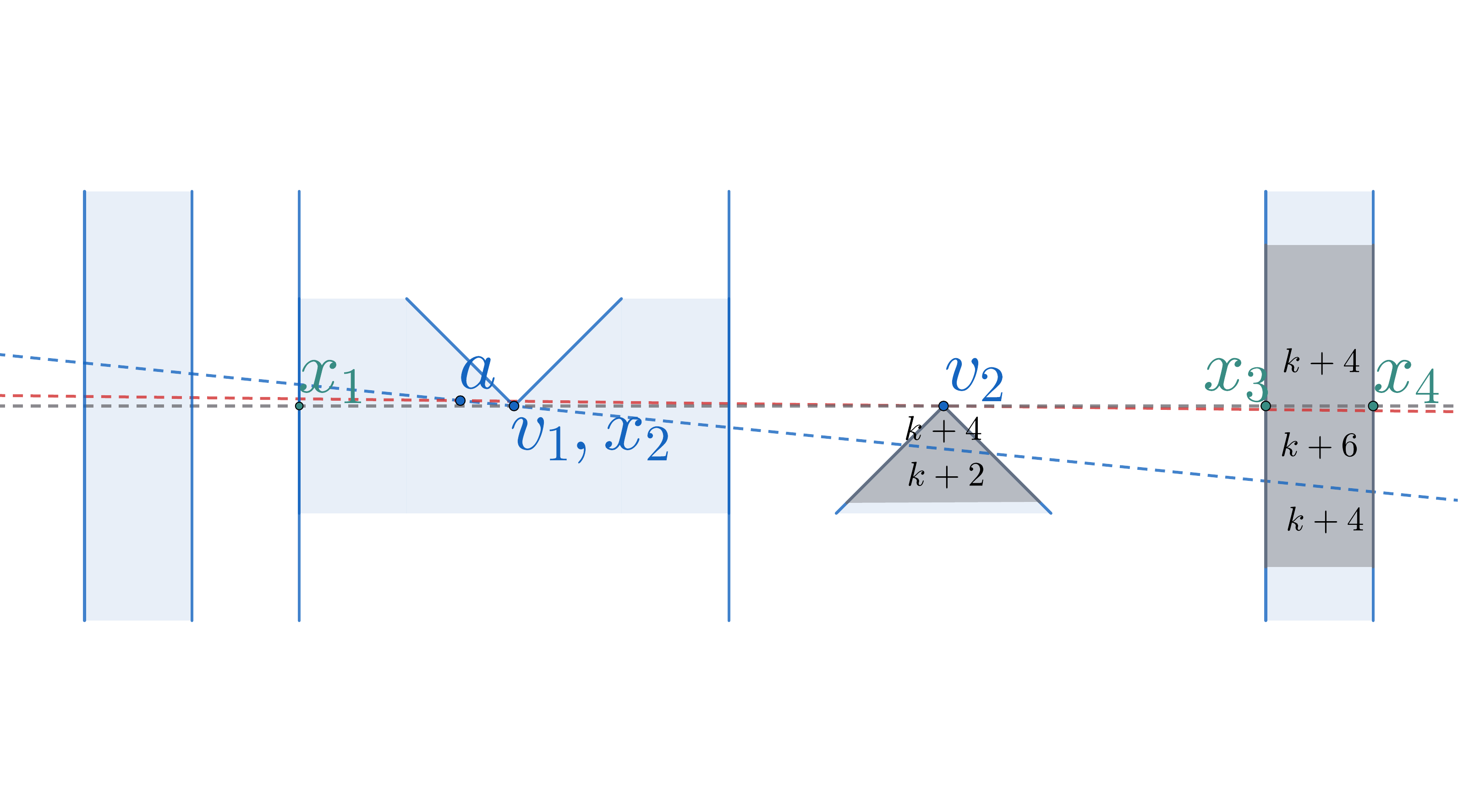}
\caption{Above $\ell_{g}$}\label{fig:RCO-genericA0}
\end{subfigure}
\begin{subfigure}[b]{.49\linewidth}
\includegraphics[width=\linewidth]{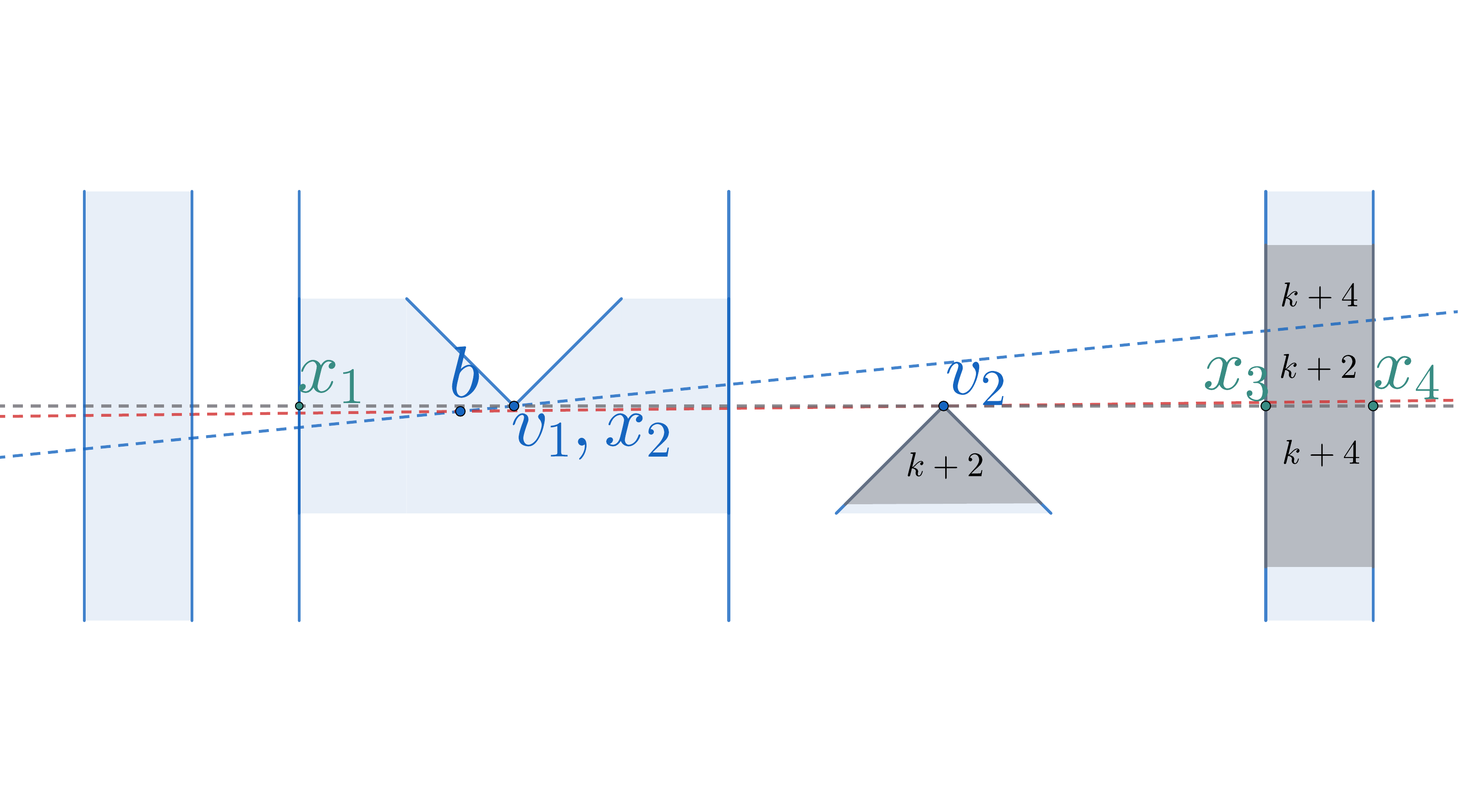}
\caption{Below $\ell_{g}$}\label{fig:RCO-genericB0}
\end{subfigure}

\caption{RCO; $Z = k + 1$, $W = k + 2$.}
\label{fig:RCO-generic-0}
\end{figure}

 \begin{figure}[H]
\centering
\begin{subfigure}[b]{.49\linewidth}
\includegraphics[width=\linewidth]{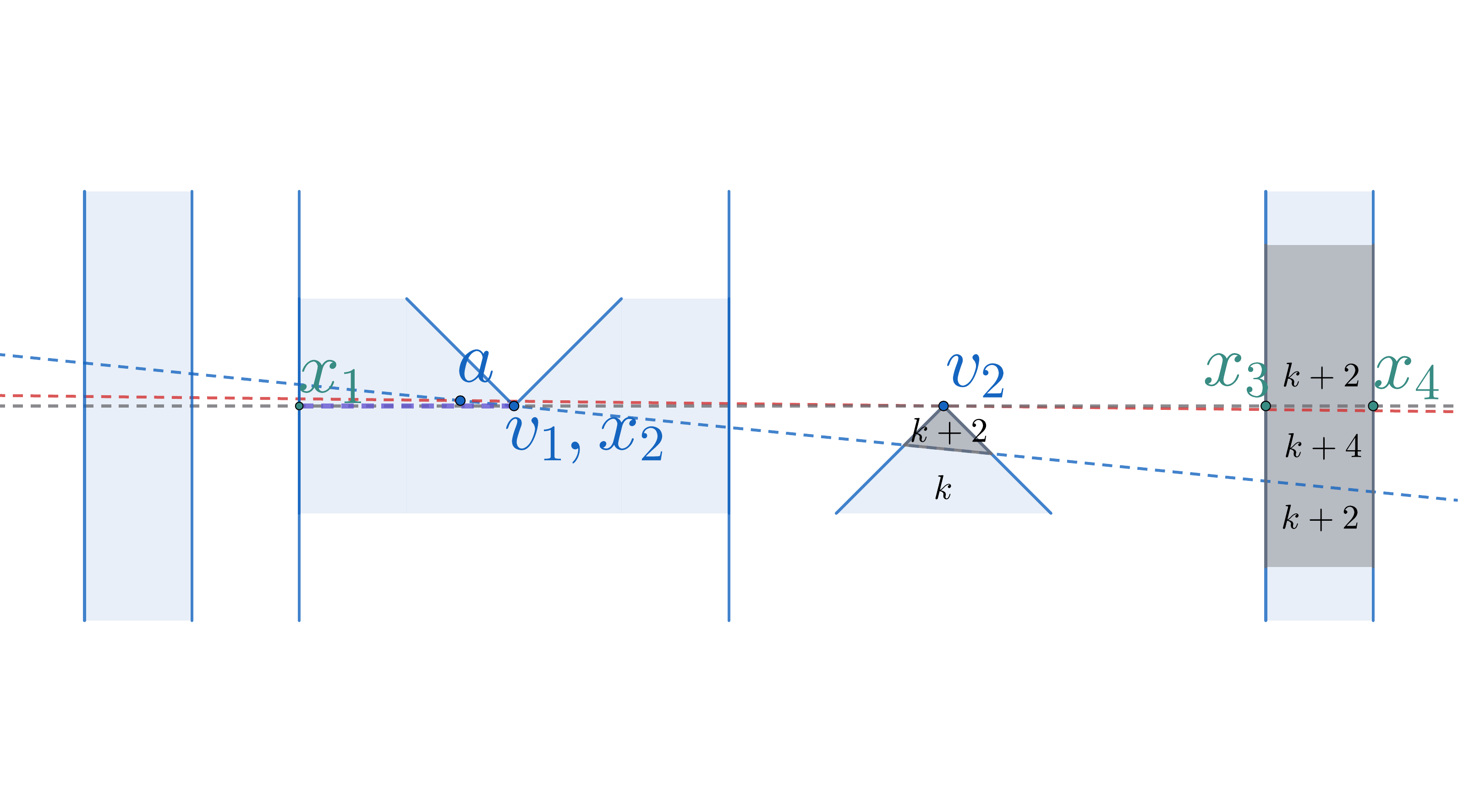}
\caption{Above $\ell_{g}$}\label{fig:RCO-genericA2}
\end{subfigure}
\begin{subfigure}[b]{.49\linewidth}
\includegraphics[width=\linewidth]{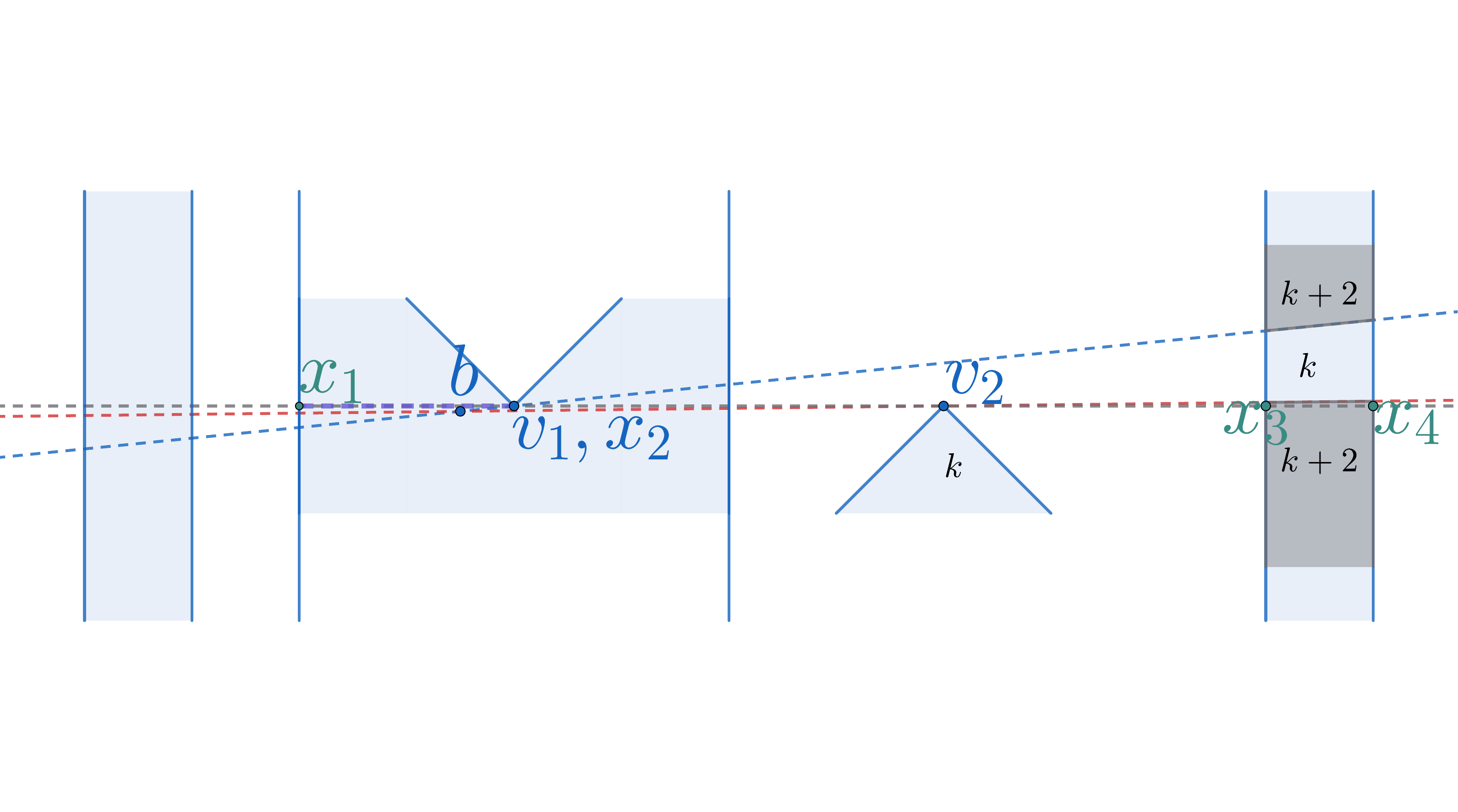}
\caption{Below $\ell_{g}$}\label{fig:RCO-genericB2}
\end{subfigure}

\caption{RCO; $Z = k - 1$, $W = k$ (Merge/Split).}
\label{fig:RCO-generic-2}
\end{figure}

 \begin{figure}[H]
\centering
\begin{subfigure}[b]{.49\linewidth}
\includegraphics[width=\linewidth]{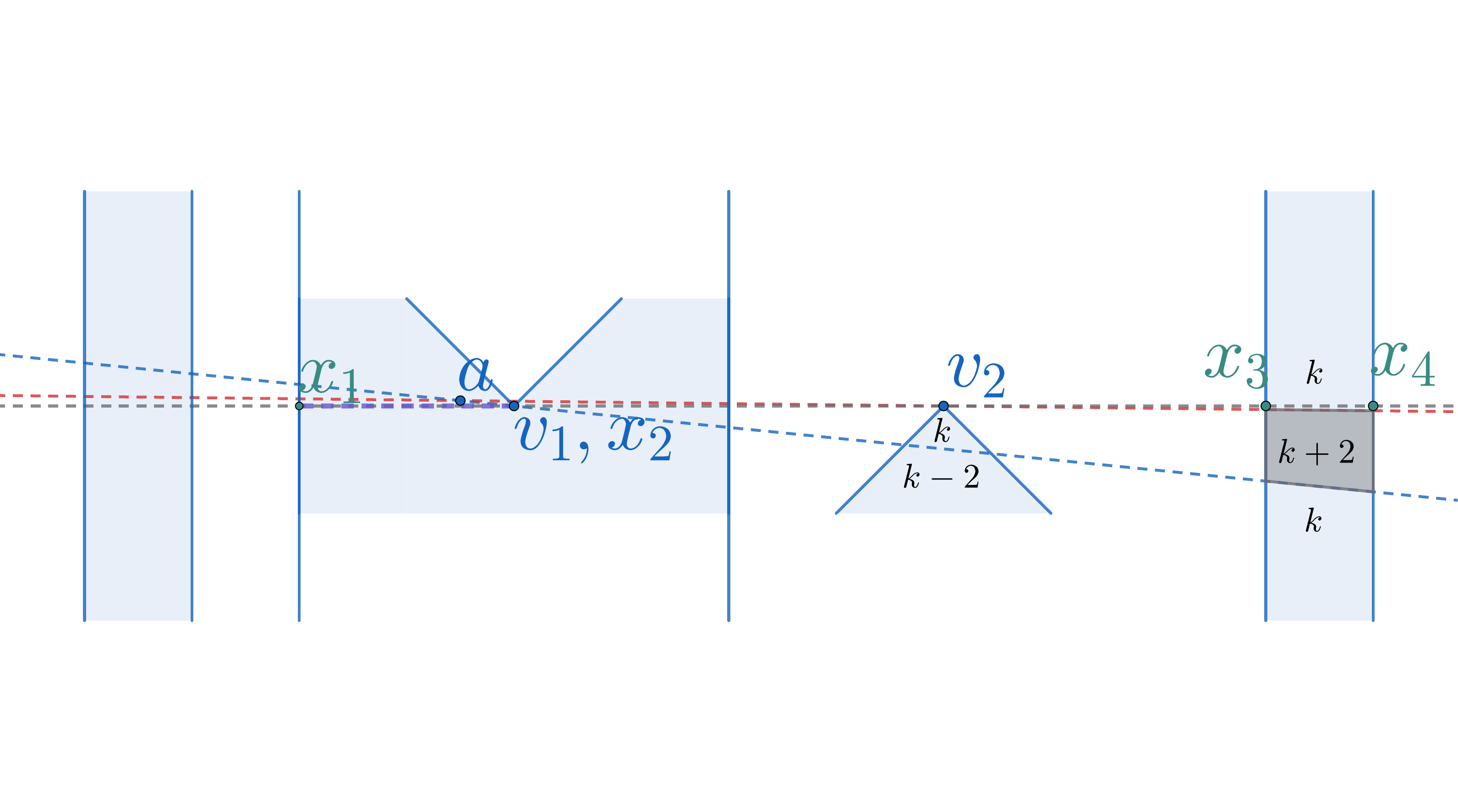}
\caption{Above $\ell_{g}$}\label{fig:RCO-genericA4}
\end{subfigure}
\begin{subfigure}[b]{.49\linewidth}
\includegraphics[width=\linewidth]{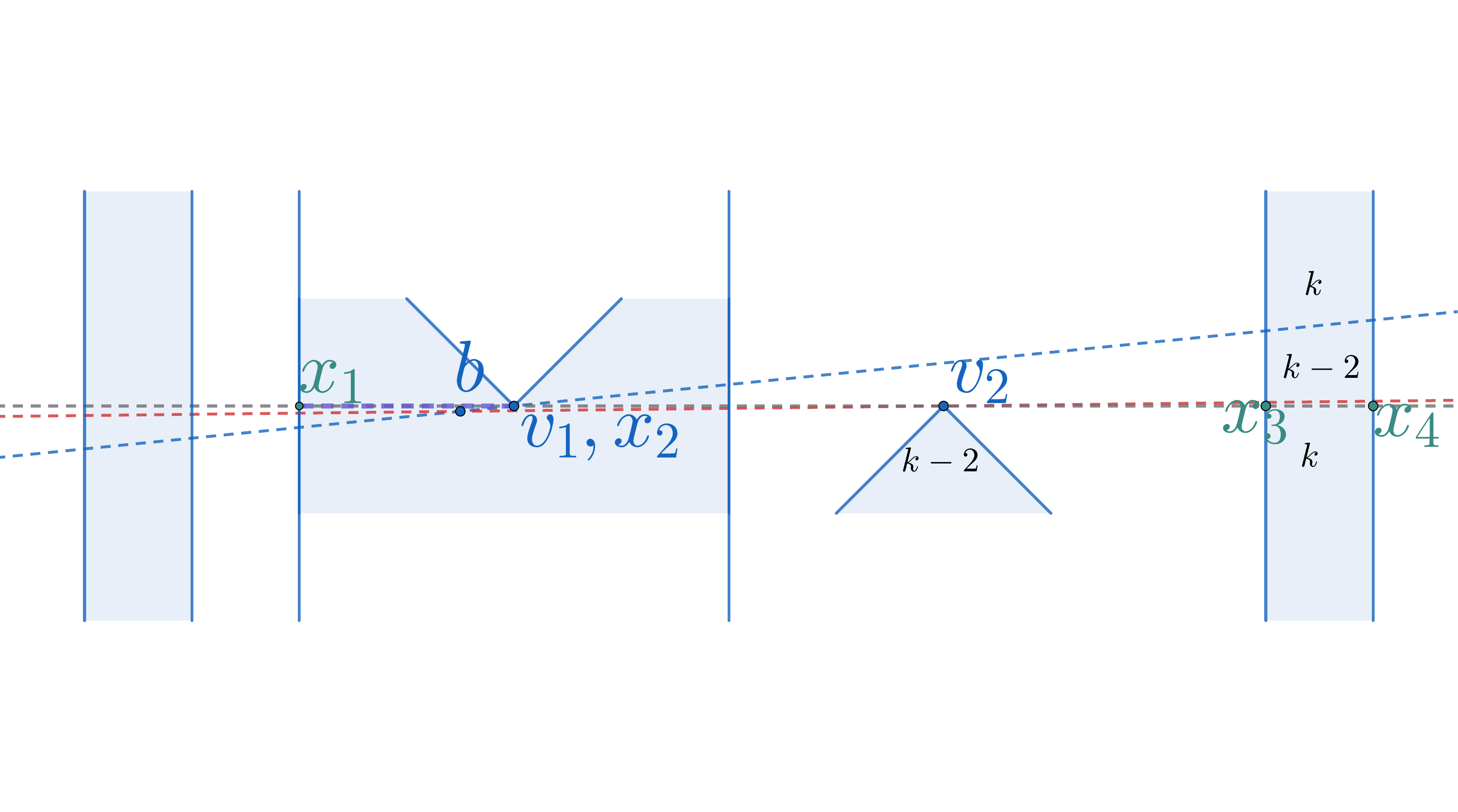}
\caption{Below $\ell_{g}$}\label{fig:RCO-genericB4}
\end{subfigure}

\caption{RCO; $Z = k - 3$, $W = k - 2$ (Appear/Disappear).}
\label{fig:RCO-generic-4}
\end{figure}

 \begin{figure}[H]
\centering
\begin{subfigure}[b]{.49\linewidth}
\includegraphics[width=\linewidth]{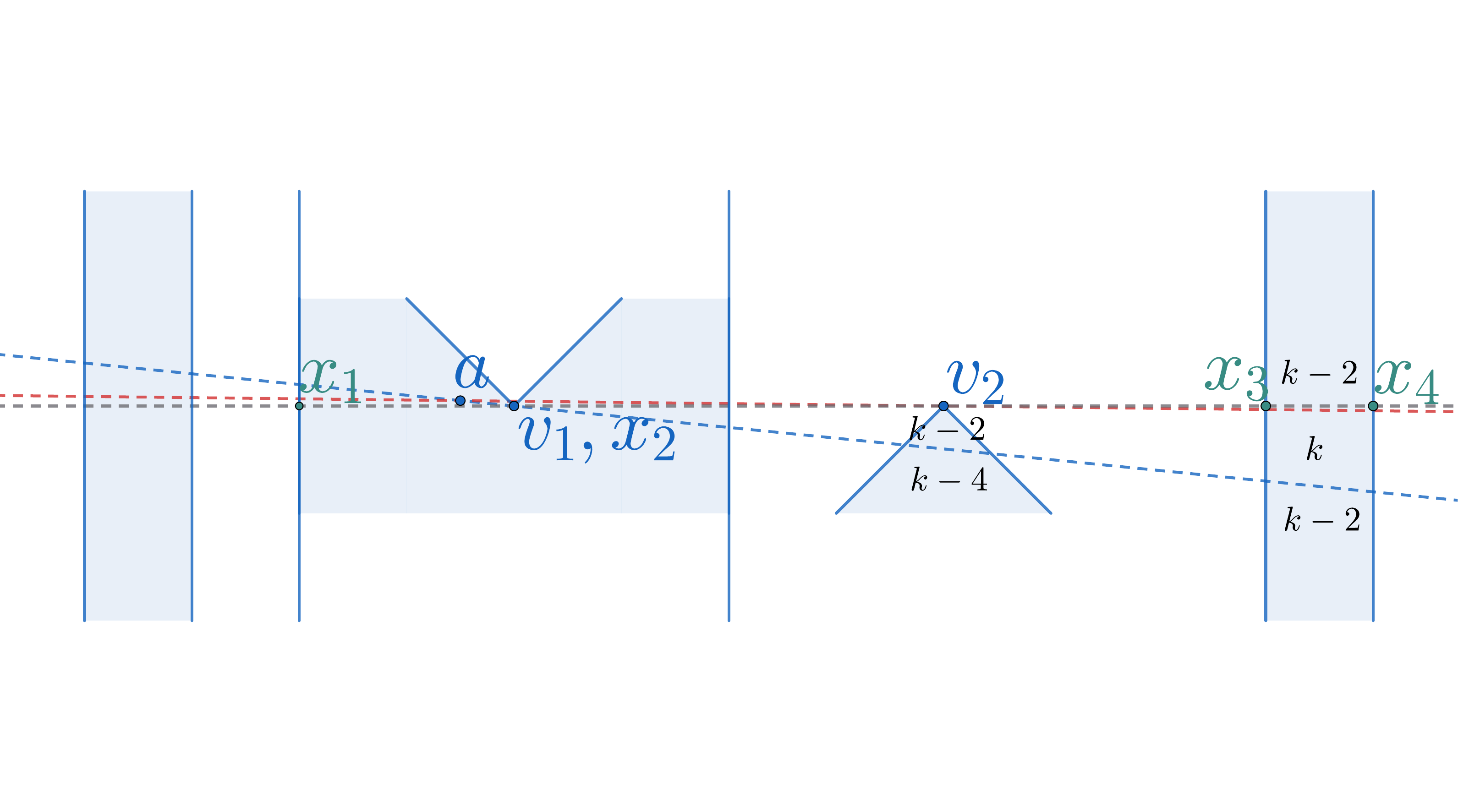}
\caption{Above $\ell_{g}$}\label{fig:RCO-genericA6}
\end{subfigure}
\begin{subfigure}[b]{.49\linewidth}
\includegraphics[width=\linewidth]{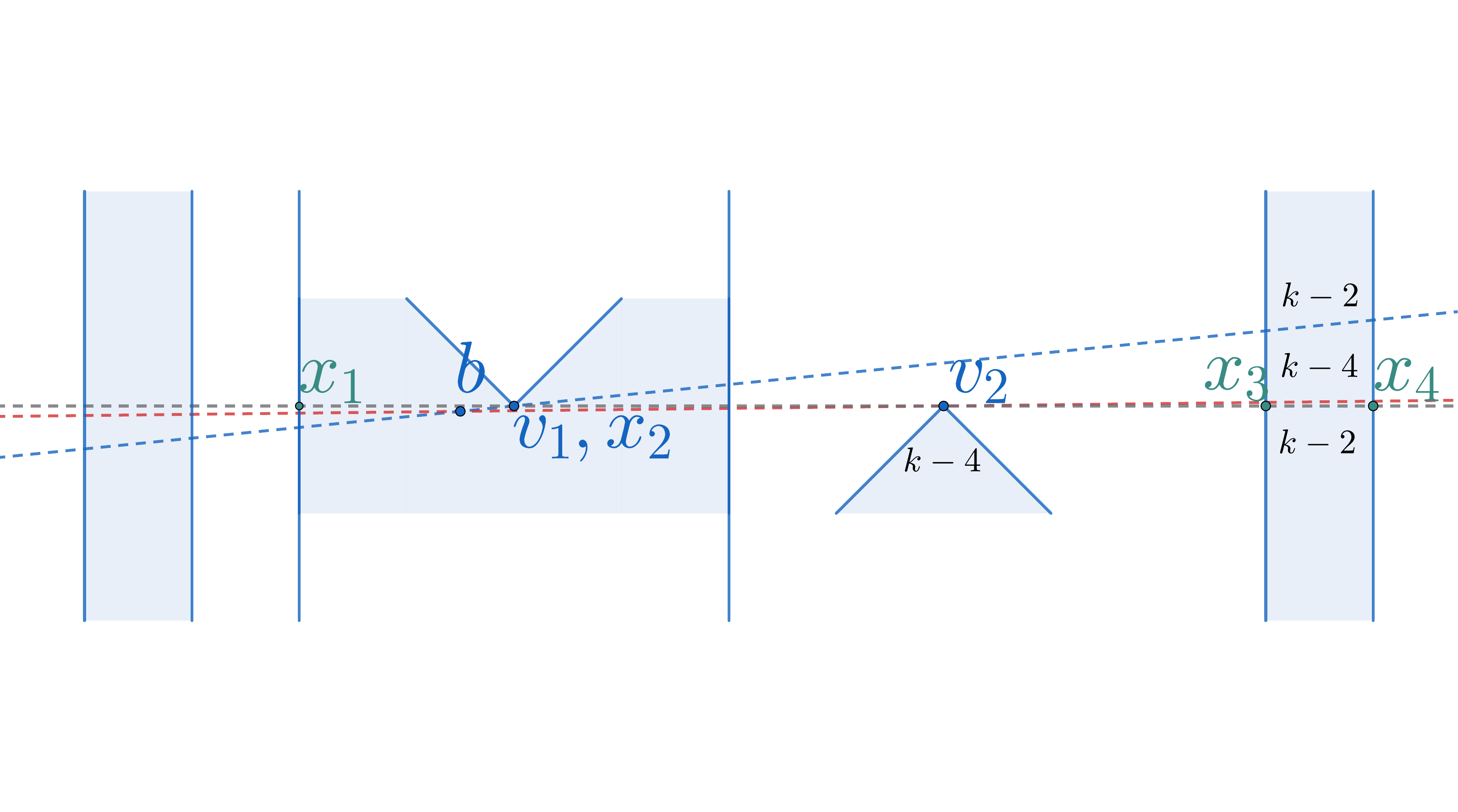}
\caption{Below $\ell_{g}$}\label{fig:RCO-genericB6}
\end{subfigure}

\caption{RCO; $Z = k - 5$, $W = k - 4$.}
\label{fig:RCO-generic-6}
\end{figure}

 \begin{figure}[H]
\centering
\begin{subfigure}[b]{.49\linewidth}
\includegraphics[width=\linewidth]{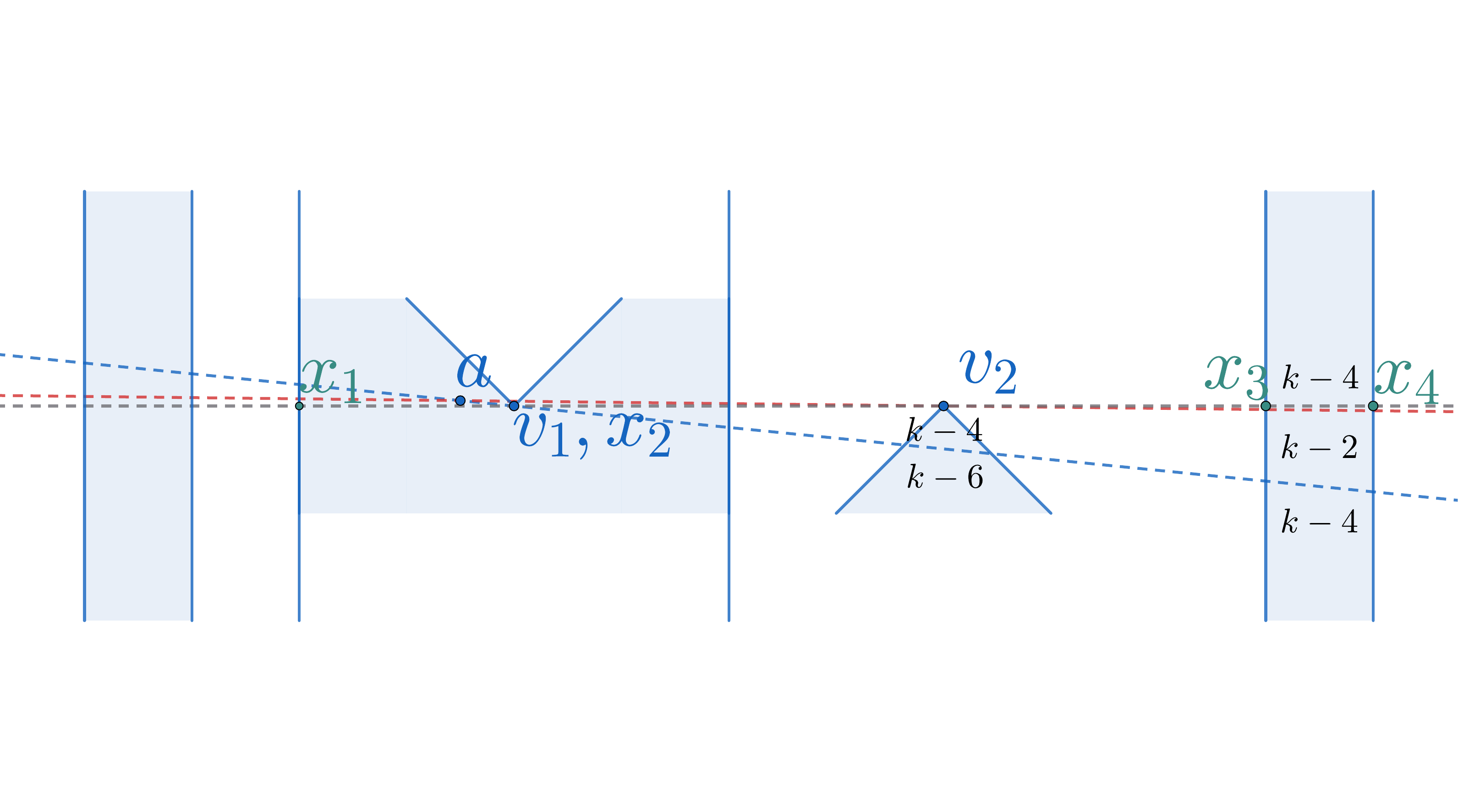}
\caption{Above $\ell_{g}$}\label{fig:RCO-genericA8}
\end{subfigure}
\begin{subfigure}[b]{.49\linewidth}
\includegraphics[width=\linewidth]{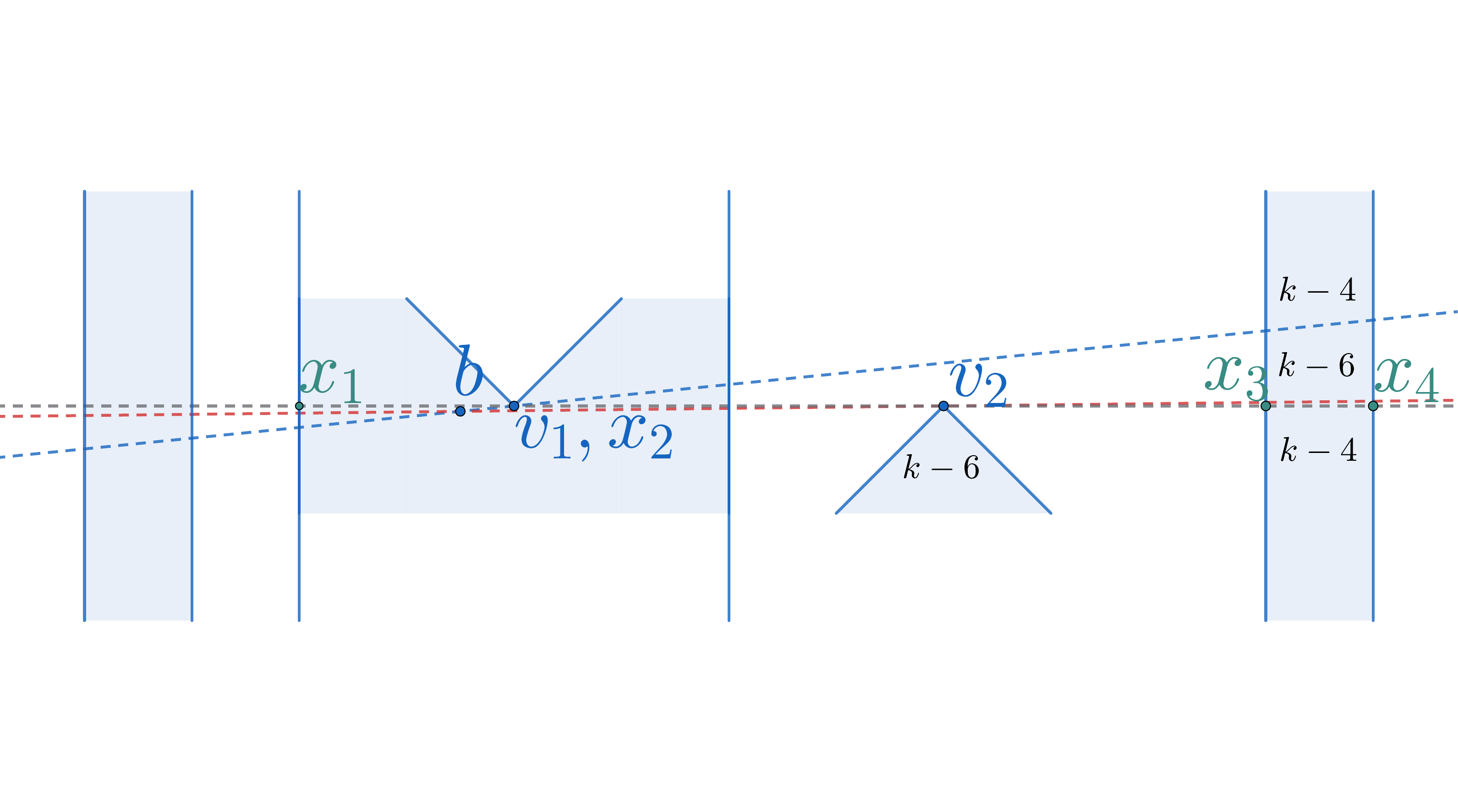}
\caption{Below $\ell_{g}$}\label{fig:RCO-genericB8}
\end{subfigure}

\caption{RCO; $Z = k - 7$, $W = k - 6$.}
\label{fig:RCO-generic-8}
\end{figure}

\section{RRS}
\label{appendix:RRS}
\subsection{Reflex Reflex Same (RRS)}
\begin{lemma} 
\label{lemma:RRS}
When $Z = k$ (Figure~\ref{fig:RRS-generic-0}), a merge/split event occurs at $v_{2}$. No event occurs for any other $Z$ or $W$.

\end{lemma}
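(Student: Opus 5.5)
We prove the RRS lemma the same way as the other cases: count walls along the rays, compare the counts at the two positions, and appeal to the figure-based enumeration. Let the agent sit at $a$ (above $\ell_g$) and at $b$ (below $\ell_g$), close to the partition segment $x_1x_2$. The rays $\ell_b$ (through $v_1$) and $\ell_r$ (through $v_2$) cut the neighbourhoods of $v_2$ and of $x_3x_4$ into a constant number of angular sectors. For each sector we record the number of walls crossed by a segment from the agent into that sector. This count is $Z$ or $W$ plus a fixed offset of $0,1,2$ contributed by the edges incident to $v_1$ and $v_2$. Since both vertices are reflex and their edges lie on the same side of $\ell_g$, both edges of $v_1$ and both edges of $v_2$ lie on one side. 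Moving from $a$ to $b$ swaps the order of $\ell_b$ and $\ell_r$. Only the sectors between the two rays change their offsets, and they change by the incident edges of $v_1$ entering or leaving the line of sight.

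First, I would tabulate these offset patterns for position $a$ and for position $b$, once for the region at $v_2$ and once for the region at $x_3x_4$. Second, I would use the invariance noted before Lemma~\ref{lemma:CCO-main}: adding two walls and raising $k$ by two gives the same transitions. So only the differences $Z-k$ and $W-k$ matter, and a finite window of values suffices, stepping by two as in the figures. Third, for each value in the window, I would mark every sector as visible (count $\le k$) or shadow (count $>k$). I would then compare the shadow components at $a$ and at $b$ and classify any change as appear, disappear, merge, or split.

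For $Z=k$, the sectors on either side of $v_2$ should have counts exactly $k+1$ at one position. At the other position, the sector pinched between the rays should drop to $k$. This separates or joins two shadow pieces, giving the claimed merge/split at $v_2$. Outside the window the statement is immediate. For large $Z$ (about $Z\ge k+4$) every sector near $v_2$ exceeds $k$, and for small $Z$ (about $Z\le k-8$) every sector is at most $k$. The same holds for $W$ at $x_3x_4$. In these ranges the neighbourhoods are entirely shadow or entirely visible and nothing changes.

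The main obstacle is showing that nothing happens at $x_3x_4$ for every $W$. This is where RRS differs from RRO. In the same-side configuration, the edges of $v_1$ and $v_2$ that block the pinched sector at $b$ should be exactly compensated by the edges blocking the complementary sector at $a$. The visible/shadow pattern beyond $x_3$ would then be identical on both sides of $\ell_g$. I would verify this compensation directly from the offsets and check that the pattern is the same for $W-k$ of both parities. I would also check that the events apparently produced at other values of $Z$ are only re-labellings of the $Z=k$ event or of events beyond $x_4$, which are excluded by the convention stated before Lemma~\ref{lemma:CCO-main}.
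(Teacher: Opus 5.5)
Your route is the paper's own: its proof is exactly this count enumeration, read off the RRS figures ($Z=k,k-2,\dots,k-8$), plus the extreme ranges. However, your count model is wrong for RRS, and once it is corrected the plan does not prove the statement as written. Here $v_1$ and $v_2$ are tips of exterior notches touching $\ell_g$, since each is reflex with both edges on the same side. A segment from the agent to a point of $P$ near $v_2$ or $x_3x_4$ therefore crosses both edges of a notch or neither, so the offsets are $0,2,4$, never $1$. Consequently the $Z=k$ event is not a drop from $k+1$ to $k$. With the agent on the notches' side of $\ell_g$, the thin region between $\ell_b$ and $\ell_r$ passing under $v_2$ lies behind $v_1$'s notch (count $Z+2$). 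It joins the shadow left of $v_2$'s notch (count $Z+2$) to the region right of it (counts $Z+2$, $Z+4$). From the other side, everything just under $v_2$ has count $Z$, and $v_2$'s notch separates the piece above $\ell_b$ on its left from the piece on its right. At $x_3x_4$ the three bands cut out by the rays have counts $W,W+2,W+4$ in the same order from both positions. The middle band is blocked by $v_1$'s notch from one side and by $v_2$'s from the other. So your compensation argument does work there, for both parities of $W-k$.

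The genuine gap is parity in $Z$. Since all offsets are even, the pattern at $v_2$ depends only on which of $Z,Z+2,Z+4$ exceed $k$. Hence $Z=k-1$ gives exactly the same merge/split as $Z=k$. Stepping by two from $Z=k$, as both you and the paper's figures do, never samples this value, and it does occur. In RRS, $Z$ is always even: $\ell_g$ is interior just after $x_1$ and just before $v_2$, and touching $v_1$ does not toggle inside/outside. So $Z=k-1$ arises for every odd $k$. For example, take two thin notches hanging from the top edge of a rectangle, with tips on a horizontal line and nothing else crossing it. This gives $Z=0$, and with $k=1$ all local counts are $0,2,4$. Thus $1$-visibility coincides with ordinary visibility there, and crossing $\ell_g$ produces the classical split at $v_2$. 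Your final step cannot absorb this as a re-labelling or as an event beyond $x_4$, because it happens at $v_2$. What your enumeration (and the paper's) actually establishes is the lemma under the implicit convention $Z\equiv k\pmod 2$, i.e.\ $k$ even here. In general the merge/split condition is $Z\in\{k-1,k\}$, so you must either state that restriction explicitly or amend the claim.
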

\begin{proof}
    See Figure~\ref{fig:RRS-generic-0} to Figure~\ref{fig:RRS-generic-8}.

For $Z \geq k + 2$, $v_{2}$ and its surroundings are entirely in shadow. For $W \geq k + 4$, $x_{3}x_{4}$ and its surroundings are entirely in shadow.

For $Z \leq k - 10$, $v_{2}$ and its surroundings are entirely in shadow. For $W \leq k - 8$, $x_{3}x_{4}$ and its surroundings are entirely in shadow. 
\end{proof}

 \begin{figure}[H]
\centering
\begin{subfigure}[b]{.49\linewidth}
\includegraphics[width=\linewidth]{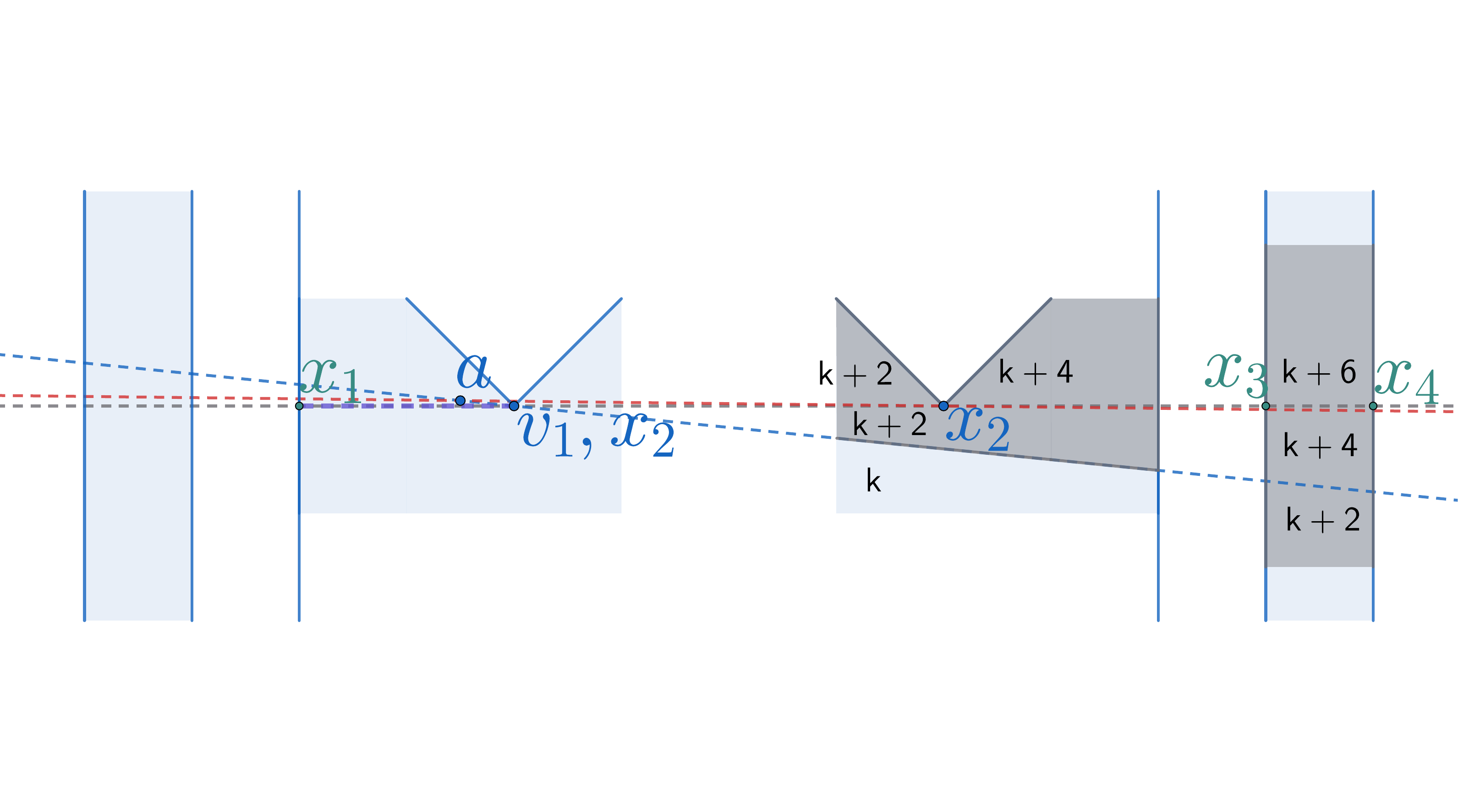}
\caption{Above $\ell_{g}$}\label{fig:RRS-genericA0}
\end{subfigure}
\begin{subfigure}[b]{.49\linewidth}
\includegraphics[width=\linewidth]{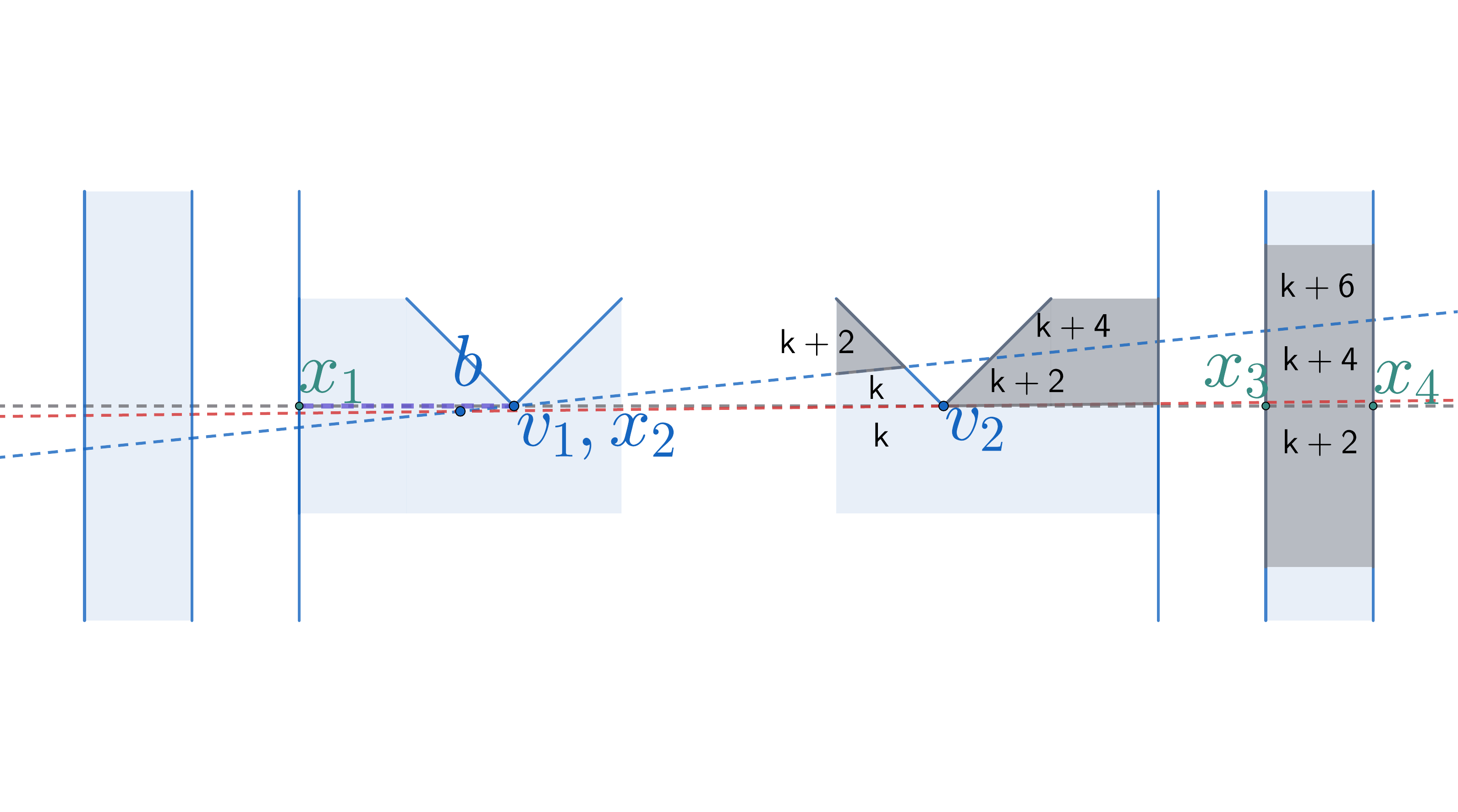}
\caption{Below $\ell_{g}$}\label{fig:RRS-genericB0}
\end{subfigure}

\caption{RRS; $Z = k$ (Merge/Split), $W = k + 2$.}
\label{fig:RRS-generic-0}
\end{figure}

 \begin{figure}[H]
\centering
\begin{subfigure}[b]{.49\linewidth}
\includegraphics[width=\linewidth]{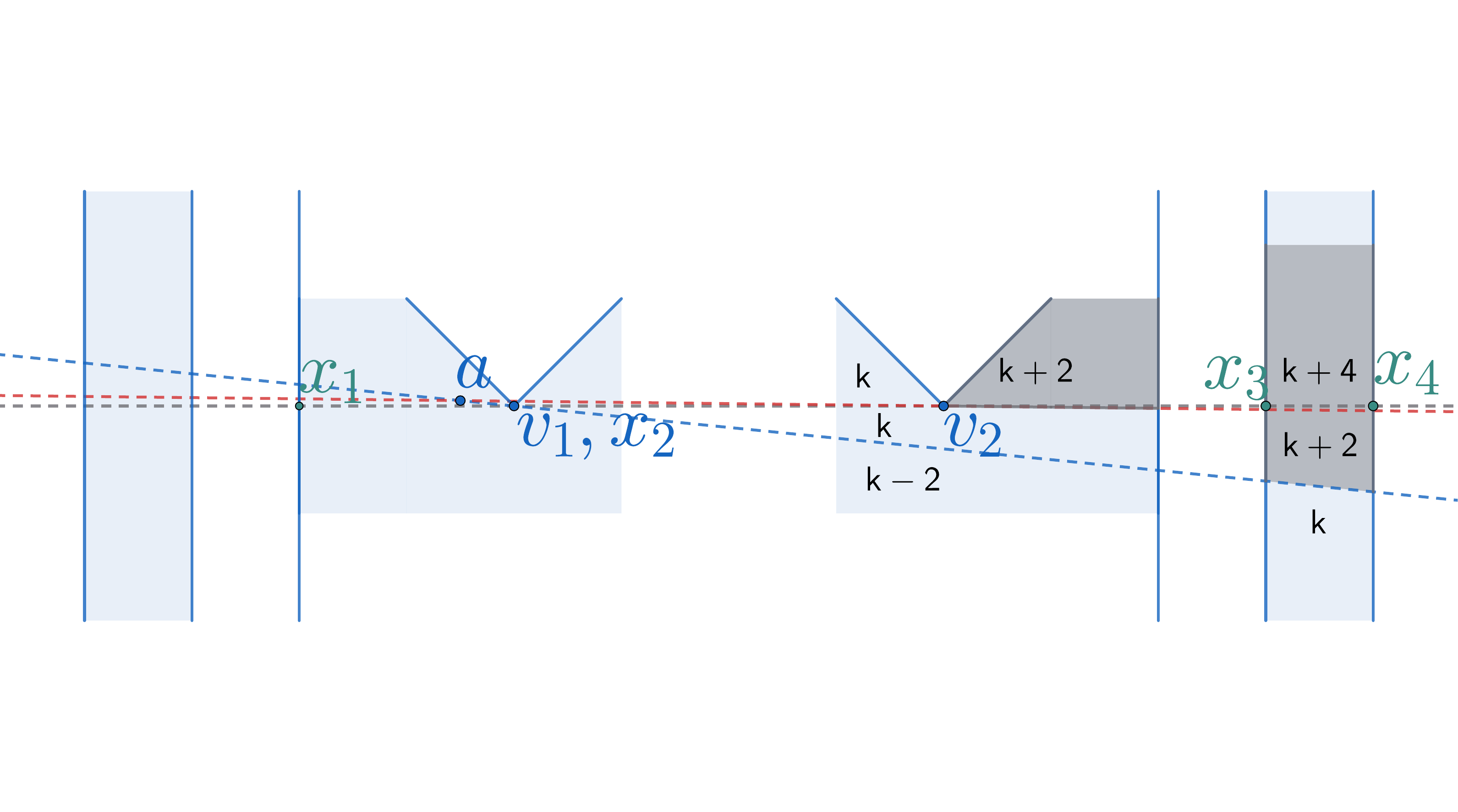}
\caption{Above $\ell_{g}$}\label{fig:RRS-genericA2}
\end{subfigure}
\begin{subfigure}[b]{.49\linewidth}
\includegraphics[width=\linewidth]{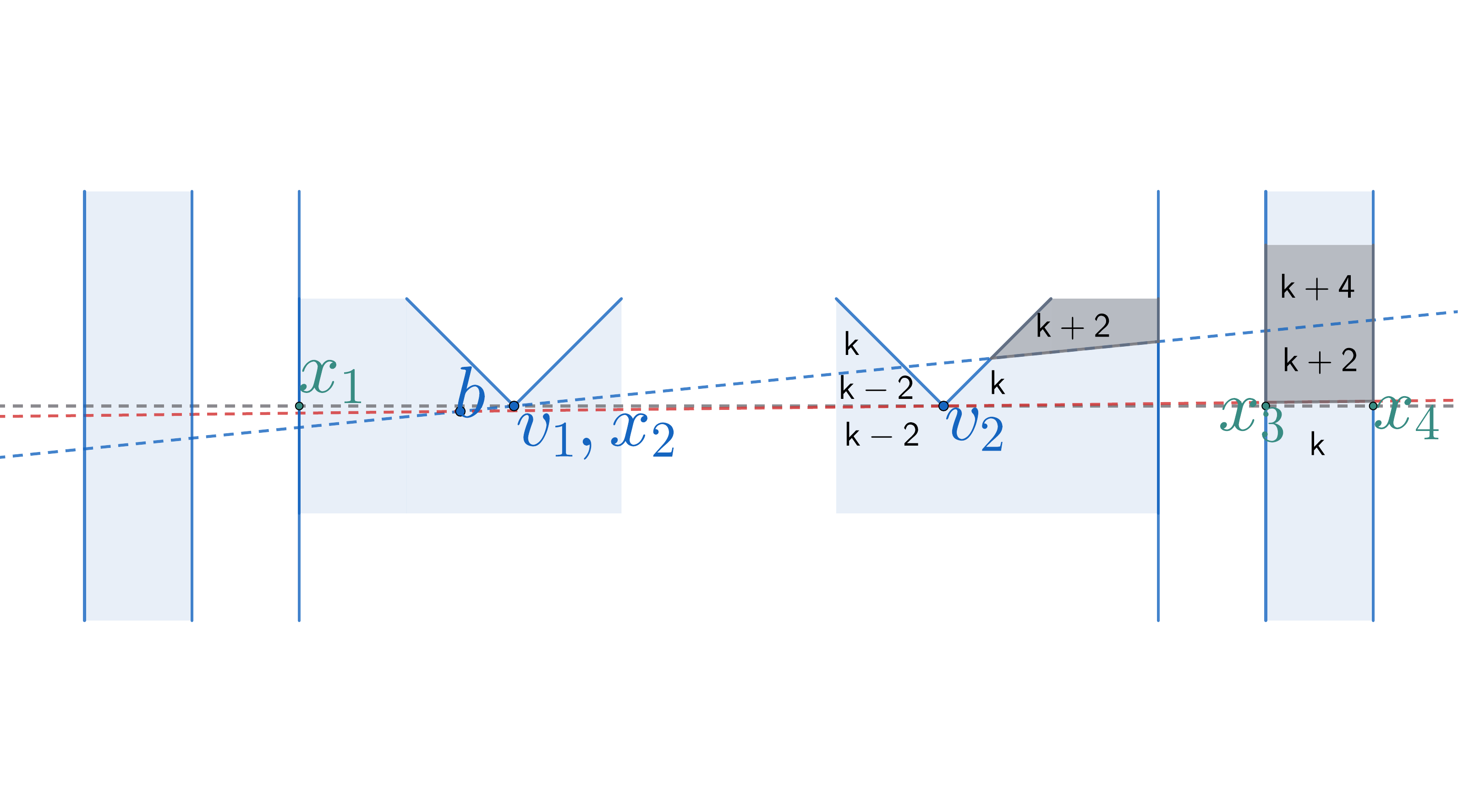}
\caption{Below $\ell_{g}$}\label{fig:RRS-genericB2}
\end{subfigure}

\caption{RRS; $Z = k - 2$, $W = k$.}
\label{fig:RRS-generic-2}
\end{figure}

 \begin{figure}[H]
\centering
\begin{subfigure}[b]{.49\linewidth}
\includegraphics[width=\linewidth]{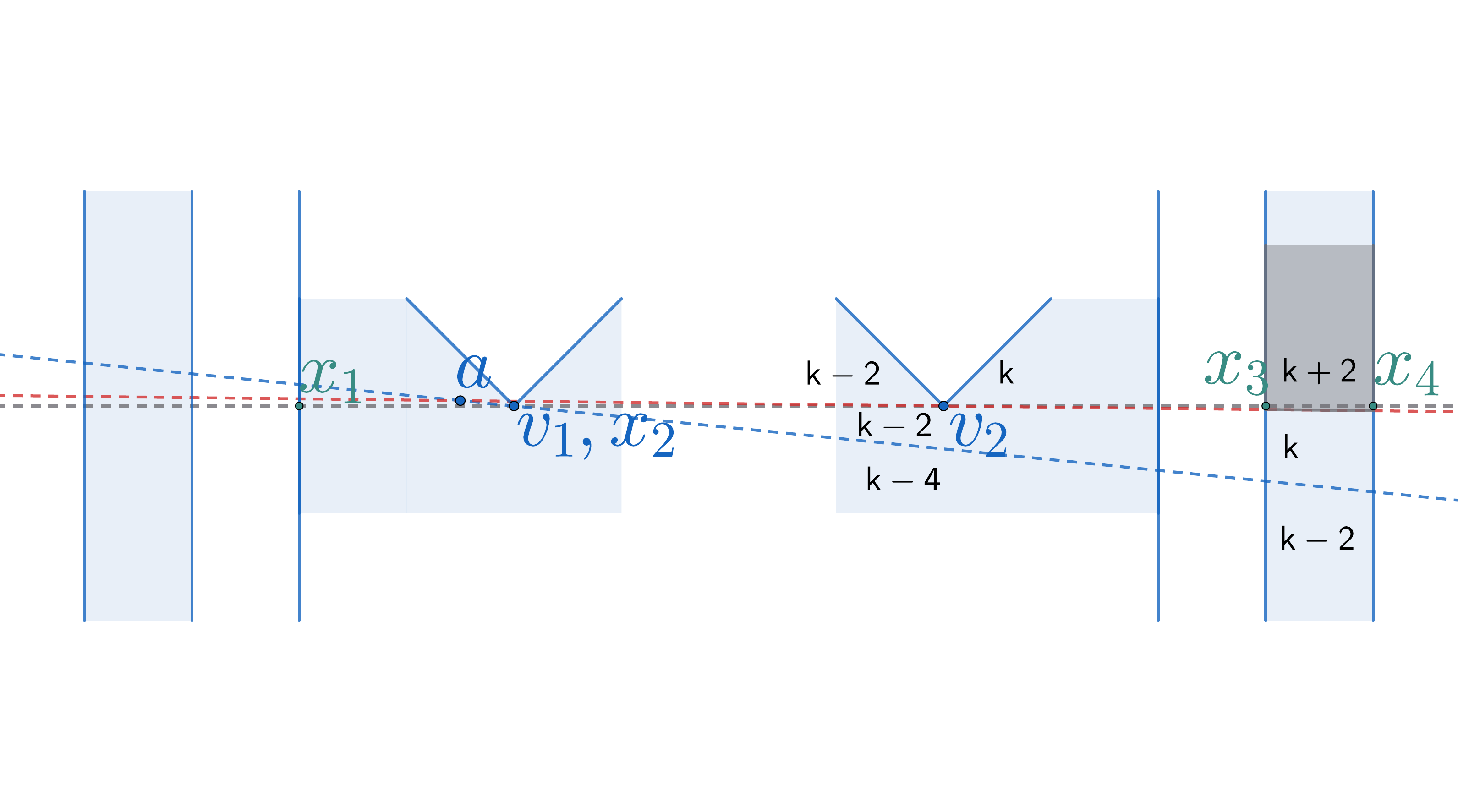}
\caption{Above $\ell_{g}$}\label{fig:RRS-genericA4}
\end{subfigure}
\begin{subfigure}[b]{.49\linewidth}
\includegraphics[width=\linewidth]{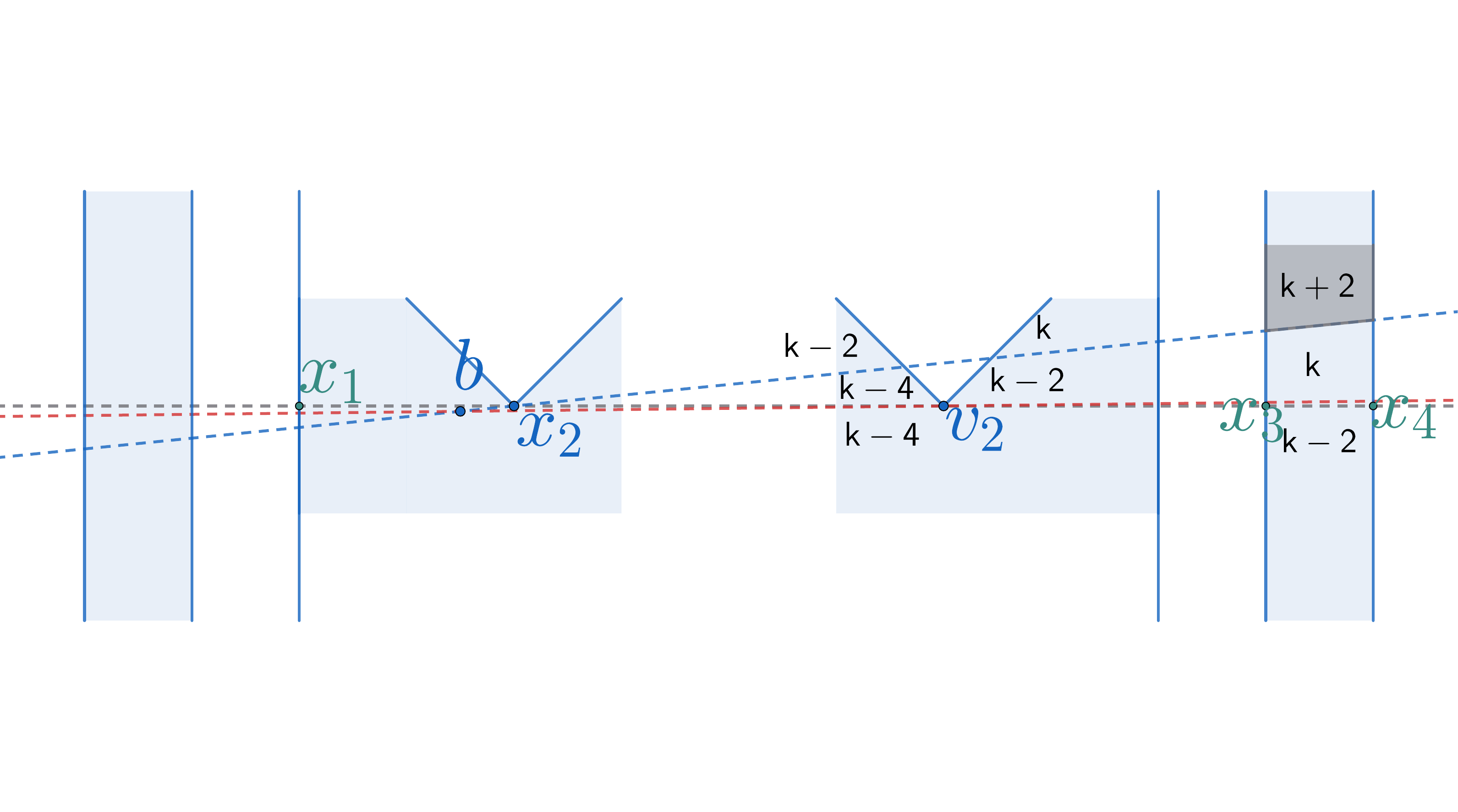}
\caption{Below $\ell_{g}$}\label{fig:RRS-genericB4}
\end{subfigure}

\caption{RRS; $Z = k - 4$, $W = k - 2$.}
\label{fig:RRS-generic-4}
\end{figure}

 \begin{figure}[H]
\centering
\begin{subfigure}[b]{.49\linewidth}
\includegraphics[width=\linewidth]{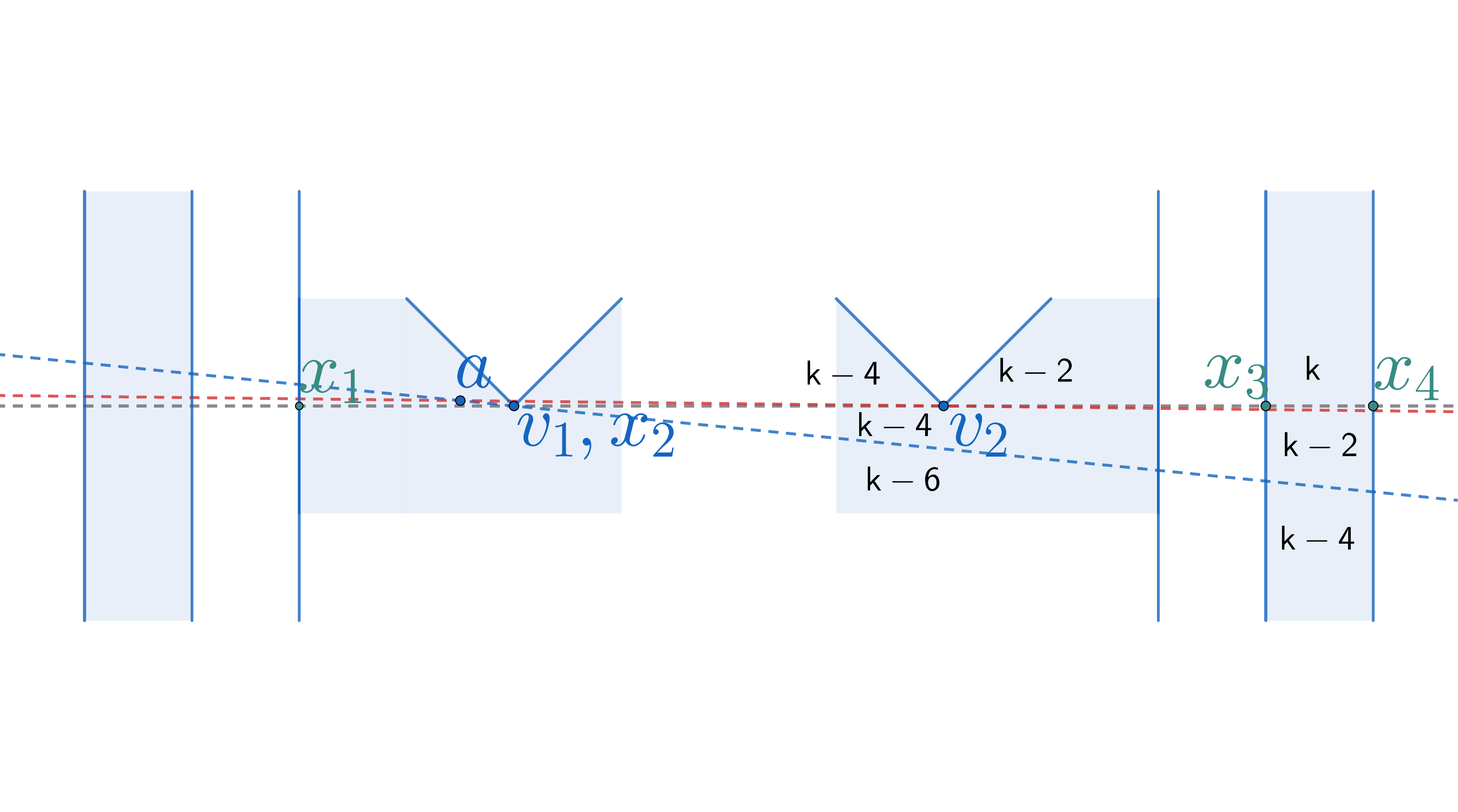}
\caption{Above $\ell_{g}$}\label{fig:RRS-genericA6}
\end{subfigure}
\begin{subfigure}[b]{.49\linewidth}
\includegraphics[width=\linewidth]{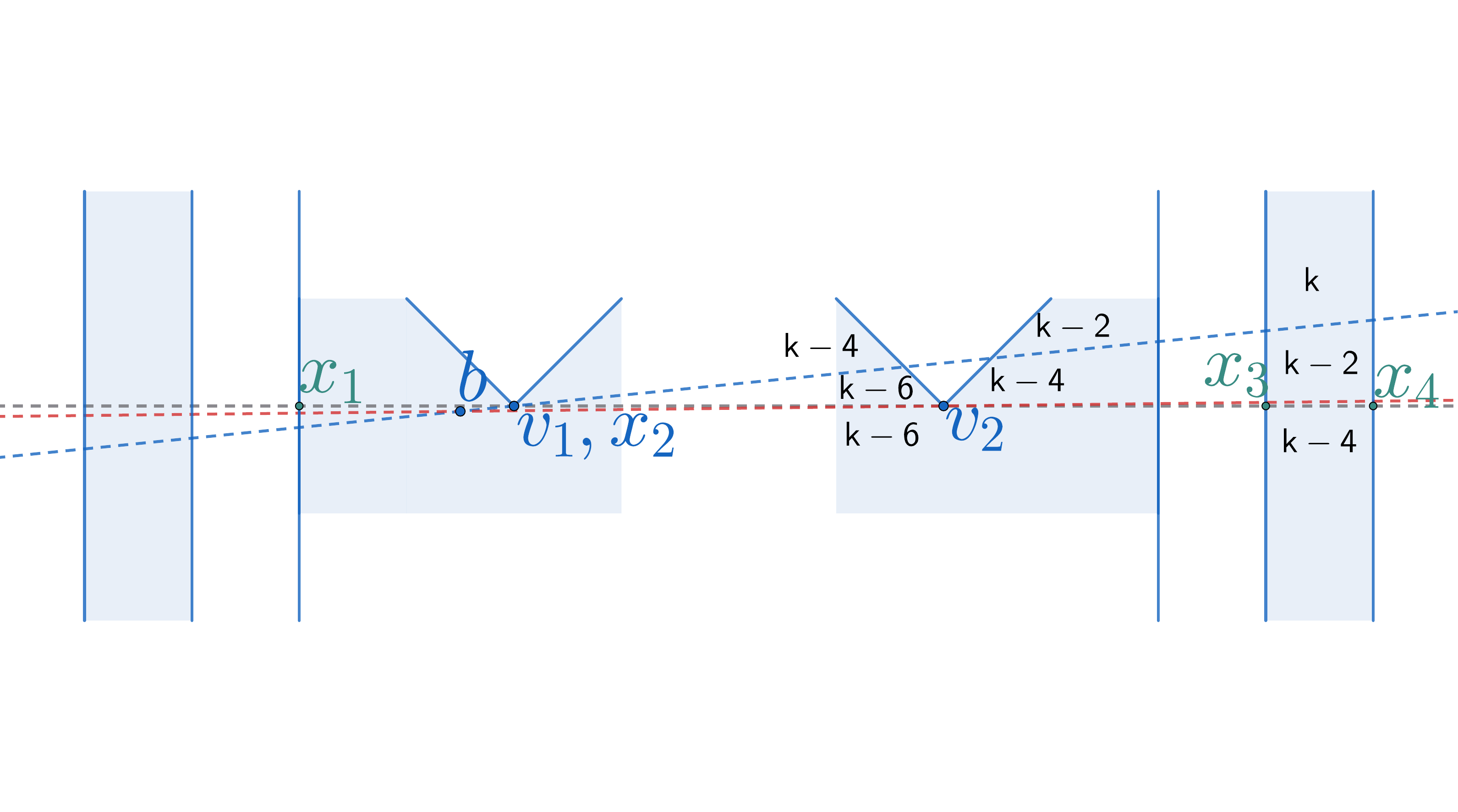}
\caption{Below $\ell_{g}$}\label{fig:RRS-genericB6}
\end{subfigure}

\caption{RRS; $Z = k - 6$, $W = k - 4$.}
\label{fig:RRS-generic-6}
\end{figure}

 \begin{figure}[H]
\centering
\begin{subfigure}[b]{.49\linewidth}
\includegraphics[width=\linewidth]{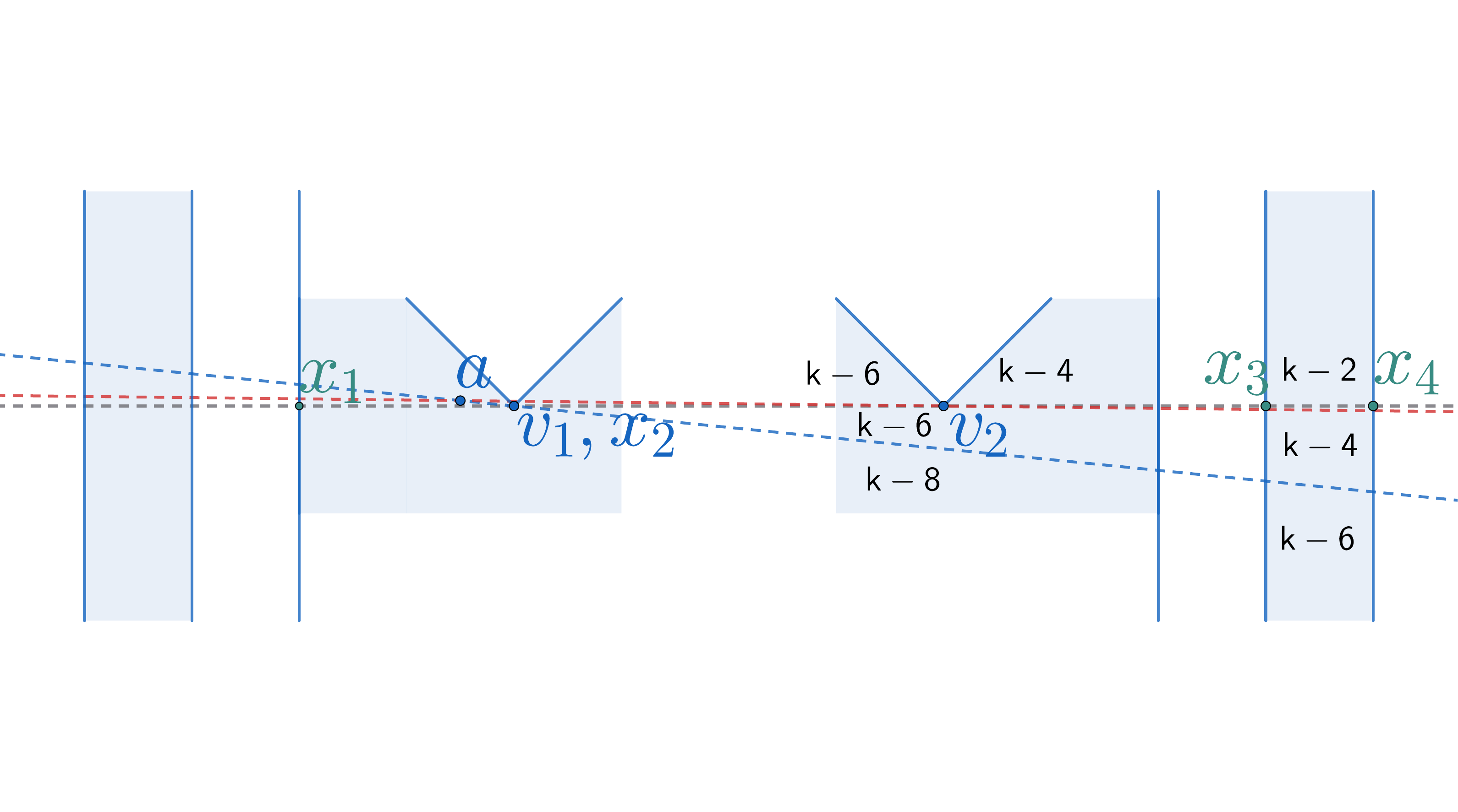}
\caption{Above $\ell_{g}$}\label{fig:RRS-genericA8}
\end{subfigure}
\begin{subfigure}[b]{.49\linewidth}
\includegraphics[width=\linewidth]{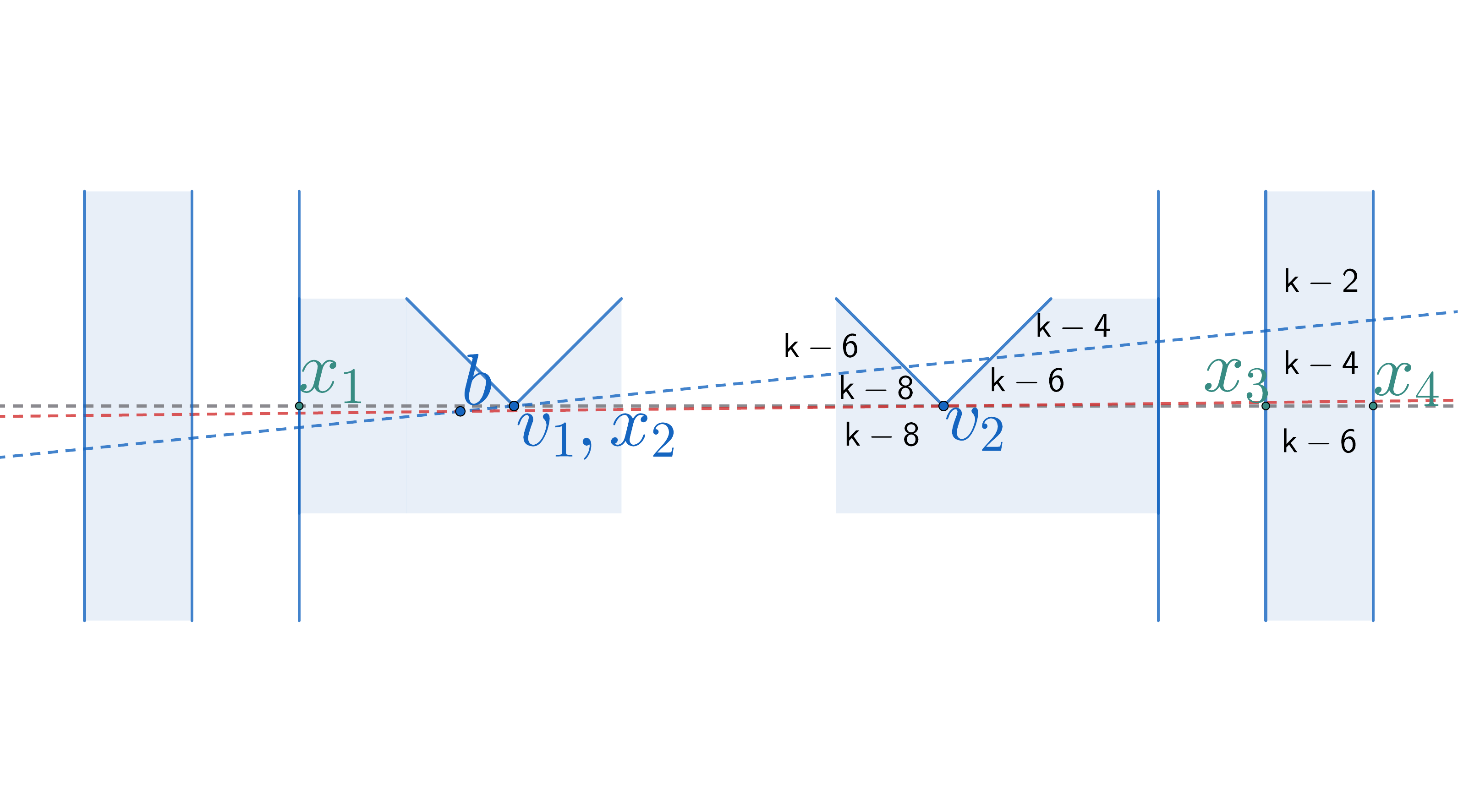}
\caption{Below $\ell_{g}$}\label{fig:RRS-genericB8}
\end{subfigure}

\caption{RRS; $Z = k - 8$, $W = k - 6$.}
\label{fig:RRS-generic-8}
\end{figure}

\section{RRO}
\label{appendix:RRO}
\subsection{Reflex Reflex Opposite (RRO)}
\begin{lemma} 
\label{lemma:RRO}
When $Z = k$ (Figure~\ref{fig:RRO-generic-0}), a merge/split event occurs at $v_{2}$. If $Z = k - 2$, an appear/disappear event occurs at $v_{2}$ (Figure~\ref{fig:RRO-generic-2}). Also, if $W = k$, a merge/split event occurs at $x_{3}x_{4}$ (Figure~\ref{fig:RRO-generic-2}). Additionally, if $W = k - 2$, an appear/disappear event occurs at $x_{3}x_{4}$ (Figure~\ref{fig:RRO-generic-4}). No event occurs for any other $Z$ or $W$. 

\end{lemma}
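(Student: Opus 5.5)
The plan is to follow the same template as the CCO analysis (Lemma~\ref{lemma:CCO-main}). Fix the agent at a point $a$ just above $\ell_g$ and a point $b$ just below it, both close to the segment $x_1x_2$. For each of the two monitored locations, the neighbourhood of $v_2$ and the neighbourhood of $x_3x_4$, I will count the walls crossed by sight rays from the agent. These rays fall into the angular sectors cut out by $\ell_b$ and $\ell_r$.

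Both $v_1$ and $v_2$ are reflex, and their edges lie on opposite sides of $\ell_g$. So when the agent moves from $a$ to $b$, the two rays swap order. A sector that at $a$ is crossed by an edge of $v_1$ becomes, at $b$, a sector crossed by the opposite edge of $v_1$. The same holds for $v_2$. The first step is to tabulate, for each sector, the obstruction count as an offset from $Z$ or $W$. The offset records how many edges incident to $v_1$ and $v_2$ the ray crosses, and it takes values in $\{0,1,2\}$ for $v_2$ and $\{0,\dots,4\}$ for $x_3x_4$. The offsets near $x_3x_4$ include both incident edges of each vertex whose wedge the ray enters.

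The second step converts these counts into events. A sector is shadow exactly when its count exceeds $k$. The local picture around $v_2$ therefore depends only on which of the counts $Z, Z+1, Z+2$ exceed $k$; the local picture near $x_3x_4$ depends similarly on the counts built from $W$. For each residue of $Z-k$ I compare the shadow/visible pattern of the sectors at $a$ and at $b$.
\begin{itemize}[noitemsep]
\item If a shadow sector appears or vanishes, isolated from the other shadow sectors, the event is appear/disappear.
\item If the middle sector switches between visible and shadow while both outer sectors are shadow, the event is merge/split.
\end{itemize}
Running this check should show that $Z=k$ causes the middle sector next to $v_2$ to change status between two shadow sectors, a merge/split. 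It should also show that $Z=k-2$ toggles an isolated sector, an appear/disappear. The same comparison applied to the counts based on $W$ should give merge/split at $W=k$ and appear/disappear at $W=k-2$.

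Third, I dispose of all other values. By the invariance remark, adding two walls and raising $k$ by two yields an identical configuration, so only finitely many representative values of $Z-k$ and $W-k$ need checking. These correspond to the figures for RRO, Figure~\ref{fig:RRO-generic-0} onward. At the extremes, once $Z$ exceeds $k$ the whole neighbourhood of $v_2$ is shadow, and once $Z\le k-4$ it is entirely visible; the analogous statements hold for $W$. In both extremes no event occurs. The remaining intermediate values I will check one by one: at each, the pattern at $a$ and the pattern at $b$ should have the same component structure.

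Parity is also relevant here. Only values of $Z$ and $W$ differing from $k$ by an even number are drawn, so I must argue that the odd offsets are either covered by symmetry or produce identical patterns on both sides. I would handle this by noting that for opposite-side reflex pairs the count swap is symmetric, so an odd offset shifts every sector equally.

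The main obstacle is correctness of the sector bookkeeping near $x_3x_4$. In particular, events beyond $x_4$ must be confirmed to reduce to the $W=k$ and $W=k-2$ subcases, as in the CCO discussion. Connectivity must also be decided locally, meaning along $\ell_g$ near $x_3x_4$. I would first draw the generic $a$/$b$ pair, label every sector's count, and only then read off events.
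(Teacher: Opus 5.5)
Your overall route is the paper's route: count obstructions in the sectors cut out by $\ell_b$ and $\ell_r$ at $a$ and at $b$, compare the shadow patterns, then use the $(Z,k)\mapsto(Z+2,k+2)$ invariance and the two extremes. The paper's proof just reads the events off the RRO figures, drawn only for $Z-k$ and $W-k$ even, and adds crude extreme bounds. Your sector bookkeeping, however, is wrong in a way that matters. Since $v_1$ and $v_2$ are reflex and critical, a sight line passing near either one crosses both of its edges or neither. So every offset is even, and a sector ``crossed by the opposite edge'' after the swap never occurs. Near $v_2$ the counts are $Z$, $Z+2$, $Z+4$, not $Z,Z+1,Z+2$:
\begin{itemize}
\item At $b$, the band between $\ell_b$ and $\ell_r$ wraps over $v_2$ and has count $Z$. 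It separates the region above $\ell_b$ (count $Z+2$) from the region behind the edges of $v_2$ (count $Z+2$).
\item At $a$, the rays have swapped. The band between them beyond $v_2$ consists of sight lines passing above $v_1$ and below $v_2$, so it has count $Z+4$ and joins those two $Z+2$ regions.
\end{itemize}
Near $x_3x_4$, the outer bands have count $W+2$ at both positions, while the middle band goes from $W+4$ at $a$ to $W$ at $b$. With your offsets $\{0,1,2\}$, every sector near $v_2$ would be visible at $Z=k-2$. Your check then could not produce the appear/disappear you predict, and it contradicts your own threshold $Z\le k-4$ for full visibility.

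The parity step is the real gap, because what you propose to prove there is false. Since all offsets are even, the patterns at $a$ and $b$ depend only on which of $Z,Z+2,Z+4$ (resp.\ $W,W+2,W+4$) exceed $k$. That set is identical for $Z=k$ and $Z=k-1$, and for $Z=k-2$ and $Z=k-3$. So an odd offset does not make the patterns at $a$ and $b$ coincide; it reproduces the neighbouring even-offset event. Concretely, take two opposite notches with nothing between them, so $Z=0$, and let $k=1$. Then crossing $\ell_g$ splits the shadow at $v_2$, although the lemma as printed predicts nothing.

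Moreover, $Z$ is always even in RRO, because $\ell_g$ is locally interior at the agent and at both reflex tangent vertices. Hence for odd $k$ the values $Z=k$ and $Z=k-2$ never occur. The two-wall invariance only relates offsets of equal parity, so it cannot repair this. What the counts actually give is:
\begin{itemize}
\item merge/split for $Z\in\{k-1,k\}$;
\item appear/disappear for $Z\in\{k-3,k-2\}$;
\item the same two conditions for $W$;
\item no event for $Z\ge k+1$ (all shadow) or $Z\le k-4$ (all visible).
\end{itemize}
The lemma as stated is this result under the convention $Z\equiv W\equiv k \pmod 2$ implicit in the paper's figures. You should either adopt that convention explicitly or prove the two-value version, rather than argue that odd offsets are event-free.
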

\begin{proof}
    See Figure~\ref{fig:RRO-generic-0} to Figure~\ref{fig:RRO-generic-8}.

    For $Z \geq k + 2$, $v_{2}$ and its surroundings are entirely in shadow. For $W \geq k + 4$, $x_{3}x_{4}$ and its surroundings are entirely in shadow. 

    For $Z \leq k - 10$, $v_{2}$ and its surroundings are entirely visible. For $W \leq k - 8$, $x_{3}x_{4}$ and its surroundings are entirely visible. 


\end{proof}

 \begin{figure}[H]
\centering
\begin{subfigure}[b]{.49\linewidth}
\includegraphics[width=\linewidth]{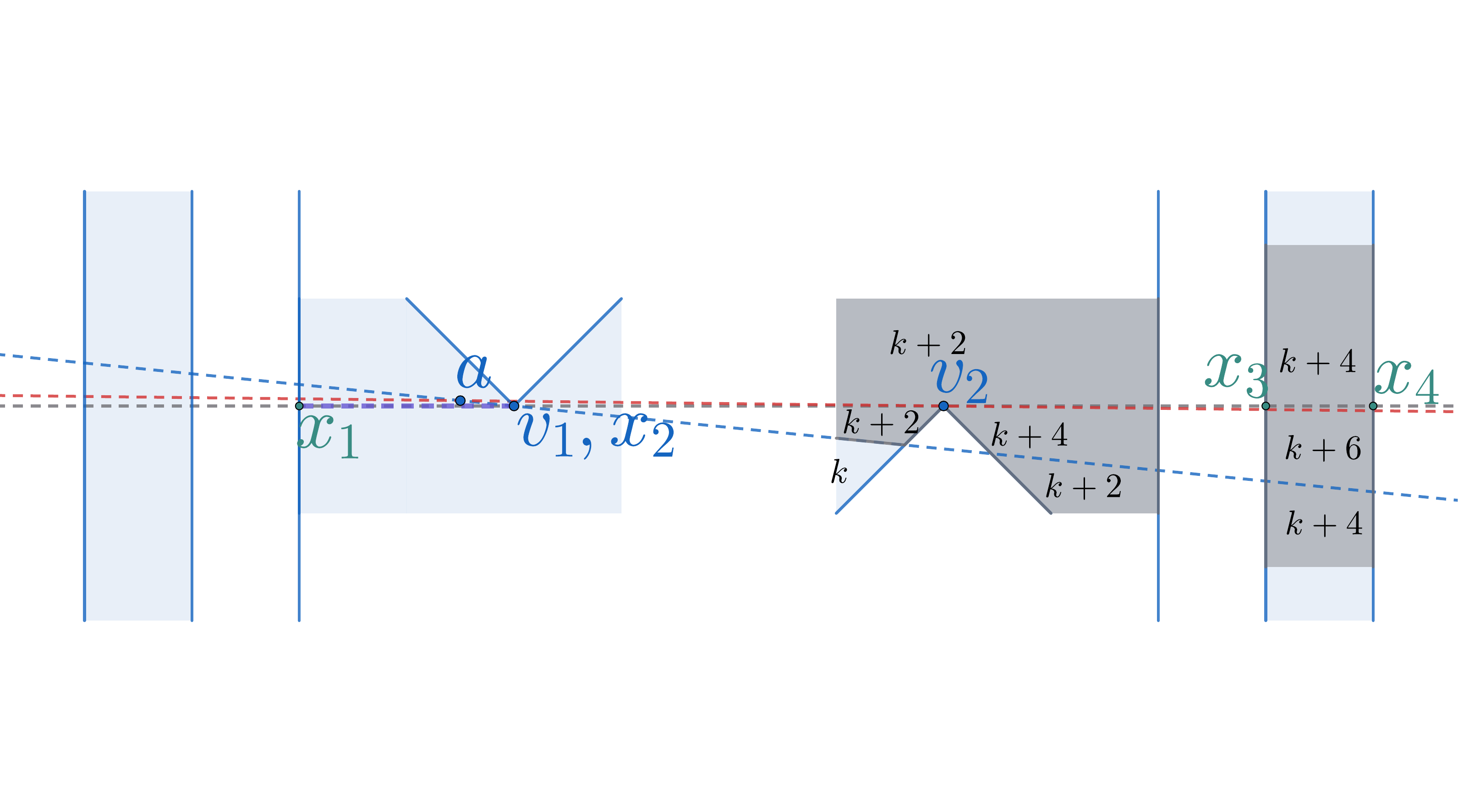}
\caption{Above $\ell_{g}$}\label{fig:RRO-genericA0}
\end{subfigure}
\begin{subfigure}[b]{.49\linewidth}
\includegraphics[width=\linewidth]{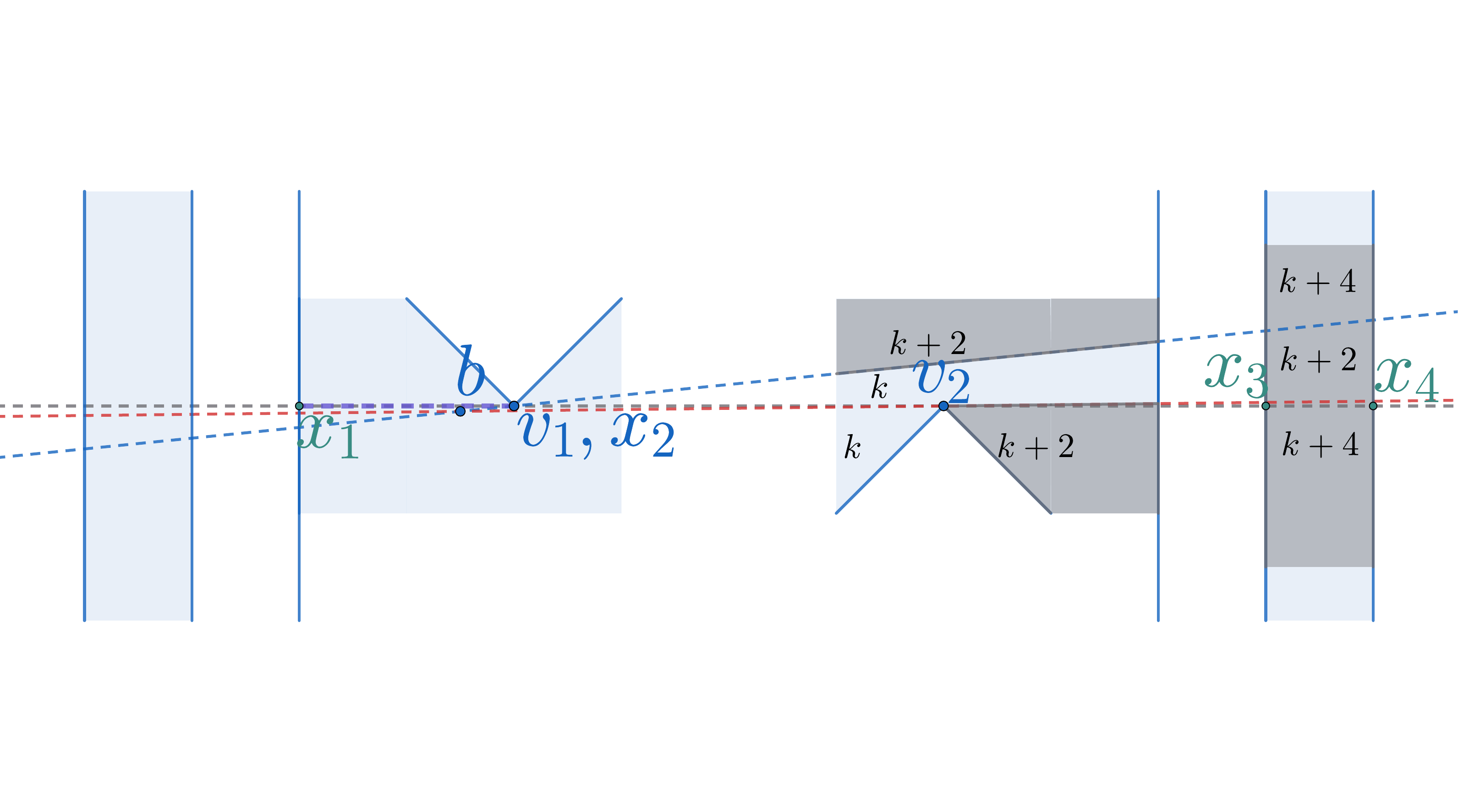}
\caption{Below $\ell_{g}$}\label{fig:RRO-genericB0}
\end{subfigure}

\caption{RRO; $Z = k$ (Merge/Split), $W = k + 2$.}
\label{fig:RRO-generic-0}
\end{figure}

 \begin{figure}[H]
\centering
\begin{subfigure}[b]{.49\linewidth}
\includegraphics[width=\linewidth]{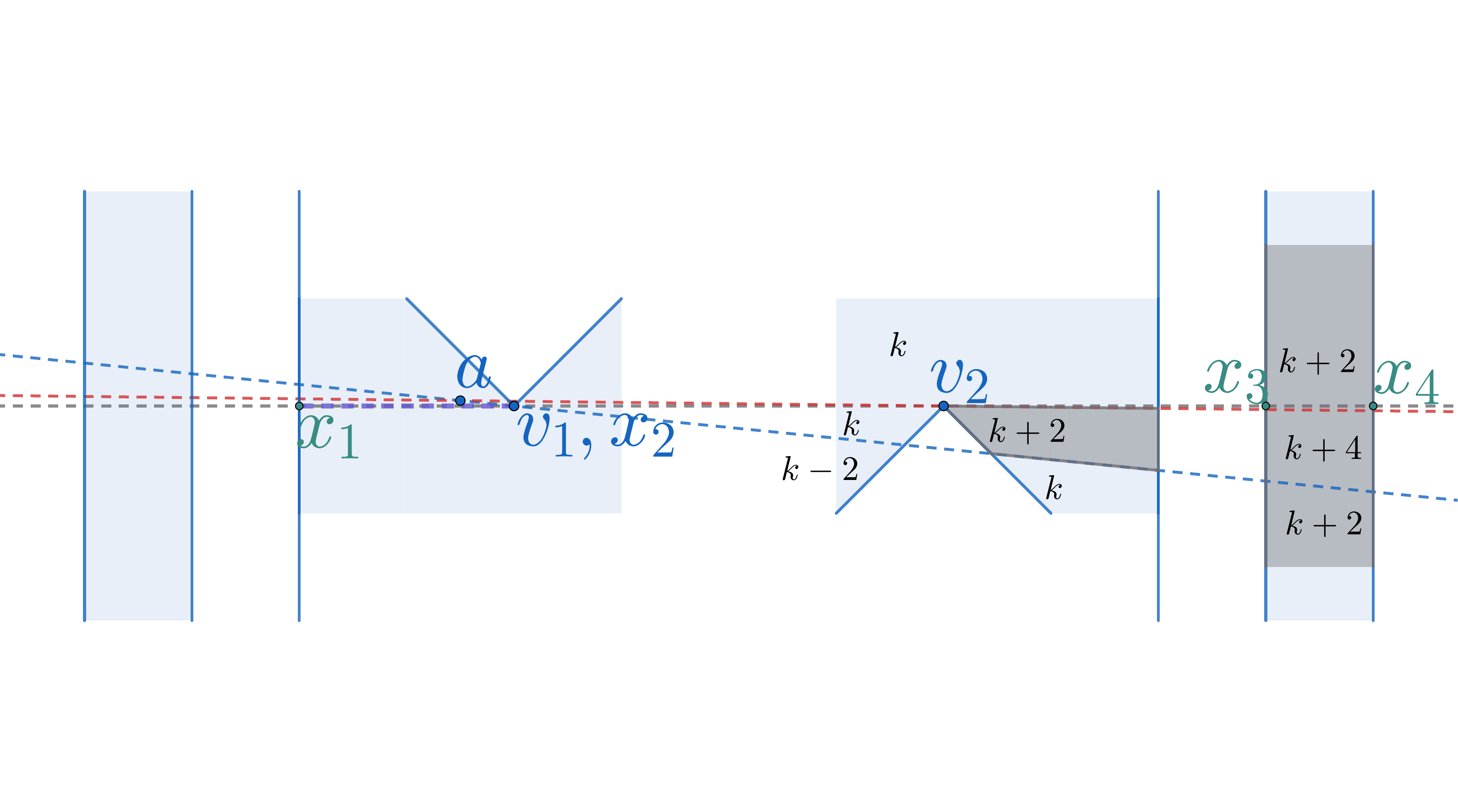}
\caption{Above $\ell_{g}$}\label{fig:RRO-genericA2}
\end{subfigure}
\begin{subfigure}[b]{.49\linewidth}
\includegraphics[width=\linewidth]{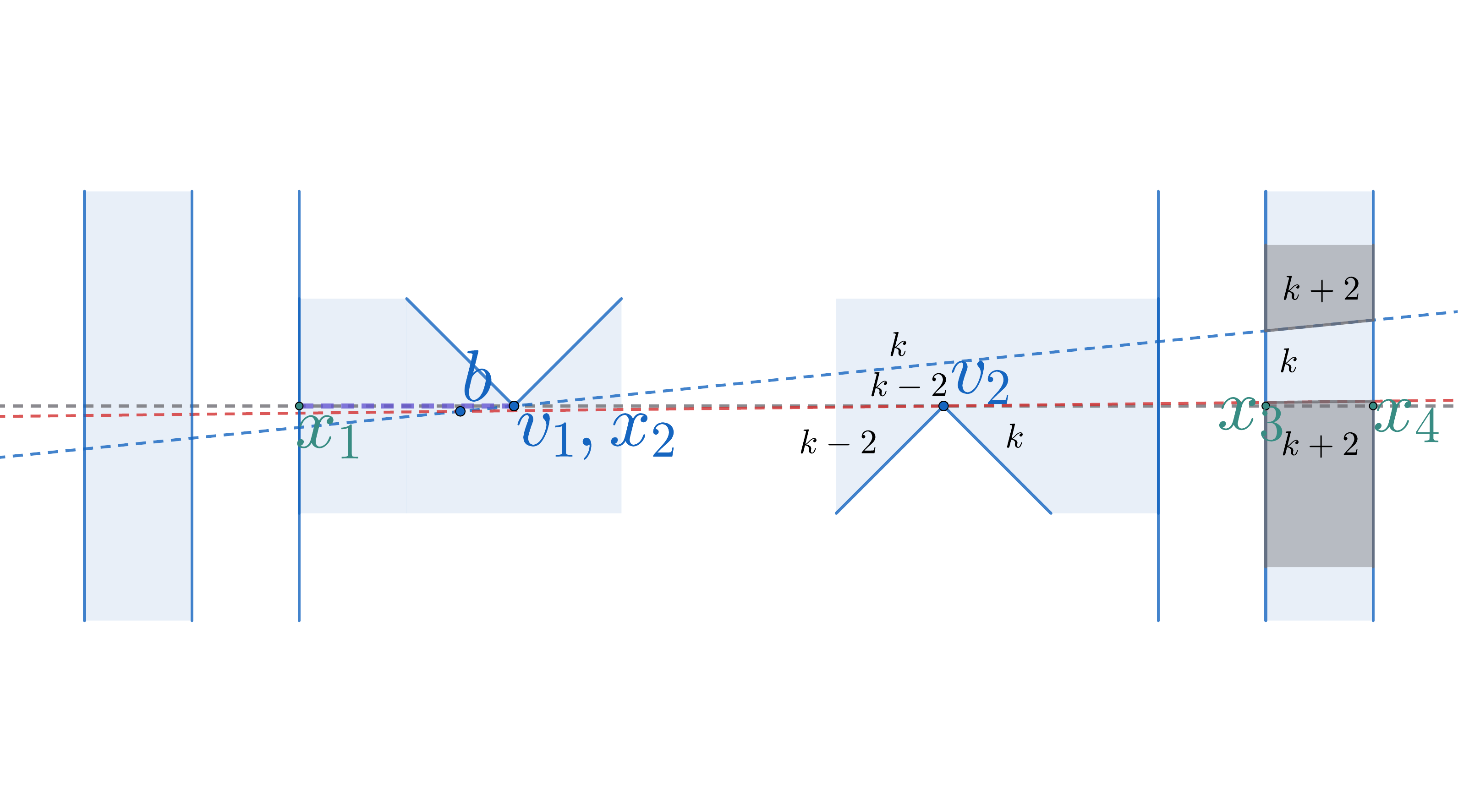}
\caption{Below $\ell_{g}$}\label{fig:RRO-genericB2}
\end{subfigure}

\caption{RRO; $Z = k - 2$ (Appear/Disappear), $W = k$ (Merge/Split).}
\label{fig:RRO-generic-2}
\end{figure}

 \begin{figure}[H]
\centering
\begin{subfigure}[b]{.49\linewidth}
\includegraphics[width=\linewidth]{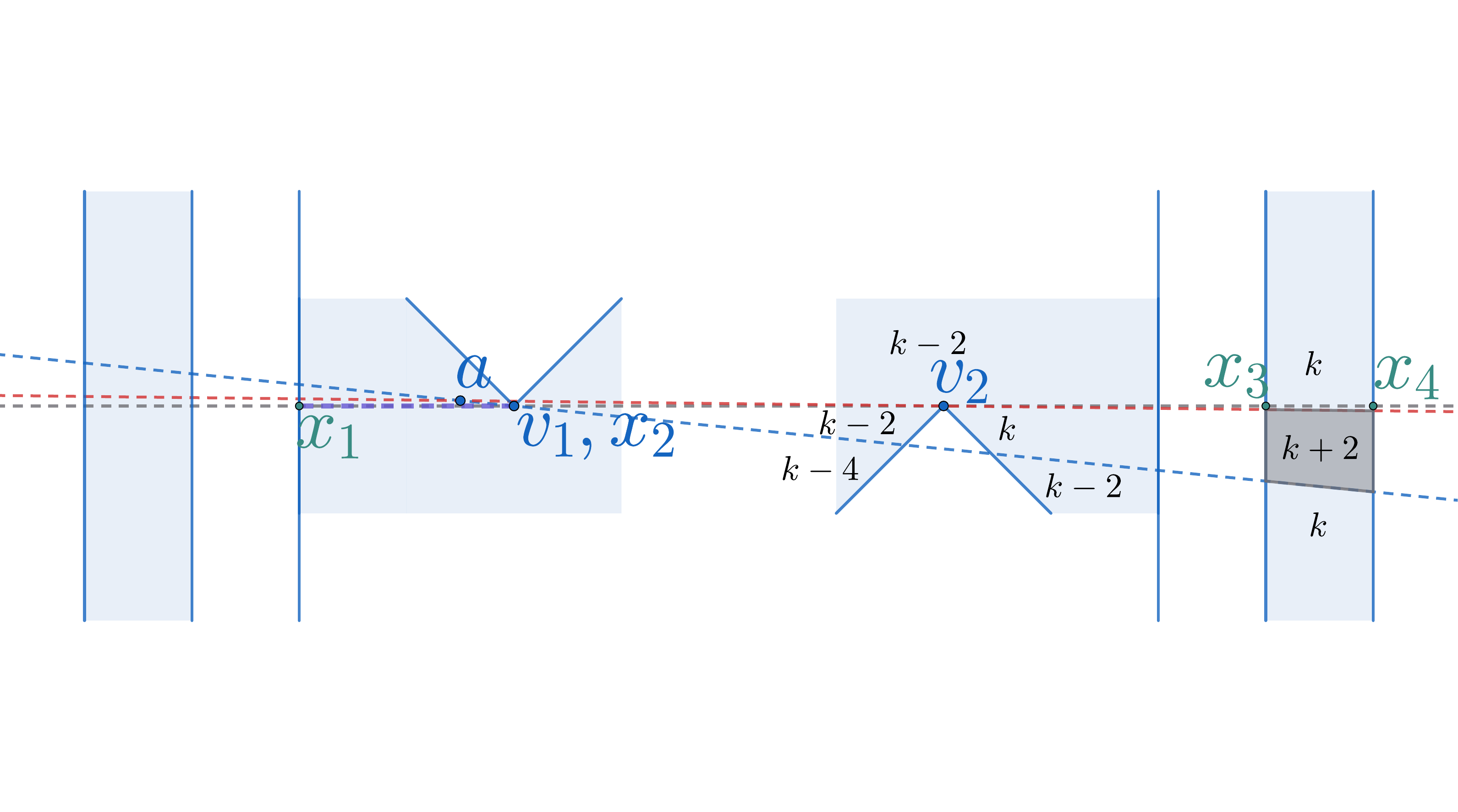}
\caption{Above $\ell_{g}$}\label{fig:RRO-genericA4}
\end{subfigure}
\begin{subfigure}[b]{.49\linewidth}
\includegraphics[width=\linewidth]{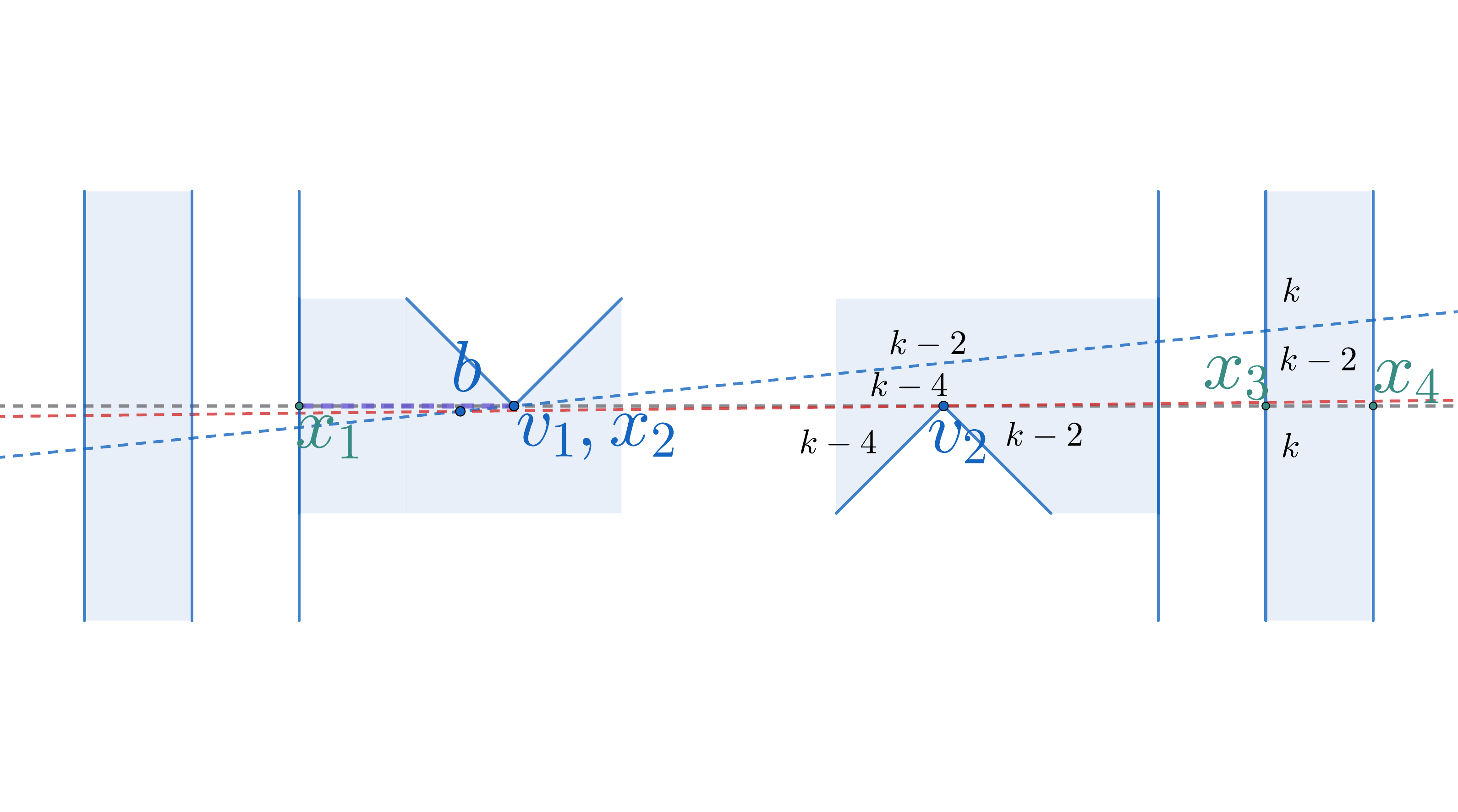}
\caption{Below $\ell_{g}$}\label{fig:RRO-genericB4}
\end{subfigure}

\caption{RRO; $Z = k - 4$, $W = k - 2$ (Appear/Disappear).}
\label{fig:RRO-generic-4}
\end{figure}

 \begin{figure}[H]
\centering
\begin{subfigure}[b]{.49\linewidth}
\includegraphics[width=\linewidth]{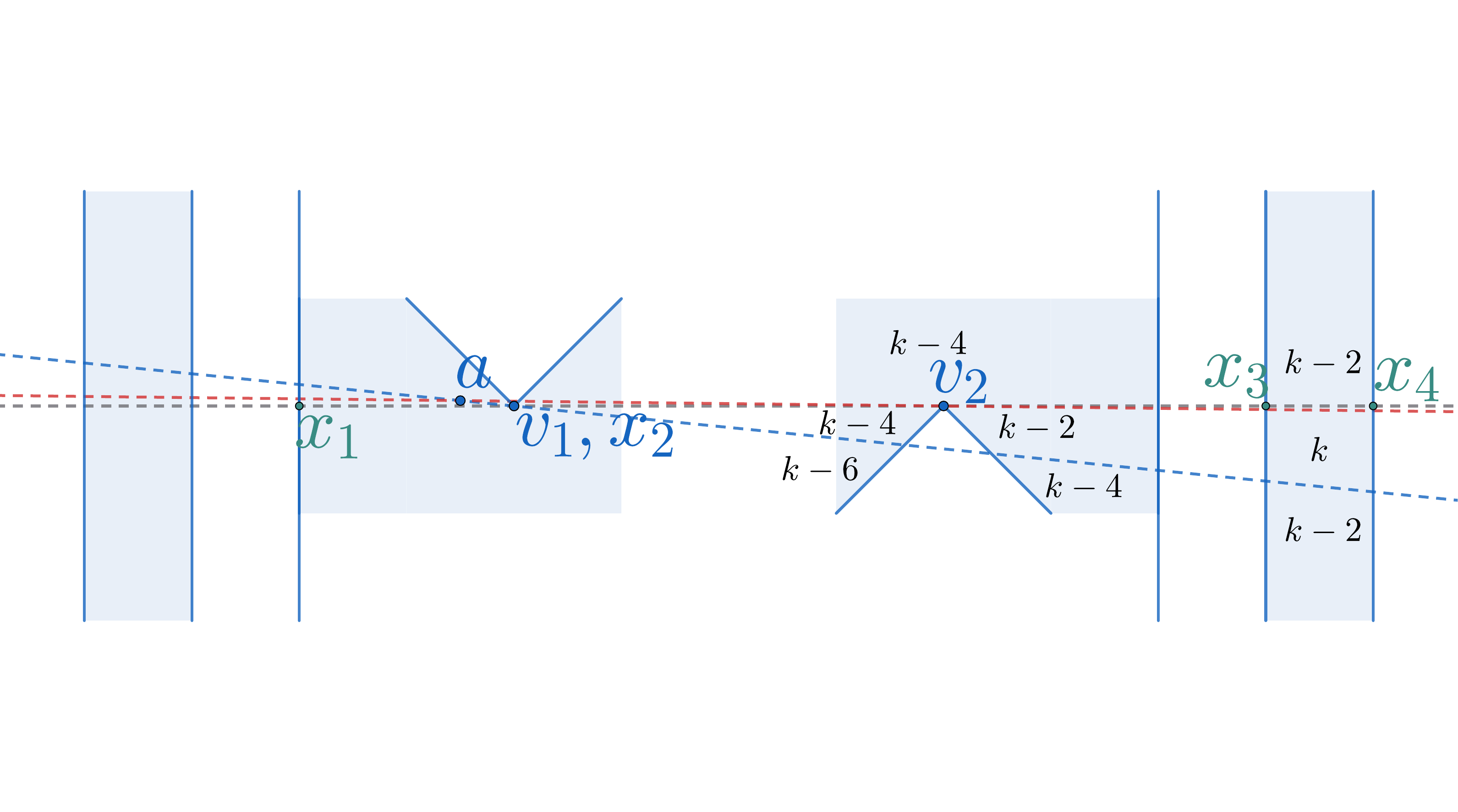}
\caption{Above $\ell_{g}$}\label{fig:RRO-genericA6}
\end{subfigure}
\begin{subfigure}[b]{.49\linewidth}
\includegraphics[width=\linewidth]{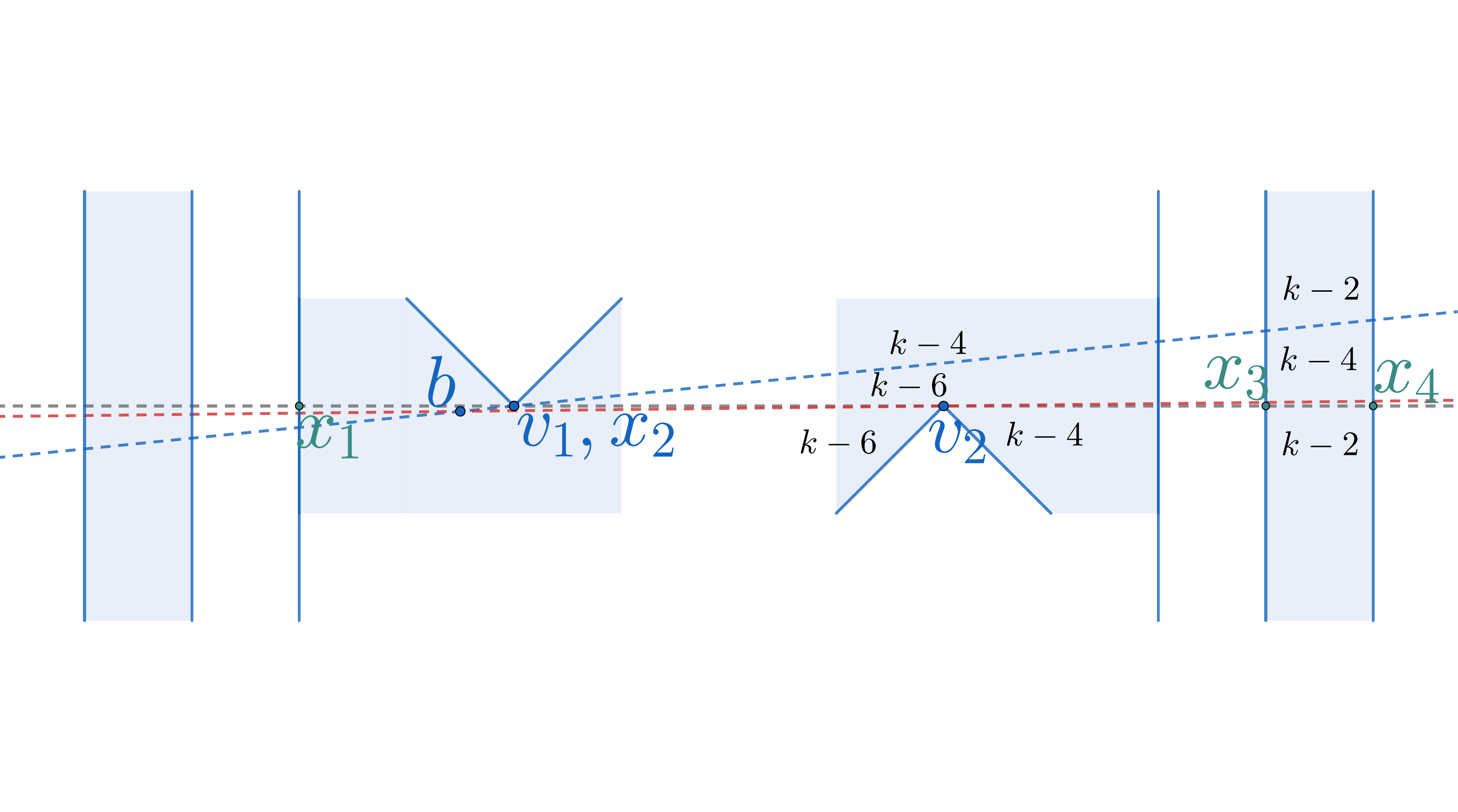}
\caption{Below $\ell_{g}$}\label{fig:RRO-genericB6}
\end{subfigure}

\caption{RRO; $Z = k - 6$, $W = k - 4$.}
\label{fig:RRO-generic-6}
\end{figure}

 \begin{figure}[H]
\centering
\begin{subfigure}[b]{.49\linewidth}
\includegraphics[width=\linewidth]{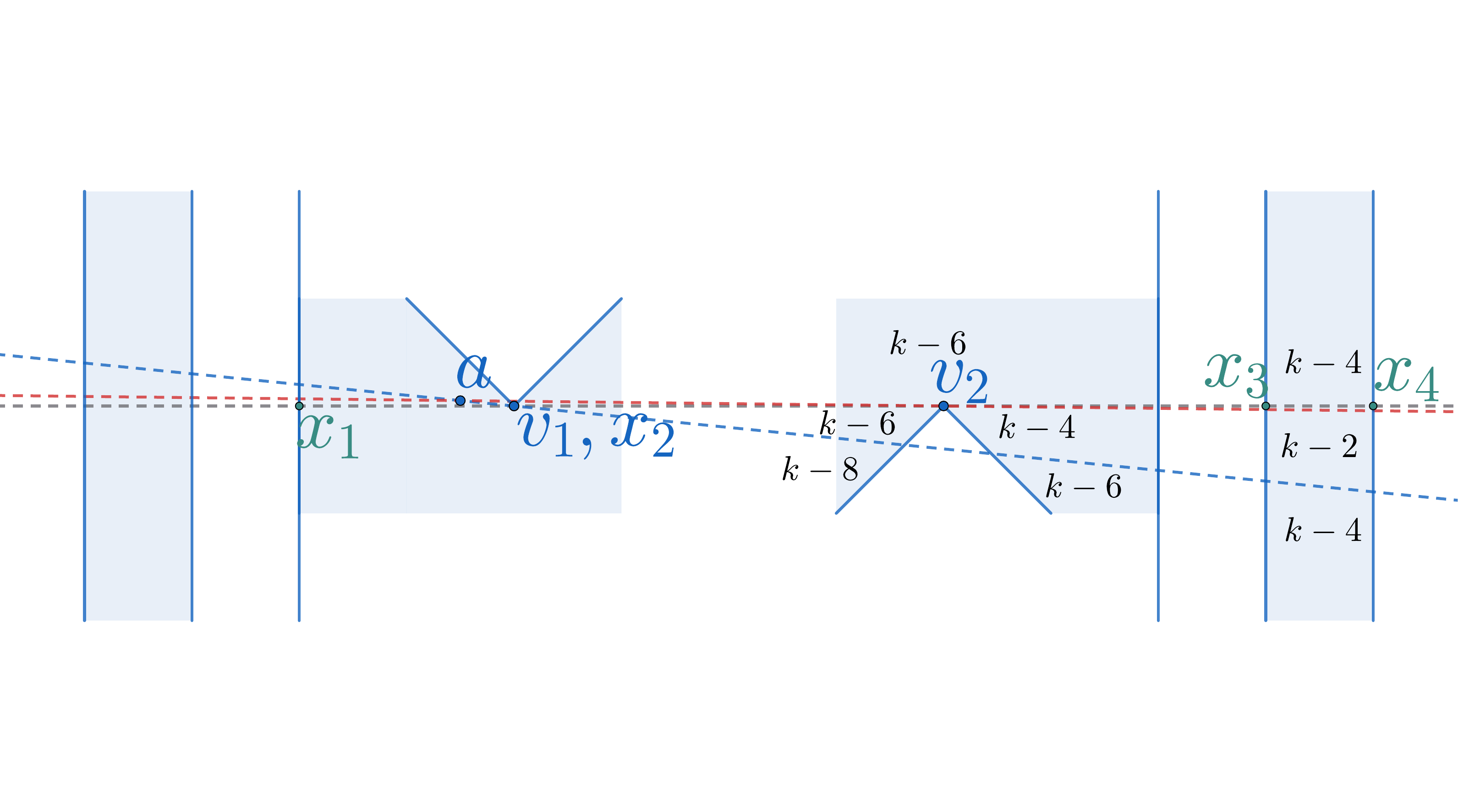}
\caption{Above $\ell_{g}$}\label{fig:RRO-genericA8}
\end{subfigure}
\begin{subfigure}[b]{.49\linewidth}
\includegraphics[width=\linewidth]{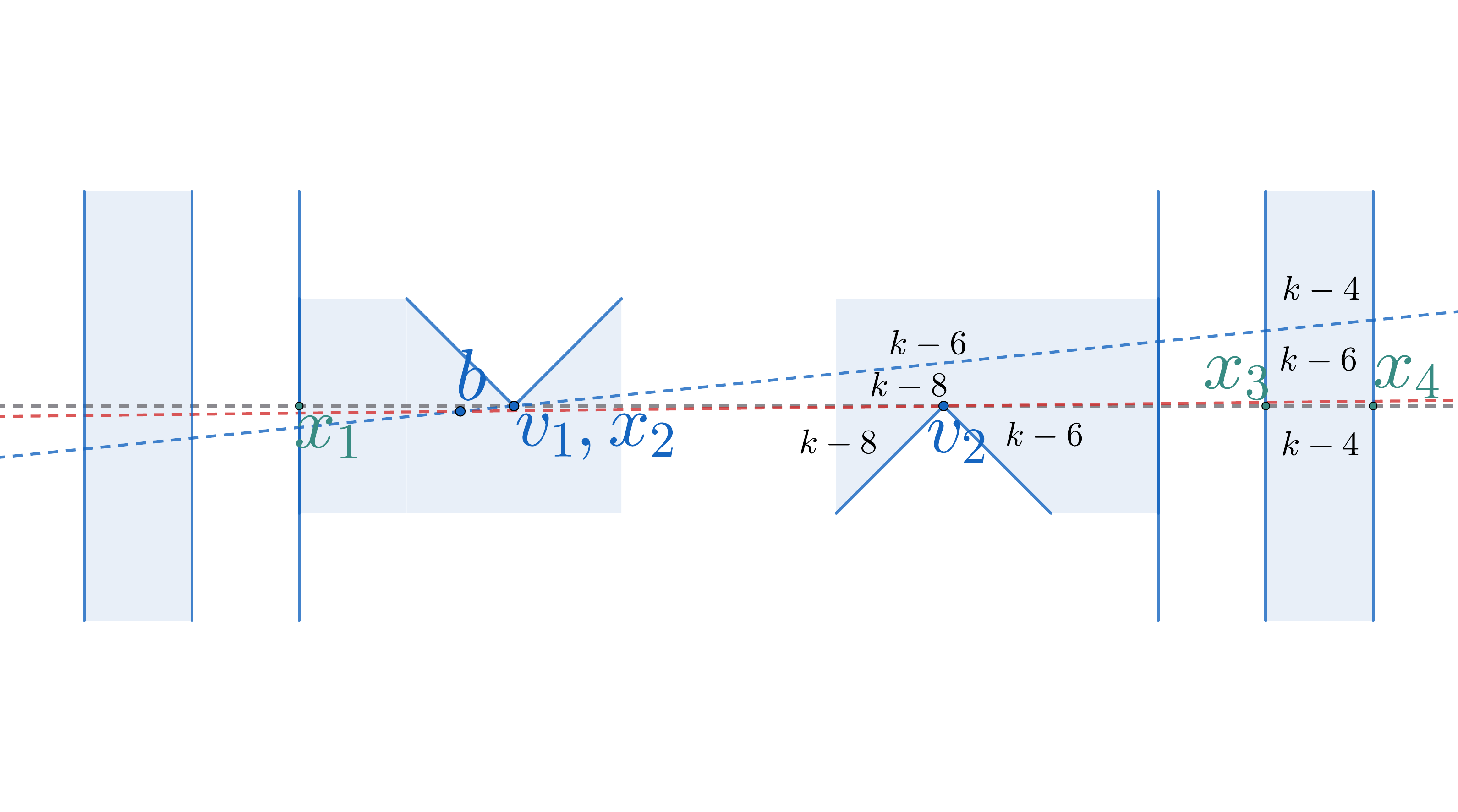}
\caption{Below $\ell_{g}$}\label{fig:RRO-genericB8}
\end{subfigure}

\caption{RRO; $Z = k - 8$, $W = k - 6$.}
\label{fig:RRO-generic-8}
\end{figure}

\section{RC-SC}
\label{appendix:RC-SC}
\subsection{Reflex Convex Special Case}
\begin{lemma} 
\label{lemma:RC-SC}
 When $Z=k$, there is an appear/disappear event at $v_{2}$. When $W = k-1$, an appear/disappear event occurs at $x_{3}x_{4}$. No event occurs for any other $Z$ or $W$. 

\end{lemma}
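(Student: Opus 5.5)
We follow the same template as the proof of Lemma~\ref{lemma:CCO-main}: fix the adjacent pair $v_1$ (reflex) and $v_2$ (convex), with $v_1$ to the left of $v_2$ and the shared edge $v_1v_2$ lying along $\ell_g$. We then compare the obstruction counts in the angular regions cut out by $\ell_b$ and $\ell_r$ for an agent at $a$ (above $\ell_g$) and at $b$ (below $\ell_g$). The difference from the non-adjacent cases is that only three edges meet $v_1$ and $v_2$: the edge $e_1$ of $v_1$ not on $\ell_g$, the common edge $v_1v_2$, and the edge $e_2$ of $v_2$ not on $\ell_g$. Since $v_1v_2$ is collinear with $\ell_g$, it is counted on only one side of the crossing, and this shifts every threshold by one relative to the CRO/RCO analysis. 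We take the walls strictly between $x_1$ and $v_2$, other than $v_1v_2$, as contributing $Z$, as the remark in the statement allows.

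\textbf{Step 1 (events at $v_2$).} Let the agent be at $a$ and consider a point just past $v_2$ in the wedge between $\ell_b$ and $\ell_r$. We tabulate how many walls a ray from the agent crosses: the $Z$ walls along $\ell_g$, possibly $e_1$, and possibly $e_2$. We then repeat the count at $b$, where the wedge has flipped. Because $v_1$ is reflex and $v_2$ is convex, at $b$ the wedge region behind $v_2$ gains exactly one extra crossing ($e_1$ or $v_1v_2$) that it did not have at $a$, or the reverse. The local region near $v_2$ therefore changes visibility status precisely when the count at one side equals $k$ and at the other equals $k+1$. We will check that this happens exactly when $Z=k$. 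We then argue it is an isolated pocket bounded by $e_2$, $\ell_r$, and $\ell_b$, so that the change is an appear/disappear rather than a merge/split. The reason is that the pocket is separated from neighbouring shadow by regions whose counts do not change.

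\textbf{Step 2 (events at $x_3x_4$).} We do the same tabulation for points just beyond $x_3x_4$. There the count adds the $W$ walls and both incident edges, which can be crossed by rays near $\ell_g$. The analogous one-step jump occurs at $W=k-1$, giving an appear/disappear event in a thin wedge at $x_3x_4$. We will then verify that the configurations that would produce a merge/split, namely two shadows separated by a thin visible strip, cannot arise. In the non-adjacent cases such a strip is created by the gap between $v_1$ and $v_2$, and that gap is absent here.

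\textbf{Step 3 (no other events).} We run through the remaining parities and offsets of $Z$ and $W$ as in the figures, showing that the counts on both sides of $\ell_g$ stay on the same side of the threshold $k$. For large $Z$ (say $Z\ge k+2$) the neighbourhood of $v_2$ is entirely shadowed, and for small $Z$ it is entirely visible; the same holds for $W$. The invariance under replacing $(k,Z,W)$ by $(k+2,Z+2,W+2)$ reduces the check to finitely many configurations.

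\textbf{Main obstacle.} The hardest part is Step 1's bookkeeping of whether the collinear edge $v_1v_2$ is counted as crossed by rays from $a$ versus from $b$. This determines the $\pm1$ shift from $k-1$ to $k$ and separates an appear event from a merge. We would settle it by perturbing the agent infinitesimally off $\ell_g$ and counting crossings explicitly.
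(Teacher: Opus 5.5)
\textbf{Step 1's count is wrong, and the statement as written needs a parity restriction.} Your template is the paper's: its proof is just the figure enumeration at $Z=k,k-2,k-4$ and $W=k+1,k-1,k-3$, plus the extreme ranges. But the counting claim that carries Step~1 is false, and it is exactly what produces the single threshold $Z=k$.

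Put $v_1v_2$ on $\ell_g$ with the interior locally above it. Then $v_1$ reflex forces $e_1$ below $\ell_g$, and $v_2$ convex forces $e_2$ above. So the interior near $v_2$ lies above the shared edge, on the $v_1$ side of $e_2$, not past $v_2$. From $a$, every interior point just above $v_1v_2$ is reached without crossing any incident edge. From $b$, such a point lying below $\ell_b$ is reached by a segment that passes under $v_1$, so it crosses $e_1$ \emph{and} $v_1v_2$: the jump is two, not one. The pocket is therefore the thin triangle bounded by $v_1v_2$, $\ell_b$ and $e_2$; near $v_2$ the ray $\ell_r$ from $b$ runs through the exterior and bounds nothing. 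Its count goes from $Z$ to $Z+2$, so an appear/disappear occurs whenever $Z\le k<Z+2$, i.e.\ for $Z\in\{k-1,k\}$.

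Step~2 has the same slip. Beyond $v_2$, from $a$ the three sectors cut out by $\ell_b,\ell_r$ each cross exactly one incident edge ($e_2$, $v_1v_2$, $e_1$ from top to bottom). From $b$ the middle sector crosses all three. So the strip over $x_3x_4$ goes from $W+1$ to $W+3$, and the event occurs for $W\in\{k-2,k-1\}$. (This does give a clean reason there is no merge/split: from $a$ all sectors carry equal counts and from $b$ the middle one is strictly larger, so only an isolated pocket can appear.)

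\textbf{The missing ingredient is a parity argument, and it only partly rescues the statement.} $Z$ is even, since the agent's chord and the part of $\ell_g$ just left of the reflex vertex $v_1$ are both interior. $W$ is odd, since $\ell_g$ leaves $P$ at $v_2$ and must re-enter to reach $x_3x_4$. For even $k$ this collapses $\{k-1,k\}$ to $Z=k$ and $\{k-2,k-1\}$ to $W=k-1$; that is exactly the parity class in which all of the paper's figures are drawn.

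For odd $k$, however, $Z=k-1$ and $W=k-2$ are admissible and do produce events. Take $k=1$ and $P=[-3,0]\times[-2,2]\cup[0,1]\times[0,2]$, with $v_1=(0,0)$, $v_2=(1,0)$ and the agent near $(-1,0)$, so $Z=0$. From $a$ all of $P$ is directly visible. From $b$ the triangle under $\ell_b$ and above $v_1v_2$ needs two crossings and is a new shadow component.

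Your Step~3 cannot catch this. The shift $(k,Z,W)\mapsto(k+2,Z+2,W+2)$ preserves the parity of $Z-k$, so the odd offsets must be examined separately, and with correct counts they contradict ``no event for any other $Z$ or $W$''. To get a true lemma you must either restrict to even $k$ or state the thresholds as $Z\in\{k-1,k\}$ and $W\in\{k-2,k-1\}$.
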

\begin{proof}
    See Figure~\ref{fig:RC-SpecialCase-0} to Figure~\ref{fig:RC-SpecialCase-4}.

    For $Z \geq k + 2$, $v_{2}$ and its surroundings are entirely in shadow. For $W \geq k + 3$, $x_{3}x_{4}$ and its surroundings are entirely in shadow.

    For $Z \leq k - 6$, $v_{2}$ and its surroundings are entirely visible. For $W \leq k - 5$, $x_{3}x_{4}$ and its surroundings are entirely visible. 
\end{proof}

 \begin{figure}[H]
\centering
\begin{subfigure}[b]{.49\linewidth}
\includegraphics[width=\linewidth]{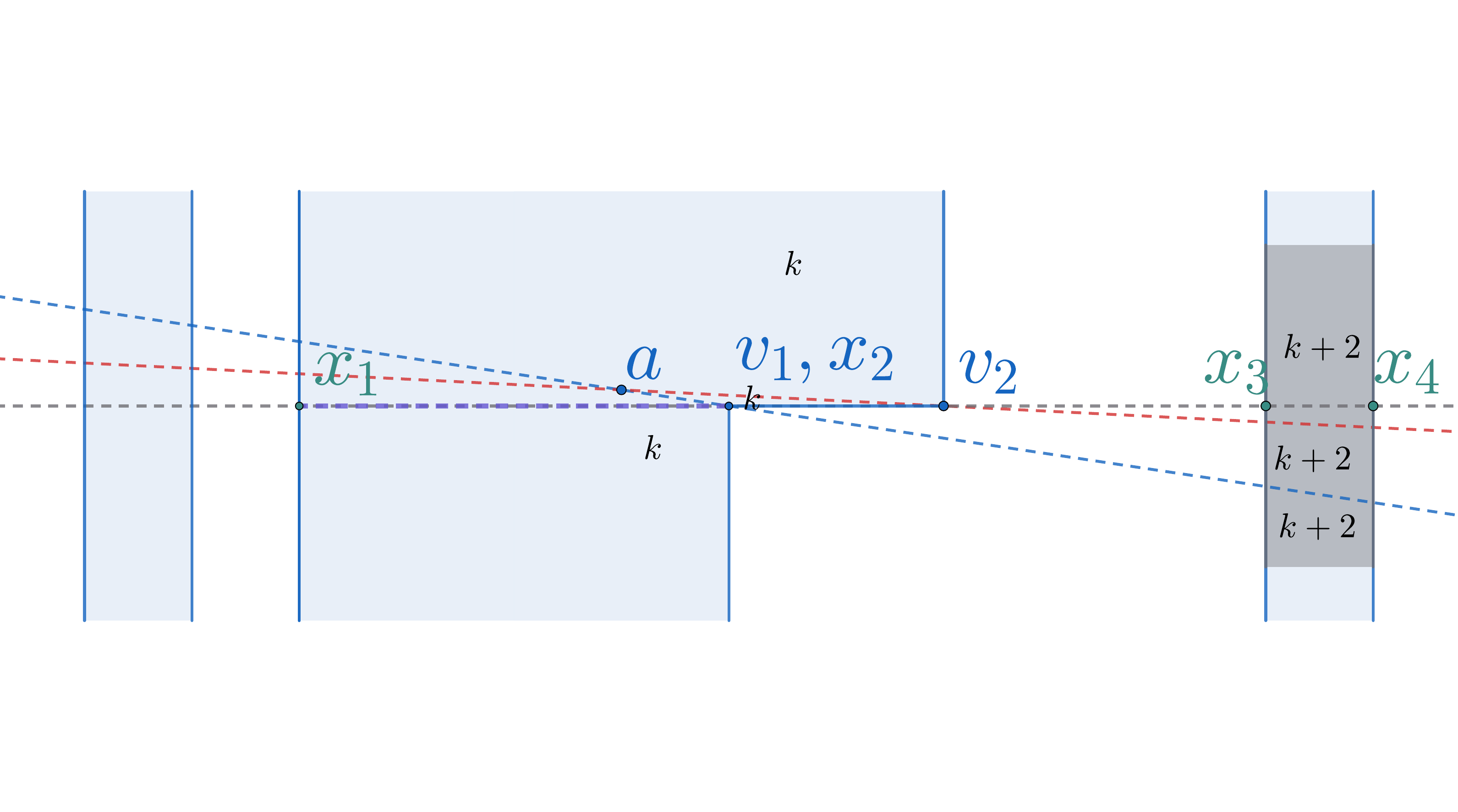}
\caption{Above $\ell_{g}$}\label{fig:RC-SpecialCase-A0}
\end{subfigure}
\begin{subfigure}[b]{.49\linewidth}
\includegraphics[width=\linewidth]{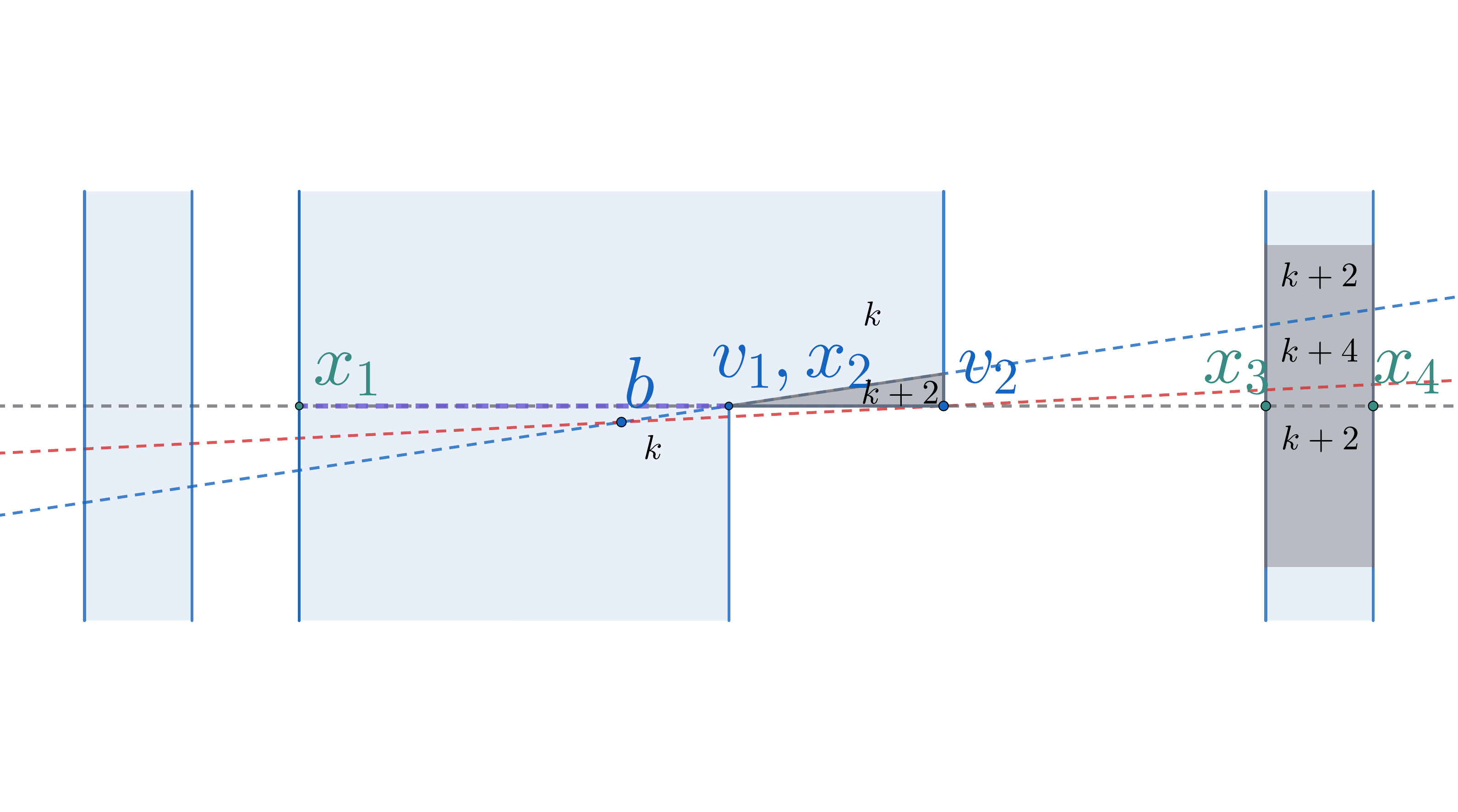}
\caption{Below $\ell_{g}$}\label{fig:RC-SpecialCase-B0}
\end{subfigure}

\caption{RC-SC; $Z = k$ (Disappear/Appear), $W = k + 1$.}
\label{fig:RC-SpecialCase-0}
\end{figure}

 \begin{figure}[H]
\centering
\begin{subfigure}[b]{.49\linewidth}
\includegraphics[width=\linewidth]{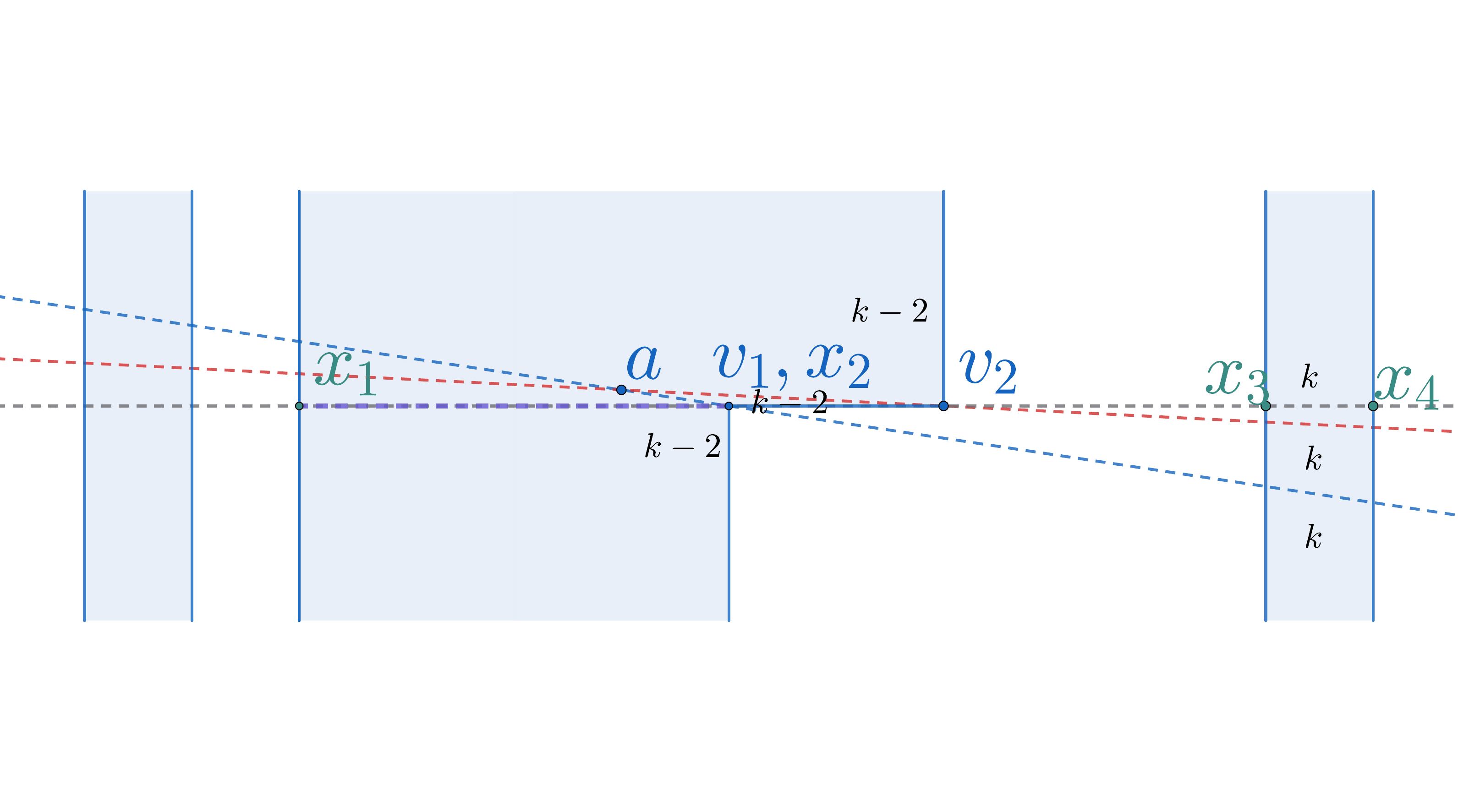}
\caption{Above $\ell_{g}$}\label{fig:RC-SpecialCase-A2}
\end{subfigure}
\begin{subfigure}[b]{.49\linewidth}
\includegraphics[width=\linewidth]{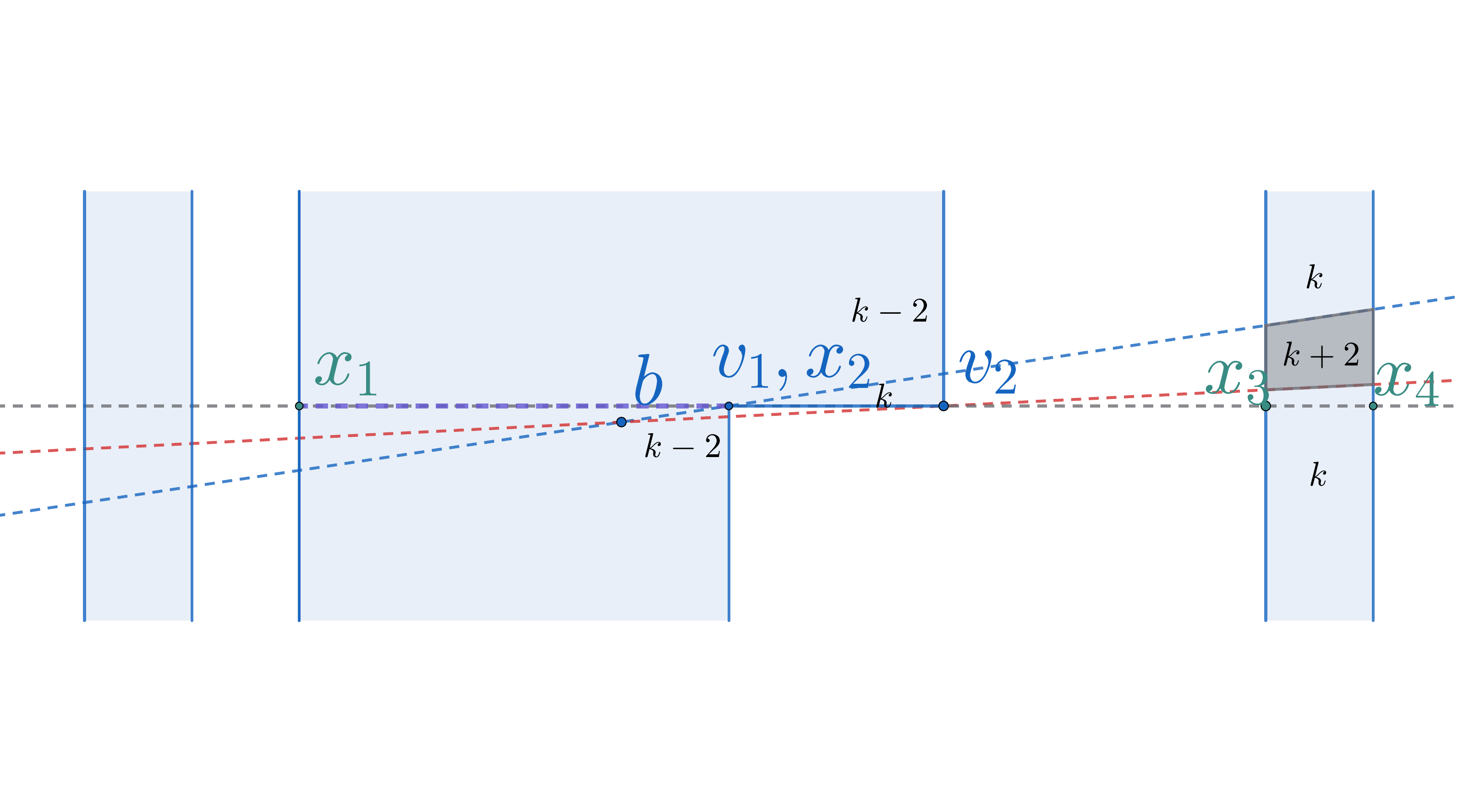}
\caption{Below $\ell_{g}$}\label{fig:RC-SpecialCase-B2}
\end{subfigure}

\caption{RC-SC; $Z = k - 2$, $W = k - 1$ (Disappear/Appear).}
\label{fig:RC-SpecialCase-2}
\end{figure}

  \begin{figure}[H]
\centering
\begin{subfigure}[b]{.49\linewidth}
\includegraphics[width=\linewidth]{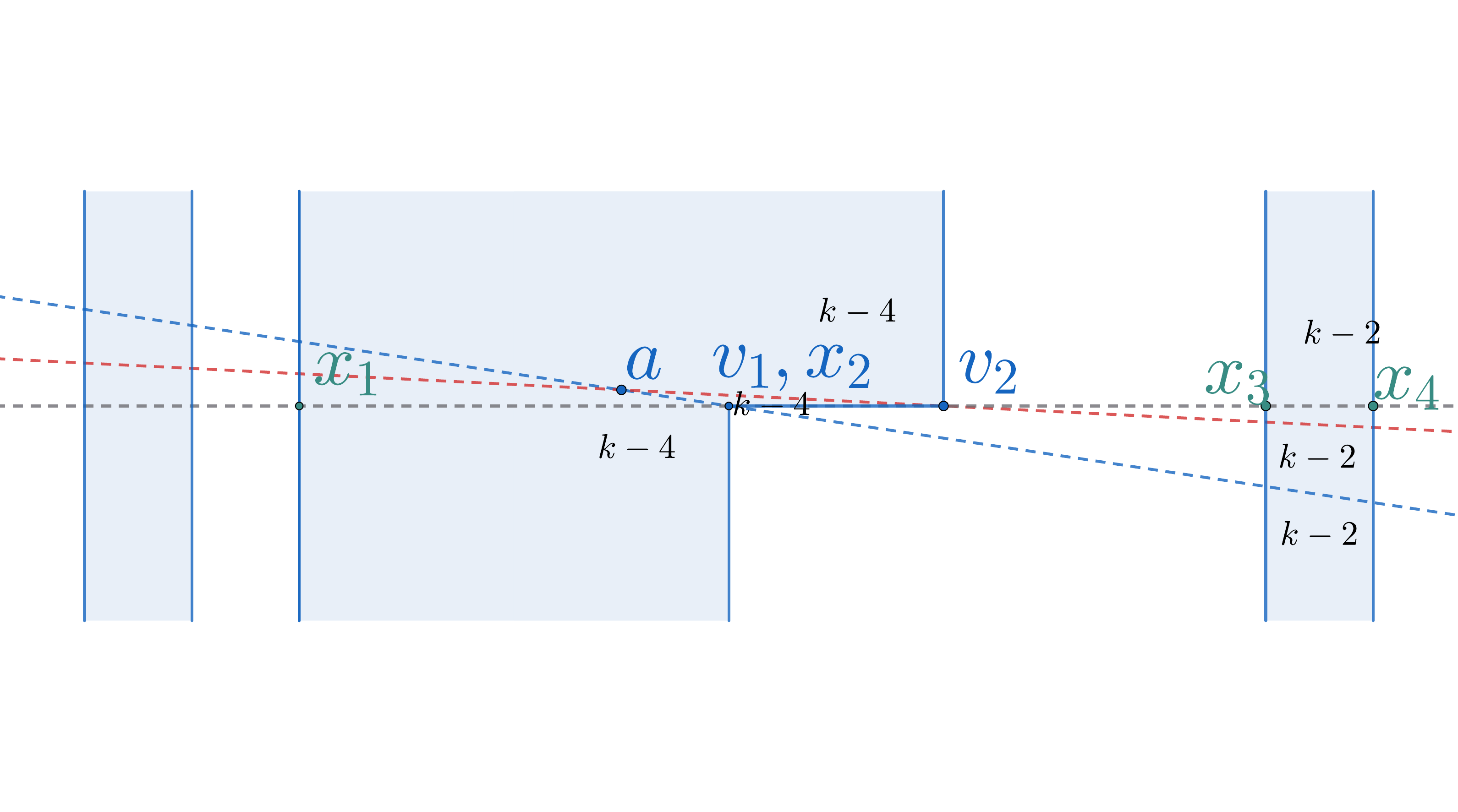}
\caption{Above $\ell_{g}$}\label{fig:RC-SpecialCase-A4}
\end{subfigure}
\begin{subfigure}[b]{.49\linewidth}
\includegraphics[width=\linewidth]{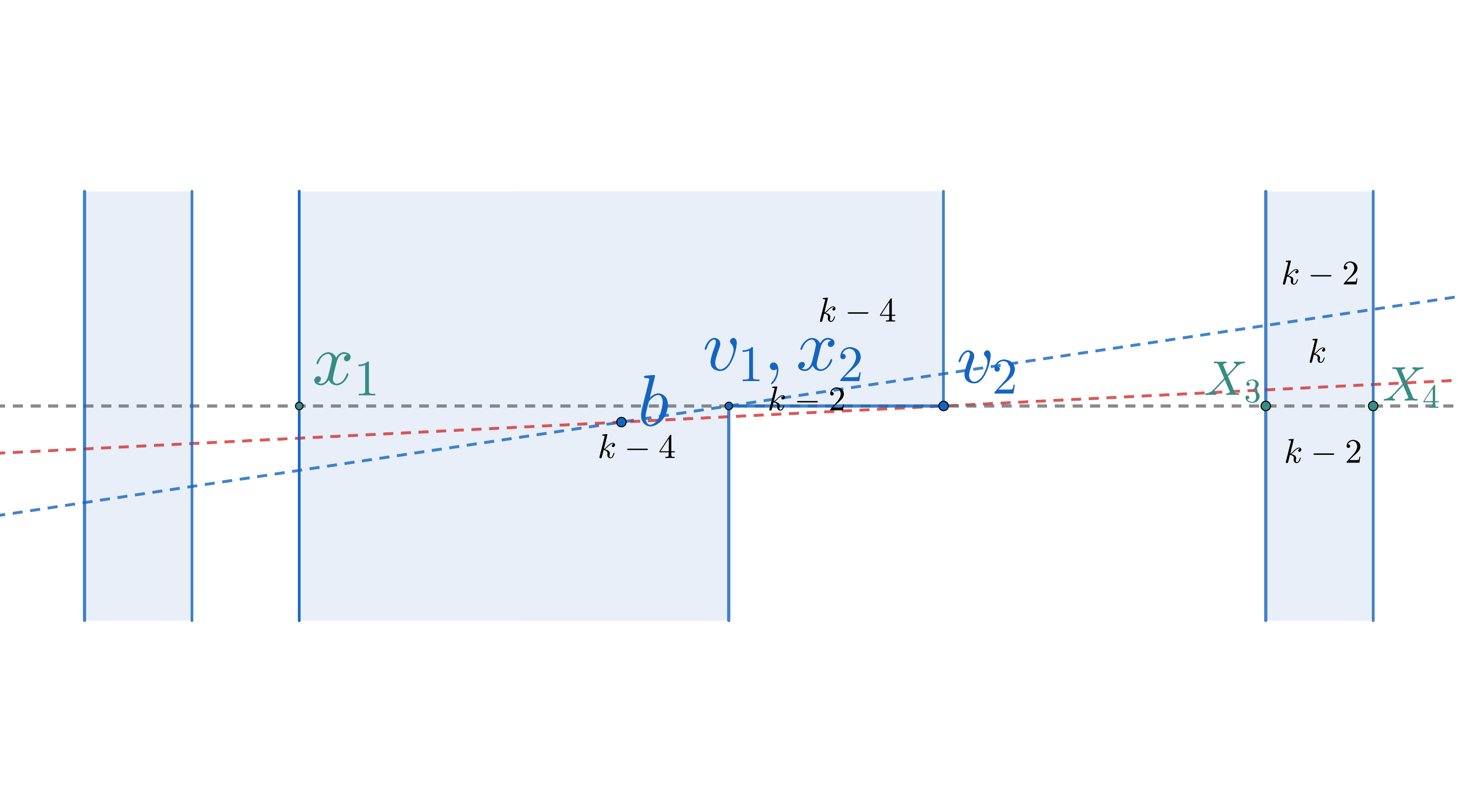}
\caption{Below $\ell_{g}$}\label{fig:RC-SpecialCase-B4}
\end{subfigure}

\caption{RC-SC; $Z = k - 4$, $W = k - 3$.}
\label{fig:RC-SpecialCase-4}
\end{figure}

\section{RR-SC}
\label{appendix:RR-SC}
\subsection{Reflex Reflex Special Case}
\begin{lemma} 
\label{lemma:RR-SC}
  No event occurs for any $Z$ or $W$. 

\end{lemma}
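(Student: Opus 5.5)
The plan is to follow the same obstruction-count bookkeeping used for CCO (Lemma~\ref{lemma:CCO-main}) and for RC-SC (Lemma~\ref{lemma:RC-SC}). The configuration here is two adjacent reflex vertices $v_1,v_2$ that are mutually critical. Since both are reflex and share the edge $v_1v_2$, that edge lies on $\ell_g$ itself. The remaining edge of each vertex lies on the same side of $\ell_g$, because both vertices are critical and the polygon interior is locally on the reflex side.

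I will place the agent at $a$ (just above $\ell_g$) and at $b$ (just below), and compare four quantities: the obstruction counts on the two sides of $\ell_r$ near $v_2$, and the counts on the two sides of $\ell_b,\ell_r$ near $x_3x_4$.

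\textbf{Step 1: counts near $v_2$.} Near $v_2$, the only edges that can change the count when the agent crosses $x_1x_2$ are those incident to $v_1$ and $v_2$. Because $v_1v_2$ is a common edge collinear with $\ell_g$, a ray from the agent that crosses it near $v_2$ also crosses it near $v_1$. So moving from $a$ to $b$ only shifts which wedge (between $\ell_b$ and $\ell_r$) the extra edge of $v_1$ contributes to. I will check that every region locally incident to $v_2$ in $P$ has the same count, $Z$ plus a constant, both at $a$ and at $b$. Then for every value of $Z$ the local shadow pattern at $v_2$ is unchanged. This reduces to the three representative figures $Z=k$, $Z=k-2$, $Z=k-4$. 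For $Z\ge k+2$ the region is fully shadowed and for $Z\le k-6$ it is fully visible, exactly as in the RC-SC bounds.

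\textbf{Step 2: counts beyond, at $x_3x_4$.} Here the thin wedge between $\ell_b$ and $\ell_r$ meets the segment $x_3x_4$. In the convex cases, this wedge gains or loses edges when it flips sides, and that is what created the appear/merge events. In the RR-SC configuration, a ray inside the wedge crosses the non-shared edge of $v_1$ on one side and the non-shared edge of $v_2$ on the other. I will argue that the total count inside the wedge at $a$ equals the count at $b$ (each contributes exactly one crossing), and that it equals one of the neighbouring counts outside the wedge. Consequently, for each value of $W$, no new component is created in the wedge and no two components are joined or separated across it.

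\textbf{Step 3: combine and handle the hard part.} Combining Steps 1 and 2 with the $Z,W$-translation invariance stated before Lemma~\ref{lemma:CCO-main} covers all integer values. The main obstacle is Step 2: verifying that the wedge count matches an adjacent region on both sides of $\ell_g$, rather than differing by $2$. A difference of $2$ is the signature of the appear/disappear events in the convex cases. I would first attempt a parity argument: the two reflex edges lie on the same side, so the wedge swaps one crossing for another. If that fails, I would fall back on explicit figures for $W=k+1$, $W=k-1$, and $W=k-3$.
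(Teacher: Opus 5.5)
Your route is the paper's own. Its proof of this lemma consists only of reading off the obstruction counts at $a$ and $b$ in three RR-SC figures ($Z=k,k-2,k-4$, paired with $W=k+2,k,k-2$), plus saturation bounds, and that is what your fallback does. However, the figure-free bookkeeping you propose in Steps~1 and~2 rests on two local counts that are wrong as written, so those steps fail unless repaired.

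Let $I$ be the side of $\ell_g$ on which $P$ lies along the edge $v_1v_2$. Since both vertices are reflex, both non-shared edges leave $\ell_g$ on the other side, and together with $v_1v_2$ they bound a thin exterior pocket.

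\emph{Step 2.} The rays from the agent strictly between $\ell_b$ and $\ell_r$ are exactly those that hit the segment $v_1v_2$. That segment is an edge, so every wedge ray crosses $v_1v_2$ \emph{and} one non-shared edge: that of $v_2$ if the agent is on side $I$, that of $v_1$ otherwise. That is two local crossings, not one. Rays passing both vertices on side $I$ cross no local edge, and rays passing both on the other side cross both non-shared edges. So from either side of $\ell_g$, the counts beyond $v_2$, read from side $I$ outward, are $c$, $c+2$, $c+2$. The wedge always equals its outer neighbour, and the single boundary between $c$ and $c+2$ merely moves from $\ell_r$ to $\ell_b$; hence there is no event at $x_3x_4$ for any $W$. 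With your count of a single crossing, the wedge would lie strictly between its neighbours, and the equality you plan to prove would be false.

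\emph{Step 1.} The counts near $v_2$ are likewise \emph{not} pointwise invariant. The thin region on side $I$ along $v_1v_2$ has count $Z$ when the agent is on side $I$, but $Z+2$ when it is on the other side, because rays to it then pass $v_1$ on the other side and cross $v_1$'s non-shared edge and then $v_1v_2$. What is invariant is that exactly one ray ($\ell_r$ when the agent is on side $I$, $\ell_b$ otherwise) separates a $Z$-region from a connected $(Z+2)$-region. In the second position the sliver is attached, over $v_2$, to the part beyond $v_2$. So when $Z\le k<Z+2$, the shadow at $v_2$ changes shape but remains one component; otherwise the neighbourhood is uniformly lit or dark.

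With these corrections, your parity idea gives a proof that covers all $Z$ and $W$ at once, rather than relying on sampled figures as the paper does.
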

\begin{proof}
    See Figure~\ref{fig:RR-SpecialCase-0} to Figure~\ref{fig:RR-SpecialCase-4}.

    For $Z \geq k + 2$, $v_{2}$ and its surroundings are entirely in shadow. For $W \geq k + 4$, $x_{3}x_{4}$ and its surroundings are entirely in shadow.

    For $Z \leq k - 6$, $v_{2}$ and its surroundings are entirely visible. For $W \leq k - 4$, $x_{3}x_{4}$ and its surroundings are entirely visible.
\end{proof}

 \begin{figure}[H]
\centering
\begin{subfigure}[b]{.49\linewidth}
\includegraphics[width=\linewidth]{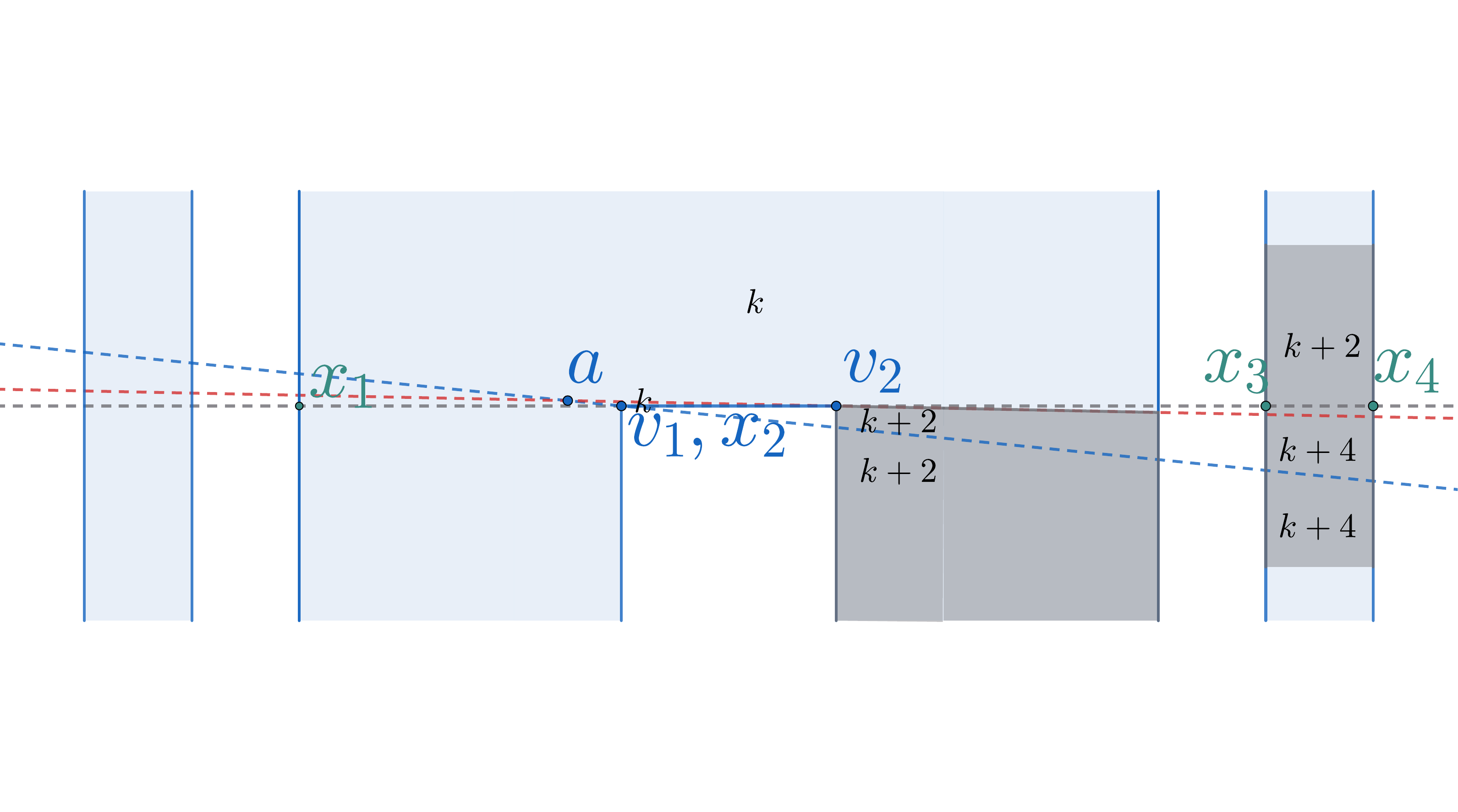}
\caption{Above $\ell_{g}$}\label{fig:RR-SpecialCase-A0}
\end{subfigure}
\begin{subfigure}[b]{.49\linewidth}
\includegraphics[width=\linewidth]{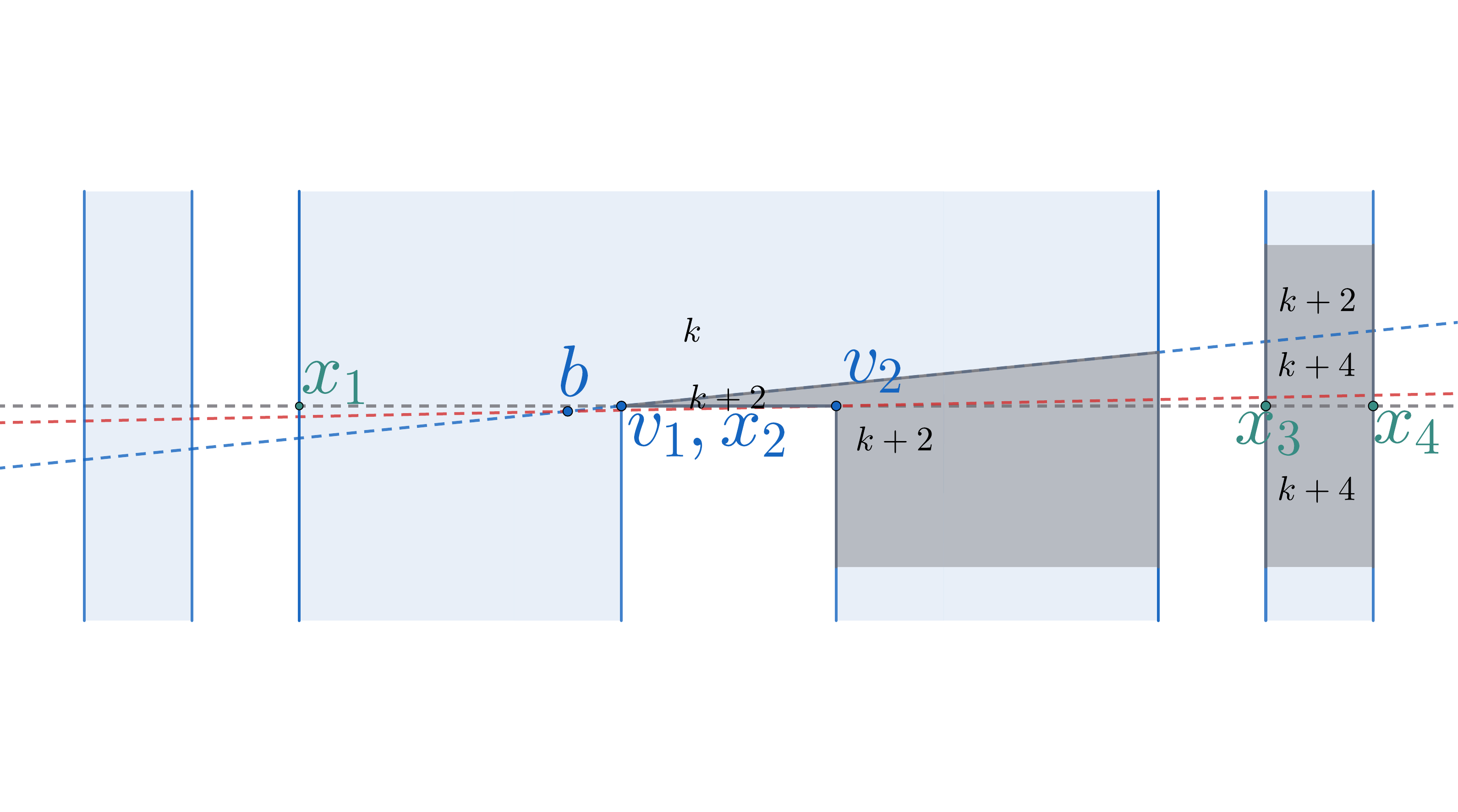}
\caption{Below $\ell_{g}$}\label{fig:RR-SpecialCase-B0}
\end{subfigure}

\caption{RR-SC; $Z = k$, $W = k + 2$.}
\label{fig:RR-SpecialCase-0}
\end{figure}

 \begin{figure}[H]
\centering
\begin{subfigure}[b]{.49\linewidth}
\includegraphics[width=\linewidth]{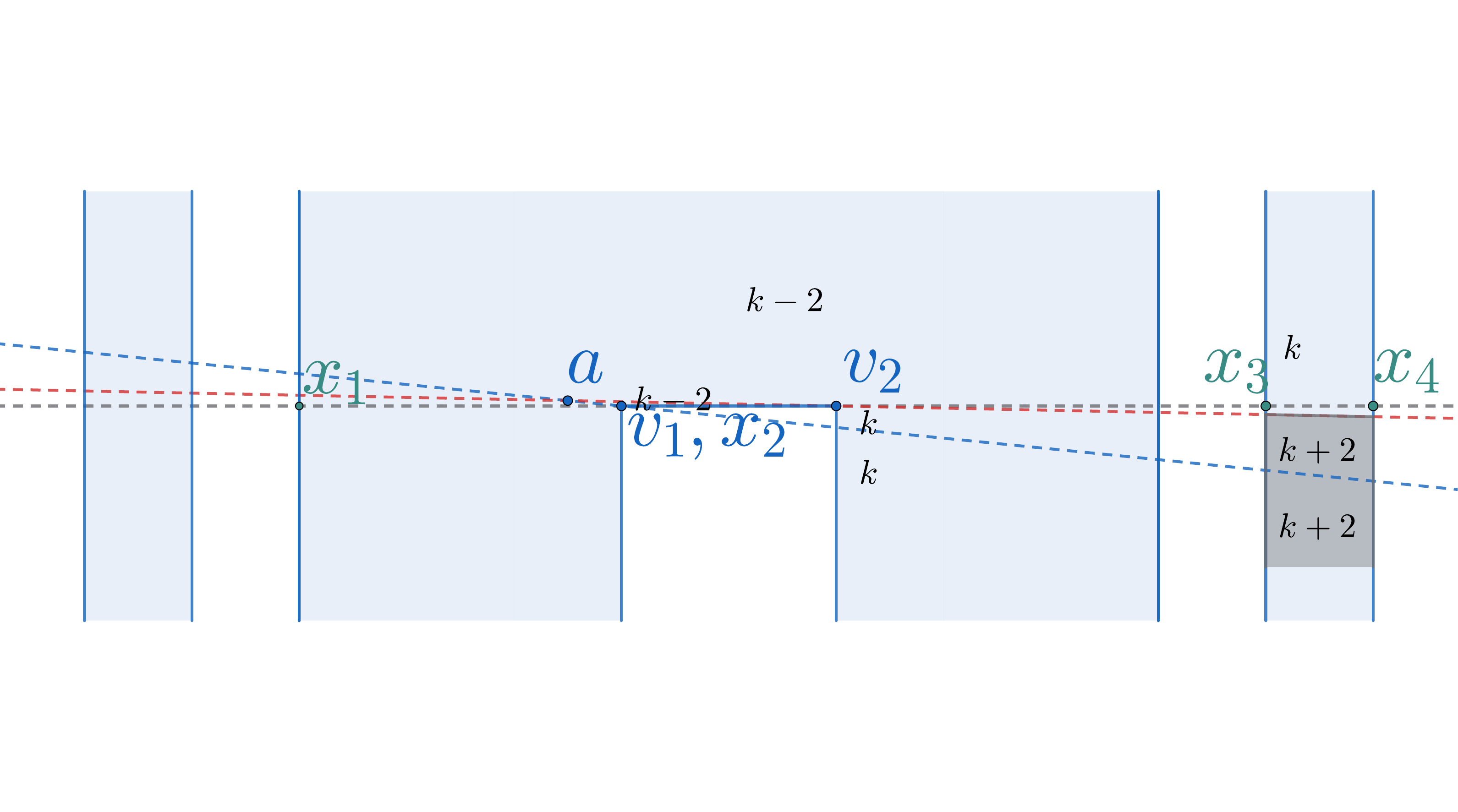}
\caption{Above $\ell_{g}$}\label{fig:RR-SpecialCase-A2}
\end{subfigure}
\begin{subfigure}[b]{.49\linewidth}
\includegraphics[width=\linewidth]{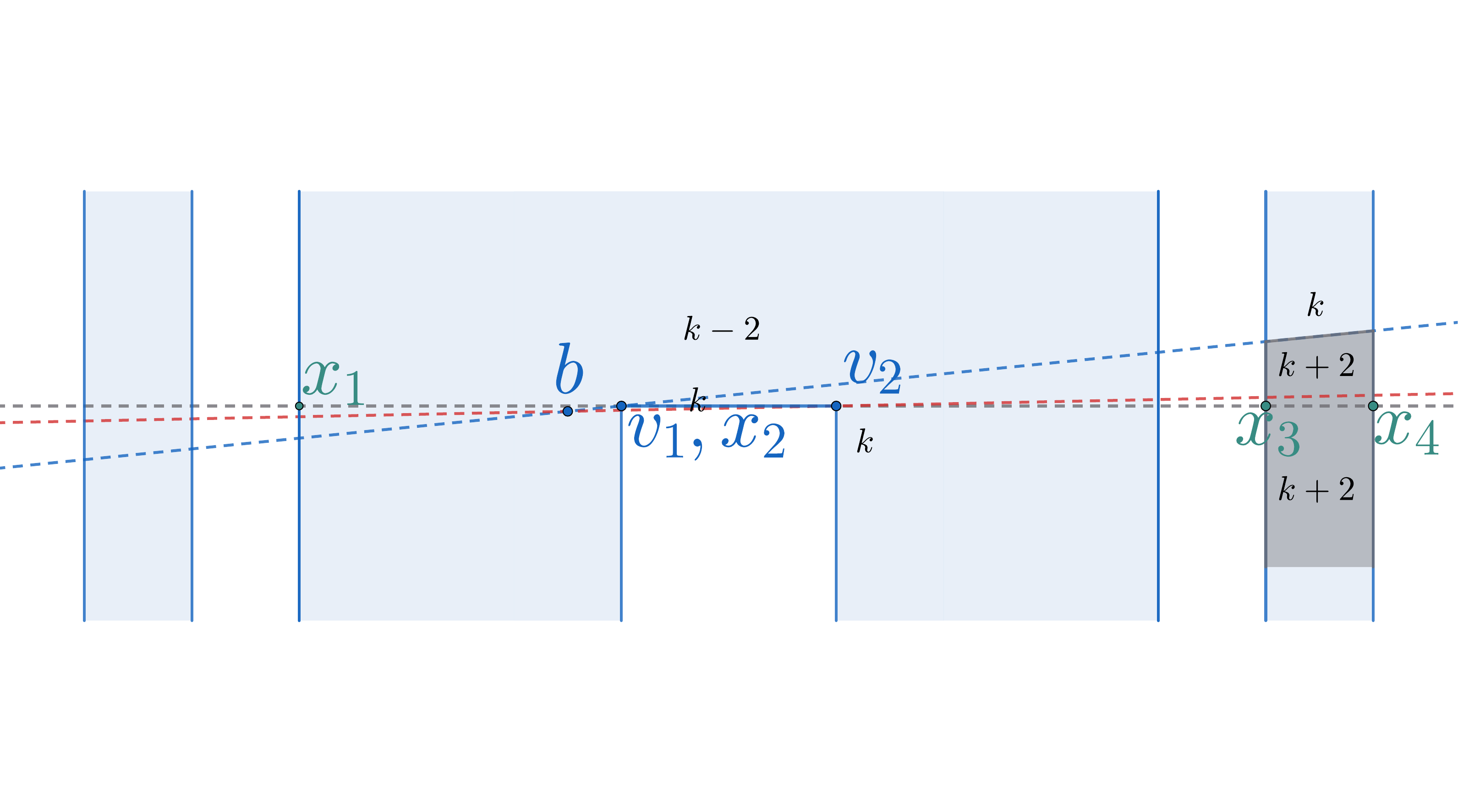}
\caption{Below $\ell_{g}$}\label{fig:RR-SpecialCase-B2}
\end{subfigure}

\caption{RR-SC; $Z = k - 2$, $W = k$.}
\label{fig:RR-SpecialCase-2}
\end{figure}

  \begin{figure}[H]
\centering
\begin{subfigure}[b]{.49\linewidth}
\includegraphics[width=\linewidth]{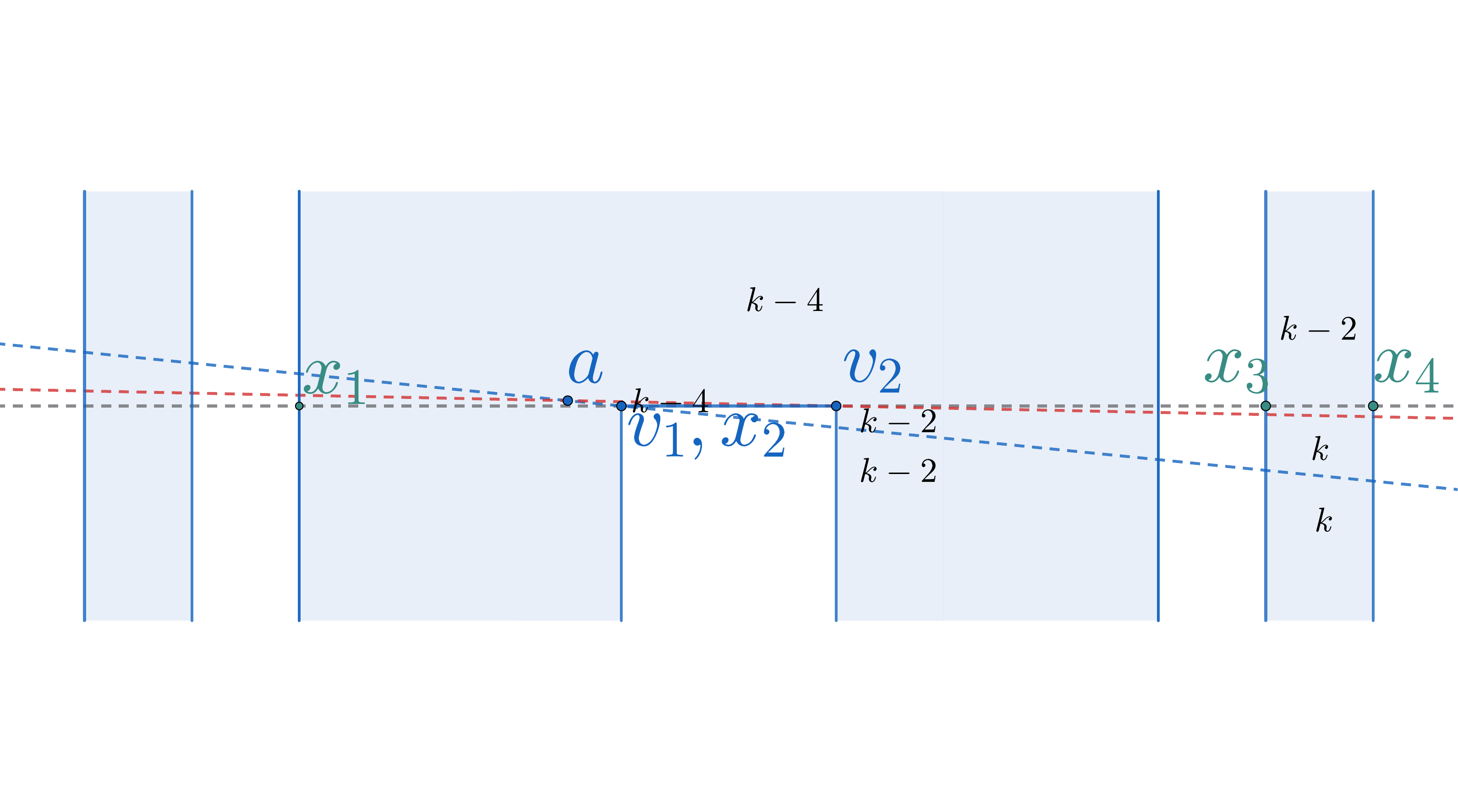}
\caption{Above $\ell_{g}$}\label{fig:RR-SpecialCase-A4}
\end{subfigure}
\begin{subfigure}[b]{.49\linewidth}
\includegraphics[width=\linewidth]{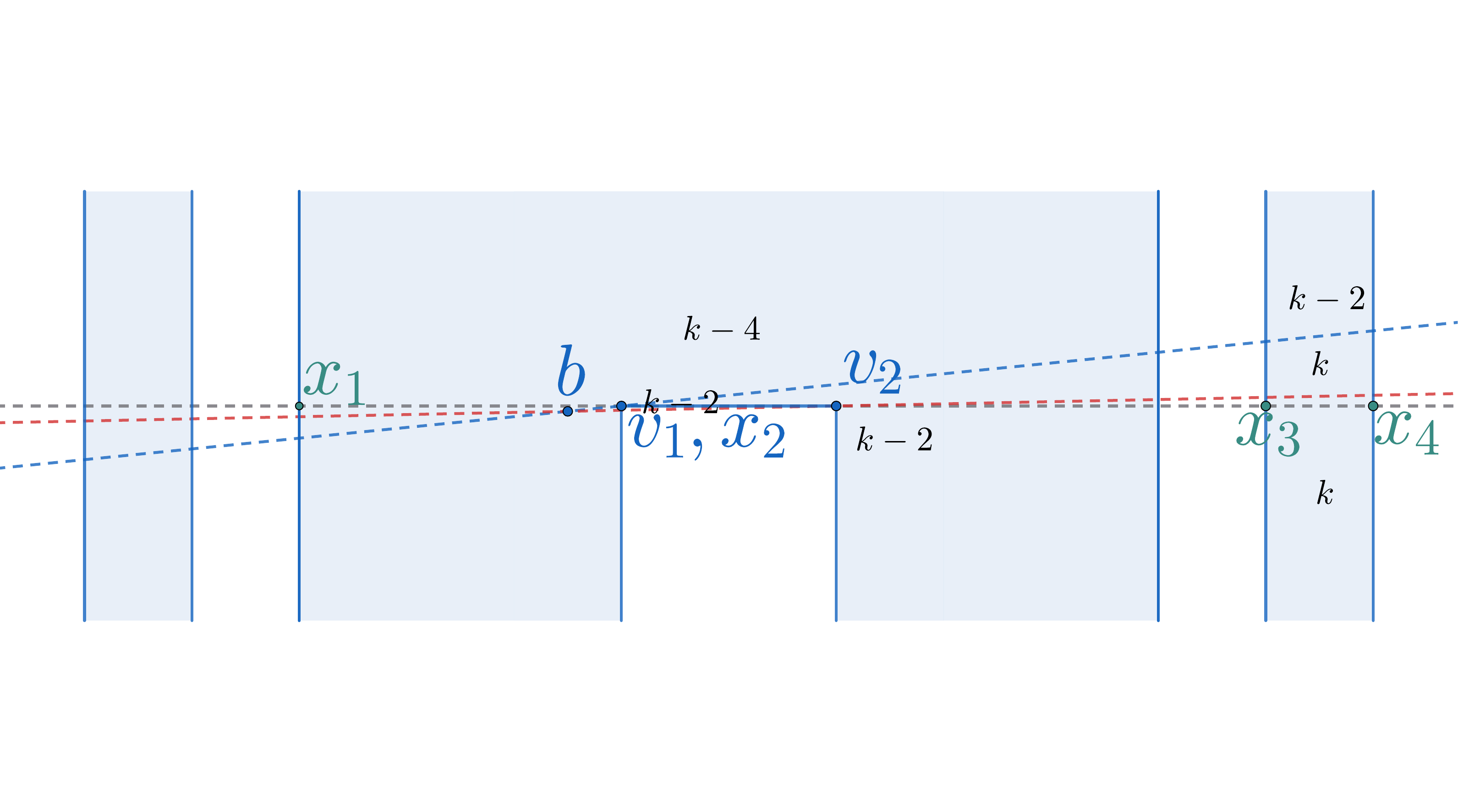}
\caption{Below $\ell_{g}$}\label{fig:RR-SpecialCase-B4}
\end{subfigure}

\caption{RR-SC; $Z = k - 4$, $W = k - 2$.}
\label{fig:RR-SpecialCase-4}
\end{figure}

\section{CR-SC}
\label{appendix:CR-SC}
\subsection{Convex Reflex Special Case}
\begin{lemma} 
\label{lemma:CR-SC}
  When $Z = k - 1$, an appear/disappear event occurs at $v_{2}$. If $W = k - 1$, an appear/disappear event occurs at $x_{3}X_{4}$. No event occurs for any other $Z$ or $W$. 

\end{lemma}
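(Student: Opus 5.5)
We follow the same scheme as in the proofs of Lemma~\ref{lemma:CCO-main} and Lemma~\ref{lemma:RC-SC}. Place $v_1$ (convex) to the left of $v_2$ (reflex), adjacent along a common edge $e=v_1v_2$ lying on $\ell_g$. Fix two agent positions, $a$ just above and $b$ just below $x_1x_2$. For each position we record the obstruction count along rays in the wedges cut out by $\ell_b$ and $\ell_r$, both near $v_2$ and near the segment $x_3x_4$. Since $v_1$ and $v_2$ share an edge, the edges crossed along $\ell_g$ differ from the generic CRO/CRS situation. A ray from the agent that grazes $v_1$ now either crosses the shared edge $e$ and the second edge of $v_1$, or only one of them, depending on the side. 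Likewise for $v_2$: being reflex, its second edge leaves into the exterior on one side. So at $a$ and at $b$ the counts in each wedge take the form $Z+c$ or $W+c$, with small constants $c\in\{0,1,2,3\}$. We read these constants off the local geometry at $v_1$ and $v_2$, exactly as in the figures of the basic cases.

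We then compare these counts across the crossing. Near $v_2$, the relevant wedge has count $Z+1$ on one side of $\ell_g$. On the other side it splits into a part with count $Z+1$ and a thin wedge bounded by $\ell_r$ and the non-shared edge of $v_2$ with count $Z$, or it loses that wedge altogether. A wedge whose count crosses the threshold $k$, while its neighbours stay on the other side of $k$, is an isolated shadow component created or destroyed. This happens precisely when $Z+1=k$, i.e.\ $Z=k-1$, and gives an appear/disappear event at $v_2$. We must also check that the same change cannot connect two existing shadow pieces. That would be a merge/split, and it would need a wedge at count $k+1$ to bridge two regions. Because $e$ is shared, no such bridge exists: this is the difference from CRO, where $Z=k$ yields a merge/split. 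The analogous computation along $x_3x_4$ adds the walls between $v_2$ and $x_4$, so the thin wedge's count there becomes $W+1$ versus $W$ or $W+2$. This gives an appear/disappear event at $x_3x_4$ exactly when $W=k-1$.

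Finally we rule out every other value. For $Z\ge k+2$ (respectively $W\ge k+3$ or so), every count near $v_2$ (respectively $x_3x_4$) exceeds $k$ on both sides, so the neighbourhood lies in a single shadow component and nothing changes. For $Z\le k-5$ (respectively $W\le k-4$), all counts are at most $k$, so the neighbourhood is fully visible. The finitely many intermediate values ($Z=k+1,k,k-2,\dots$ and similarly for $W$) are checked one by one, as in Figures~\ref{fig:RR-SpecialCase-0}--\ref{fig:RR-SpecialCase-4}. In each, the sets of shadowed wedges at $a$ and $b$ correspond bijectively and preserve adjacency, so no event occurs. By the invariance noted earlier (adding two walls and raising $k$ by two yields the same transitions), these cases cover all $Z$ and $W$. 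Events beyond $x_4$ reduce to previous sub-cases.

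The main obstacle is the exact bookkeeping of the constants $c$ on each side of $\ell_g$ in the adjacent configuration. We must determine whether a ray grazing $v_1$ counts the shared edge $e$, given that $e$ lies on $\ell_g$ itself. We also must separate a genuine appear/disappear from a merge/split at $Z=k-1$ versus $Z=k$. We would handle this by perturbing the agent infinitesimally off $\ell_g$, counting intersections with $e$ and the second edges of $v_1$ and $v_2$ separately, and checking adjacency of the wedges explicitly.
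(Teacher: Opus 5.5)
Your plan has the same shape as the paper's proof. The paper reads per-wedge obstruction counts off the CR-SC figures for the agent at $a$ and at $b$, then disposes of large and small $Z,W$ by thresholds. The gap is that the one piece of content you actually supply, the counts, is wrong, and the claimed event does not follow from it.

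You place a thin wedge of count $Z$ next to regions of count $Z+1$. A wedge whose count is \emph{lower} than its neighbours' can only be a visible pocket inside shadow, never a newly created shadow component. Moreover, at $Z=k-1$ your counts are $k-1$ and $k$, both $\le k$, so the neighbourhood of $v_2$ is fully visible on both sides of $\ell_g$ and nothing happens. The counts are also impossible by parity. The agent and all points of $P$ near $v_2$ are interior, so every segment between them crosses $\partial P$ an even number of times; hence $Z$ and $Z+1$ cannot both occur, and likewise $W$, $W+1$, $W+2$ near $x_3x_4$. Your exclusion of merge/split (``because $e$ is shared, no such bridge exists'') is not an argument. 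Finally, the case check should cite the CR-SC figures, not the RR-SC ones.

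The computation you postpone is exactly what is needed, and it comes out differently. Let $e_1,e_2$ be the non-shared edges at $v_1,v_2$. Adjacency with $v_1$ convex and $v_2$ reflex puts $e_1$ on the interior side of $e$ and $e_2$ on the exterior side. For a target near $v_2$, the count is $Z$ plus the number of edges among $e_1,e,e_2$ crossed. That number depends only on which side of $v_1$ and of $v_2$ the segment passes.

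With the agent on $e_2$'s side, three patterns are possible:
\begin{itemize}
\item both vertices passed on $e_2$'s side, crossing only $e_2$;
\item $v_1$ on $e_2$'s side and $v_2$ on $e_1$'s side, crossing only $e$;
\item both on $e_1$'s side, crossing only $e_1$.
\end{itemize}
The fourth pattern would cross $\ell_g$ twice. So the whole neighbourhood has the single count $Z+1$.

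With the agent on $e_1$'s side, the mixed pattern reverses. Segments into the sliver between $\ell_b$ and $\ell_r$ beyond $v_2$ pass $v_1$ on $e_1$'s side and $v_2$ on $e_2$'s side. They cross $e_1$, $e$ and $e_2$, so the sliver has count $Z+3$, while every other wedge keeps $Z+1$.

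Thus the only change across $x_1x_2$ is the birth or death of a sliver two units above uniform surroundings. It is an isolated shadow exactly when $Z+1\le k<Z+3$. When $k$ has the parity of the counts, as the paper's figures implicitly assume, this is $Z=k-1$, and such a change can never be a merge/split. The same computation along $x_3x_4$, with $W+1$ versus $W+3$, gives $W=k-1$. Without this computation, your proposal reaches the lemma's values only by assertion.
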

\begin{proof}
    See Figure~\ref{fig:CR-SpecialCase-2} to Figure~\ref{fig:CR-SpecialCase-4}. 

    For $Z \geq k + 1$, $v_{2}$ and its surroundings are entirely in shadow. For $W \geq k + 3$, $x_{3}x_{4}$ and its surroundings are entirely in shadow.

    For $Z \leq k - 5$, $v_{2}$ and its surroundings are entirely visible. For $W \leq k - 3$, $x_{3}x_{4}$ and its surroundings are entirely visible. 

\end{proof}



 \begin{figure}[H]
\centering
\begin{subfigure}[b]{.49\linewidth}
\includegraphics[width=\linewidth]{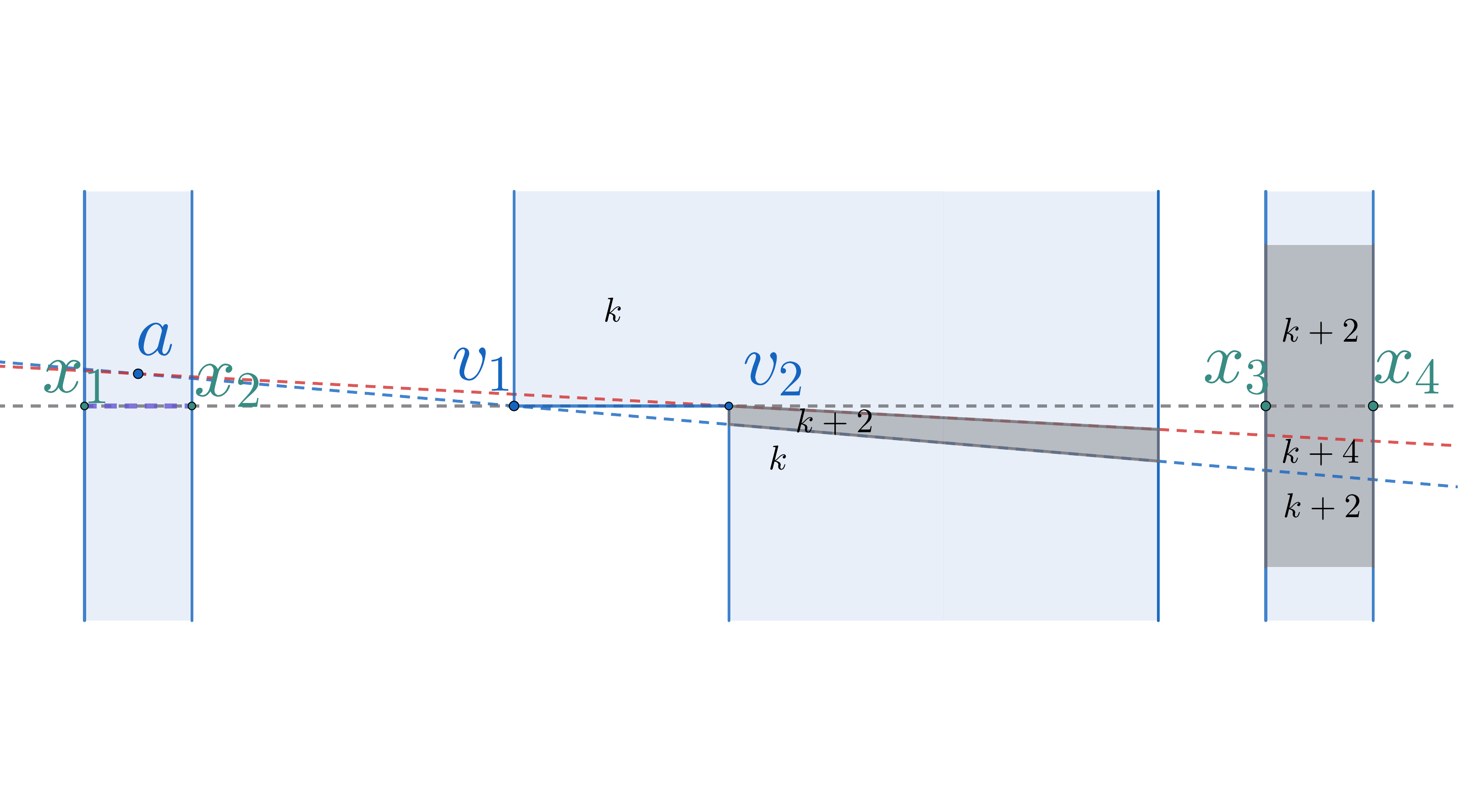}
\caption{Above $\ell_{g}$}\label{fig:CR-SpecialCase-A2}
\end{subfigure}
\begin{subfigure}[b]{.49\linewidth}
\includegraphics[width=\linewidth]{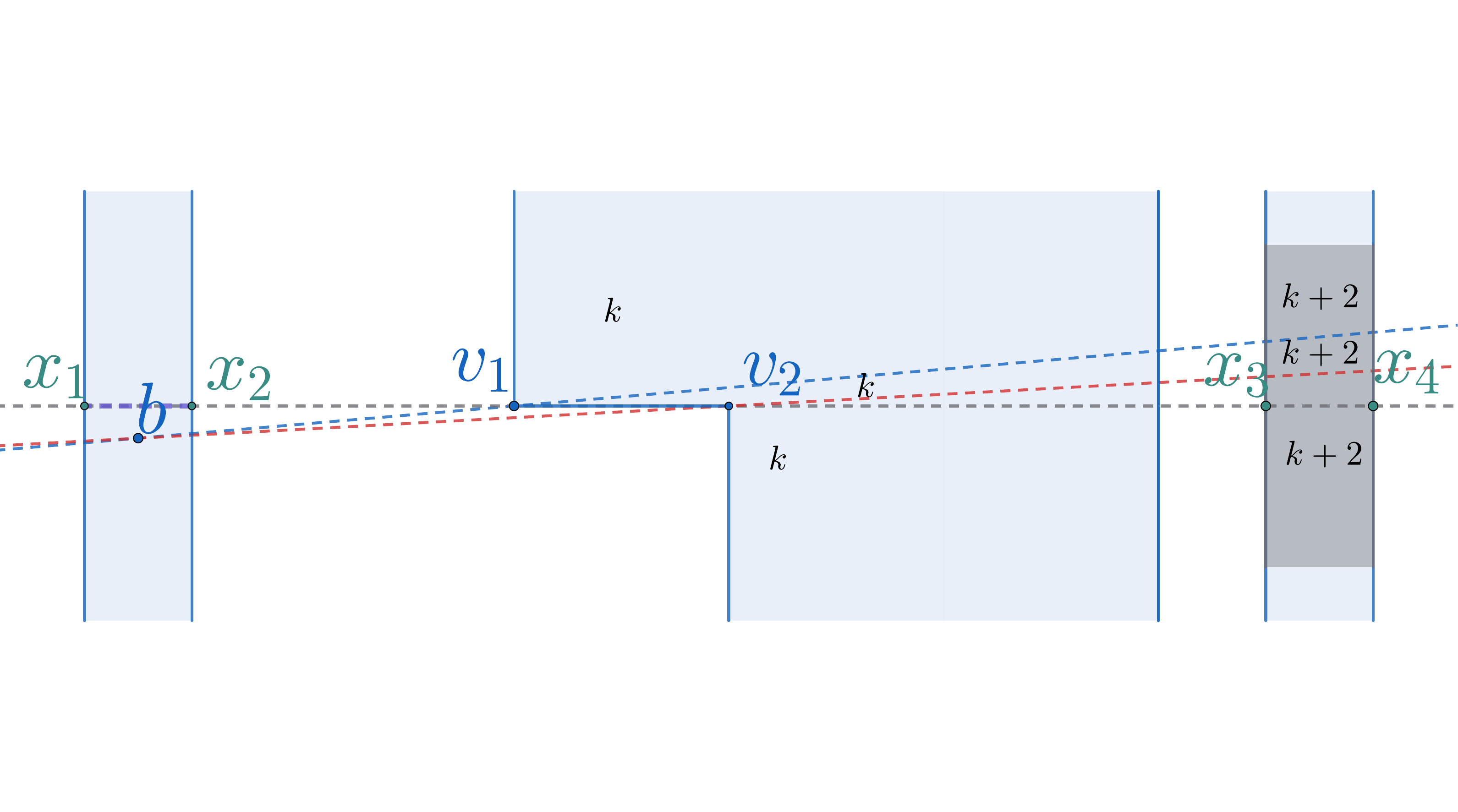}
\caption{Below $\ell_{g}$}\label{fig:CR-SpecialCase-B2}
\end{subfigure}

\caption{CR-SC; $Z = k - 1$ (Appear/Disappear), $W = k + 1$.}
\label{fig:CR-SpecialCase-2}
\end{figure}

  \begin{figure}[H]
\centering
\begin{subfigure}[b]{.49\linewidth}
\includegraphics[width=\linewidth]{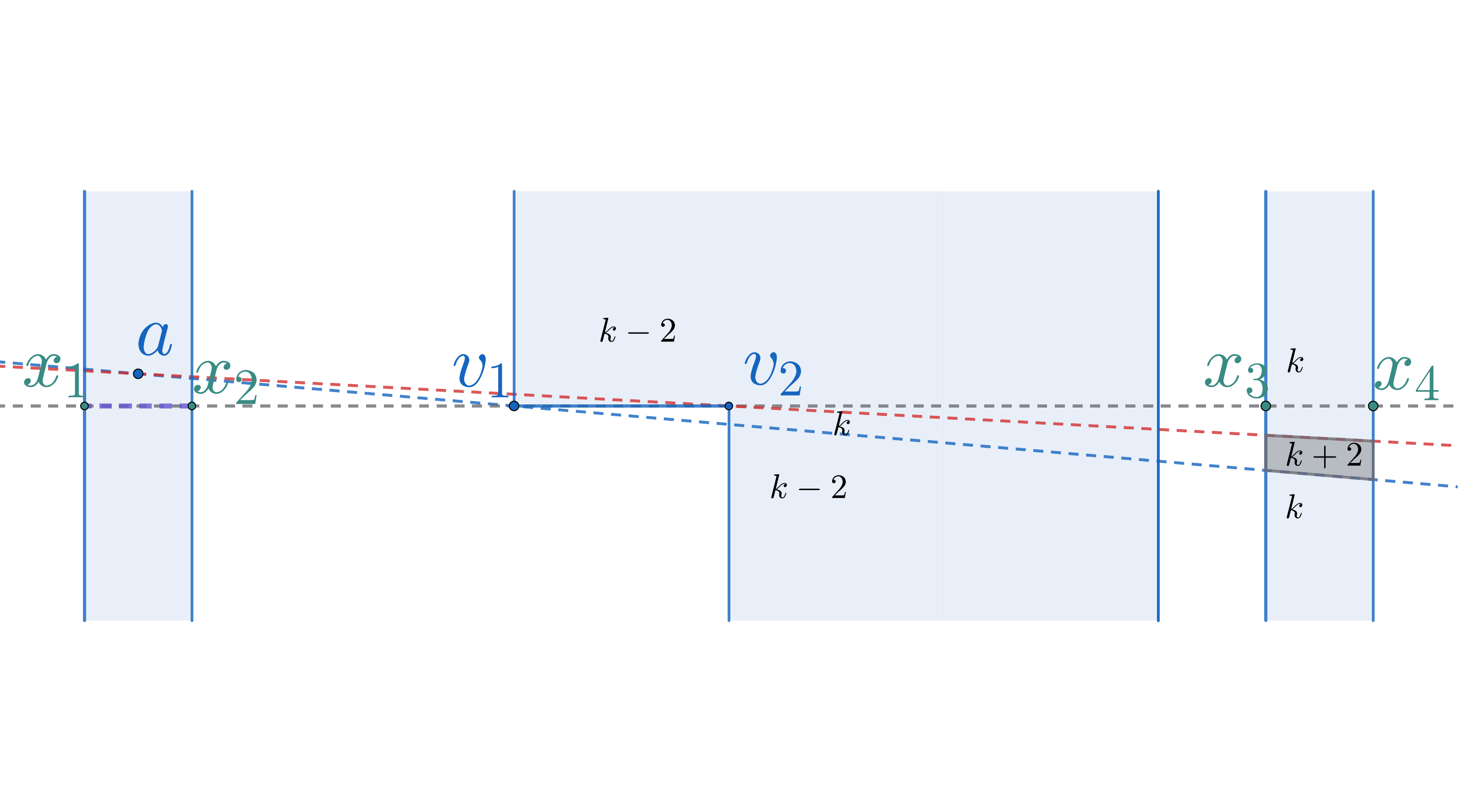}
\caption{Above $\ell_{g}$}\label{fig:CR-SpecialCase-A4}
\end{subfigure}
\begin{subfigure}[b]{.49\linewidth}
\includegraphics[width=\linewidth]{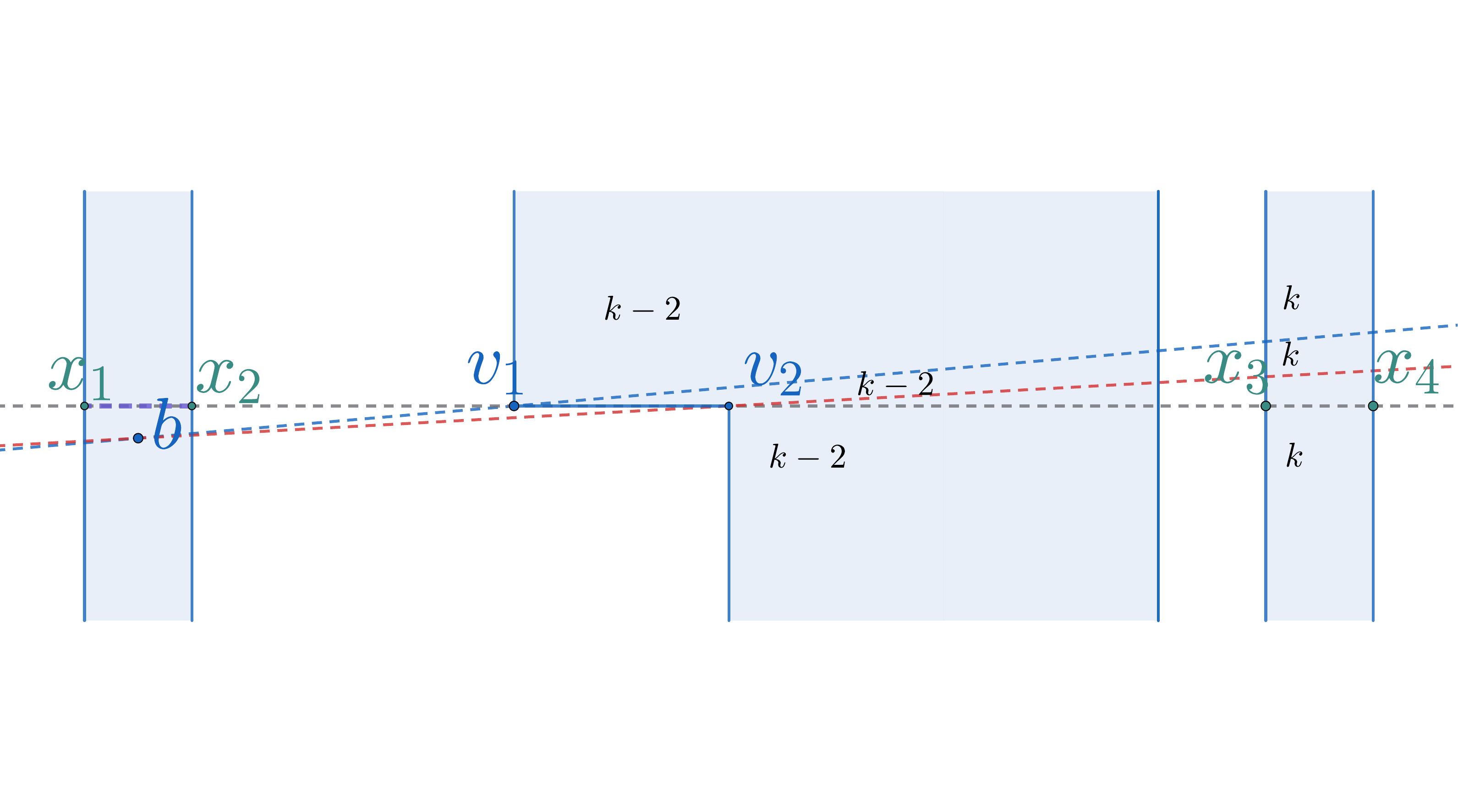}
\caption{Below $\ell_{g}$}\label{fig:CR-SpecialCase-B4}
\end{subfigure}

\caption{CR-SC; $Z = k - 3$, $W = k - 1$ (Appear/Disappear).}
\label{fig:CR-SpecialCase-4}
\end{figure}

\section{CC-SC}
\label{appendix:CC-SC}
\subsection{Convex Convex Special Case}
\begin{lemma} 
\label{lemma:CC-SC}
   No event occurs for any $Z$ or $W$. 

\end{lemma}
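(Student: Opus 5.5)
I would prove the CC-SC lemma the same way the other special cases were handled: fix the adjacent, mutually critical convex vertices $v_1$ and $v_2$, and compare obstruction counts with the agent at $a$ (above $\ell_g$) and at $b$ (below $\ell_g$). Since $v_1$ and $v_2$ are adjacent, they are joined by the common edge $e=v_1v_2$, which lies on $\ell_g$ itself. Mutual criticality, together with convexity at both endpoints, forces the other edge at $v_1$ and the other edge at $v_2$ onto the same side of $\ell_g$ as the polygon interior near $e$. So $e$ is a boundary edge with the interior on one side only, and the agent can approach $\ell_g$ near $v_1v_2$ from only one side. The step I expect to be the main obstacle is the next one: showing that crossing the extension of $\ell_g$ beyond $x_1$ (through $x_1x_2$) never changes the local wall counts.

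For that, I would fix the agent at $a$ and at $b$ and tabulate, exactly as in the figures for RC-SC and RR-SC, how many walls lie on rays just left and just right of $\ell_b$ and $\ell_r$. I would do this in the neighbourhood of $v_2$ and then in the neighbourhood of $x_3x_4$.

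Near $v_2$, consider a ray from the agent passing slightly on either side of $v_2$. It crosses the $Z$ walls between $x_1$ and $v_1$, and then, on one side, $e$ together with the other edge at $v_2$. When the agent moves from $a$ to $b$, the rays $\ell_b$ and $\ell_r$ swap orientation. Because both non-shared edges lie on the same side and $e$ is collinear with $\ell_g$, the swap trades one crossing of $e$ for one crossing of the edge at $v_1$ in each of the sectors. Hence the multiset of counts in the sectors around $v_2$ is unchanged. I would check this sector by sector for representative parities, namely $Z=k$, $Z=k-2$, and $Z=k-4$, and verify in each that every sector's visibility status (count $\le k$ or $>k$) is the same at $a$ and $b$. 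The same comparison for the wedge near $x_3x_4$ would use $W$ in place of $Z$, with $W=k+?$ shifted appropriately.

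Having shown that the counts are invariant in all representative cases, I would invoke the invariance remark from the CCO discussion: adding two walls and raising $k$ by two preserves transitions. That extends the finite check to all $Z$ and $W$. Finally, I would record the thresholds beyond which the region is entirely shadowed or entirely visible, in analogy with the other lemmas, to cover the extremes.

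If the sector-by-sector check instead revealed a changed status in some sector, the fallback would be to argue that this sector is outside $P$. It lies on the exterior side of the boundary edge $e$, so it contributes no shadow component, and again no event occurs.
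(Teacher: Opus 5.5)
Your overall plan (compare obstruction counts at $a$ and $b$ near $v_2$ and near $x_3x_4$, and handle the extremes with shadow/visibility thresholds) is the paper's. The paper reads the counts off a single figure ($Z=k-1$, $W=k$) and then states the thresholds. Two steps of your version, however, do not go through.

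First, the reduction to finitely many cases fails. The values $Z=k$, $k-2$, $k-4$ all have $Z-k$ even, and the invariance remark (add two walls and raise $k$ by two) preserves $Z-k$. It therefore cannot carry these checks to odd offsets, in particular not to $Z=k-1$, $W=k$, which is exactly the configuration the paper examines. Your trade mechanism is also misstated. In the sector above both $\ell_b$ and $\ell_r$ nothing is traded. Only in the thin sector between them near $v_2$ is a crossing of the edge at $v_1$ exchanged for a crossing of $e$.

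More seriously, the neighbourhood of $x_3x_4$ is left open (``$W=k+?$''), and there the picture is not the one you describe at $v_2$. Put $\ell_g$ horizontal, the agent left of $v_1$ and the interior above $e$, so the other edges $e_1$ at $v_1$ and $e_2$ at $v_2$ point upward. A ray reaching $x_3x_4$ below both $\ell_b$ and $\ell_r$ crosses none of $e_1,e,e_2$, giving count $W$. Every other such ray crosses exactly two of them, giving count $W+2$: $\{e_1,e\}$ or $\{e_1,e_2\}$ at $a$, and $\{e,e_2\}$ or $\{e_1,e_2\}$ at $b$.

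So with your sectors just above and below $\ell_b$ and $\ell_r$, the sector just below $\ell_b$ goes from $W$ to $W+2$, and the one just below $\ell_r$ goes from $W+2$ to $W$. For $W\in\{k-1,k\}$ your check would report a status change. Your fallback cannot rescue this, because these points lie inside $P$ near $x_3x_4$, not behind $e$.

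The missing observation is this. On both sides of $\ell_g$ there is exactly one boundary near $x_3x_4$ between count $W$ and count $W+2$. It merely slides from $\ell_b$ to $\ell_r$, while the band between the two rays keeps count $W+2$. Hence nothing appears, disappears, merges or splits.

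The same symbolic count works near $v_2$. Every point of $P$ there has count $Z+1$ at both $a$ and $b$: via $e_1$, or via $e$ in the sliver under $\ell_b$ at $b$. Together these prove the lemma for all $Z$ and $W$ at once, with no need for representative values or the invariance remark.
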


\begin{proof}
    See Figure~\ref{fig:CC-SpecialCase-2}. 
    
    For $Z \geq k + 1$, $v_{2}$ and its surroundings are entirely in shadow. For $W \geq k + 2$, $x_{3}x_{4}$ and its surroundings are entirely in shadow.

    For $Z \leq k - 3$, $v_{2}$ and its surroundings are entirely visible. For $W \leq k - 2$, $x_{3}x_{4}$ and its surroundings are entirely visible. 
\end{proof}

  \begin{figure}[H]
\centering
\begin{subfigure}[b]{.49\linewidth}
\includegraphics[width=\linewidth]{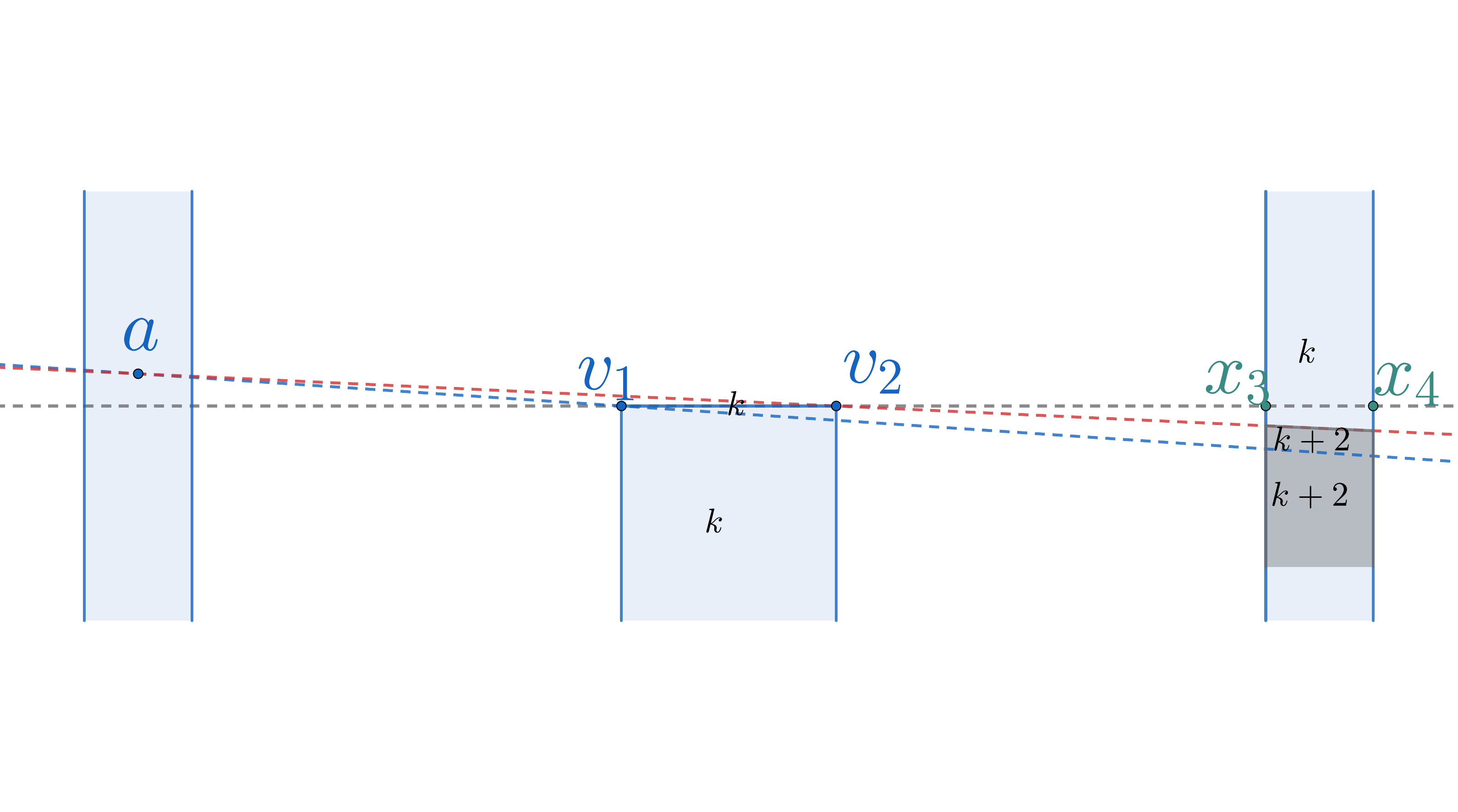}
\caption{Above $\ell_{g}$}\label{fig:CC-SpecialCase-A2}
\end{subfigure}
\begin{subfigure}[b]{.49\linewidth}
\includegraphics[width=\linewidth]{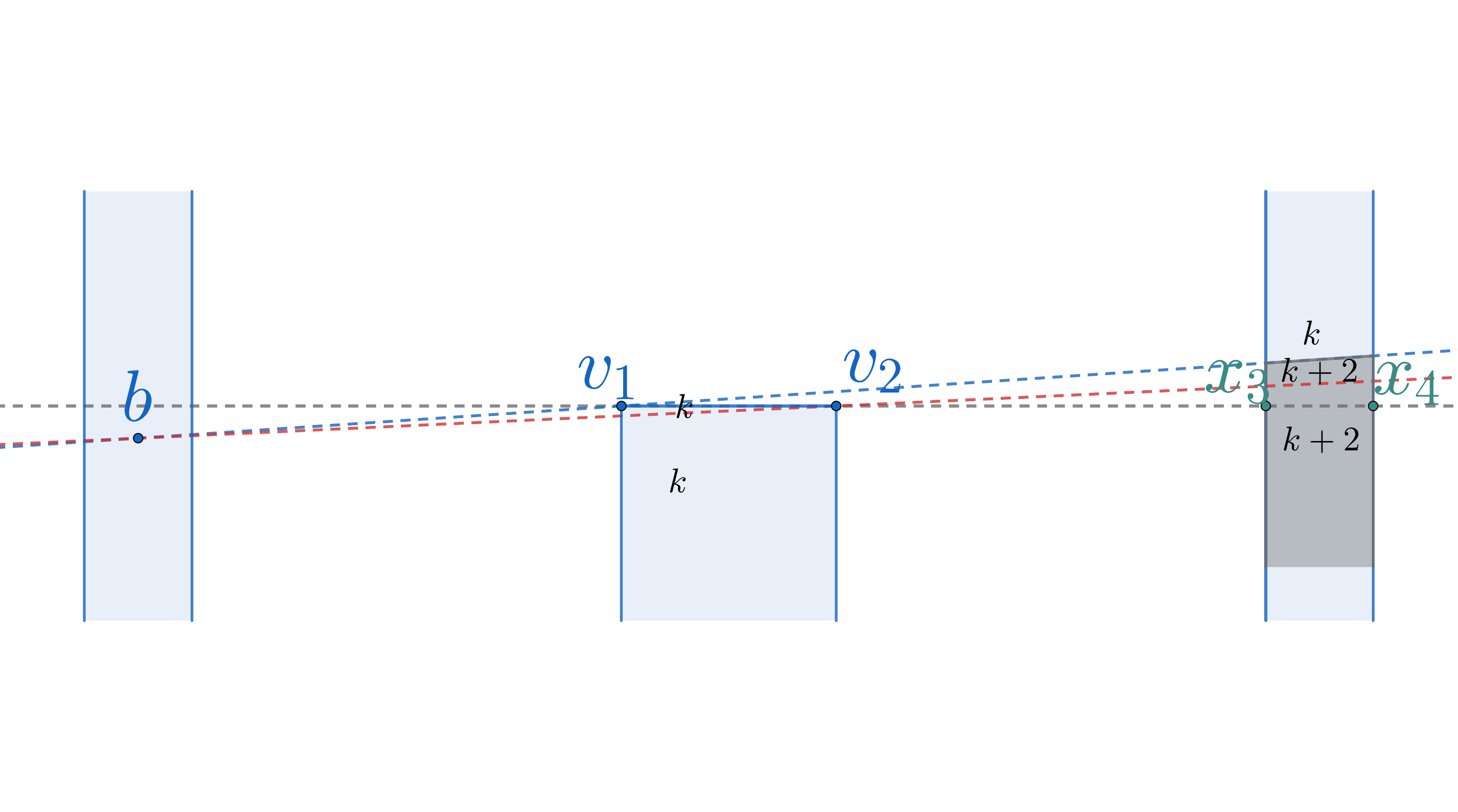}
\caption{Below $\ell_{g}$}\label{fig:CC-SpecialCase-B2}
\end{subfigure}

\caption{CC-SC; $Z = k - 1$, $W = k$}
\label{fig:CC-SpecialCase-2}
\end{figure}

\begin{table*}[t]

\centering
\caption{Summary of Combinatorial Changes, Conditions, and Associated Figures}
\label{tab:combinatorial_changes}
\begin{tabular}{lllll}
\hline
\textbf{Case} & \textbf{Lemma} & \textbf{Condition} & \textbf{Event Type \& Location} & \textbf{Figure} \\ \hline
\textbf{CCO} & Lemma~\ref{lemma:CCO-main} & $Z = k - 1$ & Appear/disappear event at $v_{2}$ & Figures~\ref{fig:CCO-genericA2}, \ref{fig:CCO-genericB2} \\
             &           & $W = k$     & Merge/split event at $x_{3}x_{4}$ & Figures~\ref{fig:CCO-genericA2}, \ref{fig:CCO-genericB2} \\
             &           & $W = k - 2$ & Appear/disappear event at $x_{3}x_{4}$ & Figures~\ref{fig:CCO-genericA4}, \ref{fig:CCO-genericB4} \\
             &           & Other $Z, W$ & No event & -- \\ \hline
\textbf{CRO} & Lemma~\ref{lemma:CRO} & $Z = k$ & Merge/split event at $v_{2}$ & Figure~\ref{fig:CRS-generic2} \\
             &           & $Z = k - 2$ & Appear/disappear event at $v_{2}$ & Figure~\ref{fig:CRS-generic4} \\
             &           & $W = k$ & Merge/split event at $x_{3}x_{4}$ & Figure~\ref{fig:CRS-generic4} \\
             &           & $W = k - 2$ & Appear/disappear event at $x_{3}x_{4}$ & Figure~\ref{fig:CRS-generic6}  \\ \hline
\textbf{CRS} & Lemma~\ref{lemma:CRS} & $Z = k$ & Merge/split event at $v_{2}$ & Figure~\ref{fig:CRO-generic2} \\
             &           & Other $Z, W$ & No event & -- \\ \hline
\textbf{RCO} & Lemma~\ref{lemma:RCO}& $Z = k - 1$ & Appear/disappear event at $v_{2}$ & Figure~\ref{fig:RCO-generic-2} \\
             &           & $W = k$ & Merge/split event at $x_{3}x_{4}$ & Figure~\ref{fig:RCO-generic-2} \\
             &           & $W = k - 2$ & Appear/disappear event at $x_{3}x_{4}$ & Figure~\ref{fig:RCO-generic-4} \\
             &           & Other $Z, W$ & No event & -- \\ \hline
\textbf{RRS} & Lemma~\ref{lemma:RRS} & $Z = k$ & Merge/split event at $v_{2}$ & Figure~\ref{fig:RRS-generic-0} \\
             &           & Other $Z, W$ & No event & -- \\ \hline
\textbf{RRO} & Lemma~\ref{lemma:RRO} & $Z = k$ & Merge/split event at $v_{2}$ & Figure~\ref{fig:RRO-generic-0} \\
             &           & $Z = k - 2$ & Appear/disappear event at $v_{2}$ & Figure~\ref{fig:RRO-generic-2} \\
             &           & $W = k$ & Merge/split event at $x_{3}x_{4}$ & Figure~\ref{fig:RRO-generic-2} \\
             &           & $W = k - 2$ & Appear/disappear event at $x_{3}x_{4}$ & Figure~\ref{fig:RRO-generic-4} \\
             &           & Other $Z, W$ & No event & -- \\ \hline
\textbf{RC-SC} & Lemma~\ref{lemma:RC-SC} & $Z = k$ & Appear/disappear event at $v_{2}$ & Figure~\ref{fig:RC-SpecialCase-0} \\
               &           & $W = k - 1$ & Appear/disappear event at $x_{3}x_{4}$ & Figure~\ref{fig:RC-SpecialCase-2} \\
               &           & Other $Z, W$ & No event & -- \\ \hline
\textbf{CR-SC} & Lemma~\ref{lemma:CR-SC} & $Z = k - 1$ & Appear/disappear event at $v_{2}$ & Figure~\ref{fig:CR-SpecialCase-2} \\
               &           & $W = k - 1$ & Appear/disappear event at $x_{3}x_{4}$ & Figure~\ref{fig:CR-SpecialCase-4} \\
               &           & Other $Z, W$ & No event & -- \\ \hline
\textbf{CCS}, \textbf{RCS}, \textbf{RR-SC}, \textbf{CC-SC} & 
\begin{tabular}[c]{@{}l@{}}
    Lemma~\ref{lemma:CCS},\\ 
    Lemma~\ref{lemma:RCS},\\ 
    Lemma~\ref{lemma:RR-SC},\\ 
    Lemma~\ref{lemma:CC-SC}
\end{tabular} & Any $Z, W$ & No event & -- \\ \hline
\end{tabular}
\label{table:all-cases}
\end{table*}

\end{document}